\documentclass[aps,prl,notitlepage,twocolumn,superscriptaddress,10pt]{revtex4-2}

\makeatletter
\def\@hangfrom@section#1#2#3{\normalsize\@hangfrom{#1#2}#3}
\def\@hangfroms@section#1#2{\normalsize#1#2}
\makeatother

\usepackage{amsmath,amsfonts,amssymb,bm}
\usepackage{graphicx}
\usepackage{enumerate}

\usepackage{theorem}

\newtheorem{proposition}{Proposition}
\newtheorem{lemma}[proposition]{Lemma}
\newtheorem{theorem}[proposition]{Theorem}

\newenvironment{proof}{\noindent \textbf{{Proof~}}}{\hfill $\blacksquare$}
\newenvironment{sketchproof}{\textit{{Proof.---}}}{\hfill $\blacksquare$}

\usepackage{tikz}
\newenvironment{mytikz}{\begin{tikzpicture}[x=0.3pt,y=0.3pt,yscale=-1,xscale=1,baseline={([yshift=+0ex]current bounding box.center)}]}{\end{tikzpicture}}
\newenvironment{mytikz2}{\begin{tikzpicture}[x=0.4pt,y=0.4pt,yscale=-1,xscale=1,baseline={([yshift=+0ex]current bounding box.center)}]}{\end{tikzpicture}}
\newenvironment{mytikz3}{\begin{tikzpicture}[x=0.5pt,y=0.5pt,yscale=-1,xscale=1,baseline={([yshift=+0ex]current bounding box.center)}]}{\end{tikzpicture}}
\newenvironment{mytikz4}{\begin{tikzpicture}[x=0.45pt,y=0.45pt,yscale=-1,xscale=1,baseline={([yshift=+0ex]current bounding box.center)}]}{\end{tikzpicture}}

\definecolor{darkblue1}{rgb}{0.18,0.19,0.57}
\usepackage[colorlinks=true, allcolors=darkblue1]{hyperref}

\newcommand{\nc}{\newcommand}
\nc{\ket}[1]{|#1\rangle}
\nc{\bra}[1]{\langle#1|}
\nc{\ketbra}[2]{|#1\rangle\!\langle#2|}
\nc{\braket}[2]{\langle#1|#2\rangle}
\nc{\braoprket}[3]{\langle#1|#2|#3\rangle}
\nc{\opr}[1]{\operatorname{#1}}
\nc{\avg}[1]{\langle#1\rangle}
\nc{\ketbrasame}[1]{|#1\rangle\!\langle#1|}
\nc{\E}{\mathbb{E}}
\nc{\var}{\operatorname{Var}}

\nc{\hk}[1]{\textcolor{violet}{\textbf{[hk: #1]}}}
\nc{\ch}[1]{\textcolor{purple}{\textbf{[ch: #1]}}}
\nc{\ls}[1]{\textcolor{blue}{\textbf{[ls: #1]}}}
\usepackage[normalem]{ulem}
\nc{\hknew}[1]{\textcolor{violet}{#1}}

\begin{document}
\title{Absence of poor local minima in matrix product states}

\author{Hao-Kai Zhang}
\affiliation{Institute of Physics, Chinese Academy of Sciences, Beijing 100190, China}

\author{Chenghong Zhu}
\affiliation{The Hong Kong University of Science and Technology (Guangzhou), China}

\author{Shuo Liu}
\affiliation{Department of Physics, Princeton University, Princeton, New Jersey 08544, USA}

\author{Shi-Xin Zhang}
\affiliation{Institute of Physics, Chinese Academy of Sciences, Beijing 100190, China}

\author{Tao Xiang}
\email{txiang@iphy.ac.cn}
\affiliation{Institute of Physics, Chinese Academy of Sciences, Beijing 100190, China}

\date{\today}

\begin{abstract}
Quantum circuits suffer from severe trainability issues: even shallow circuits are swamped with poor local minima. Yet matrix product states (MPS), which can be prepared by sequential circuits, are remarkably trainable in practice---as demonstrated by decades of successful density matrix renormalization group calculations. In this work, we resolve this apparent paradox by proving that the energy landscapes of MPS are free from poor local minima, under the same setting where brickwork circuits are not. The key insight is that the gauge freedom of MPS creates an effective local overparametrization that causes local minima to concentrate near the global minimum, analogous to overparametrized classical neural networks. We rigorously prove that the local minimum distribution is invariant under moves of the orthogonality center of MPS representations. Numerical experiments further confirm that the optimization of sequential circuits converges to near-optimal solutions even for random Hamiltonians, in stark contrast to brickwork circuits. Our findings establish a theoretical understanding of the trainability of MPS, providing a valuable guide for designing variational quantum circuits and algorithms with better trainability in the future.
\end{abstract}


\maketitle

\textit{Introduction.---}The success of classical neural networks on a wide variety of tasks is largely attributed to their trainability. Despite the nonconvex nature of their loss landscapes, gradient descent often converges to a near-optimal solution. This can be described by the absence of poor local minima---the loss values of most local minima concentrate near that of the global minimum, which has been proved in certain overparametrized networks~\cite{Choromanska2015, Kawaguchi2016, Jacot2020}.

As the quantum analog of neural networks, variational quantum algorithms (VQAs)~\cite{Bharti2022, Cerezo2021a}, which aim to harness near-term quantum devices by optimizing parametrized quantum circuits for tasks ranging from ground state preparation to machine learning, are hindered by severe trainability issues. Deep circuits, despite their powerful expressibility, suffer from the barren plateau phenomenon~\cite{McClean2018, Cerezo2021, Zhang2024, Larocca2025, Bittel2021, You2021, Anschuetz2022, Anschuetz2023, Anschuetz2025, Arrasmith2020, Arrasmith2021, Liu2021a, Holmes2021, Zhang2023, Barthel2023a, Miao2023, Liu2023, Liu2024, Zhang2024a, Zhang2024b, Cerezo2025}, where circuit gradients vanish exponentially with system size. For shallow circuits that are free from barren plateaus~\cite{Cerezo2021, Zhang2024}, a more subtle issue remains---poor local minima~\cite{Bittel2021, You2021, Anschuetz2022, Anschuetz2023, Anschuetz2025}, which refers to the local minima with high loss values. It is proved that extensive poor local minima exist in the loss landscapes of certain shallow circuits, such as quantum convolutional neural networks and brickwork circuits of moderate depth~\cite{Anschuetz2022}.

A sharply distinct empirical fact is the successful optimization of matrix product states (MPS)~\cite{White1992, Schollwock2011, Orus2014, Orus2019, Cirac2021, Xiang2023, Aharonov2010, Landau2015, Arad2017}. As the core numerical method of solving one-dimensional quantum many-body physics in the past three decades, the algorithms that use MPS as variational ansatzes, such as the density matrix renormalization group (DMRG)~\cite{White1992} and the time-dependent variational principle (TDVP) in imaginary time~\cite{Haegeman2011, Haegeman2014, Haegeman2016, Hauru2021}, are routinely observed to converge to ground states even from random initialization. This empirical fact is intriguing because MPS can be exactly regarded as the output of sequential circuits~\cite{Schon2005, Schon2007, Banuls2008, Wei2022, Chen2023a}. The expressibility of MPS has been explained by the entanglement area law~\cite{Hastings2007, Eisert2010} in the ground states of gapped local Hamiltonians, whereas understanding the trainability of MPS remains an open question, namely, why the optimization of MPS so reliably finds the ground states. In particular, what property of MPS and sequential circuits is responsible for their benign loss landscapes, in contrast to those of other circuits?

In this work, we identify this property as the effective local overparametrization arising from the gauge structure of MPS. Since the physical state is invariant under inserting an invertible matrix and its inverse between adjacent tensors of an MPS representation, the same MPS can be expressed in mixed canonical forms with different orthogonality centers. This gauge degree of freedom does not change the state but substantially modifies the causal structure of the corresponding sequential circuit. Thus, for an arbitrary local observable, one can always move the orthogonality center nearby, so that the backward causal cone is overparametrized effectively by the surrounding local tensors. We formalize this mechanism rigorously through two theorems. Firstly, we prove that random canonical MPS ensembles with different orthogonality centers are statistically identical, implying that the induced distribution over physical states is independent of the orthogonality center. Secondly, after properly defining a probability measure over local minima, we prove that, for any given Hamiltonian, the energy landscapes of MPS representations with different orthogonality centers have identical distributions of local minimum values. We further conduct numerical experiments using random Hamiltonians and confirm that the optimization of sequential circuits indeed converges reliably to near-optimal solutions, while that in brickwork-type circuits is trapped by poor local minima.



\begin{figure}
    \centering
    \includegraphics[width=0.99\linewidth]{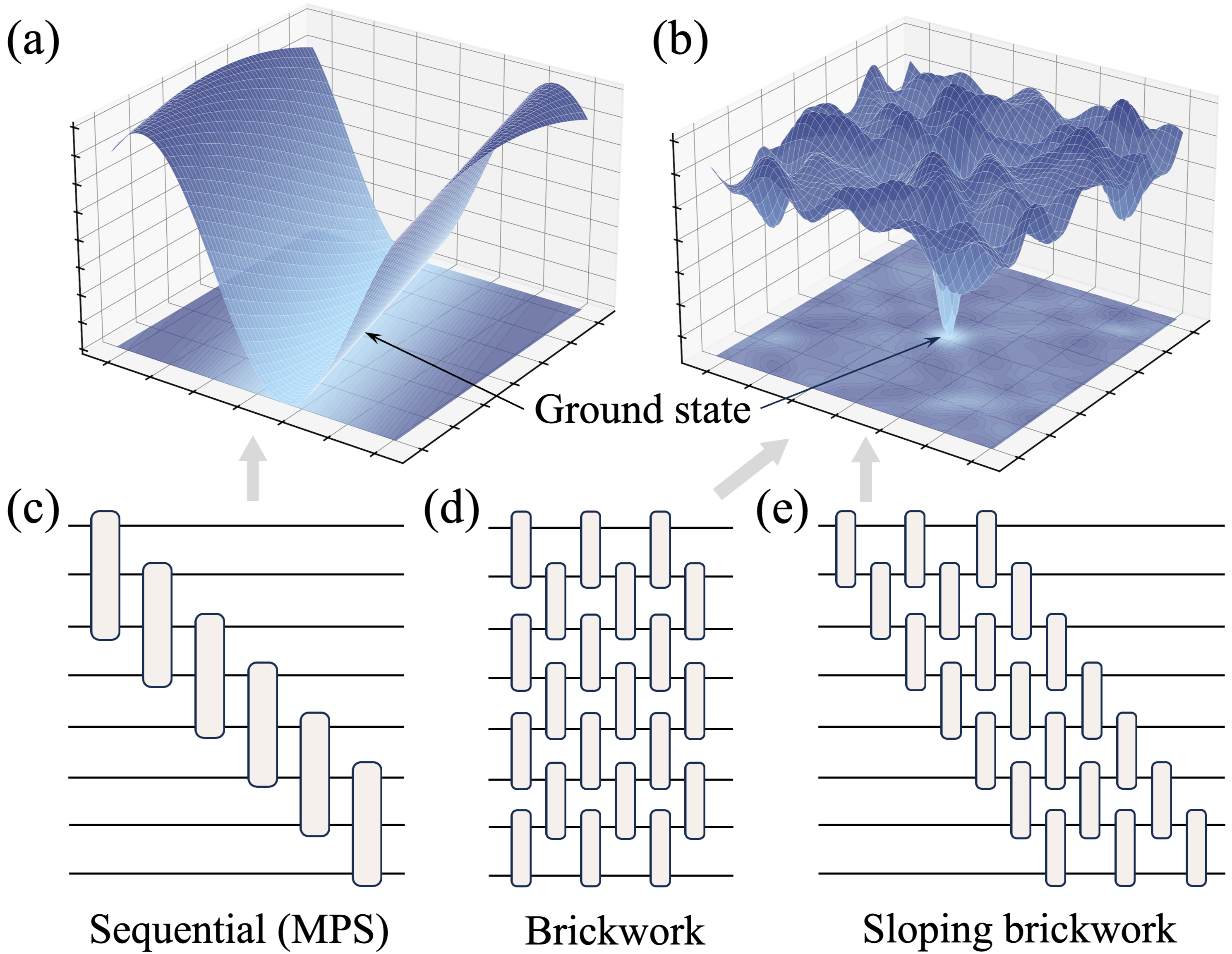}
    \caption{Energy landscape schematics in the absence (a) or presence (b) of poor local minima. (c) Sequential circuit, serving as the preparation circuit for MPS. (d) Brickwork circuit. (e) Sloping brickwork circuit, which reduces to sequential circuit when restricted to a single layer. Each block in the circuits represents a unitary that is universal on its support. The grey arrows indicate that the typical energy landscapes of sequential circuits are free from poor local minima, whereas those of brickwork-type circuits are not.}
    \label{fig:landscape_circuit}
\end{figure}


\textit{Basic setup.---}To find the ground state of a given quantum many-body system described by the Hamiltonian $H$, the variational method chooses a parametrized quantum state $\ket{\Psi(\bm{\theta})}$ as a variational ansatz and minimizes the energy function $\mathcal{E}(\bm{\theta}) = \braoprket{\Psi(\bm{\theta})}{H}{\Psi(\bm{\theta})}$. This method is expected to succeed if the ansatz is expressive enough to contain the ground state and the optimization process is smooth. In variational quantum algorithms, the ansatz is chosen as $\ket{\Psi(\bm{\theta})}=\mathbf{U}(\bm{\theta})\ket{0}^{\otimes N}$, where $\mathbf{U}(\bm{\theta})$ is a parametrized quantum circuit composed of unitary gates that can be run on quantum devices.

Matrix product states are a special class of states that often serve as ansatzes for solving one-dimensional quantum systems, whose wave function under the computational basis is given by a product of tensors $\ket{\Psi_\mathrm{MPS}(A)} = \sum_{n_1 \ldots n_N} A_{1,n_1} A_{2,n_2} \ldots A_{N,n_N} \ket{n_1 \ldots n_N}$, where $A_{i}$ is a $3$-order tensor of shape $D_{i-1}\times d \times D_i$. $d$ is the local Hilbert space dimension of the physical index $n_i$ and $D_i$ is the bond dimension of the virtual index at bond $i$. We focus on MPS with a uniform bond dimension $D$ for simplicity, which is frequently used in practice.


Different MPS representations may represent the same physical state due to the \textit{gauge} freedom of MPS, because inserting an arbitrary invertible matrix and its inverse between adjacent tensors does not change the wave function. Using this redundancy, the local tensors can be restricted to isometries without loss of generality. Specifically, a local tensor is called left (right) isometric if combining the physical index with the left (right) virtual index leads to an isometric matrix whose row or column vectors are orthonormal. An MPS representation $A$ is in its canonical form $A^{[i]}$ with orthogonality center at site $i$ if $\{A_1,\ldots,A_{i-1}\}$ are left isometric and $\{A_{i+1},\ldots,A_N\}$ are right isometric, which can be represented by the following tensor network diagram
\begin{equation}
\begin{mytikz4}
\draw [line width=0.5]    (235,100) -- (210,100) ;
\draw [line width=0.5]    (235,60) -- (235,100) ;
\draw [line width=0.5]    (260,100) -- (235,100) ;
\draw  [fill={rgb, 255:red, 245; green, 240; blue, 235 }  ,fill opacity=1 ][line width=0.5]  (217.32,103.54) -- (238.54,82.32) -- (252.68,96.46) -- (231.46,117.68) -- cycle ;
\draw [line width=0.5]    (285,100) -- (260,100) ;
\draw [line width=0.5]    (285,60) -- (285,100) ;
\draw [line width=0.5]    (310,100) -- (285,100) ;
\draw  [fill={rgb, 255:red, 245; green, 240; blue, 235 }  ,fill opacity=1 ][line width=0.5]  (270,85) -- (300,85) -- (299.31,115) -- (269.31,115) -- cycle ;
\draw [line width=0.5]    (335,100) -- (310,100) ;
\draw [line width=0.5]    (335,60) -- (335,100) ;
\draw [line width=0.5]    (360,100) -- (335,100) ;
\draw  [fill={rgb, 255:red, 245; green, 240; blue, 235 }  ,fill opacity=1 ][line width=0.5]  (331.46,82.32) -- (352.68,103.54) -- (338.54,117.68) -- (317.32,96.46) -- cycle ;
\draw [line width=0.5]    (425,100) -- (400,100) ;
\draw [line width=0.5]    (425,60) -- (425,100) ;
\draw [line width=0.5]    (450,100) -- (425,100) ;
\draw  [fill={rgb, 255:red, 245; green, 240; blue, 235 }  ,fill opacity=1 ][line width=0.5]  (421.46,82.32) -- (442.68,103.54) -- (428.54,117.68) -- (407.32,96.46) -- cycle ;
\draw [line width=0.5]    (475,100) -- (450,100) ;
\draw [line width=0.5]    (475,60) -- (475,100) ;
\draw [line width=0.5]    (145,100) -- (120,100) ;
\draw [line width=0.5]    (145,60) -- (145,100) ;
\draw [line width=0.5]    (170,100) -- (145,100) ;
\draw  [fill={rgb, 255:red, 245; green, 240; blue, 235 }  ,fill opacity=1 ][line width=0.5]  (127.32,103.54) -- (148.54,82.32) -- (162.68,96.46) -- (141.46,117.68) -- cycle ;
\draw [line width=0.5]    (120,100) -- (95,100) ;
\draw [line width=0.5]    (95,60) -- (95,100) ;

\draw (190,99) node   [align=left] {$ ...$};
\draw (380,99) node   [align=left] {$ ...$};
\end{mytikz4}\quad,
\end{equation}
where the tilted rectangles indicate the left and right isometric conditions. By embedding each isometric local tensor to a unitary block of support size $\beta=\log_d D + 1$, an MPS can be seen as the output of a sequential quantum circuit, where the blocks are arranged in a staircase pattern:
\begin{equation}\label{eq:sqc}
\begin{mytikz2}
\draw [line width=0.5]    (350,199.99) -- (350,130) ;
\draw [line width=0.5]    (350,110) -- (350,69.97) ;
\draw [line width=0.5]    (310,199.98) -- (310,130) ;
\draw [line width=0.5]    (310,110) -- (310,70.01) ;
\draw [line width=0.5]    (150,199.99) -- (150,130) ;
\draw [line width=0.5]    (230,200) -- (230,70) ;
\draw [line width=0.5]    (190,200) -- (190,70.02) ;
\draw [line width=0.5]    (150,110) -- (150,70.02) ;
\draw [line width=0.5]    (390,200) -- (390,70.02) ;
\draw [line width=0.5]    (270,200) -- (270,70) ;
\draw  [fill={rgb, 255:red, 245; green, 240; blue, 235 }  ,fill opacity=1 ][line width=0.5]  (130,140) -- (210,140) -- (210,160) -- (130,160) -- cycle ;
\draw  [fill={rgb, 255:red, 245; green, 240; blue, 235 }  ,fill opacity=1 ][line width=0.5]  (170.3,170) -- (290.3,170) -- (290.3,190) -- (170.3,190) -- cycle ;
\draw  [fill={rgb, 255:red, 245; green, 240; blue, 235 }  ,fill opacity=1 ][line width=0.5]  (250,140) -- (330,140) -- (330,160) -- (250,160) -- cycle ;
\draw  [fill={rgb, 255:red, 245; green, 240; blue, 235 }  ,fill opacity=1 ][line width=0.5]  (330,80) -- (410,80) -- (410,100) -- (330,100) -- cycle ;
\draw [line width=0.5]    (70,200.02) -- (70,70.02) ;
\draw [line width=0.5]    (110,199.98) -- (110,130) ;
\draw [line width=0.5]    (110,110) -- (110,70.01) ;
\draw  [fill={rgb, 255:red, 245; green, 240; blue, 235 }  ,fill opacity=1 ][line width=0.5]  (50,80.02) -- (130,80.02) -- (130,100.02) -- (50,100.02) -- cycle ;

\draw (130,120) node   [align=left] {$\displaystyle ...$};
\draw (330,120) node   [align=left] {$\displaystyle ...$};
\end{mytikz2}\quad.
\end{equation}
This diagram shows that moving the orthogonality center changes the causal structure of the circuit, although the physical state is unchanged, which is a central fact to our results below. Please refer to the Supplemental Material (SM) for detailed information on the basic setup and preliminaries~\cite{SM}.


\textit{Ensemble equivalence.---}To prove the invariance of local minimum distribution under moves of the orthogonality center, we first prove that the state distributions induced by different canonical MPS representations are identical. We use $\mathbb{V}_\mathrm{MPS}^{[i]}$ to denote the MPS ensemble generated by the random sequential circuit with orthogonality center at site $i$ as in Eq.\,\eqref{eq:sqc}, where each unitary is drawn independently from the Haar measure~\cite{Collins2006, Collins2016, Nahum2017, Zhou2019, Fisher2022, Lami2025a, Lami2025, Sauliere2026, Dowling2026}. The MPS ensemble is subject to the following theorem.

\begin{theorem}\label{thm:1}
The random MPS ensembles with different orthogonality centers are identical, i.e., $\mathbb{V}_\mathrm{MPS}^{[i]} = \mathbb{V}_\mathrm{MPS}^{[j]}$ for any two sites $i$ and $j$.
\end{theorem}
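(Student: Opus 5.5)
It suffices to treat adjacent sites, $\mathbb{V}_\mathrm{MPS}^{[i]}=\mathbb{V}_\mathrm{MPS}^{[i+1]}$; the general statement then follows by chaining moves of the orthogonality center one bond at a time. Only the tensors at sites $i$ and $i+1$ differ between the two canonical forms. Indeed, $A^{[i]}$ and $A^{[i+1]}$ share the left-isometric tensors $A_1,\ldots,A_{i-1}$ and the right-isometric tensors $A_{i+2},\ldots,A_N$, and these are drawn independently with the same (Haar-induced) distributions in both ensembles. I will therefore condition on all other tensors and show that the joint law of the two-site block $\Theta=A_iA_{i+1}$, viewed as a $D\times d\times d\times D$ tensor contracted into the environment, is the same in both ensembles. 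Since the state is a fixed multilinear function of $\Theta$ and the remaining tensors, equality of the laws of $\Theta$ (together with independence from the rest) gives equality of the state ensembles.

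To compare the two laws of $\Theta$, I first make the induced distributions explicit. Embedding a left isometry $D d\to D$ into a Haar unitary acting on $\beta$ qudits with a fixed input $\ket{0}$ makes the isometry distributed according to the unique left- and right-unitarily invariant (Haar) measure on isometries. The center tensor is the output of a Haar unitary applied to a fixed input vector on its block. Its law is therefore invariant under left multiplication by unitaries on the combined index space, i.e.\ it is a uniformly random unit vector in the corresponding subspace (for the boundary center) or, more generally, left-unitarily invariant.

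The key step is to write $\Theta$ in the two ensembles. In $\mathbb{V}^{[i]}$ we have $\Theta=C_i B_{i+1}$, with $C_i$ a center tensor and $B_{i+1}$ a Haar right isometry. In $\mathbb{V}^{[i+1]}$ we have $\Theta=L_iC_{i+1}$, with $L_i$ a Haar left isometry. I would show both equal in law a canonical object. Reshape $\Theta$ as a $(Dd)\times(dD)$ matrix and take an SVD, $\Theta=U S V^\dagger$, of rank at most $D$. In the first ensemble, right-invariance of the Haar isometry $B_{i+1}$ makes $V$ Haar-distributed on the Stiefel manifold, independent of $(U,S)$. Left-invariance of $C_i$ makes $U$ Haar as well. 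The singular values $S$ are those of $C_i$ itself (since $B_{i+1}$ is an isometry), and they have the law of the Schmidt spectrum of a Haar state across the cut. By the symmetric argument, the second ensemble gives Haar $U$, Haar $V$, and $S$ distributed as the singular values of $C_{i+1}$. So the problem reduces to showing that the singular-value laws of $C_i$ and $C_{i+1}$ coincide, and that $U$, $V$, $S$ are mutually independent in both ensembles.

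I expect this last point to be the main obstacle. One needs that the marginal singular-value law of a random center tensor is the same at every site, i.e.\ that the reduced density matrix on the bond has the same induced (Hilbert--Schmidt-type) distribution of a random state of dimension $D\cdot dD$ traced to $D$. This follows from unitary invariance of Haar states, provided the block dimensions agree (uniform $D$, with boundary sites handled by the same embedding). I would prove independence of $U$, $V$, $S$ via the standard fact that a bi-unitarily invariant random matrix has Haar singular vectors independent of its singular values. Care is needed at the open boundaries, where the bond dimension may be truncated to $d^k<D$; there I would verify separately that the effective dimensions match.
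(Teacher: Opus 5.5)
Your argument is correct, and it takes a genuinely different route from the paper's.

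\textbf{The paper's route.} The paper never isolates the two-site block; it works with moments throughout.
\begin{enumerate}
\item It computes $\mathbb{E}[(\ketbra{\Psi}{\Psi})^{\otimes t}]$ with the Weingarten formula.
\item This reduces the replicated network to a chain whose bonds carry either $W^{(t)}(Dd)$ or $G^{(t)}(D)$. The centre gate contributes only the scalar $1/[D^2d(D^2d+1)\cdots(D^2d+t-1)]$.
\item It proves that $W^{(t)}(Dd)$ commutes with $G^{(t)}(D)$. The reason is that $\sigma\mapsto D^{\#(\sigma)}$ is a class function, hence central in $\mathbb{C}[\mathcal{S}_t]$.
\item It swaps the two matrices across the bond $(i,i+1)$, and concludes by moment determinacy for laws generated from compact groups.
\end{enumerate}

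\textbf{Your route.} You prove equality in law of $\Theta$ directly, and your reduction to $\Theta$ is sound. Your key step can be stated more compactly than you do.
\begin{itemize}
\item Reshaped as $(Dd)\times(dD)$ matrices, both $C_iB_{i+1}$ and $L_iC_{i+1}$ are bi-unitarily invariant.
\item The law of a bi-unitarily invariant matrix is fixed by the law of its singular values. So you need not track SVD phase ambiguities or prove independence of $U,S,V$ separately.
\item In both cases the nonzero singular values are the Schmidt coefficients of a Haar vector in $\mathbb{C}^{D}\otimes\mathbb{C}^{d}\otimes\mathbb{C}^{D}$. The cut has dimensions $Dd\,|\,D$ for $C_i$ and $D\,|\,dD$ for $C_{i+1}$.
\item These two spectra have the same induced law, because exchanging the two tensor factors is unitary.
\end{itemize}

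\textbf{What your route buys.} It is elementary, with no Weingarten calculus and no moment-determinacy step. It gives equality in law of the state vectors themselves, phase included. It also makes the hypothesis actually used transparent: the centre blocks at $i$ and $i+1$ must have equal total dimension, i.e.\ $D_{i-1}=D_{i+1}$. In the paper's homogeneous form every tensor, including the edge ones, has shape $(D,d,D)$. So your boundary worry about truncated bond dimensions does not arise there.

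\textbf{What the paper's route buys.} It plugs into the replica/statistical-mechanics framework used throughout the supplement. It exhibits an explicit algebraic mechanism, commutation of the Weingarten matrix with a central element of the group algebra, that makes the gauge move exact order by order. It also yields explicit moment formulas that can be reused for averaged observables.
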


\begin{sketchproof}We give a sketch proof here with the detailed version left to SM~\cite{SM}. As different MPS ensembles share the same sample space and are induced from distributions over compact groups, they are identical as long as they have equal moments for arbitrary orders. According to the Weingarten formula, the $t$-th moment of a random unitary gate $U$ of shape $Dd\times Dd$ is given by
\begin{equation}
    \int \mathrm{d}\mu(U)~ U^{\otimes t} \otimes U^{* \otimes t} = \sum_{\sigma,\tau\in \mathcal{S}_{t}} W^{(t)}_{\sigma,\tau} (Dd) \ketbra{\sigma}{\tau},
\end{equation}
where $W^{(t)}(Dd)$ is the $t$-degree Weingarten matrix and $\ket{\sigma}$ denotes the permutation operator on the $t$ replicas from the permutation group $\mathcal{S}_t$. Thus, the $t$-th moment of the random MPS from $\mathbb{V}_\mathrm{MPS}^{[i]}$ can be diagrammatically represented as
\begin{equation}\label{eq:rmps}
\begin{mytikz4}

\draw [line width=0.5]    (430,140) -- (470,140) ;
\draw [line width=0.5]    (390,140) -- (430,140) ;
\draw [line width=0.5]    (470,140) -- (520,140) ;
\draw [line width=0.5]    (560,140) -- (520,140) ;
\draw [line width=0.5]    (560,140) -- (600,140) ;
\draw [line width=0.5]    (560,140) -- (560,80) ;
\draw  [fill={rgb, 255:red, 142; green, 142; blue, 142 }  ,fill opacity=1 ][line width=0.5]  (556.5,140) .. controls (556.5,138.07) and (558.07,136.5) .. (560,136.5) .. controls (561.93,136.5) and (563.5,138.07) .. (563.5,140) .. controls (563.5,141.93) and (561.93,143.5) .. (560,143.5) .. controls (558.07,143.5) and (556.5,141.93) .. (556.5,140) -- cycle ;
\draw [line width=0.5]    (340,140) -- (390,140) ;
\draw [line width=0.5]    (340,140) -- (300,140) ;
\draw [line width=0.5]    (260,140) -- (300,140) ;
\draw [line width=0.5]    (300,140) -- (300,80) ;
\draw  [fill={rgb, 255:red, 142; green, 142; blue, 142 }  ,fill opacity=1 ][line width=0.5]  (296.51,140.02) .. controls (296.51,138.08) and (298.07,136.52) .. (300.01,136.52) .. controls (301.94,136.52) and (303.51,138.08) .. (303.51,140.02) .. controls (303.51,141.95) and (301.94,143.52) .. (300.01,143.52) .. controls (298.07,143.52) and (296.51,141.95) .. (296.51,140.02) -- cycle ;
\draw [line width=0.5]    (430,139.79) -- (430,80) ;
\draw  [fill={rgb, 255:red, 142; green, 142; blue, 142 }  ,fill opacity=1 ][line width=0.5]  (426.5,140) .. controls (426.5,138.07) and (428.07,136.5) .. (430,136.5) .. controls (431.93,136.5) and (433.5,138.07) .. (433.5,140) .. controls (433.5,141.93) and (431.93,143.5) .. (430,143.5) .. controls (428.07,143.5) and (426.5,141.93) .. (426.5,140) -- cycle ;
\draw  [fill={rgb, 255:red, 245; green, 240; blue, 235 }  ,fill opacity=1 ][line width=0.5]  (260,155.02) -- (260,125.02) -- (275,140.02) -- cycle ;
\draw  [fill={rgb, 255:red, 245; green, 240; blue, 235 }  ,fill opacity=1 ][line width=0.5]  (390,155) -- (390,125) -- (405,140) -- cycle ;
\draw  [fill={rgb, 255:red, 245; green, 240; blue, 235 }  ,fill opacity=1 ][line width=0.5]  (390,125) -- (390,155) -- (375,140) -- cycle ;
\draw  [fill={rgb, 255:red, 245; green, 240; blue, 235 }  ,fill opacity=1 ][line width=0.5]  (470,155) -- (470,125) -- (485,140) -- cycle ;
\draw  [fill={rgb, 255:red, 245; green, 240; blue, 235 }  ,fill opacity=1 ][line width=0.5]  (470,125) -- (470,155) -- (455,140) -- cycle ;
\draw  [fill={rgb, 255:red, 245; green, 240; blue, 235 }  ,fill opacity=1 ][line width=0.5]  (600,125) -- (600,155) -- (585,140) -- cycle ;
\draw  [fill={rgb, 255:red, 245; green, 240; blue, 235 }  ,fill opacity=1 ][line width=0.5]  (505,125) -- (535,125) -- (535,155) -- (505,155) -- cycle ;
\draw  [fill={rgb, 255:red, 245; green, 240; blue, 235 }  ,fill opacity=1 ][line width=0.5]  (325,125) -- (355,125) -- (355,155) -- (325,155) -- cycle ;
\draw [line width=0.5]    (600,140) -- (620,140) ;
\draw [line width=0.5]    (240,140) -- (260,140) ;
\draw  [fill={rgb, 255:red, 245; green, 240; blue, 235 }  ,fill opacity=1 ][line width=0.5]  (545,100) -- (574.98,100) -- (559.99,114.99) -- cycle ;
\draw  [fill={rgb, 255:red, 245; green, 240; blue, 235 }  ,fill opacity=1 ][line width=0.5]  (415,100) -- (444.98,100) -- (429.99,114.99) -- cycle ;
\draw  [fill={rgb, 255:red, 245; green, 240; blue, 235 }  ,fill opacity=1 ][line width=0.5]  (285.01,100) -- (314.99,100) -- (300,114.99) -- cycle ;

\draw (635,140) node  [font=\normalsize] [align=left] {$\displaystyle ...$};
\draw (225,140) node  [font=\normalsize] [align=left] {$\displaystyle ...$};
\draw (430,165) node  [font=\footnotesize] [align=left] {$\displaystyle i$};
\draw (560,165) node  [font=\footnotesize] [align=left] {$\displaystyle i+1$};
\draw (300,165) node  [font=\footnotesize] [align=left] {$\displaystyle i-1$};

\end{mytikz4}\quad,
\end{equation}
times a factor that only depends on $t$, $d$ and $D$. The triangles represent the permutation tensor $S$ with $S_\sigma=\ket{\sigma}$. The upright squares represent $W^{(t)}(Dd)$. The squares with diagonal lines represent the $t$-degree Gram matrices $G^{(t)}_{\sigma,\tau}(D)=\braket{\sigma}{\tau}=D^{\# (\sigma^{-1}\tau)}$, where $\# (\sigma^{-1}\tau)$ counts the number of cyclic permutations in $\sigma^{-1}\tau$. The grey dots represent the copy tensors. Since the cyclic number is a class function in the group algebra $\mathbb{C}[\mathcal{S}_t]$, $W^{(t)}(Dd)$ commutes with $G^{(t)}(d)$. By exchanging them on the bond between $i$ and $i+1$, we obtain exactly the $t$-th moment of $\mathbb{V}_\mathrm{MPS}^{[i+1]}$. By induction, the proof is completed.
\end{sketchproof}

The above theorem implies that randomly initialized sequential circuits with different orthogonality centers not only generate the same set of MPS but also generate the same initialized distribution over it. To connect this statistical fact to optimization outcomes, we first give a practical definition of local minimum distribution.

\textit{Invariance of local minimum distribution.---}For an energy function $\mathcal{E}$, a parameter point $\bm{\theta}$ is a local minimum of $\mathcal{E}$ if there is an open neighborhood $\mathcal{N}(\bm{\theta})$ such that any point $\bm{\theta}'$ in the neighborhood satisfies $\mathcal{E}(\bm{\theta}) \leq \mathcal{E}(\bm{\theta}')$. For smooth functions, this is equivalent to zero gradient and positive semi-definite Hessian matrix. The strict positive definiteness is not required in order to include the local minima that are not isolated points but form continuous lines or areas.

Local minimum distribution refers to the distribution of the energy function values of local minima. Thus, it is necessary to define a measure over the local minimum set $\mathcal{LM}_\mathcal{E}$. To reflect the probability of converging to each local minimum, we use a practical measure by incorporating the optimization dynamics. For MPS, there is a standard choice of optimization algorithm---the imaginary-time TDVP algorithm~\cite{Haegeman2011}, which projects the exact imaginary-time evolution onto the tangent space of MPS. The one-site update algorithm of MPS, as a specific form of the DMRG algorithm, can be regarded as a particular limit of the TDVP algorithm by taking the infinite time step size after the Trotter decomposition of the tangent space projector~\cite{Haegeman2016}. In a more general context, the TDVP algorithm is equivalent to the quantum natural gradient descent~\cite{Stokes2020}, Riemannian gradient descent, and stochastic reconfiguration algorithm. Specifically, the evolution flow of parameters is given by
\begin{equation}
    \frac{\mathrm{d}}{\mathrm{d}t}\bm{\theta}(t) = - \mathrm{grad}\, \mathcal{E} (\bm{\theta}(t)),
\end{equation}
where $\mathrm{grad}\, \mathcal{E} $ is the quantum natural gradient, related to the standard gradient by $\mathrm{grad}\, \mathcal{E}(\bm{\theta})=\mathcal{I}^{+}(\bm{\theta}) \nabla \mathcal{E}(\bm{\theta})$. $\mathcal{I}^{+}(\bm{\theta})$ is the pseudo-inverse of the quantum Fisher information matrix~\cite{SM}. The long-time limit of this negative natural gradient flow defines an optimization map
\begin{equation}
    \mathcal{O}_\mathcal{E}(\bm{\theta}_0)=\lim_{t\rightarrow\infty}\bm{\theta}(t),
\end{equation}
where $\bm{\theta}_0$ represents the initial point. Given a random initialization ensemble $\Theta$ with measure $\mu_\Theta$, the push-forward measure under $\mathcal{O}_\mathcal{E}$ reflects the probability of converging to each local minimum. Namely, we assign each attractor a measure by the volume of its basin of attraction. The local minimum distribution is then given by the following cumulative distribution function
\begin{equation}\label{eq:PE}
    \mathcal{P}_\mathcal{E}(E;\Theta) = \mu_{\Theta}\left[ \mathcal{O}_\mathcal{E}^{-1}\left(\{ \bm{\theta}\in \mathcal{LM}_\mathcal{E} \mid \mathcal{E}(\bm{\theta})\leq E \}\right)\right],
\end{equation}
which directly describes the distribution of energy $E$ in the final converged results of optimization starting from $\Theta$. We use $\mathcal{A}_\mathrm{MPS}^{[i]}$ and $\mathbb{A}_\mathrm{MPS}^{[i]}$ to denote the MPS representation space and ensemble with orthogonality center at site $i$, respectively. The corresponding local minimum distribution obeys the following theorem.


\begin{figure}
    \centering
    \includegraphics[width=0.99\linewidth]{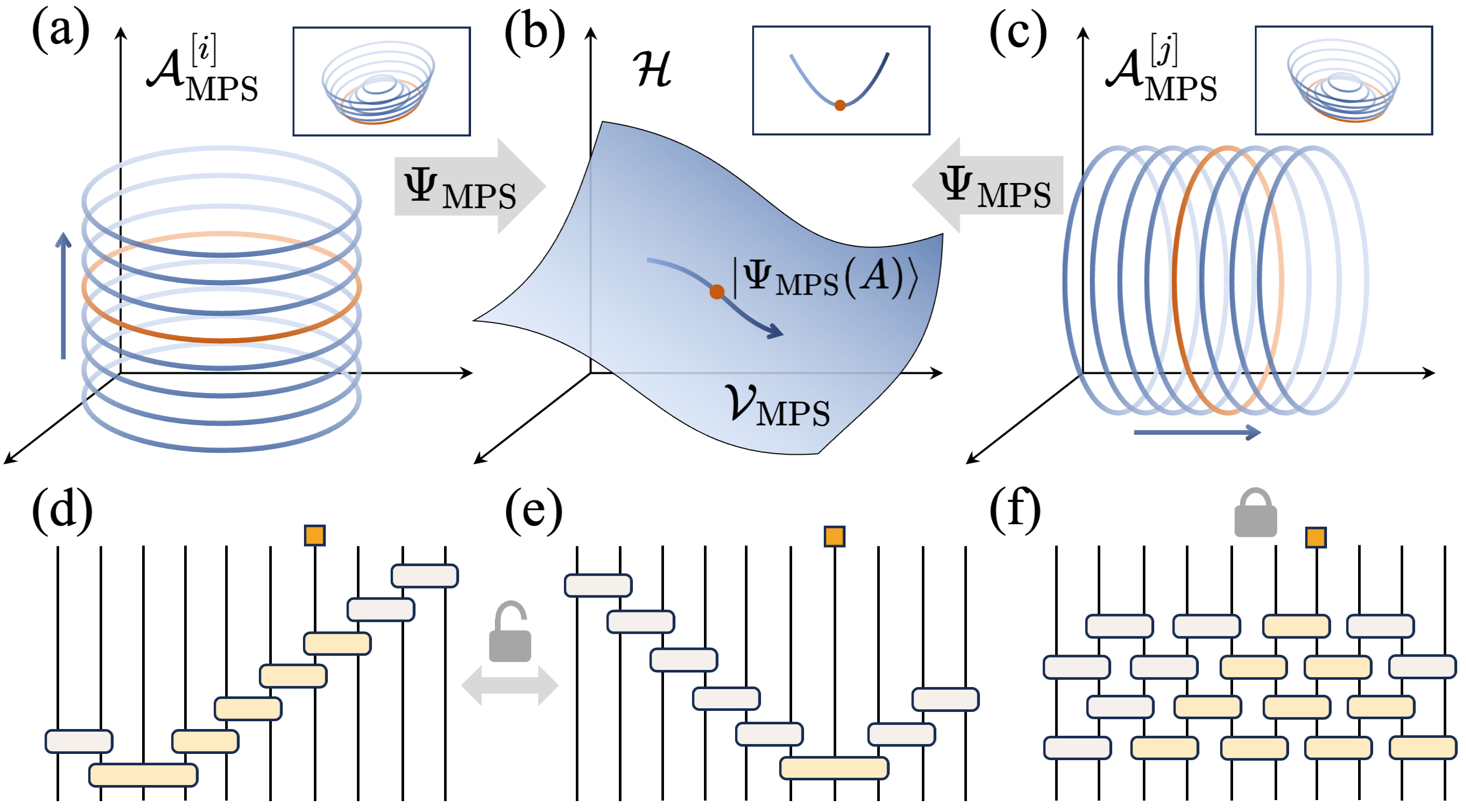}
    \caption{Local minimum correspondence. (a) and (c) depict the MPS representation spaces $\mathcal{A}_\mathrm{MPS}^{[i]}$ and $\mathcal{A}_\mathrm{MPS}^{[j]}$ with orthogonality centers at site $i$ and $j$, respectively. The rings represent the gauge orbits. (b) depicts the MPS variety $\mathcal{V}_\mathrm{MPS}$ as a subset in the Hilbert space $\mathcal{H}$. A gauge orbit is mapped to a single physical state by the contraction map $\Psi_\mathrm{MPS}$. The insets depict the respective energy landscapes. Each physical local minimum is mapped to a continuum of local minima in the representation space with the same energy. (d) and (e) depict the flexible causal structure in MPS and sequential circuits, which is absent in (f) brickwork circuits. The orange square represents a local observable and the colored blocks represent the backward causal cone of the observable.}
    \label{fig:mps_variety}
\end{figure}

\begin{theorem}\label{thm:2}
The local minimum distributions of the energy landscapes with respect to the MPS representation ensembles with different orthogonality centers are identical, i.e., $\mathcal{P}_\mathcal{E}(E; \mathbb{A}_\mathrm{MPS}^{[i]}) = \mathcal{P}_\mathcal{E}(E; \mathbb{A}_\mathrm{MPS}^{[j]})$ for any two sites $i,j$ and any Hamiltonian.
\end{theorem}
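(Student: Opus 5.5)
The plan is to reduce Theorem~\ref{thm:2} to Theorem~\ref{thm:1}. The idea is to show that the whole optimization pipeline factors through the physical state $\ket{\Psi_\mathrm{MPS}(A)}$ on the MPS variety $\mathcal{V}_\mathrm{MPS}$. If it does, the local minimum distribution depends only on the induced state distribution, which Theorem~\ref{thm:1} says is independent of the orthogonality center.

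\textbf{Step 1: the flow factors through the state.} Let $\Psi^{[i]}:\mathcal{A}_\mathrm{MPS}^{[i]}\to\mathcal{V}_\mathrm{MPS}$ be the contraction map. The energy factors as $\mathcal{E}^{[i]}=\mathcal{E}_\mathcal{V}\circ\Psi^{[i]}$, with $\mathcal{E}_\mathcal{V}(\psi)=\braoprket{\psi}{H}{\psi}$. The natural gradient $\mathcal{I}^+\nabla\mathcal{E}$ is the minimum-norm parameter velocity whose image under $d\Psi^{[i]}$ equals $-P_{T}H\ket{\psi}$. Here $P_T$ is the projector onto the tangent space $T_\psi\mathcal{V}_\mathrm{MPS}$; this is the TDVP characterization, with the gauge directions forming the kernel of the Fisher matrix. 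Consequently $\Psi^{[i]}(\bm\theta(t))$ solves the intrinsic projected flow $\dot\psi=-P_T(H-\mathcal{E})\psi$ on the variety. This flow is the same for every $i$, because the tangent space of the variety does not depend on the chosen gauge. The step uses that $\Psi^{[i]}$ is a submersion onto $\mathcal{V}_\mathrm{MPS}$ at generic (full-rank) points, so the two tangent images coincide. Denote the resulting state-level limit map by $\mathcal{O}_\mathcal{V}$; then $\Psi^{[i]}\circ\mathcal{O}_{\mathcal{E}^{[i]}}=\mathcal{O}_\mathcal{V}\circ\Psi^{[i]}$ on initial points.

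\textbf{Step 2: local minima correspond.} I will show that $\bm\theta$ is a local minimum of $\mathcal{E}^{[i]}$ iff $\Psi^{[i]}(\bm\theta)$ is a local minimum of $\mathcal{E}_\mathcal{V}$, and that the two energy values agree. The reverse direction is immediate from continuity of $\Psi^{[i]}$. The forward direction uses that $\Psi^{[i]}$ is an open map onto the variety near full-rank points, since it is a submersion/quotient by the gauge group. Hence $\{\bm\theta\in\mathcal{LM}_{\mathcal{E}^{[i]}}:\mathcal{E}^{[i]}(\bm\theta)\le E\}=(\Psi^{[i]})^{-1}(L_E)$, where $L_E$ is the set of state-level local minima with energy at most $E$.

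\textbf{Step 3: push forward and apply Theorem~\ref{thm:1}.} Combining the two steps gives $\mathcal{O}_{\mathcal{E}^{[i]}}^{-1}((\Psi^{[i]})^{-1}(L_E))=(\Psi^{[i]})^{-1}(\mathcal{O}_\mathcal{V}^{-1}(L_E))$. Therefore
\begin{equation}
\mathcal{P}_\mathcal{E}(E;\mathbb{A}^{[i]}_\mathrm{MPS})=\mu_{\mathbb{V}^{[i]}_\mathrm{MPS}}\big[\mathcal{O}_\mathcal{V}^{-1}(L_E)\big],
\end{equation}
since the push-forward of $\mathbb{A}^{[i]}_\mathrm{MPS}$ under $\Psi^{[i]}$ is by definition $\mathbb{V}^{[i]}_\mathrm{MPS}$. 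The right-hand side is independent of $i$ by Theorem~\ref{thm:1}, which completes the argument.

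\textbf{Main obstacle.} The hard part is the regularity underlying Steps 1 and 2. I need that rank-deficient MPS (where $\Psi^{[i]}$ fails to be a submersion and the tangent space jumps) form a measure-zero set in the Haar ensemble. I also need that the limit $\mathcal{O}$ exists almost surely, and that trajectories starting at full-rank points either stay in the full-rank stratum or converge in a way that does not spoil the equivariance. My first attempt would be to handle measure zero via algebraicity: Haar-random isometries give full Schmidt rank off a proper algebraic subset. For convergence, I would use the Łojasiewicz inequality for the real-analytic energy, and I would restrict $\mathcal{P}_\mathcal{E}$ to convergent trajectories, a restriction that is the same for every $i$ by Step 1.
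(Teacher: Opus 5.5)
Your overall architecture is the paper's: factor $\mathcal{E}^{[i]}=\mathcal{F}\circ\Psi^{[i]}$, use the natural-gradient/TDVP equivalence $\Psi\circ\mathcal{O}_\mathcal{E}=\mathcal{O}_\mathcal{F}\circ\Psi$, push the basin measure forward to $\mathbb{V}^{[i]}_\mathrm{MPS}$, and finish with Theorem~1. The gap is in Step~2.

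The identity $\{\bm\theta\in\mathcal{LM}_{\mathcal{E}^{[i]}}:\mathcal{E}^{[i]}(\bm\theta)\le E\}=(\Psi^{[i]})^{-1}(L_E)$ needs the forward implication (local minimum of $\mathcal{E}^{[i]}$ $\Rightarrow$ local minimum of $\mathcal{F}$ on $\mathcal{V}_\mathrm{MPS}$) at every point. But $\Psi^{[i]}$ is open only at full-rank representations. Suppose a canonical representation has Schmidt rank $r<D$ at some bond. The isometric tensors on one side of that bond span a $D$-dimensional subspace, which can only move slightly under small parameter changes. Nearby states of $\mathcal{V}_\mathrm{MPS}$ that acquire new Schmidt vectors outside this subspace are therefore not images of nearby parameters. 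At such points only the reverse implication (your continuity argument) survives, and the paper points out exactly this.

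Your ``main obstacle'' paragraph treats rank deficiency as a measure-zero nuisance, but that only controls the initial points. Trajectories started in the full-rank stratum converge to rank-deficient limits with positive probability; for $H=-\sum_j Z_j$, every run that reaches the ground state ends at a product state. A rank-deficient endpoint that is a local minimum of $\mathcal{E}^{[i]}$ but not of $\mathcal{F}$ is counted in $\mathcal{P}_\mathcal{E}(E;\mathbb{A}^{[i]}_\mathrm{MPS})$ but not in $\mu_{\mathbb{V}^{[i]}_\mathrm{MPS}}[\mathcal{O}_\mathcal{V}^{-1}(L_E)]$. So your Step~3 formula is unjustified.

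The paper avoids this by never identifying the target set with genuine minima of $\mathcal{F}$. It uses $\Psi_\mathrm{MPS}(\mathcal{B}_E)$, a subset of the ``limit local minimum'' set $\mathcal{LM}^*_\mathcal{F}\supseteq\mathcal{LM}_\mathcal{F}$ (points that are minima within the limiting tangent space of a TDVP trajectory). It argues this set is intrinsic to $\mathcal{V}_\mathrm{MPS}$, and then applies $\mathcal{O}_\mathcal{F}^{-1}$ and Theorem~1 to it.

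A simpler repair of your argument is to delete Step~2 entirely. The definition of $\mathcal{O}_\mathcal{E}$ already carries the standing assumption that almost every trajectory ends in $\mathcal{LM}_\mathcal{E}$. Under it, $\mathcal{P}_\mathcal{E}(E;\mathbb{A}^{[i]}_\mathrm{MPS})$ is just the distribution of the final energy. By Step~1, $\mathcal{E}^{[i]}(\mathcal{O}_{\mathcal{E}^{[i]}}(\bm\theta_0))=\mathcal{F}(\mathcal{O}_\mathcal{F}(\Psi^{[i]}(\bm\theta_0)))$. This distribution is therefore the push-forward of $\mathbb{V}^{[i]}_\mathrm{MPS}$ under $\mathcal{F}\circ\mathcal{O}_\mathcal{F}$, which is independent of $i$ by Theorem~1, and no local-minimum correspondence is needed.
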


\begin{sketchproof}We sketch the main idea here and leave the details to SM~\cite{SM}. The set of MPS with fixed bond dimension form an algebraic variety $\mathcal{V}_\mathrm{MPS}$, with the full-rank stratum being a complex manifold $\mathcal{V}_\mathrm{MPS}^\mathrm{f}$ that is biholomorphic to the gauge orbit space, i.e., the quotient space of the full-rank MPS representation space with respect to the gauge group. With careful treatment of the rank-deficient singularities, an isoenergy local minimum correspondence can be established between the energy functions on $\mathcal{A}_\mathrm{MPS}^{[i]}$ and $\mathcal{A}_\mathrm{MPS}^{[j]}$, as illustrated in Fig.\,\ref{fig:mps_variety}. Since the TDVP evolution flow is independent of the MPS representation, the basin of attraction of the local minimum subset of $\mathcal{A}_\mathrm{MPS}^{[i]}$ at energy level $E$ has the same image in $\mathcal{V}_\mathrm{MPS}$ as that of $\mathcal{A}_\mathrm{MPS}^{[j]}$. According to Theorem~\ref{thm:1}, the image has the same measure under the two distributions induced by $\mathbb{A}_\mathrm{MPS}^{[i]}$ and $\mathbb{A}_\mathrm{MPS}^{[j]}$. Thus, the two local minimum subsets have the same basin-based measure, which completes the proof.
\end{sketchproof}

We remark that Theorem~\ref{thm:2} only implies that certain statistical properties of local minima are invariant under moves of the orthogonality center, while the exact energy landscape may deform drastically. For sequential circuits that are further parametrized by Pauli rotation angles, Theorem~\ref{thm:2} also holds true because the parametrization is a submersion up to coordinate singularities~\cite{SM}. 




\textit{Good local minima in MPS.---}Theorem~\ref{thm:2} allows us to analyze the local minimum distribution by choosing the most convenient gauge. We can thus derive some useful results on the trainability of MPS. We denote the Hamiltonian as $H = \sum_j H_j$, where $H_j$ is a Hermitian basis operator such as a Pauli string times a real coefficient.

In the simplest case, suppose that $H$ only contains one local operator $H_1$ supported on $s$ contiguous sites. Theorem~\ref{thm:2} guarantees that the local minimum distribution is the same as that of the sequential circuit whose orthogonality center is within the support of $H_1$, where all other unitaries outside the backward causal cone of $H_1$ are canceled out with their conjugates, as illustrated in Figs.\,\ref{fig:mps_variety}(d) and (e). Thus, when the number of independent parameters in the backward causal cone $2s(d-1)D^2 - 1$ is larger than the Hilbert space dimension of the support $(2d^s - 1)$, i.e., the bond dimension is larger than $D_c = \sqrt{d^s/[s(d-1)]}$, the reduced state is locally overparametrized, and hence the problem is equivalent to the optimization of a linear function over a Bloch ball, which is essentially a convex optimization problem where all the local minima are global minima. We call this effective local overparametrization because, in the original representation, the explicit orthogonality center can be far away from the support of $H_1$, so that the backward causal cone is too large to be overparametrized. From the perspective of circuit optimization, this simplest case is still non-trivial, because, as a counterexample, brickwork circuits of moderate depth are expected to face challenges in efficiently finding this simple ground state due to proliferation of poor local minima caused by underparametrization~\cite{Anschuetz2022}.

\begin{figure}
    \centering
    \includegraphics[width=0.99\linewidth]{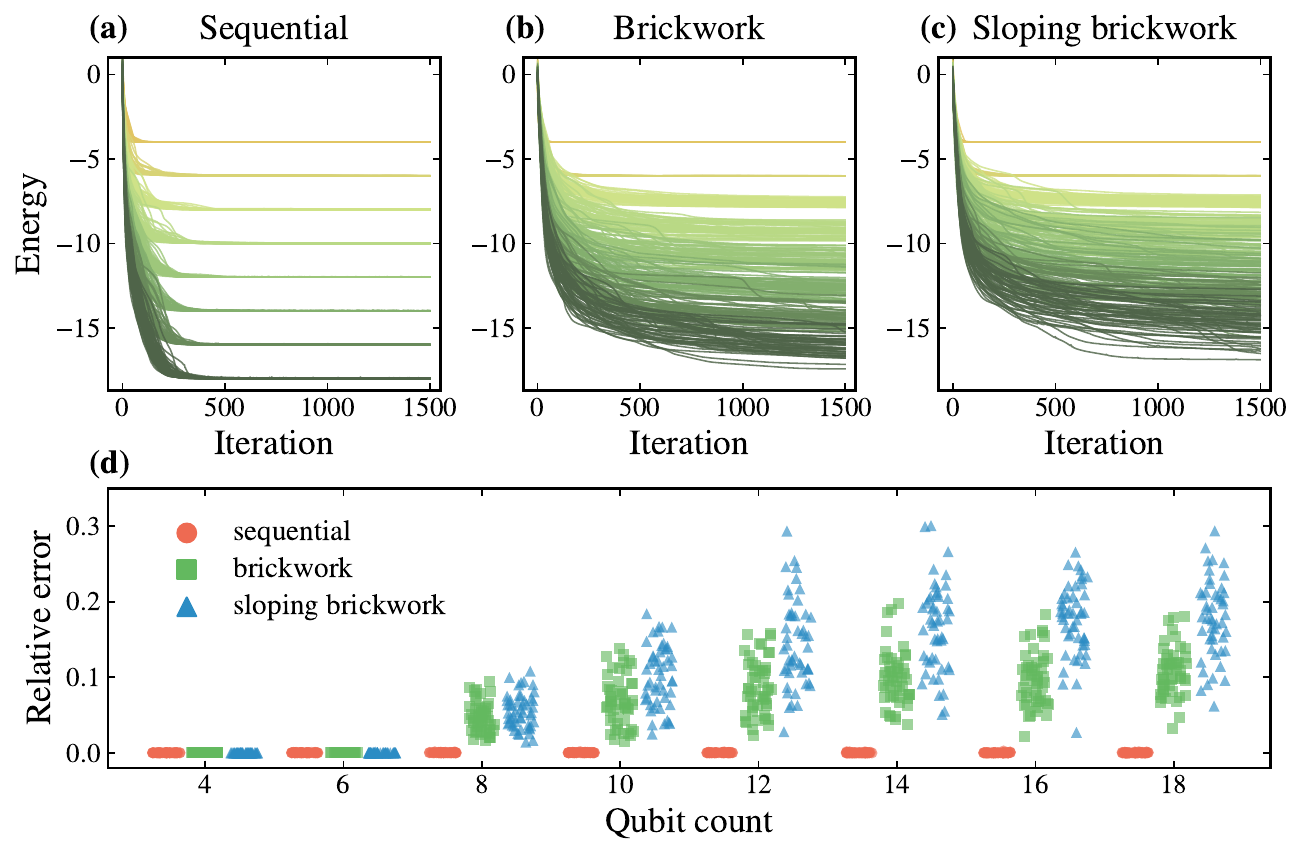}
    \caption{The numerical training curves of (a) sequential circuits of $\beta=3,L=1$, (b) brickwork circuits of $\beta=2,L=4$, and (c) sloping brickwork circuits of $\beta=2,L=4$. The block size $\beta$ and the number of layers $L$ are selected so that the numbers of parameters in different circuits are comparable. The loss function is the energy expectation of random backward-evolved Hamiltonians. The darkness of the color marks the system size from $4$ to $18$ qubits. The number of samples is $50$ for each system size. (d) The corresponding local minimum distributions in the form of scatter plots, with the data extracted from the final values in (a-c).}
    \label{fig:numerical_training}
\end{figure}

In the generic case where the Hamiltonian consists of multiple local subterms, the total energy function is a sum of elementary energy functions: $\mathcal{E}(\bm{\theta}) = \sum_j\mathcal{E}_j(\bm{\theta}) = \sum_j \braoprket{\Psi(\bm{\theta})}{H_j}{\Psi(\bm{\theta})}$. To preserve the benign landscape properties of the individual terms, a natural sufficient condition is the common local minimum condition: every local minimum of $\mathcal{E}$ is also a local minimum of each $\mathcal{E}_j$. Thus, if each $\mathcal{E}_j$ has local minima within a band of width $\epsilon_j$ of its global minimum, the local minima of $\mathcal{E}$ concentrate near its global minimum up to $\sum_j\epsilon_j$. This condition imposes constraints on the Hamiltonian and the ansatz. For example, if $\epsilon_j=0$, the Hamiltonian must be frustration-free and the ansatz must be expressive enough to represent the ground states. Specifically, a sufficient condition that ensures the total gradient vanishes only if each subterm gradient vanishes is the compatible gradient condition: there exists a positive constant $c$ such that $\|\mathrm{grad}\,\mathcal{E}(\bm{\theta})\|^2_\mathcal{I} \geq c\sum_j\|\mathrm{grad}\,\mathcal{E}_j(\bm{\theta})\|^2_\mathcal{I}$. The norm is taken with respect to the quantum Fisher information metric. Physically, this implies relatively weak frustration among the imaginary-time evolution directions governed by different subterms within the tangent space. We have numerically verified this condition across different Hamiltonians~\cite{SM}. In summary, provided sufficient expressibility, the typical energy landscape of MPS for weakly frustrated 1D local Hamiltonians is expected to be free from poor local minima, in stark contrast to brickwork circuits. We note that it is natural to expect that MPS are not free from poor local minima in the worst case, since the local Hamiltonian problem is $\mathsf{QMA}$-complete even in one dimension~\cite{Aharonov2009}.

\textit{Numerical experiments.---}In order to estimate the typical trainability of different circuits, we design a family of random Hamiltonians by $H_{\text{rand}} = \mathbf{V}^\dagger H_{Z} \mathbf{V}$, termed backward-evolved Hamiltonians, where $\mathbf{V}$ is a random circuit in the same shape as the reversed ansatz circuit and $H_{Z}=-\sum_j Z_j$. This design ensures that the ground state can always be captured by the ansatz, so that the expressibility is always sufficient, and hence the optimization performance is solely determined by the trainability. We mainly compare three circuit architectures, i.e., sequential circuits, brickwork circuits, and sloping brickwork circuits, as shown in Fig.\,\ref{fig:landscape_circuit}. The optimizer is chosen as the widely used Adam optimizer. The resulting local minimum distribution is expected to qualitatively align with the standard manifold-based definition in Eq.\,\eqref{eq:PE}, as different optimizers mainly influence the convergence speed and individual trajectories compared to the overall statistics. As shown in Fig.\,\ref{fig:numerical_training}, the optimization of sequential circuits always converges to the ground state energy $E_0=-N$, while those of brickwork circuits and sloping brickwork circuits get stuck in poor local minima with significant relative errors, especially when the system size increases. The technical details and additional numerical results can be found in SM~\cite{SM}.

\textit{Discussion.---}In this work, we attribute the success of MPS-based optimization algorithms to effective local overparametrization of MPS, which originates from the flexible causal structure enabled by the gauge freedom of MPS. By contrast, the brickwork-type circuits entangle different parts of the system rapidly in a global manner, which prevents them from possessing the flexibility as in MPS and sequential circuits. Amidst previous negative results on the local minimum problem in training quantum circuits~\cite{Anschuetz2022}, our findings establish a rare positive example with rigorous theoretical proofs beyond the global overparametrization~\cite{Larocca2023}. This suggests that, from the perspective of trainability, a preferable strategy to enhance circuit expressibility is to increase the size of each universal unitary block rather than simply stacking layers repeatedly. Furthermore, although our results mainly focus on MPS, the conclusions and techniques can be generalized to broader tensor network states which possess a well-defined orthogonality center that is movable without altering the ansatz space, such as tree tensor network states. Extending this framework to investigate the energy landscapes of higher-dimensional tensor network states beyond classically simulable regimes is a promising direction for future research.



\textit{Acknowledgments.---}We acknowledge the stimulating discussions with Yantao Wu, Jialin Chen and Zhen Fan. H.K.Z. was supported by the Postdoctoral Fellowship Program and China Postdoctoral Science Foundation (No. BX20250169) and Beijing Natural Science Foundation (No. 1264076). S.L. was supported by the Gordon and Betty Moore Foundation through Grant No. GBMF8685 towards the Princeton theory program, the Gordon and Betty Moore Foundation’s EPiQS Initiative (Grant No. GBMF11070), the Global Collaborative Network Grant at Princeton University, the Simons Investigator Grant No. 404513, the Princeton Global Network, the NSF-MERSEC (Grant No. MERSEC DMR 2011750), the Simons Collaboration on New Frontiers in Superconductivity (Grant No. SFI-MPS-NFS-00006741-01 and No. SFI-MPS-NFS-00006741-06), the Princeton Catalysis Initiative, the Schmidt Foundation at the Princeton University. S.X.Z. was supported by Quantum Science and Technology-National Science and Technology Major Project (No. 2024ZD0301700), the National Natural Science Foundation of China (No. 12574546) and the Chinese Academy of Sciences (No. YSBR-150).

\bibliographystyle{apsreve}
\bibliography{autoref}

\begin{thebibliography}{64}%
\makeatletter
\providecommand \@ifxundefined [1]{%
 \@ifx{#1\undefined}
}%
\providecommand \@ifnum [1]{%
 \ifnum #1\expandafter \@firstoftwo
 \else \expandafter \@secondoftwo
 \fi
}%
\providecommand \@ifx [1]{%
 \ifx #1\expandafter \@firstoftwo
 \else \expandafter \@secondoftwo
 \fi
}%
\providecommand \natexlab [1]{#1}%
\providecommand \bibnamefont  [1]{#1}%
\providecommand \bibfnamefont [1]{#1}%
\providecommand \citenamefont [1]{#1}%
\providecommand \href@noop [0]{\@secondoftwo}%
\providecommand \href [0]{\begingroup \@sanitize@url \@href}%
\providecommand \@href[1]{\@@startlink{#1}\@@href}%
\providecommand \@@href[1]{\endgroup#1\@@endlink}%
\providecommand \@sanitize@url [0]{\catcode `\\12\catcode `\$12\catcode `\&12\catcode `\#12\catcode `\^12\catcode `\_12\catcode `\%12\relax}%
\providecommand \@@startlink[1]{}%
\providecommand \@@endlink[0]{}%
\providecommand \url  [0]{\begingroup\@sanitize@url \@url }%
\providecommand \@url [1]{\endgroup\@href {#1}{\urlprefix }}%
\providecommand \urlprefix  [0]{URL }%
\providecommand \Eprint [0]{\href }%
\providecommand \doibase [0]{http://dx.doi.org/}%
\providecommand \selectlanguage [0]{\@gobble}%
\providecommand \bibinfo  [0]{\@secondoftwo}%
\providecommand \bibfield  [0]{\@secondoftwo}%
\providecommand \translation [1]{[#1]}%
\providecommand \BibitemOpen [0]{}%
\providecommand \bibitemStop [0]{}%
\providecommand \bibitemNoStop [0]{.\EOS\space}%
\providecommand \EOS [0]{\spacefactor3000\relax}%
\providecommand \BibitemShut  [1]{\csname bibitem#1\endcsname}%
\let\auto@bib@innerbib\@empty
\bibitem [{\citenamefont {Choromanska}\ \emph {et~al.}(2015)\citenamefont {Choromanska}, \citenamefont {Henaff}, \citenamefont {Mathieu}, \citenamefont {Arous},\ and\ \citenamefont {LeCun}}]{Choromanska2015}%
  \BibitemOpen
  \bibfield  {author} {\bibinfo {author} {\bibfnamefont {A.}~\bibnamefont {Choromanska}}, \bibinfo {author} {\bibfnamefont {M.}~\bibnamefont {Henaff}}, \bibinfo {author} {\bibfnamefont {M.}~\bibnamefont {Mathieu}}, \bibinfo {author} {\bibfnamefont {G.~B.}\ \bibnamefont {Arous}}, \ and\ \bibinfo {author} {\bibfnamefont {Y.}~\bibnamefont {LeCun}},\ }\bibfield  {title} {{The Loss Surfaces of Multilayer Networks},\ }\href {https://proceedings.mlr.press/v38/choromanska15} {\bibfield  {journal} {\bibinfo  {journal} {Journal of Machine Learning Research}\ }\textbf {\bibinfo {volume} {38}},\ \bibinfo {pages} {192} (\bibinfo {year} {2015})}\BibitemShut {NoStop}%
\bibitem [{\citenamefont {Kawaguchi}(2016)}]{Kawaguchi2016}%
  \BibitemOpen
  \bibfield  {author} {\bibinfo {author} {\bibfnamefont {K.}~\bibnamefont {Kawaguchi}},\ }\bibfield  {title} {{Deep Learning without Poor Local Minima},\ }\href {http://arxiv.org/abs/1605.07110} {\bibfield  {journal} {\bibinfo  {journal} {Advances in Neural Information Processing Systems}\ ,\ \bibinfo {pages} {586}} (\bibinfo {year} {2016})}\BibitemShut {NoStop}%
\bibitem [{\citenamefont {Jacot}\ \emph {et~al.}(2020)\citenamefont {Jacot}, \citenamefont {Gabriel},\ and\ \citenamefont {Hongler}}]{Jacot2020}%
  \BibitemOpen
  \bibfield  {author} {\bibinfo {author} {\bibfnamefont {A.}~\bibnamefont {Jacot}}, \bibinfo {author} {\bibfnamefont {F.}~\bibnamefont {Gabriel}}, \ and\ \bibinfo {author} {\bibfnamefont {C.}~\bibnamefont {Hongler}},\ }{Neural Tangent Kernel: Convergence and Generalization in Neural Networks},\ in\ \href {https://proceedings.neurips.cc/paper_files/paper/2018/file/5a4be1fa34e62bb8a6ec6b91d2462f5a-Paper.pdf} {\bibinfo {booktitle} {Advances in Neural Information Processing Systems}},\ \bibinfo {series and number} {\bibinfo {number} {4}}\ (\bibinfo {year} {2020})\BibitemShut {NoStop}%
\bibitem [{\citenamefont {Bharti}\ \emph {et~al.}(2022)\citenamefont {Bharti}, \citenamefont {Cervera-Lierta}, \citenamefont {Kyaw}, \citenamefont {Haug}, \citenamefont {Alperin-Lea}, \citenamefont {Anand}, \citenamefont {Degroote}, \citenamefont {Heimonen}, \citenamefont {Kottmann}, \citenamefont {Menke}, \citenamefont {Mok}, \citenamefont {Sim}, \citenamefont {Kwek},\ and\ \citenamefont {Aspuru-Guzik}}]{Bharti2022}%
  \BibitemOpen
  \bibfield  {author} {\bibinfo {author} {\bibfnamefont {K.}~\bibnamefont {Bharti}}, \bibinfo {author} {\bibfnamefont {A.}~\bibnamefont {Cervera-Lierta}}, \bibinfo {author} {\bibfnamefont {T.~H.}\ \bibnamefont {Kyaw}}, \bibinfo {author} {\bibfnamefont {T.}~\bibnamefont {Haug}}, \bibinfo {author} {\bibfnamefont {S.}~\bibnamefont {Alperin-Lea}}, \bibinfo {author} {\bibfnamefont {A.}~\bibnamefont {Anand}}, \bibinfo {author} {\bibfnamefont {M.}~\bibnamefont {Degroote}}, \bibinfo {author} {\bibfnamefont {H.}~\bibnamefont {Heimonen}}, \bibinfo {author} {\bibfnamefont {J.~S.}\ \bibnamefont {Kottmann}}, \bibinfo {author} {\bibfnamefont {T.}~\bibnamefont {Menke}}, \bibinfo {author} {\bibfnamefont {W.-K.}\ \bibnamefont {Mok}}, \bibinfo {author} {\bibfnamefont {S.}~\bibnamefont {Sim}}, \bibinfo {author} {\bibfnamefont {L.-C.}\ \bibnamefont {Kwek}}, \ and\ \bibinfo {author} {\bibfnamefont {A.}~\bibnamefont {Aspuru-Guzik}},\ }\bibfield  {title} {{Noisy intermediate-scale quantum algorithms},\ }\href {\doibase 10.1103/RevModPhys.94.015004} {\bibfield  {journal} {\bibinfo  {journal} {Reviews of Modern Physics}\ }\textbf {\bibinfo {volume} {94}},\ \bibinfo {pages} {015004} (\bibinfo {year} {2022})}\BibitemShut {NoStop}%
\bibitem [{\citenamefont {Cerezo}\ \emph {et~al.}(2021{\natexlab{a}})\citenamefont {Cerezo}, \citenamefont {Arrasmith}, \citenamefont {Babbush}, \citenamefont {Benjamin}, \citenamefont {Endo}, \citenamefont {Fujii}, \citenamefont {McClean}, \citenamefont {Mitarai}, \citenamefont {Yuan}, \citenamefont {Cincio},\ and\ \citenamefont {Coles}}]{Cerezo2021a}%
  \BibitemOpen
  \bibfield  {author} {\bibinfo {author} {\bibfnamefont {M.}~\bibnamefont {Cerezo}}, \bibinfo {author} {\bibfnamefont {A.}~\bibnamefont {Arrasmith}}, \bibinfo {author} {\bibfnamefont {R.}~\bibnamefont {Babbush}}, \bibinfo {author} {\bibfnamefont {S.~C.}\ \bibnamefont {Benjamin}}, \bibinfo {author} {\bibfnamefont {S.}~\bibnamefont {Endo}}, \bibinfo {author} {\bibfnamefont {K.}~\bibnamefont {Fujii}}, \bibinfo {author} {\bibfnamefont {J.~R.}\ \bibnamefont {McClean}}, \bibinfo {author} {\bibfnamefont {K.}~\bibnamefont {Mitarai}}, \bibinfo {author} {\bibfnamefont {X.}~\bibnamefont {Yuan}}, \bibinfo {author} {\bibfnamefont {L.}~\bibnamefont {Cincio}}, \ and\ \bibinfo {author} {\bibfnamefont {P.~J.}\ \bibnamefont {Coles}},\ }\bibfield  {title} {{Variational quantum algorithms},\ }\href {\doibase 10.1038/s42254-021-00348-9} {\bibfield  {journal} {\bibinfo  {journal} {Nature Reviews Physics}\ }\textbf {\bibinfo {volume} {3}},\ \bibinfo {pages} {625} (\bibinfo {year} {2021}{\natexlab{a}})}\BibitemShut {NoStop}%
\bibitem [{\citenamefont {McClean}\ \emph {et~al.}(2018)\citenamefont {McClean}, \citenamefont {Boixo}, \citenamefont {Smelyanskiy}, \citenamefont {Babbush},\ and\ \citenamefont {Neven}}]{McClean2018}%
  \BibitemOpen
  \bibfield  {author} {\bibinfo {author} {\bibfnamefont {J.~R.}\ \bibnamefont {McClean}}, \bibinfo {author} {\bibfnamefont {S.}~\bibnamefont {Boixo}}, \bibinfo {author} {\bibfnamefont {V.~N.}\ \bibnamefont {Smelyanskiy}}, \bibinfo {author} {\bibfnamefont {R.}~\bibnamefont {Babbush}}, \ and\ \bibinfo {author} {\bibfnamefont {H.}~\bibnamefont {Neven}},\ }\bibfield  {title} {{Barren plateaus in quantum neural network training landscapes},\ }\href {\doibase 10.1038/s41467-018-07090-4} {\bibfield  {journal} {\bibinfo  {journal} {Nature Communications}\ }\textbf {\bibinfo {volume} {9}},\ \bibinfo {pages} {1} (\bibinfo {year} {2018})}\BibitemShut {NoStop}%
\bibitem [{\citenamefont {Cerezo}\ \emph {et~al.}(2021{\natexlab{b}})\citenamefont {Cerezo}, \citenamefont {Sone}, \citenamefont {Volkoff}, \citenamefont {Cincio},\ and\ \citenamefont {Coles}}]{Cerezo2021}%
  \BibitemOpen
  \bibfield  {author} {\bibinfo {author} {\bibfnamefont {M.}~\bibnamefont {Cerezo}}, \bibinfo {author} {\bibfnamefont {A.}~\bibnamefont {Sone}}, \bibinfo {author} {\bibfnamefont {T.}~\bibnamefont {Volkoff}}, \bibinfo {author} {\bibfnamefont {L.}~\bibnamefont {Cincio}}, \ and\ \bibinfo {author} {\bibfnamefont {P.~J.}\ \bibnamefont {Coles}},\ }\bibfield  {title} {{Cost function dependent barren plateaus in shallow parametrized quantum circuits},\ }\href {\doibase 10.1038/s41467-021-21728-w} {\bibfield  {journal} {\bibinfo  {journal} {Nature Communications}\ }\textbf {\bibinfo {volume} {12}},\ \bibinfo {pages} {1791} (\bibinfo {year} {2021}{\natexlab{b}})}\BibitemShut {NoStop}%
\bibitem [{\citenamefont {Zhang}\ \emph {et~al.}(2024{\natexlab{a}})\citenamefont {Zhang}, \citenamefont {Liu},\ and\ \citenamefont {Zhang}}]{Zhang2024}%
  \BibitemOpen
  \bibfield  {author} {\bibinfo {author} {\bibfnamefont {H.-K.}\ \bibnamefont {Zhang}}, \bibinfo {author} {\bibfnamefont {S.}~\bibnamefont {Liu}}, \ and\ \bibinfo {author} {\bibfnamefont {S.-X.}\ \bibnamefont {Zhang}},\ }\bibfield  {title} {{Absence of Barren Plateaus in Finite Local-Depth Circuits with Long-Range Entanglement},\ }\href {\doibase 10.1103/PhysRevLett.132.150603} {\bibfield  {journal} {\bibinfo  {journal} {Physical Review Letters}\ }\textbf {\bibinfo {volume} {132}},\ \bibinfo {pages} {150603} (\bibinfo {year} {2024}{\natexlab{a}})}\BibitemShut {NoStop}%
\bibitem [{\citenamefont {Larocca}\ \emph {et~al.}(2025)\citenamefont {Larocca}, \citenamefont {Thanasilp}, \citenamefont {Wang}, \citenamefont {Sharma}, \citenamefont {Biamonte}, \citenamefont {Coles}, \citenamefont {Cincio}, \citenamefont {McClean}, \citenamefont {Holmes},\ and\ \citenamefont {Cerezo}}]{Larocca2025}%
  \BibitemOpen
  \bibfield  {author} {\bibinfo {author} {\bibfnamefont {M.}~\bibnamefont {Larocca}}, \bibinfo {author} {\bibfnamefont {S.}~\bibnamefont {Thanasilp}}, \bibinfo {author} {\bibfnamefont {S.}~\bibnamefont {Wang}}, \bibinfo {author} {\bibfnamefont {K.}~\bibnamefont {Sharma}}, \bibinfo {author} {\bibfnamefont {J.}~\bibnamefont {Biamonte}}, \bibinfo {author} {\bibfnamefont {P.~J.}\ \bibnamefont {Coles}}, \bibinfo {author} {\bibfnamefont {L.}~\bibnamefont {Cincio}}, \bibinfo {author} {\bibfnamefont {J.~R.}\ \bibnamefont {McClean}}, \bibinfo {author} {\bibfnamefont {Z.}~\bibnamefont {Holmes}}, \ and\ \bibinfo {author} {\bibfnamefont {M.}~\bibnamefont {Cerezo}},\ }\bibfield  {title} {{Barren plateaus in variational quantum computing},\ }\href {\doibase 10.1038/s42254-025-00813-9} {\bibfield  {journal} {\bibinfo  {journal} {Nature Reviews Physics}\ }\textbf {\bibinfo {volume} {7}},\ \bibinfo {pages} {174} (\bibinfo {year} {2025})}\BibitemShut {NoStop}%
\bibitem [{\citenamefont {Bittel}\ and\ \citenamefont {Kliesch}(2021)}]{Bittel2021}%
  \BibitemOpen
  \bibfield  {author} {\bibinfo {author} {\bibfnamefont {L.}~\bibnamefont {Bittel}}\ and\ \bibinfo {author} {\bibfnamefont {M.}~\bibnamefont {Kliesch}},\ }\bibfield  {title} {{Training Variational Quantum Algorithms Is NP-Hard},\ }\href {\doibase 10.1103/PhysRevLett.127.120502} {\bibfield  {journal} {\bibinfo  {journal} {Physical Review Letters}\ }\textbf {\bibinfo {volume} {127}},\ \bibinfo {pages} {120502} (\bibinfo {year} {2021})}\BibitemShut {NoStop}%
\bibitem [{\citenamefont {You}\ and\ \citenamefont {Wu}(2021)}]{You2021}%
  \BibitemOpen
  \bibfield  {author} {\bibinfo {author} {\bibfnamefont {X.}~\bibnamefont {You}}\ and\ \bibinfo {author} {\bibfnamefont {X.}~\bibnamefont {Wu}},\ }{Exponentially Many Local Minima in Quantum Neural Networks},\ in\ \href {http://arxiv.org/abs/2110.02479} {\bibinfo {booktitle} {Proceedings of Machine Learning Research}},\ Vol.\ \bibinfo {volume} {139}\ (\bibinfo {year} {2021})\ pp.\ \bibinfo {pages} {12144--12155}\BibitemShut {NoStop}%
\bibitem [{\citenamefont {Anschuetz}\ and\ \citenamefont {Kiani}(2022)}]{Anschuetz2022}%
  \BibitemOpen
  \bibfield  {author} {\bibinfo {author} {\bibfnamefont {E.~R.}\ \bibnamefont {Anschuetz}}\ and\ \bibinfo {author} {\bibfnamefont {B.~T.}\ \bibnamefont {Kiani}},\ }\bibfield  {title} {{Quantum variational algorithms are swamped with traps},\ }\href {https://www.nature.com/articles/s41467-022-35364-5} {\bibfield  {journal} {\bibinfo  {journal} {Nature Communications}\ }\textbf {\bibinfo {volume} {13}} (\bibinfo {year} {2022})}\BibitemShut {NoStop}%
\bibitem [{\citenamefont {Anschuetz}(2023)}]{Anschuetz2023}%
  \BibitemOpen
  \bibfield  {author} {\bibinfo {author} {\bibfnamefont {E.~R.}\ \bibnamefont {Anschuetz}},\ }{Critical Points in Quantum Generative Models},\ in\ \href {https://openreview.net/forum?id=2f1z55GVQN} {\bibinfo {booktitle} {International Conference on Learning Representations}}\ (\bibinfo {year} {2023})\ pp.\ \bibinfo {pages} {1--24}\BibitemShut {NoStop}%
\bibitem [{\citenamefont {Anschuetz}(2025)}]{Anschuetz2025}%
  \BibitemOpen
  \bibfield  {author} {\bibinfo {author} {\bibfnamefont {E.~R.}\ \bibnamefont {Anschuetz}},\ }{A Unified Theory of Quantum Neural Network Loss Landscapes},\ in\ \href {http://arxiv.org/abs/2408.11901} {\bibinfo {booktitle} {International Conference on Learning Representations}},\ \bibinfo {series and number} {\bibinfo {number} {1}}\ (\bibinfo {year} {2025})\ pp.\ \bibinfo {pages} {1--60}\BibitemShut {NoStop}%
\bibitem [{\citenamefont {Arrasmith}\ \emph {et~al.}(2021)\citenamefont {Arrasmith}, \citenamefont {Cerezo}, \citenamefont {Czarnik}, \citenamefont {Cincio},\ and\ \citenamefont {Coles}}]{Arrasmith2020}%
  \BibitemOpen
  \bibfield  {author} {\bibinfo {author} {\bibfnamefont {A.}~\bibnamefont {Arrasmith}}, \bibinfo {author} {\bibfnamefont {M.}~\bibnamefont {Cerezo}}, \bibinfo {author} {\bibfnamefont {P.}~\bibnamefont {Czarnik}}, \bibinfo {author} {\bibfnamefont {L.}~\bibnamefont {Cincio}}, \ and\ \bibinfo {author} {\bibfnamefont {P.~J.}\ \bibnamefont {Coles}},\ }\bibfield  {title} {{Effect of barren plateaus on gradient-free optimization},\ }\href {\doibase 10.22331/q-2021-10-05-558} {\bibfield  {journal} {\bibinfo  {journal} {Quantum}\ }\textbf {\bibinfo {volume} {5}},\ \bibinfo {pages} {558} (\bibinfo {year} {2021})}\BibitemShut {NoStop}%
\bibitem [{\citenamefont {Arrasmith}\ \emph {et~al.}(2022)\citenamefont {Arrasmith}, \citenamefont {Holmes}, \citenamefont {Cerezo},\ and\ \citenamefont {Coles}}]{Arrasmith2021}%
  \BibitemOpen
  \bibfield  {author} {\bibinfo {author} {\bibfnamefont {A.}~\bibnamefont {Arrasmith}}, \bibinfo {author} {\bibfnamefont {Z.}~\bibnamefont {Holmes}}, \bibinfo {author} {\bibfnamefont {M.}~\bibnamefont {Cerezo}}, \ and\ \bibinfo {author} {\bibfnamefont {P.~J.}\ \bibnamefont {Coles}},\ }\bibfield  {title} {{Equivalence of quantum barren plateaus to cost concentration and narrow gorges},\ }\href {\doibase 10.1088/2058-9565/ac7d06} {\bibfield  {journal} {\bibinfo  {journal} {Quantum Science and Technology}\ }\textbf {\bibinfo {volume} {7}},\ \bibinfo {pages} {045015} (\bibinfo {year} {2022})}\BibitemShut {NoStop}%
\bibitem [{\citenamefont {Liu}\ \emph {et~al.}(2022)\citenamefont {Liu}, \citenamefont {Yu}, \citenamefont {Duan},\ and\ \citenamefont {Deng}}]{Liu2021a}%
  \BibitemOpen
  \bibfield  {author} {\bibinfo {author} {\bibfnamefont {Z.}~\bibnamefont {Liu}}, \bibinfo {author} {\bibfnamefont {L.-W.}\ \bibnamefont {Yu}}, \bibinfo {author} {\bibfnamefont {L.-M.}\ \bibnamefont {Duan}}, \ and\ \bibinfo {author} {\bibfnamefont {D.-L.}\ \bibnamefont {Deng}},\ }\bibfield  {title} {{Presence and Absence of Barren Plateaus in Tensor-Network Based Machine Learning},\ }\href {\doibase 10.1103/PhysRevLett.129.270501} {\bibfield  {journal} {\bibinfo  {journal} {Physical Review Letters}\ }\textbf {\bibinfo {volume} {129}},\ \bibinfo {pages} {270501} (\bibinfo {year} {2022})}\BibitemShut {NoStop}%
\bibitem [{\citenamefont {Holmes}\ \emph {et~al.}(2022)\citenamefont {Holmes}, \citenamefont {Sharma}, \citenamefont {Cerezo},\ and\ \citenamefont {Coles}}]{Holmes2021}%
  \BibitemOpen
  \bibfield  {author} {\bibinfo {author} {\bibfnamefont {Z.}~\bibnamefont {Holmes}}, \bibinfo {author} {\bibfnamefont {K.}~\bibnamefont {Sharma}}, \bibinfo {author} {\bibfnamefont {M.}~\bibnamefont {Cerezo}}, \ and\ \bibinfo {author} {\bibfnamefont {P.~J.}\ \bibnamefont {Coles}},\ }\bibfield  {title} {{Connecting Ansatz Expressibility to Gradient Magnitudes and Barren Plateaus},\ }\href {\doibase 10.1103/PRXQuantum.3.010313} {\bibfield  {journal} {\bibinfo  {journal} {PRX Quantum}\ }\textbf {\bibinfo {volume} {3}},\ \bibinfo {pages} {010313} (\bibinfo {year} {2022})}\BibitemShut {NoStop}%
\bibitem [{\citenamefont {Zhang}\ \emph {et~al.}(2023{\natexlab{a}})\citenamefont {Zhang}, \citenamefont {Zhu}, \citenamefont {Jing},\ and\ \citenamefont {Wang}}]{Zhang2023}%
  \BibitemOpen
  \bibfield  {author} {\bibinfo {author} {\bibfnamefont {H.-K.}\ \bibnamefont {Zhang}}, \bibinfo {author} {\bibfnamefont {C.}~\bibnamefont {Zhu}}, \bibinfo {author} {\bibfnamefont {M.}~\bibnamefont {Jing}}, \ and\ \bibinfo {author} {\bibfnamefont {X.}~\bibnamefont {Wang}},\ }\bibfield  {title} {{Statistical Analysis of Quantum State Learning Process in Quantum Neural Networks},\ }\href {http://arxiv.org/abs/2309.14980} {\bibfield  {journal} {\bibinfo  {journal} {Advances in Neural Information Processing Systems}\ } (\bibinfo {year} {2023}{\natexlab{a}})}\BibitemShut {NoStop}%
\bibitem [{\citenamefont {Barthel}\ and\ \citenamefont {Miao}(2025)}]{Barthel2023a}%
  \BibitemOpen
  \bibfield  {author} {\bibinfo {author} {\bibfnamefont {T.}~\bibnamefont {Barthel}}\ and\ \bibinfo {author} {\bibfnamefont {Q.}~\bibnamefont {Miao}},\ }\bibfield  {title} {{Absence of Barren Plateaus and Scaling of Gradients in the Energy Optimization of Isometric Tensor Network States},\ }\href {\doibase 10.1007/s00220-024-05217-x} {\bibfield  {journal} {\bibinfo  {journal} {Communications in Mathematical Physics}\ }\textbf {\bibinfo {volume} {406}},\ \bibinfo {pages} {86} (\bibinfo {year} {2025})}\BibitemShut {NoStop}%
\bibitem [{\citenamefont {Miao}\ and\ \citenamefont {Barthel}(2024)}]{Miao2023}%
  \BibitemOpen
  \bibfield  {author} {\bibinfo {author} {\bibfnamefont {Q.}~\bibnamefont {Miao}}\ and\ \bibinfo {author} {\bibfnamefont {T.}~\bibnamefont {Barthel}},\ }\bibfield  {title} {{Isometric tensor network optimization for extensive Hamiltonians is free of barren plateaus},\ }\href {\doibase 10.1103/PhysRevA.109.L050402} {\bibfield  {journal} {\bibinfo  {journal} {Physical Review A}\ }\textbf {\bibinfo {volume} {109}},\ \bibinfo {pages} {L050402} (\bibinfo {year} {2024})}\BibitemShut {NoStop}%
\bibitem [{\citenamefont {Liu}\ \emph {et~al.}(2023)\citenamefont {Liu}, \citenamefont {Zhang}, \citenamefont {Jian},\ and\ \citenamefont {Yao}}]{Liu2023}%
  \BibitemOpen
  \bibfield  {author} {\bibinfo {author} {\bibfnamefont {S.}~\bibnamefont {Liu}}, \bibinfo {author} {\bibfnamefont {S.-X.}\ \bibnamefont {Zhang}}, \bibinfo {author} {\bibfnamefont {S.-K.}\ \bibnamefont {Jian}}, \ and\ \bibinfo {author} {\bibfnamefont {H.}~\bibnamefont {Yao}},\ }\bibfield  {title} {{Training variational quantum algorithms with random gate activation},\ }\href {\doibase 10.1103/PhysRevResearch.5.L032040} {\bibfield  {journal} {\bibinfo  {journal} {Physical Review Research}\ }\textbf {\bibinfo {volume} {5}},\ \bibinfo {pages} {L032040} (\bibinfo {year} {2023})}\BibitemShut {NoStop}%
\bibitem [{\citenamefont {Liu}\ \emph {et~al.}(2024)\citenamefont {Liu}, \citenamefont {Liu}, \citenamefont {Zhang}, \citenamefont {Huang},\ and\ \citenamefont {Wang}}]{Liu2024}%
  \BibitemOpen
  \bibfield  {author} {\bibinfo {author} {\bibfnamefont {X.}~\bibnamefont {Liu}}, \bibinfo {author} {\bibfnamefont {G.}~\bibnamefont {Liu}}, \bibinfo {author} {\bibfnamefont {H.-K.}\ \bibnamefont {Zhang}}, \bibinfo {author} {\bibfnamefont {J.}~\bibnamefont {Huang}}, \ and\ \bibinfo {author} {\bibfnamefont {X.}~\bibnamefont {Wang}},\ }\bibfield  {title} {{Mitigating Barren Plateaus of Variational Quantum Eigensolvers},\ }\href {\doibase 10.1109/TQE.2024.3383050} {\bibfield  {journal} {\bibinfo  {journal} {IEEE Transactions on Quantum Engineering}\ }\textbf {\bibinfo {volume} {5}},\ \bibinfo {pages} {1} (\bibinfo {year} {2024})}\BibitemShut {NoStop}%
\bibitem [{\citenamefont {Zhang}\ \emph {et~al.}(2024{\natexlab{b}})\citenamefont {Zhang}, \citenamefont {Zhu}, \citenamefont {Liu},\ and\ \citenamefont {Wang}}]{Zhang2024a}%
  \BibitemOpen
  \bibfield  {author} {\bibinfo {author} {\bibfnamefont {H.-K.}\ \bibnamefont {Zhang}}, \bibinfo {author} {\bibfnamefont {C.}~\bibnamefont {Zhu}}, \bibinfo {author} {\bibfnamefont {G.}~\bibnamefont {Liu}}, \ and\ \bibinfo {author} {\bibfnamefont {X.}~\bibnamefont {Wang}},\ }\bibfield  {title} {{Exponential Hardness of Optimization from the Locality in Quantum Neural Networks},\ }\href {\doibase 10.1609/aaai.v38i15.29614} {\bibfield  {journal} {\bibinfo  {journal} {Proceedings of the AAAI Conference on Artificial Intelligence}\ }\textbf {\bibinfo {volume} {38}},\ \bibinfo {pages} {16741} (\bibinfo {year} {2024}{\natexlab{b}})}\BibitemShut {NoStop}%
\bibitem [{\citenamefont {Zhang}\ \emph {et~al.}(2024{\natexlab{c}})\citenamefont {Zhang}, \citenamefont {Zhu},\ and\ \citenamefont {Wang}}]{Zhang2024b}%
  \BibitemOpen
  \bibfield  {author} {\bibinfo {author} {\bibfnamefont {H.-K.}\ \bibnamefont {Zhang}}, \bibinfo {author} {\bibfnamefont {C.}~\bibnamefont {Zhu}}, \ and\ \bibinfo {author} {\bibfnamefont {X.}~\bibnamefont {Wang}},\ }\bibfield  {title} {{Predicting quantum learnability from landscape fluctuation},\ }\href {http://arxiv.org/abs/2406.11805} {\bibfield  {journal} {\bibinfo  {journal} {arXiv:2406.11805}\ } (\bibinfo {year} {2024}{\natexlab{c}})}\BibitemShut {NoStop}%
\bibitem [{\citenamefont {Cerezo}\ \emph {et~al.}(2025)\citenamefont {Cerezo}, \citenamefont {Larocca}, \citenamefont {Garc{\'{i}}a-Mart{\'{i}}n}, \citenamefont {Diaz}, \citenamefont {Braccia}, \citenamefont {Fontana}, \citenamefont {Rudolph}, \citenamefont {Bermejo}, \citenamefont {Ijaz}, \citenamefont {Thanasilp}, \citenamefont {Anschuetz},\ and\ \citenamefont {Holmes}}]{Cerezo2025}%
  \BibitemOpen
  \bibfield  {author} {\bibinfo {author} {\bibfnamefont {M.}~\bibnamefont {Cerezo}}, \bibinfo {author} {\bibfnamefont {M.}~\bibnamefont {Larocca}}, \bibinfo {author} {\bibfnamefont {D.}~\bibnamefont {Garc{\'{i}}a-Mart{\'{i}}n}}, \bibinfo {author} {\bibfnamefont {N.~L.}\ \bibnamefont {Diaz}}, \bibinfo {author} {\bibfnamefont {P.}~\bibnamefont {Braccia}}, \bibinfo {author} {\bibfnamefont {E.}~\bibnamefont {Fontana}}, \bibinfo {author} {\bibfnamefont {M.~S.}\ \bibnamefont {Rudolph}}, \bibinfo {author} {\bibfnamefont {P.}~\bibnamefont {Bermejo}}, \bibinfo {author} {\bibfnamefont {A.}~\bibnamefont {Ijaz}}, \bibinfo {author} {\bibfnamefont {S.}~\bibnamefont {Thanasilp}}, \bibinfo {author} {\bibfnamefont {E.~R.}\ \bibnamefont {Anschuetz}}, \ and\ \bibinfo {author} {\bibfnamefont {Z.}~\bibnamefont {Holmes}},\ }\bibfield  {title} {{Does provable absence of barren plateaus imply classical simulability?},\ }\href {\doibase 10.1038/s41467-025-63099-6} {\bibfield  {journal} {\bibinfo  {journal} {Nature Communications}\ }\textbf {\bibinfo {volume} {16}},\ \bibinfo {pages} {7907} (\bibinfo {year} {2025})}\BibitemShut {NoStop}%
\bibitem [{\citenamefont {White}(1992)}]{White1992}%
  \BibitemOpen
  \bibfield  {author} {\bibinfo {author} {\bibfnamefont {S.~R.}\ \bibnamefont {White}},\ }\bibfield  {title} {{Density matrix formulation for quantum renormalization groups},\ }\href {\doibase 10.1103/PhysRevLett.69.2863} {\bibfield  {journal} {\bibinfo  {journal} {Physical Review Letters}\ }\textbf {\bibinfo {volume} {69}},\ \bibinfo {pages} {2863} (\bibinfo {year} {1992})}\BibitemShut {NoStop}%
\bibitem [{\citenamefont {Schollw{\"{o}}ck}(2011)}]{Schollwock2011}%
  \BibitemOpen
  \bibfield  {author} {\bibinfo {author} {\bibfnamefont {U.}~\bibnamefont {Schollw{\"{o}}ck}},\ }\bibfield  {title} {{The density-matrix renormalization group in the age of matrix product states},\ }\href {\doibase 10.1016/j.aop.2010.09.012} {\bibfield  {journal} {\bibinfo  {journal} {Annals of Physics}\ }\textbf {\bibinfo {volume} {326}},\ \bibinfo {pages} {96} (\bibinfo {year} {2011})}\BibitemShut {NoStop}%
\bibitem [{\citenamefont {Or{\'{u}}s}(2014)}]{Orus2014}%
  \BibitemOpen
  \bibfield  {author} {\bibinfo {author} {\bibfnamefont {R.}~\bibnamefont {Or{\'{u}}s}},\ }\bibfield  {title} {{A practical introduction to tensor networks: Matrix product states and projected entangled pair states},\ }\href {\doibase 10.1016/j.aop.2014.06.013} {\bibfield  {journal} {\bibinfo  {journal} {Annals of Physics}\ }\textbf {\bibinfo {volume} {349}},\ \bibinfo {pages} {117} (\bibinfo {year} {2014})}\BibitemShut {NoStop}%
\bibitem [{\citenamefont {Or{\'{u}}s}(2019)}]{Orus2019}%
  \BibitemOpen
  \bibfield  {author} {\bibinfo {author} {\bibfnamefont {R.}~\bibnamefont {Or{\'{u}}s}},\ }\bibfield  {title} {{Tensor networks for complex quantum systems},\ }\href {\doibase 10.1038/s42254-019-0086-7} {\bibfield  {journal} {\bibinfo  {journal} {Nature Reviews Physics}\ }\textbf {\bibinfo {volume} {1}},\ \bibinfo {pages} {538} (\bibinfo {year} {2019})}\BibitemShut {NoStop}%
\bibitem [{\citenamefont {Cirac}\ \emph {et~al.}(2021)\citenamefont {Cirac}, \citenamefont {P{\'{e}}rez-Garc{\'{i}}a}, \citenamefont {Schuch},\ and\ \citenamefont {Verstraete}}]{Cirac2021}%
  \BibitemOpen
  \bibfield  {author} {\bibinfo {author} {\bibfnamefont {J.~I.}\ \bibnamefont {Cirac}}, \bibinfo {author} {\bibfnamefont {D.}~\bibnamefont {P{\'{e}}rez-Garc{\'{i}}a}}, \bibinfo {author} {\bibfnamefont {N.}~\bibnamefont {Schuch}}, \ and\ \bibinfo {author} {\bibfnamefont {F.}~\bibnamefont {Verstraete}},\ }\bibfield  {title} {{Matrix product states and projected entangled pair states: Concepts, symmetries, theorems},\ }\href {\doibase 10.1103/RevModPhys.93.045003} {\bibfield  {journal} {\bibinfo  {journal} {Reviews of Modern Physics}\ }\textbf {\bibinfo {volume} {93}},\ \bibinfo {pages} {045003} (\bibinfo {year} {2021})}\BibitemShut {NoStop}%
\bibitem [{\citenamefont {Xiang}(2023)}]{Xiang2023}%
  \BibitemOpen
  \bibfield  {author} {\bibinfo {author} {\bibfnamefont {T.}~\bibnamefont {Xiang}},\ }{Density Matrix and Tensor Network Renormalization}\ (\bibinfo  {publisher} {Cambridge University Press},\ \bibinfo {year} {2023})\BibitemShut {NoStop}%
\bibitem [{\citenamefont {Aharonov}\ \emph {et~al.}(2010)\citenamefont {Aharonov}, \citenamefont {Arad},\ and\ \citenamefont {Irani}}]{Aharonov2010}%
  \BibitemOpen
  \bibfield  {author} {\bibinfo {author} {\bibfnamefont {D.}~\bibnamefont {Aharonov}}, \bibinfo {author} {\bibfnamefont {I.}~\bibnamefont {Arad}}, \ and\ \bibinfo {author} {\bibfnamefont {S.}~\bibnamefont {Irani}},\ }\bibfield  {title} {{Efficient algorithm for approximating one-dimensional ground states},\ }\href {\doibase 10.1103/PhysRevA.82.012315} {\bibfield  {journal} {\bibinfo  {journal} {Physical Review A}\ }\textbf {\bibinfo {volume} {82}},\ \bibinfo {pages} {012315} (\bibinfo {year} {2010})}\BibitemShut {NoStop}%
\bibitem [{\citenamefont {Landau}\ \emph {et~al.}(2015)\citenamefont {Landau}, \citenamefont {Vazirani},\ and\ \citenamefont {Vidick}}]{Landau2015}%
  \BibitemOpen
  \bibfield  {author} {\bibinfo {author} {\bibfnamefont {Z.}~\bibnamefont {Landau}}, \bibinfo {author} {\bibfnamefont {U.}~\bibnamefont {Vazirani}}, \ and\ \bibinfo {author} {\bibfnamefont {T.}~\bibnamefont {Vidick}},\ }\bibfield  {title} {{A polynomial time algorithm for the ground state of one-dimensional gapped local Hamiltonians},\ }\href {\doibase 10.1038/nphys3345} {\bibfield  {journal} {\bibinfo  {journal} {Nature Physics}\ }\textbf {\bibinfo {volume} {11}},\ \bibinfo {pages} {566} (\bibinfo {year} {2015})}\BibitemShut {NoStop}%
\bibitem [{\citenamefont {Arad}\ \emph {et~al.}(2017)\citenamefont {Arad}, \citenamefont {Landau}, \citenamefont {Vazirani},\ and\ \citenamefont {Vidick}}]{Arad2017}%
  \BibitemOpen
  \bibfield  {author} {\bibinfo {author} {\bibfnamefont {I.}~\bibnamefont {Arad}}, \bibinfo {author} {\bibfnamefont {Z.}~\bibnamefont {Landau}}, \bibinfo {author} {\bibfnamefont {U.}~\bibnamefont {Vazirani}}, \ and\ \bibinfo {author} {\bibfnamefont {T.}~\bibnamefont {Vidick}},\ }\bibfield  {title} {{Rigorous RG Algorithms and Area Laws for Low Energy Eigenstates in 1D},\ }\href {\doibase 10.1007/s00220-017-2973-z} {\bibfield  {journal} {\bibinfo  {journal} {Communications in Mathematical Physics}\ }\textbf {\bibinfo {volume} {356}},\ \bibinfo {pages} {65} (\bibinfo {year} {2017})}\BibitemShut {NoStop}%
\bibitem [{\citenamefont {Haegeman}\ \emph {et~al.}(2011)\citenamefont {Haegeman}, \citenamefont {Cirac}, \citenamefont {Osborne}, \citenamefont {Pi{\v{z}}orn}, \citenamefont {Verschelde},\ and\ \citenamefont {Verstraete}}]{Haegeman2011}%
  \BibitemOpen
  \bibfield  {author} {\bibinfo {author} {\bibfnamefont {J.}~\bibnamefont {Haegeman}}, \bibinfo {author} {\bibfnamefont {J.~I.}\ \bibnamefont {Cirac}}, \bibinfo {author} {\bibfnamefont {T.~J.}\ \bibnamefont {Osborne}}, \bibinfo {author} {\bibfnamefont {I.}~\bibnamefont {Pi{\v{z}}orn}}, \bibinfo {author} {\bibfnamefont {H.}~\bibnamefont {Verschelde}}, \ and\ \bibinfo {author} {\bibfnamefont {F.}~\bibnamefont {Verstraete}},\ }\bibfield  {title} {{Time-Dependent Variational Principle for Quantum Lattices},\ }\href {\doibase 10.1103/PhysRevLett.107.070601} {\bibfield  {journal} {\bibinfo  {journal} {Physical Review Letters}\ }\textbf {\bibinfo {volume} {107}},\ \bibinfo {pages} {070601} (\bibinfo {year} {2011})}\BibitemShut {NoStop}%
\bibitem [{\citenamefont {Haegeman}\ \emph {et~al.}(2014)\citenamefont {Haegeman}, \citenamefont {Marien}, \citenamefont {Osborne},\ and\ \citenamefont {Verstraete}}]{Haegeman2014}%
  \BibitemOpen
  \bibfield  {author} {\bibinfo {author} {\bibfnamefont {J.}~\bibnamefont {Haegeman}}, \bibinfo {author} {\bibfnamefont {M.}~\bibnamefont {Marien}}, \bibinfo {author} {\bibfnamefont {T.~J.}\ \bibnamefont {Osborne}}, \ and\ \bibinfo {author} {\bibfnamefont {F.}~\bibnamefont {Verstraete}},\ }\bibfield  {title} {{Geometry of matrix product states: Metric, parallel transport, and curvature},\ }\href {https://pubs.aip.org/jmp/article/55/2/021902/232472/Geometry-of-matrix-product-states-Metric-parallel} {\bibfield  {journal} {\bibinfo  {journal} {Journal of Mathematical Physics}\ }\textbf {\bibinfo {volume} {55}} (\bibinfo {year} {2014})}\BibitemShut {NoStop}%
\bibitem [{\citenamefont {Haegeman}\ \emph {et~al.}(2016)\citenamefont {Haegeman}, \citenamefont {Lubich}, \citenamefont {Oseledets}, \citenamefont {Vandereycken},\ and\ \citenamefont {Verstraete}}]{Haegeman2016}%
  \BibitemOpen
  \bibfield  {author} {\bibinfo {author} {\bibfnamefont {J.}~\bibnamefont {Haegeman}}, \bibinfo {author} {\bibfnamefont {C.}~\bibnamefont {Lubich}}, \bibinfo {author} {\bibfnamefont {I.}~\bibnamefont {Oseledets}}, \bibinfo {author} {\bibfnamefont {B.}~\bibnamefont {Vandereycken}}, \ and\ \bibinfo {author} {\bibfnamefont {F.}~\bibnamefont {Verstraete}},\ }\bibfield  {title} {{Unifying time evolution and optimization with matrix product states},\ }\href {\doibase 10.1103/PhysRevB.94.165116} {\bibfield  {journal} {\bibinfo  {journal} {Physical Review B}\ }\textbf {\bibinfo {volume} {94}},\ \bibinfo {pages} {165116} (\bibinfo {year} {2016})}\BibitemShut {NoStop}%
\bibitem [{\citenamefont {Hauru}\ \emph {et~al.}(2021)\citenamefont {Hauru}, \citenamefont {{Van Damme}},\ and\ \citenamefont {Haegeman}}]{Hauru2021}%
  \BibitemOpen
  \bibfield  {author} {\bibinfo {author} {\bibfnamefont {M.}~\bibnamefont {Hauru}}, \bibinfo {author} {\bibfnamefont {M.}~\bibnamefont {{Van Damme}}}, \ and\ \bibinfo {author} {\bibfnamefont {J.}~\bibnamefont {Haegeman}},\ }\bibfield  {title} {{Riemannian optimization of isometric tensor networks},\ }\href {\doibase 10.21468/SciPostPhys.10.2.040} {\bibfield  {journal} {\bibinfo  {journal} {SciPost Physics}\ }\textbf {\bibinfo {volume} {10}},\ \bibinfo {pages} {040} (\bibinfo {year} {2021})}\BibitemShut {NoStop}%
\bibitem [{\citenamefont {Sch{\"{o}}n}\ \emph {et~al.}(2005)\citenamefont {Sch{\"{o}}n}, \citenamefont {Solano}, \citenamefont {Verstraete}, \citenamefont {Cirac},\ and\ \citenamefont {Wolf}}]{Schon2005}%
  \BibitemOpen
  \bibfield  {author} {\bibinfo {author} {\bibfnamefont {C.}~\bibnamefont {Sch{\"{o}}n}}, \bibinfo {author} {\bibfnamefont {E.}~\bibnamefont {Solano}}, \bibinfo {author} {\bibfnamefont {F.}~\bibnamefont {Verstraete}}, \bibinfo {author} {\bibfnamefont {J.~I.}\ \bibnamefont {Cirac}}, \ and\ \bibinfo {author} {\bibfnamefont {M.~M.}\ \bibnamefont {Wolf}},\ }\bibfield  {title} {{Sequential Generation of Entangled Multiqubit States},\ }\href {\doibase 10.1103/PhysRevLett.95.110503} {\bibfield  {journal} {\bibinfo  {journal} {Physical Review Letters}\ }\textbf {\bibinfo {volume} {95}},\ \bibinfo {pages} {110503} (\bibinfo {year} {2005})}\BibitemShut {NoStop}%
\bibitem [{\citenamefont {Sch{\"{o}}n}\ \emph {et~al.}(2007)\citenamefont {Sch{\"{o}}n}, \citenamefont {Hammerer}, \citenamefont {Wolf}, \citenamefont {Cirac},\ and\ \citenamefont {Solano}}]{Schon2007}%
  \BibitemOpen
  \bibfield  {author} {\bibinfo {author} {\bibfnamefont {C.}~\bibnamefont {Sch{\"{o}}n}}, \bibinfo {author} {\bibfnamefont {K.}~\bibnamefont {Hammerer}}, \bibinfo {author} {\bibfnamefont {M.~M.}\ \bibnamefont {Wolf}}, \bibinfo {author} {\bibfnamefont {J.~I.}\ \bibnamefont {Cirac}}, \ and\ \bibinfo {author} {\bibfnamefont {E.}~\bibnamefont {Solano}},\ }\bibfield  {title} {{Sequential generation of matrix-product states in cavity QED},\ }\href {\doibase 10.1103/PhysRevA.75.032311} {\bibfield  {journal} {\bibinfo  {journal} {Physical Review A}\ }\textbf {\bibinfo {volume} {75}},\ \bibinfo {pages} {032311} (\bibinfo {year} {2007})}\BibitemShut {NoStop}%
\bibitem [{\citenamefont {Ba{\~{n}}uls}\ \emph {et~al.}(2008)\citenamefont {Ba{\~{n}}uls}, \citenamefont {P{\'{e}}rez-Garc{\'{i}}a}, \citenamefont {Wolf}, \citenamefont {Verstraete},\ and\ \citenamefont {Cirac}}]{Banuls2008}%
  \BibitemOpen
  \bibfield  {author} {\bibinfo {author} {\bibfnamefont {M.~C.}\ \bibnamefont {Ba{\~{n}}uls}}, \bibinfo {author} {\bibfnamefont {D.}~\bibnamefont {P{\'{e}}rez-Garc{\'{i}}a}}, \bibinfo {author} {\bibfnamefont {M.~M.}\ \bibnamefont {Wolf}}, \bibinfo {author} {\bibfnamefont {F.}~\bibnamefont {Verstraete}}, \ and\ \bibinfo {author} {\bibfnamefont {J.~I.}\ \bibnamefont {Cirac}},\ }\bibfield  {title} {{Sequentially generated states for the study of two-dimensional systems},\ }\href {\doibase 10.1103/PhysRevA.77.052306} {\bibfield  {journal} {\bibinfo  {journal} {Physical Review A}\ }\textbf {\bibinfo {volume} {77}},\ \bibinfo {pages} {052306} (\bibinfo {year} {2008})}\BibitemShut {NoStop}%
\bibitem [{\citenamefont {Wei}\ \emph {et~al.}(2022)\citenamefont {Wei}, \citenamefont {Malz},\ and\ \citenamefont {Cirac}}]{Wei2022}%
  \BibitemOpen
  \bibfield  {author} {\bibinfo {author} {\bibfnamefont {Z.-Y.}\ \bibnamefont {Wei}}, \bibinfo {author} {\bibfnamefont {D.}~\bibnamefont {Malz}}, \ and\ \bibinfo {author} {\bibfnamefont {J.~I.}\ \bibnamefont {Cirac}},\ }\bibfield  {title} {{Sequential Generation of Projected Entangled-Pair States},\ }\href {\doibase 10.1103/PhysRevLett.128.010607} {\bibfield  {journal} {\bibinfo  {journal} {Physical Review Letters}\ }\textbf {\bibinfo {volume} {128}},\ \bibinfo {pages} {010607} (\bibinfo {year} {2022})}\BibitemShut {NoStop}%
\bibitem [{\citenamefont {Chen}\ \emph {et~al.}(2023)\citenamefont {Chen}, \citenamefont {Dua}, \citenamefont {Hermele}, \citenamefont {Stephen}, \citenamefont {Tantivasadakarn}, \citenamefont {Vanhove},\ and\ \citenamefont {Zhao}}]{Chen2023a}%
  \BibitemOpen
  \bibfield  {author} {\bibinfo {author} {\bibfnamefont {X.}~\bibnamefont {Chen}}, \bibinfo {author} {\bibfnamefont {A.}~\bibnamefont {Dua}}, \bibinfo {author} {\bibfnamefont {M.}~\bibnamefont {Hermele}}, \bibinfo {author} {\bibfnamefont {D.~T.}\ \bibnamefont {Stephen}}, \bibinfo {author} {\bibfnamefont {N.}~\bibnamefont {Tantivasadakarn}}, \bibinfo {author} {\bibfnamefont {R.}~\bibnamefont {Vanhove}}, \ and\ \bibinfo {author} {\bibfnamefont {J.-Y.}\ \bibnamefont {Zhao}},\ }\bibfield  {title} {{Sequential Quantum Circuits as Maps between Gapped Phases},\ }\href {http://arxiv.org/abs/2307.01267} {\bibfield  {journal} {\bibinfo  {journal} {arXiv:2307.01267}\ } (\bibinfo {year} {2023})}\BibitemShut {NoStop}%
\bibitem [{\citenamefont {Hastings}(2007)}]{Hastings2007}%
  \BibitemOpen
  \bibfield  {author} {\bibinfo {author} {\bibfnamefont {M.~B.}\ \bibnamefont {Hastings}},\ }\bibfield  {title} {{An area law for one-dimensional quantum systems},\ }\href {\doibase 10.1088/1742-5468/2007/08/P08024} {\bibfield  {journal} {\bibinfo  {journal} {Journal of Statistical Mechanics: Theory and Experiment}\ }\textbf {\bibinfo {volume} {2007}},\ \bibinfo {pages} {P08024} (\bibinfo {year} {2007})}\BibitemShut {NoStop}%
\bibitem [{\citenamefont {Eisert}\ \emph {et~al.}(2010)\citenamefont {Eisert}, \citenamefont {Cramer},\ and\ \citenamefont {Plenio}}]{Eisert2010}%
  \BibitemOpen
  \bibfield  {author} {\bibinfo {author} {\bibfnamefont {J.}~\bibnamefont {Eisert}}, \bibinfo {author} {\bibfnamefont {M.}~\bibnamefont {Cramer}}, \ and\ \bibinfo {author} {\bibfnamefont {M.~B.}\ \bibnamefont {Plenio}},\ }\bibfield  {title} {{Colloquium : Area laws for the entanglement entropy},\ }\href {\doibase 10.1103/RevModPhys.82.277} {\bibfield  {journal} {\bibinfo  {journal} {Reviews of Modern Physics}\ }\textbf {\bibinfo {volume} {82}},\ \bibinfo {pages} {277} (\bibinfo {year} {2010})}\BibitemShut {NoStop}%
\bibitem [{SM()}]{SM}%
  \BibitemOpen
  \href@noop {} {{See the Supplemental Material for preliminaries on matrix product states and the Weingarten calculus, rigorous theorem proofs, and additional numerical results, which includes Ref.\,\cite{Drury2008, Kutschan2018, Zaletel2020, Zhang2022_z, Zhang2026}.}}\BibitemShut {Stop}%
\bibitem [{\citenamefont {Collins}\ and\ \citenamefont {{\'{S}}niady}(2006)}]{Collins2006}%
  \BibitemOpen
  \bibfield  {author} {\bibinfo {author} {\bibfnamefont {B.}~\bibnamefont {Collins}}\ and\ \bibinfo {author} {\bibfnamefont {P.}~\bibnamefont {{\'{S}}niady}},\ }\bibfield  {title} {{Integration with Respect to the Haar Measure on Unitary, Orthogonal and Symplectic Group},\ }\href {\doibase 10.1007/s00220-006-1554-3} {\bibfield  {journal} {\bibinfo  {journal} {Communications in Mathematical Physics}\ }\textbf {\bibinfo {volume} {264}},\ \bibinfo {pages} {773} (\bibinfo {year} {2006})}\BibitemShut {NoStop}%
\bibitem [{\citenamefont {Collins}\ and\ \citenamefont {Nechita}(2016)}]{Collins2016}%
  \BibitemOpen
  \bibfield  {author} {\bibinfo {author} {\bibfnamefont {B.}~\bibnamefont {Collins}}\ and\ \bibinfo {author} {\bibfnamefont {I.}~\bibnamefont {Nechita}},\ }\bibfield  {title} {{Random matrix techniques in quantum information theory},\ }\href {\doibase 10.1063/1.4936880} {\bibfield  {journal} {\bibinfo  {journal} {Journal of Mathematical Physics}\ }\textbf {\bibinfo {volume} {57}} (\bibinfo {year} {2016}),\ 10.1063/1.4936880}\BibitemShut {NoStop}%
\bibitem [{\citenamefont {Nahum}\ \emph {et~al.}(2017)\citenamefont {Nahum}, \citenamefont {Ruhman}, \citenamefont {Vijay},\ and\ \citenamefont {Haah}}]{Nahum2017}%
  \BibitemOpen
  \bibfield  {author} {\bibinfo {author} {\bibfnamefont {A.}~\bibnamefont {Nahum}}, \bibinfo {author} {\bibfnamefont {J.}~\bibnamefont {Ruhman}}, \bibinfo {author} {\bibfnamefont {S.}~\bibnamefont {Vijay}}, \ and\ \bibinfo {author} {\bibfnamefont {J.}~\bibnamefont {Haah}},\ }\bibfield  {title} {{Quantum entanglement growth under random unitary dynamics},\ }\href {\doibase 10.1103/PhysRevX.7.031016} {\bibfield  {journal} {\bibinfo  {journal} {Physical Review X}\ }\textbf {\bibinfo {volume} {7}},\ \bibinfo {pages} {1} (\bibinfo {year} {2017})}\BibitemShut {NoStop}%
\bibitem [{\citenamefont {Zhou}\ and\ \citenamefont {Nahum}(2019)}]{Zhou2019}%
  \BibitemOpen
  \bibfield  {author} {\bibinfo {author} {\bibfnamefont {T.}~\bibnamefont {Zhou}}\ and\ \bibinfo {author} {\bibfnamefont {A.}~\bibnamefont {Nahum}},\ }\bibfield  {title} {{Emergent statistical mechanics of entanglement in random unitary circuits},\ }\href {\doibase 10.1103/PhysRevB.99.174205} {\bibfield  {journal} {\bibinfo  {journal} {Physical Review B}\ }\textbf {\bibinfo {volume} {99}},\ \bibinfo {pages} {174205} (\bibinfo {year} {2019})}\BibitemShut {NoStop}%
\bibitem [{\citenamefont {Fisher}\ \emph {et~al.}(2023)\citenamefont {Fisher}, \citenamefont {Khemani}, \citenamefont {Nahum},\ and\ \citenamefont {Vijay}}]{Fisher2022}%
  \BibitemOpen
  \bibfield  {author} {\bibinfo {author} {\bibfnamefont {M.~P.}\ \bibnamefont {Fisher}}, \bibinfo {author} {\bibfnamefont {V.}~\bibnamefont {Khemani}}, \bibinfo {author} {\bibfnamefont {A.}~\bibnamefont {Nahum}}, \ and\ \bibinfo {author} {\bibfnamefont {S.}~\bibnamefont {Vijay}},\ }\bibfield  {title} {{Random Quantum Circuits},\ }\href {\doibase 10.1146/annurev-conmatphys-031720-030658} {\bibfield  {journal} {\bibinfo  {journal} {Annual Review of Condensed Matter Physics}\ }\textbf {\bibinfo {volume} {14}},\ \bibinfo {pages} {335} (\bibinfo {year} {2023})}\BibitemShut {NoStop}%
\bibitem [{\citenamefont {Lami}\ \emph {et~al.}(2025{\natexlab{a}})\citenamefont {Lami}, \citenamefont {{De Nardis}},\ and\ \citenamefont {Turkeshi}}]{Lami2025a}%
  \BibitemOpen
  \bibfield  {author} {\bibinfo {author} {\bibfnamefont {G.}~\bibnamefont {Lami}}, \bibinfo {author} {\bibfnamefont {J.}~\bibnamefont {{De Nardis}}}, \ and\ \bibinfo {author} {\bibfnamefont {X.}~\bibnamefont {Turkeshi}},\ }\bibfield  {title} {{Anticoncentration and State Design of Random Tensor Networks},\ }\href {\doibase 10.1103/PhysRevLett.134.010401} {\bibfield  {journal} {\bibinfo  {journal} {Physical Review Letters}\ }\textbf {\bibinfo {volume} {134}},\ \bibinfo {pages} {010401} (\bibinfo {year} {2025}{\natexlab{a}})}\BibitemShut {NoStop}%
\bibitem [{\citenamefont {Lami}\ \emph {et~al.}(2025{\natexlab{b}})\citenamefont {Lami}, \citenamefont {{De Luca}}, \citenamefont {Turkeshi},\ and\ \citenamefont {{De Nardis}}}]{Lami2025}%
  \BibitemOpen
  \bibfield  {author} {\bibinfo {author} {\bibfnamefont {G.}~\bibnamefont {Lami}}, \bibinfo {author} {\bibfnamefont {A.}~\bibnamefont {{De Luca}}}, \bibinfo {author} {\bibfnamefont {X.}~\bibnamefont {Turkeshi}}, \ and\ \bibinfo {author} {\bibfnamefont {J.}~\bibnamefont {{De Nardis}}},\ }\bibfield  {title} {{Quantum State Design and Emergent Confinement Mechanism in Measured Tensor Network States},\ }\href {http://arxiv.org/abs/2504.16995} {\bibfield  {journal} {\bibinfo  {journal} {arXiv:2504.16995}\ } (\bibinfo {year} {2025}{\natexlab{b}})}\BibitemShut {NoStop}%
\bibitem [{\citenamefont {Sauliere}\ \emph {et~al.}(2026)\citenamefont {Sauliere}, \citenamefont {Lami}, \citenamefont {Boyer}, \citenamefont {{De Nardis}},\ and\ \citenamefont {{De Luca}}}]{Sauliere2026}%
  \BibitemOpen
  \bibfield  {author} {\bibinfo {author} {\bibfnamefont {A.}~\bibnamefont {Sauliere}}, \bibinfo {author} {\bibfnamefont {G.}~\bibnamefont {Lami}}, \bibinfo {author} {\bibfnamefont {C.}~\bibnamefont {Boyer}}, \bibinfo {author} {\bibfnamefont {J.}~\bibnamefont {{De Nardis}}}, \ and\ \bibinfo {author} {\bibfnamefont {A.}~\bibnamefont {{De Luca}}},\ }\bibfield  {title} {{Universality in the Anticoncentration of Noisy Quantum Circuits at Finite Depths},\ }\href {\doibase 10.1103/xl16-cdy9} {\bibfield  {journal} {\bibinfo  {journal} {PRX Quantum}\ ,\ \bibinfo {pages} {1}} (\bibinfo {year} {2026})}\BibitemShut {NoStop}%
\bibitem [{\citenamefont {Dowling}\ \emph {et~al.}(2026)\citenamefont {Dowling}, \citenamefont {Turkeshi}, \citenamefont {{De Nardis}},\ and\ \citenamefont {Lami}}]{Dowling2026}%
  \BibitemOpen
  \bibfield  {author} {\bibinfo {author} {\bibfnamefont {N.}~\bibnamefont {Dowling}}, \bibinfo {author} {\bibfnamefont {X.}~\bibnamefont {Turkeshi}}, \bibinfo {author} {\bibfnamefont {J.}~\bibnamefont {{De Nardis}}}, \ and\ \bibinfo {author} {\bibfnamefont {G.}~\bibnamefont {Lami}},\ }\bibfield  {title} {{Noise-induced Simulability Transition from Operator Scrambling},\ }\href {http://arxiv.org/abs/2605.18943} {\bibfield  {journal} {\bibinfo  {journal} {arXiv:2605.18943}\ } (\bibinfo {year} {2026})}\BibitemShut {NoStop}%
\bibitem [{\citenamefont {Stokes}\ \emph {et~al.}(2020)\citenamefont {Stokes}, \citenamefont {Izaac}, \citenamefont {Killoran},\ and\ \citenamefont {Carleo}}]{Stokes2020}%
  \BibitemOpen
  \bibfield  {author} {\bibinfo {author} {\bibfnamefont {J.}~\bibnamefont {Stokes}}, \bibinfo {author} {\bibfnamefont {J.}~\bibnamefont {Izaac}}, \bibinfo {author} {\bibfnamefont {N.}~\bibnamefont {Killoran}}, \ and\ \bibinfo {author} {\bibfnamefont {G.}~\bibnamefont {Carleo}},\ }\bibfield  {title} {{Quantum Natural Gradient},\ }\href {\doibase 10.22331/q-2020-05-25-269} {\bibfield  {journal} {\bibinfo  {journal} {Quantum}\ }\textbf {\bibinfo {volume} {4}},\ \bibinfo {pages} {269} (\bibinfo {year} {2020})}\BibitemShut {NoStop}%
\bibitem [{\citenamefont {Aharonov}\ \emph {et~al.}(2009)\citenamefont {Aharonov}, \citenamefont {Gottesman}, \citenamefont {Irani},\ and\ \citenamefont {Kempe}}]{Aharonov2009}%
  \BibitemOpen
  \bibfield  {author} {\bibinfo {author} {\bibfnamefont {D.}~\bibnamefont {Aharonov}}, \bibinfo {author} {\bibfnamefont {D.}~\bibnamefont {Gottesman}}, \bibinfo {author} {\bibfnamefont {S.}~\bibnamefont {Irani}}, \ and\ \bibinfo {author} {\bibfnamefont {J.}~\bibnamefont {Kempe}},\ }\bibfield  {title} {{The power of quantum systems on a line},\ }\href {\doibase 10.1007/s00220-008-0710-3} {\bibfield  {journal} {\bibinfo  {journal} {Communications in Mathematical Physics}\ }\textbf {\bibinfo {volume} {287}},\ \bibinfo {pages} {41} (\bibinfo {year} {2009})}\BibitemShut {NoStop}%
\bibitem [{\citenamefont {Larocca}\ \emph {et~al.}(2023)\citenamefont {Larocca}, \citenamefont {Ju}, \citenamefont {Garc{\'{i}}a-Mart{\'{i}}n}, \citenamefont {Coles},\ and\ \citenamefont {Cerezo}}]{Larocca2023}%
  \BibitemOpen
  \bibfield  {author} {\bibinfo {author} {\bibfnamefont {M.}~\bibnamefont {Larocca}}, \bibinfo {author} {\bibfnamefont {N.}~\bibnamefont {Ju}}, \bibinfo {author} {\bibfnamefont {D.}~\bibnamefont {Garc{\'{i}}a-Mart{\'{i}}n}}, \bibinfo {author} {\bibfnamefont {P.~J.}\ \bibnamefont {Coles}}, \ and\ \bibinfo {author} {\bibfnamefont {M.}~\bibnamefont {Cerezo}},\ }\bibfield  {title} {{Theory of overparametrization in quantum neural networks},\ }\href {\doibase 10.1038/s43588-023-00467-6} {\bibfield  {journal} {\bibinfo  {journal} {Nature Computational Science}\ }\textbf {\bibinfo {volume} {3}},\ \bibinfo {pages} {542} (\bibinfo {year} {2023})}\BibitemShut {NoStop}%
\bibitem [{\citenamefont {Drury}\ and\ \citenamefont {Love}(2008)}]{Drury2008}%
  \BibitemOpen
  \bibfield  {author} {\bibinfo {author} {\bibfnamefont {B.}~\bibnamefont {Drury}}\ and\ \bibinfo {author} {\bibfnamefont {P.}~\bibnamefont {Love}},\ }\bibfield  {title} {{Constructive quantum Shannon decomposition from Cartan involutions},\ }\href {\doibase 10.1088/1751-8113/41/39/395305} {\bibfield  {journal} {\bibinfo  {journal} {Journal of Physics A: Mathematical and Theoretical}\ }\textbf {\bibinfo {volume} {41}},\ \bibinfo {pages} {395305} (\bibinfo {year} {2008})}\BibitemShut {NoStop}%
\bibitem [{\citenamefont {Kutschan}(2018)}]{Kutschan2018}%
  \BibitemOpen
  \bibfield  {author} {\bibinfo {author} {\bibfnamefont {B.}~\bibnamefont {Kutschan}},\ }\bibfield  {title} {{Tangent cones to tensor train varieties},\ }\href {\doibase 10.1016/j.laa.2018.01.012} {\bibfield  {journal} {\bibinfo  {journal} {Linear Algebra and its Applications}\ }\textbf {\bibinfo {volume} {544}},\ \bibinfo {pages} {370} (\bibinfo {year} {2018})}\BibitemShut {NoStop}%
\bibitem [{\citenamefont {Zaletel}\ and\ \citenamefont {Pollmann}(2020)}]{Zaletel2020}%
  \BibitemOpen
  \bibfield  {author} {\bibinfo {author} {\bibfnamefont {M.~P.}\ \bibnamefont {Zaletel}}\ and\ \bibinfo {author} {\bibfnamefont {F.}~\bibnamefont {Pollmann}},\ }\bibfield  {title} {{Isometric Tensor Network States in Two Dimensions},\ }\href {\doibase 10.1103/PhysRevLett.124.037201} {\bibfield  {journal} {\bibinfo  {journal} {Physical Review Letters}\ }\textbf {\bibinfo {volume} {124}},\ \bibinfo {pages} {037201} (\bibinfo {year} {2020})}\BibitemShut {NoStop}%
\bibitem [{\citenamefont {Zhang}\ \emph {et~al.}(2023{\natexlab{b}})\citenamefont {Zhang}, \citenamefont {Allcock}, \citenamefont {Wan}, \citenamefont {Liu}, \citenamefont {Sun}, \citenamefont {Yu}, \citenamefont {Yang}, \citenamefont {Qiu}, \citenamefont {Ye}, \citenamefont {Chen}, \citenamefont {Lee}, \citenamefont {Zheng}, \citenamefont {Jian}, \citenamefont {Yao}, \citenamefont {Hsieh},\ and\ \citenamefont {Zhang}}]{Zhang2022_z}%
  \BibitemOpen
  \bibfield  {author} {\bibinfo {author} {\bibfnamefont {S.-X.}\ \bibnamefont {Zhang}}, \bibinfo {author} {\bibfnamefont {J.}~\bibnamefont {Allcock}}, \bibinfo {author} {\bibfnamefont {Z.-Q.}\ \bibnamefont {Wan}}, \bibinfo {author} {\bibfnamefont {S.}~\bibnamefont {Liu}}, \bibinfo {author} {\bibfnamefont {J.}~\bibnamefont {Sun}}, \bibinfo {author} {\bibfnamefont {H.}~\bibnamefont {Yu}}, \bibinfo {author} {\bibfnamefont {X.-H.}\ \bibnamefont {Yang}}, \bibinfo {author} {\bibfnamefont {J.}~\bibnamefont {Qiu}}, \bibinfo {author} {\bibfnamefont {Z.}~\bibnamefont {Ye}}, \bibinfo {author} {\bibfnamefont {Y.-Q.}\ \bibnamefont {Chen}}, \bibinfo {author} {\bibfnamefont {C.-K.}\ \bibnamefont {Lee}}, \bibinfo {author} {\bibfnamefont {Y.-C.}\ \bibnamefont {Zheng}}, \bibinfo {author} {\bibfnamefont {S.-K.}\ \bibnamefont {Jian}}, \bibinfo {author} {\bibfnamefont {H.}~\bibnamefont {Yao}}, \bibinfo {author} {\bibfnamefont {C.-Y.}\ \bibnamefont {Hsieh}}, \ and\ \bibinfo {author} {\bibfnamefont {S.}~\bibnamefont {Zhang}},\ }\bibfield  {title} {{TensorCircuit: a Quantum Software Framework for the NISQ Era},\ }\href {\doibase 10.22331/q-2023-02-02-912} {\bibfield  {journal} {\bibinfo  {journal} {Quantum}\ }\textbf {\bibinfo {volume} {7}},\ \bibinfo {pages} {912} (\bibinfo {year} {2023}{\natexlab{b}})}\BibitemShut {NoStop}%
\bibitem [{\citenamefont {Zhang}\ \emph {et~al.}(2026)\citenamefont {Zhang}, \citenamefont {Chen}, \citenamefont {Li}, \citenamefont {Sun}, \citenamefont {Ma}, \citenamefont {Zheng}, \citenamefont {Huang}, \citenamefont {Wang}, \citenamefont {Yu}, \citenamefont {Li}, \citenamefont {Huang}, \citenamefont {Li}, \citenamefont {Wan}, \citenamefont {Liu}, \citenamefont {Qiu}, \citenamefont {Miao}, \citenamefont {Song}, \citenamefont {Yan}, \citenamefont {Tsuoka}, \citenamefont {Zhang}, \citenamefont {Wang}, \citenamefont {Fan}, \citenamefont {Hsieh}, \citenamefont {Yao},\ and\ \citenamefont {Xiang}}]{Zhang2026}%
  \BibitemOpen
  \bibfield  {author} {\bibinfo {author} {\bibfnamefont {S.-X.}\ \bibnamefont {Zhang}}, \bibinfo {author} {\bibfnamefont {Y.-Q.}\ \bibnamefont {Chen}}, \bibinfo {author} {\bibfnamefont {W.}~\bibnamefont {Li}}, \bibinfo {author} {\bibfnamefont {J.}~\bibnamefont {Sun}}, \bibinfo {author} {\bibfnamefont {W.-G.}\ \bibnamefont {Ma}}, \bibinfo {author} {\bibfnamefont {P.-L.}\ \bibnamefont {Zheng}}, \bibinfo {author} {\bibfnamefont {Y.-X.}\ \bibnamefont {Huang}}, \bibinfo {author} {\bibfnamefont {Q.-X.}\ \bibnamefont {Wang}}, \bibinfo {author} {\bibfnamefont {H.}~\bibnamefont {Yu}}, \bibinfo {author} {\bibfnamefont {Z.}~\bibnamefont {Li}}, \bibinfo {author} {\bibfnamefont {X.}~\bibnamefont {Huang}}, \bibinfo {author} {\bibfnamefont {Z.-L.}\ \bibnamefont {Li}}, \bibinfo {author} {\bibfnamefont {Z.-Q.}\ \bibnamefont {Wan}}, \bibinfo {author} {\bibfnamefont {S.}~\bibnamefont {Liu}}, \bibinfo {author} {\bibfnamefont {J.}~\bibnamefont {Qiu}}, \bibinfo {author} {\bibfnamefont {J.}~\bibnamefont {Miao}}, \bibinfo {author} {\bibfnamefont {Z.}~\bibnamefont {Song}}, \bibinfo {author} {\bibfnamefont {Y.}~\bibnamefont {Yan}}, \bibinfo {author} {\bibfnamefont {K.}~\bibnamefont {Tsuoka}}, \bibinfo {author} {\bibfnamefont {P.}~\bibnamefont {Zhang}}, \bibinfo {author} {\bibfnamefont {L.}~\bibnamefont {Wang}}, \bibinfo {author} {\bibfnamefont {H.}~\bibnamefont {Fan}}, \bibinfo {author} {\bibfnamefont {C.-Y.}\ \bibnamefont {Hsieh}}, \bibinfo {author} {\bibfnamefont {H.}~\bibnamefont {Yao}}, \ and\ \bibinfo {author} {\bibfnamefont {T.}~\bibnamefont {Xiang}},\ }\bibfield  {title} {{TensorCircuit-NG: A Universal, Composable, and Scalable Platform for Quantum Computing and Quantum Simulation},\ }\href {http://arxiv.org/abs/2602.14167} {\bibfield  {journal} {\bibinfo  {journal} {arXiv:2602.14167}\ } (\bibinfo {year} {2026})}\BibitemShut {NoStop}%
\end{thebibliography}%

\clearpage
\newpage
\widetext

\begin{center}
\textbf{\large Supplemental Material for \\``Absence of poor local minima in matrix product states''}
\end{center}

\addtocontents{toc}{\protect\setcounter{tocdepth}{0}}
{
\tableofcontents
}

\renewcommand{\theproposition}{S\arabic{proposition}}
\setcounter{proposition}{0}
\renewcommand{\thedefinition}{S\arabic{definition}}
\setcounter{definition}{0}

\renewcommand{\thefigure}{S\arabic{figure}}
\setcounter{figure}{0}
\renewcommand{\theequation}{S\arabic{equation}}
\setcounter{equation}{0}
\renewcommand{\thesection}{\Roman{section}}
\setcounter{section}{0}
\renewcommand{\thetable}{S\arabic{table}}
\setcounter{table}{0}
\setcounter{secnumdepth}{4}


\section{Preliminaries and basic setup}

\subsection{Haar integral and Weingarten calculus}

To analytically calculate integrals over unitaries with respect to the Haar measure, we introduce the Weingarten formula~\cite{Collins2006}, which provides an analytical expression for the $t$-degree twirling channel 
\begin{equation}\label{eq:weingarten}
    \mathcal{T}^{(t)}(\cdot) = \int \mathrm{d}\mu(U)~ U^{\otimes t} (\cdot) U^{\dagger\otimes t} = \sum_{\sigma,\tau\in \mathcal{S}_{t}} W^{(t)}_{\sigma,\tau}(d) ~\opr{Tr} \left[S_\tau^\dagger (\cdot)\right] S_\sigma,
\end{equation}
where ``$\cdot$'' is the placeholder for arbitrary linear operators. $\sigma$ and $\tau$ are elements of the $t$-degree symmetric permutation group $\mathcal{S}_t$. The coefficient
\begin{equation}
    W^{(t)}_{\sigma,\tau}(d) = \opr{Wg}(\sigma^{-1}\tau, d),
\end{equation}
is the $t$-degree Weingarten function of the permutation $\sigma^{-1} \tau$ and the $d$-dimensional unitary group $\mathcal{U}(d)$. $S_\sigma=S_\sigma(d)$ represents the permutation operator which permutes the indices of the $t$ replicas according to the permutation $\sigma$, e.g., the identity and the SWAP operator for $t=2$. The validity of Eq.\,\eqref{eq:weingarten} can be partially understood by the fact that the translational invariance of the Haar measure requires the integral result commutes with an arbitrary $t$-fold unitary operator $V^{\otimes t}$, which is naturally satisfied by the permutation operators and their linear combinations.

After reshaping the twirling channel to its Liouville superoperator representation, the Weingarten formula can be expressed as
\begin{equation}\label{eq:weingarten_Liouville}
    \int \mathrm{d}\mu(U)~ U^{\otimes t} \otimes U^{* \otimes t} = \sum_{\sigma,\tau\in \mathcal{S}_{t}} W^{(t)}_{\sigma,\tau} (d) \ketbra{\sigma}{\tau},
\end{equation}
where $\ket{\sigma}=(S_\sigma \otimes I)\ket{I}$ denotes the vector obtained by reshaping the permutation operator $S_\sigma$, where $I$ represents the identity operator of the same dimension as $S_\sigma$ and $\ket{I}=\left(\sum_{i=1}^{d}\ket{i}\otimes \ket{i}\right)^{\otimes t}$ represents an unnormalized Bell state. The correspondence before and after the reshaping procedure is shown in the following tensor diagrams, where we set $t=4$ as an example
\begin{equation}
\begin{mytikz3}
\draw  [fill={rgb, 255:red, 245; green, 240; blue, 235 }  ,fill opacity=1 ][line width=0.5]  (120,70) -- (160,70) -- (160,110) -- (120,110) -- cycle ;
\draw [line width=0.5]    (100,88) -- (120,88) ;
\draw [line width=0.5]    (160,88) -- (180,88) ;
\draw [line width=0.5]    (100,92) -- (120,92) ;
\draw [line width=0.5]    (160,92) -- (180,92) ;
\draw [line width=0.5]    (100,96) -- (120,96) ;
\draw [line width=0.5]    (160,96) -- (180,96) ;
\draw [line width=0.5]    (100,84) -- (120,84) ;
\draw [line width=0.5]    (160,84) -- (180,84) ;

\draw (142,91) node   [align=left] {$ S_{\sigma }$};
\end{mytikz3}
\quad\longrightarrow\quad
\begin{mytikz3}
\draw  [fill={rgb, 255:red, 245; green, 240; blue, 235 }  ,fill opacity=1 ][line width=0.5]  (330.03,50) -- (370.03,50) -- (370.03,90) -- (330.03,90) -- cycle ;
\draw [line width=0.5]    (310,68) -- (330.03,68) ;
\draw [line width=0.5]    (370.03,68) -- (390.03,68) ;
\draw [line width=0.5]    (310.03,128) -- (390.03,128) ;
\draw [line width=0.5]    (390.03,128) .. controls (430.35,128.08) and (430.35,67.68) .. (390.03,68) ;
\draw [line width=0.5]    (310,72) -- (330.03,72) ;
\draw [line width=0.5]    (370.03,72) -- (390.03,72) ;
\draw [line width=0.5]    (310.03,132) -- (390.03,132) ;
\draw [line width=0.5]    (390.03,132) .. controls (430.35,132.08) and (430.35,71.68) .. (390.03,72) ;
\draw [line width=0.5]    (310,64) -- (330.03,64) ;
\draw [line width=0.5]    (370.03,64) -- (390.03,64) ;
\draw [line width=0.5]    (310.03,124) -- (390.03,124) ;
\draw [line width=0.5]    (390.03,124) .. controls (430.35,124.08) and (430.35,63.68) .. (390.03,64) ;
\draw [line width=0.5]    (310,76) -- (330.03,76) ;
\draw [line width=0.5]    (370.03,76) -- (390.03,76) ;
\draw [line width=0.5]    (310.03,136) -- (390.03,136) ;
\draw [line width=0.5]    (390.03,136) .. controls (430.35,136.08) and (430.35,75.68) .. (390.03,76) ;

\draw (352,71) node   [align=left] {$ S_{\sigma }$};
\end{mytikz3}
\quad=\quad
\begin{mytikz3}
\draw [line width=0.5]    (550,68) -- (570,68) ;
\draw [line width=0.5]    (550,128) -- (570,128) ;
\draw [line width=0.5]    (550,72) -- (570,72) ;
\draw [line width=0.5]    (550,132) -- (570,132) ;
\draw  [fill={rgb, 255:red, 245; green, 240; blue, 235 }  ,fill opacity=1 ][line width=0.5]  (610,100) -- (570,140) -- (570,60) -- cycle ;
\draw [line width=0.5]    (550,76) -- (570,76) ;
\draw [line width=0.5]    (550,64) -- (570,64) ;
\draw [line width=0.5]    (550,124) -- (570,124) ;
\draw [line width=0.5]    (550,136) -- (570,136) ;

\draw (586,100) node   [align=left] {$\ket{\sigma }$};
\end{mytikz3}
\quad,
\end{equation}

\begin{equation}
\begin{mytikz3}
\draw  [fill={rgb, 255:red, 245; green, 240; blue, 235 }  ,fill opacity=1 ][line width=0.5]  (120,70) -- (160,70) -- (160,110) -- (120,110) -- cycle ;
\draw [line width=0.5]    (110,88) -- (120,88) ;
\draw [line width=0.5]    (160,88) -- (179.95,88) ;
\draw [line width=0.5]    (110,92) -- (120,92) ;
\draw [line width=0.5]    (160,92) -- (179.95,92) ;
\draw [line width=0.5]    (110,96) -- (120,96) ;
\draw [line width=0.5]    (160,96) -- (179.95,96) ;
\draw [line width=0.5]    (110,84) -- (120,84) ;
\draw [line width=0.5]    (160,84) -- (179.95,84) ;
\draw [line width=0.5]    (110.16,148) .. controls (69.95,148.08) and (69.95,87.68) .. (110.16,88) ;
\draw [line width=0.5]    (110.16,152) .. controls (69.95,152.08) and (69.95,91.68) .. (110.16,92) ;
\draw [line width=0.5]    (110.16,144) .. controls (69.95,144.08) and (69.95,83.68) .. (110.16,84) ;
\draw [line width=0.5]    (110.16,156) .. controls (69.95,156.08) and (69.95,95.68) .. (110.16,96) ;

\draw [line width=0.5]    (110,148) -- (230,148) ;
\draw [line width=0.5]    (110,152) -- (230,152) ;
\draw [line width=0.5]    (110,156) -- (230,156) ;
\draw [line width=0.5]    (110,144) -- (230,144) ;
\draw [line width=0.5]    (230,148) .. controls (270.32,148.08) and (270.32,87.68) .. (230,88) ;
\draw [line width=0.5]    (230,152) .. controls (270.32,152.08) and (270.32,91.68) .. (230,92) ;
\draw [line width=0.5]    (230,144) .. controls (270.32,144.08) and (270.32,83.68) .. (230,84) ;
\draw [line width=0.5]    (230,156) .. controls (270.32,156.08) and (270.32,95.68) .. (230,96) ;
\draw [line width=0.5]    (220.03,88) -- (229.95,88) ;
\draw [line width=0.5]    (220.03,92) -- (229.95,92) ;
\draw [line width=0.5]    (220.03,96) -- (229.95,96) ;
\draw [line width=0.5]    (220.03,84) -- (229.95,84) ;

\draw (140,90) node   [align=left] {$ S_{\tau }^{\dagger }$};
\end{mytikz3}
\quad\longrightarrow\quad
\begin{mytikz3}
\draw  [fill={rgb, 255:red, 245; green, 240; blue, 235 }  ,fill opacity=1 ][line width=0.5]  (340,130) -- (380,130) -- (380,170) -- (340,170) -- cycle ;
\draw [line width=0.5]    (330,148) -- (340,148) ;
\draw [line width=0.5]    (330,88) -- (399.96,88) ;
\draw [line width=0.5]    (330,152) -- (340,152) ;
\draw [line width=0.5]    (330,92) -- (399.96,92) ;
\draw [line width=0.5]    (330,156) -- (340,156) ;
\draw [line width=0.5]    (330,96) -- (399.96,96) ;
\draw [line width=0.5]    (330,144) -- (340,144) ;
\draw [line width=0.5]    (330,84) -- (399.96,84) ;
\draw [line width=0.5]    (330.16,148) .. controls (289.95,148.08) and (289.95,87.68) .. (330.16,88) ;
\draw [line width=0.5]    (330.16,152) .. controls (289.95,152.08) and (289.95,91.68) .. (330.16,92) ;
\draw [line width=0.5]    (330.16,144) .. controls (289.95,144.08) and (289.95,83.68) .. (330.16,84) ;
\draw [line width=0.5]    (330.16,156) .. controls (289.95,156.08) and (289.95,95.68) .. (330.16,96) ;

\draw [line width=0.5]    (380,148) -- (399.96,148) ;
\draw [line width=0.5]    (380,152) -- (399.96,152) ;
\draw [line width=0.5]    (380,156) -- (399.96,156) ;
\draw [line width=0.5]    (380,144) -- (399.96,144) ;

\draw (360,151) node   [align=left] {$ S_{\tau }^{*}$};
\end{mytikz3}
\quad = \quad
\begin{mytikz3}
\draw [line width=0.5]    (560,88) -- (579.87,88) ;
\draw [line width=0.5]    (560,148) -- (579.95,148) ;
\draw [line width=0.5]    (560,92) -- (579.87,92) ;
\draw [line width=0.5]    (560,152) -- (579.95,152) ;
\draw  [fill={rgb, 255:red, 245; green, 240; blue, 235 }  ,fill opacity=1 ][line width=0.5]  (520,120) -- (560,160) -- (560,80) -- cycle ;
\draw [line width=0.5]    (560,96) -- (579.87,96) ;
\draw [line width=0.5]    (560,84) -- (579.87,84) ;
\draw [line width=0.5]    (560,144) -- (579.95,144) ;
\draw [line width=0.5]    (560,156) -- (579.95,156) ;

\draw (545,120) node   [align=left] {$\bra{\tau}$};
\end{mytikz3}
\quad.
\end{equation}
Thus, the Weingarten coefficients on the right-hand side of Eq.\,\eqref{eq:weingarten_Liouville} can be seen as a matrix representation of the Liouville superoperator of the twirling channel under the non-orthonormal basis consisting of the permutation vectors. The tensor network diagram for the Weingarten formula in Eq.\,\eqref{eq:weingarten_Liouville} can be expressed as
\begin{equation}\label{eq:weingarten_tn}
\int \mathrm{d}\mu(U)~
\begin{mytikz3}
\draw  [fill={rgb, 255:red, 245; green, 240; blue, 235 }  ,fill opacity=1 ][line width=0.5]  (120,120) -- (160,120) -- (160,160) -- (120,160) -- cycle ;
\draw [line width=0.5]    (160,140) -- (190,140) ;
\draw [line width=0.5]    (90,140) -- (119.95,140) ;
\draw  [fill={rgb, 255:red, 245; green, 240; blue, 235 }  ,fill opacity=1 ][line width=0.5]  (115,115) -- (155,115) -- (155,155) -- (115,155) -- cycle ;
\draw [line width=0.5]    (154.6,135) -- (184.6,135) ;
\draw [line width=0.5]    (85,135) -- (114.95,135) ;
\draw  [fill={rgb, 255:red, 245; green, 240; blue, 235 }  ,fill opacity=1 ][line width=0.5]  (110,110) -- (150,110) -- (150,150) -- (110,150) -- cycle ;
\draw [line width=0.5]    (150,130) -- (180,130) ;
\draw [line width=0.5]    (80,130) -- (109.95,130) ;
\draw  [fill={rgb, 255:red, 245; green, 240; blue, 235 }  ,fill opacity=1 ][line width=0.5]  (105,105) -- (145,105) -- (145,145) -- (105,145) -- cycle ;
\draw [line width=0.5]    (145,125) -- (175,125) ;
\draw [line width=0.5]    (75,125) -- (104.95,125) ;
\draw  [fill={rgb, 255:red, 245; green, 240; blue, 235 }  ,fill opacity=1 ][line width=0.5]  (120,190) -- (160,190) -- (160,230) -- (120,230) -- cycle ;
\draw [line width=0.5]    (160,210) -- (190,210) ;
\draw [line width=0.5]    (90,210) -- (119.95,210) ;
\draw  [fill={rgb, 255:red, 245; green, 240; blue, 235 }  ,fill opacity=1 ][line width=0.5]  (115,185) -- (155,185) -- (155,225) -- (115,225) -- cycle ;
\draw [line width=0.5]    (154.6,205) -- (184.6,205) ;
\draw [line width=0.5]    (85,205) -- (114.95,205) ;
\draw  [fill={rgb, 255:red, 245; green, 240; blue, 235 }  ,fill opacity=1 ][line width=0.5]  (110,180) -- (150,180) -- (150,220) -- (110,220) -- cycle ;
\draw [line width=0.5]    (150,200) -- (180,200) ;
\draw [line width=0.5]    (80,200) -- (109.95,200) ;
\draw  [fill={rgb, 255:red, 245; green, 240; blue, 235 }  ,fill opacity=1 ][line width=0.5]  (105,175) -- (145,175) -- (145,215) -- (105,215) -- cycle ;
\draw [line width=0.5]    (145,195) -- (175,195) ;
\draw [line width=0.5]    (75,195) -- (104.95,195) ;

\draw (125,125) node   [align=left] {$ U$};
\draw (127,195) node   [align=left] {$ U^{*}$};
\end{mytikz3}
\quad=\quad
\begin{mytikz3}

\draw [line width=0.5]    (315,118) -- (334.89,118) ;
\draw [line width=0.5]    (315,178) -- (334.97,178) ;
\draw [line width=0.5]    (315,122) -- (334.89,122) ;
\draw [line width=0.5]    (315,182) -- (334.97,182) ;
\draw  [fill={rgb, 255:red, 245; green, 240; blue, 235 }  ,fill opacity=1 ][line width=0.5]  (374.77,150) -- (334.77,190) -- (334.77,110) -- cycle ;
\draw [line width=0.5]    (315,126) -- (334.89,126) ;
\draw [line width=0.5]    (315,114) -- (334.89,114) ;
\draw [line width=0.5]    (315,174) -- (334.97,174) ;
\draw [line width=0.5]    (315,186) -- (334.97,186) ;
\draw [line width=0.5]    (515,118) -- (534.87,118) ;
\draw [line width=0.5]    (515,178) -- (534.95,178) ;
\draw [line width=0.5]    (515,122) -- (534.87,122) ;
\draw [line width=0.5]    (515,182) -- (534.95,182) ;
\draw  [fill={rgb, 255:red, 245; green, 240; blue, 235 }  ,fill opacity=1 ][line width=0.5]  (475,150) -- (515,190) -- (515,110) -- cycle ;
\draw [line width=0.5]    (515,126) -- (534.87,126) ;
\draw [line width=0.5]    (515,114) -- (534.87,114) ;
\draw [line width=0.5]    (515,174) -- (534.95,174) ;
\draw [line width=0.5]    (515,186) -- (534.95,186) ;
\draw  [fill={rgb, 255:red, 245; green, 240; blue, 235 }  ,fill opacity=1 ][line width=0.5]  (395,120) -- (455,120) -- (455,180) -- (395,180) -- cycle ;
\draw [line width=0.5]    (374.77,150) -- (395,150) ;
\draw [line width=0.5]    (475,150) -- (455,150) ;

\draw (425,150) node   [align=left] {$W(d)$};
\draw (350,150) node   [align=left] {$S$};
\draw (500,150) node   [align=left] {$S^{\dagger }$};

\end{mytikz3}
\quad,
\end{equation}
where we set $t=4$ again for the convenience of illustration. The superscript ``$(t)$'' of $W^{(t)}(d)$ is omitted in the diagram for simplicity. The tensors on the right-hand side of Eq.\,\eqref{eq:weingarten_tn} are defined by
\begin{equation}\label{eq:wg_S_def}
\begin{mytikz3}

\draw  [fill={rgb, 255:red, 245; green, 240; blue, 235 }  ,fill opacity=1 ][line width=0.5]  (120,90) -- (180,90) -- (180,150) -- (120,150) -- cycle ;
\draw [line width=0.5]    (90,120) -- (120,120) ;
\draw [line width=0.5]    (210,120) -- (180,120) ;

\draw (150.36,120.21) node   [align=left] {$ W(d)$};
\draw (80,120) node   [align=left] {$\sigma $};
\draw (220,120) node   [align=left] {$\tau $};

\end{mytikz3}
~= W^{(t)}_{\sigma,\tau}(d)~,\quad\quad
\begin{mytikz3}
\draw [line width=0.5]    (310,118) -- (329.89,118) ;
\draw [line width=0.5]    (310,178) -- (329.97,178) ;
\draw [line width=0.5]    (310,122) -- (329.89,122) ;
\draw [line width=0.5]    (310,182) -- (329.97,182) ;
\draw  [fill={rgb, 255:red, 245; green, 240; blue, 235 }  ,fill opacity=1 ][line width=0.5]  (369.77,150) -- (329.77,190) -- (329.77,110) -- cycle ;
\draw [line width=0.5]    (310,126) -- (329.89,126) ;
\draw [line width=0.5]    (310,114) -- (329.89,114) ;
\draw [line width=0.5]    (310,174) -- (329.97,174) ;
\draw [line width=0.5]    (310,186) -- (329.97,186) ;
\draw [line width=0.5]    (369.77,150) -- (390,150) ;

\draw (345,150) node   [align=left] {$ S $};
\draw (400,150) node   [align=left] {$ \sigma $};
\end{mytikz3}
~=~
\begin{mytikz3}
\draw [line width=0.5]    (550,68) -- (570,68) ;
\draw [line width=0.5]    (550,128) -- (570,128) ;
\draw [line width=0.5]    (550,72) -- (570,72) ;
\draw [line width=0.5]    (550,132) -- (570,132) ;
\draw  [fill={rgb, 255:red, 245; green, 240; blue, 235 }  ,fill opacity=1 ][line width=0.5]  (610,100) -- (570,140) -- (570,60) -- cycle ;
\draw [line width=0.5]    (550,76) -- (570,76) ;
\draw [line width=0.5]    (550,64) -- (570,64) ;
\draw [line width=0.5]    (550,124) -- (570,124) ;
\draw [line width=0.5]    (550,136) -- (570,136) ;

\draw (586,100) node   [align=left] {$\ket{\sigma }$};
\end{mytikz3}
~,\quad\quad
\begin{mytikz3}
\draw [line width=0.5]    (500,118) -- (519.87,118) ;
\draw [line width=0.5]    (500,178) -- (519.95,178) ;
\draw [line width=0.5]    (500,122) -- (519.87,122) ;
\draw [line width=0.5]    (500,182) -- (519.95,182) ;
\draw  [fill={rgb, 255:red, 245; green, 240; blue, 235 }  ,fill opacity=1 ][line width=0.5]  (460,150) -- (500,190) -- (500,110) -- cycle ;
\draw [line width=0.5]    (500,126) -- (519.87,126) ;
\draw [line width=0.5]    (500,114) -- (519.87,114) ;
\draw [line width=0.5]    (500,174) -- (519.95,174) ;
\draw [line width=0.5]    (500,186) -- (519.95,186) ;
\draw [line width=0.5]    (460,150) -- (440,150) ;

\draw (485,150) node   [align=left] {$ S^{\dagger }$};
\draw (430,150) node   [align=left] {$ \tau $};
\end{mytikz3}
~=~
\begin{mytikz3}
\draw [line width=0.5]    (560,88) -- (579.87,88) ;
\draw [line width=0.5]    (560,148) -- (579.95,148) ;
\draw [line width=0.5]    (560,92) -- (579.87,92) ;
\draw [line width=0.5]    (560,152) -- (579.95,152) ;
\draw  [fill={rgb, 255:red, 245; green, 240; blue, 235 }  ,fill opacity=1 ][line width=0.5]  (520,120) -- (560,160) -- (560,80) -- cycle ;
\draw [line width=0.5]    (560,96) -- (579.87,96) ;
\draw [line width=0.5]    (560,84) -- (579.87,84) ;
\draw [line width=0.5]    (560,144) -- (579.95,144) ;
\draw [line width=0.5]    (560,156) -- (579.95,156) ;

\draw (545,120) node   [align=left] {$\bra{\tau}$};
\end{mytikz3}
\quad.
\end{equation}
Note that the indices $\sigma,\tau$ of the matrix $W^{(t)}(d)$ are elements in the permutation group $\mathcal{S}_t$. The values of the Weingarten matrix can be determined by inserting both sides of Eq.\,\eqref{eq:weingarten_Liouville} to two arbitrary permutation vectors $\bra{\varsigma'}$ and $\ket{\varsigma}$, i.e.,
\begin{equation}\label{eq:taupp_taup}
    \braoprket{\varsigma'}{ \int \mathrm{d}\mu(U)~ U^{\otimes t} \otimes U^{* \otimes t} }{\varsigma} = \braket{\varsigma'}{\varsigma} = \sum_{\sigma,\tau\in \mathcal{S}_{t}} W^{(t)}_{\sigma,\tau}(d) \braket{\varsigma'}{\sigma} \braket{\tau}{\varsigma},
\end{equation}
where we have used the invariance of $\ket{\varsigma}$ under the action of $U^{\otimes t}\otimes U^{*\otimes t}$. The inner product like $\braket{\sigma}{\tau}$ can be seen as the matrix element of the Gram matrix $G^{(t)}(d)$ of the permutation vectors, which is equal to
\begin{equation}\label{eq:gram_def}
    G^{(t)}_{\sigma,\tau}(d) = \braket{\sigma}{\tau} = \opr{Tr}(S_{\sigma}^{-1} S_{\tau}) = d^{\#(\sigma^{-1}\tau)},
\end{equation}
where the notation $\#(\sigma^{-1}\tau)$ counts the number of cyclic permutations (cycle number) in $\sigma^{-1}\tau$, i.e., the number of closed loops in the tensor diagram of $\opr{Tr}(S_\sigma^{-1} S_\tau)=\langle \sigma|\tau\rangle$. Using the Gram matrix $G^{(t)}(d)$, Eq.\,\eqref{eq:taupp_taup} can be rewritten as
\begin{equation}
    G^{(t)}(d) W^{(t)}(d) G^{(t)}(d) = G^{(t)}(d).
\end{equation}
Thus, the Weingarten matrix $W^{(t)}(d)$ is just the Moore–Penrose pseudo-inverse matrix of the Gram matrix $G^{(t)}(d)$. When $G^{(t)}(d)$ is invertible, i.e., $d\geq t$, the Weingarten matrix $W^{(t)}(d)$ is just the normal inverse matrix of $G^{(t)}(d)$
\begin{equation}\label{eq:wg_inverse}
    W^{(t)}(d) G^{(t)}(d) = I.
\end{equation}
The corresponding tensor diagram of Eq.\,\eqref{eq:wg_inverse} is
\begin{equation}
\begin{mytikz3}

\draw  [fill={rgb, 255:red, 245; green, 240; blue, 235 }  ,fill opacity=1 ][line width=0.5]  (140,110) -- (200,110) -- (200,170) -- (140,170) -- cycle ;
\draw [line width=0.5]    (100,140) -- (140,140) ;
\draw [line width=0.5]    (240,140) -- (200,140.03) ;
\draw [line width=0.5]    (303.64,140) -- (338.79,140) ;
\draw  [fill={rgb, 255:red, 245; green, 240; blue, 235 }  ,fill opacity=1 ][line width=0.5]  (271.82,108.18) -- (303.64,140) -- (271.82,171.82) -- (240,140) -- cycle ;

\draw (170,140) node   [align=left] {$W(d)$};
\draw (271.82,140) node   [align=left] {$G(d)$};

\end{mytikz3}
~~ = ~~
\begin{mytikz3}
\draw [line width=0.5]    (400,118) -- (419.87,118) ;
\draw [line width=0.5]    (400,178) -- (419.95,178) ;
\draw [line width=0.5]    (400,122) -- (419.87,122) ;
\draw [line width=0.5]    (400,182) -- (419.95,182) ;
\draw  [fill={rgb, 255:red, 245; green, 240; blue, 235 }  ,fill opacity=1 ][line width=0.5]  (360,150) -- (400,190) -- (400,110) -- cycle ;
\draw [line width=0.5]    (400,126) -- (419.87,126) ;
\draw [line width=0.5]    (400,114) -- (419.87,114) ;
\draw [line width=0.5]    (400,174) -- (419.95,174) ;
\draw [line width=0.5]    (400,186) -- (419.95,186) ;
\draw  [fill={rgb, 255:red, 245; green, 240; blue, 235 }  ,fill opacity=1 ][line width=0.5]  (260,120) -- (320,120) -- (320,180) -- (260,180) -- cycle ;
\draw [line width=0.5]    (220,150) -- (260,150) ;
\draw [line width=0.5]    (360,150) -- (320,150.03) ;
\draw  [fill={rgb, 255:red, 245; green, 240; blue, 235 }  ,fill opacity=1 ][line width=0.5]  (460,150) -- (420,190) -- (420,110) -- cycle ;
\draw [line width=0.5]    (460,150) -- (487.73,150) ;

\draw (290,150) node   [align=left] {$W(d)$};
\draw (385,150) node   [align=left] {$S^{\dagger }$};
\draw (435.23,150) node   [align=left] {$S$};
\end{mytikz3}
~~=~~
\begin{mytikz3}
\draw [line width=0.5]    (400,150) -- (460,150) ;
\end{mytikz3}
\quad.
\end{equation}
Taking the first two degrees as examples, the Weingarten matrices are
\begin{equation}
    t=1:~W^{(1)}(d)=\frac{1}{d},\quad \quad  t=2:~W^{(2)}(d)=\left(\begin{matrix}
        d^2 & d \\
        d & d^2
    \end{matrix}\right)^{-1}=\frac{1}{d^2-1} \left(\begin{matrix}
        1 & -1/d \\
        -1/d & 1
    \end{matrix}\right),
\end{equation}
where we have used the fact $\braket{(1)}{(1)}=d$, $\braket{(1)(2)}{(1)(2)}=\braket{(12)}{(12)}=d^2$, and $\braket{(1)(2)}{(12)}=d$. Here $(1)$, $(2)$, and $(12)$ are canonical cycle notations of permutations. The Weingarten functions of higher degrees can be expressed in an exact closed form as a summation over all partitions $\lambda$ of the integer $t$~\cite{Collins2006}
\begin{equation}
    \opr{Wg}(\sigma, d) = \frac{1}{t!} \sum_{\lambda} \frac{\chi_\lambda(1)\chi_\lambda(\sigma)}{\prod_{(i,j)\in\mathcal{Y}_\lambda}(d-i+j)},
\end{equation}
where the product is taken over all cells $(i,j)$ in the Young diagram $\mathcal{Y}_\lambda$ of shape $\lambda$. Here $1$ in $\chi_\lambda(1)$ denotes the identity in $\mathcal{S}_t$, and $\chi_\lambda(\sigma)$ is the irreducible character of $\mathcal{S}_t$ corresponding to $\lambda$.

\subsection{Properties of the Weingarten function}

The Gram matrix $G^{(t)}(d)$ is a positive semi-definite symmetric matrix of shape $t!\times t!$, so the Weingarten matrix, as the pseudo-inverse of the Gram matrix, is also a positive semi-definite symmetric matrix of shape $t!\times t!$. 

According to the group rearrangement theorem and Eq.\,\eqref{eq:gram_def}, each row (column) of the Gram matrix $G^{(t)}(d)$ is just a rearrangement of the first row (column) of $G^{(t)}(d)$. The same also holds for the Weingarten matrix $W^{(t)}(d)$ since the Weingarten function also only depends on the product of the indices, i.e., $W^{(t)}_{\sigma,\tau}(d)= \opr{Wg}(\sigma^{-1}\tau, d)$~\cite{Collins2006}. Thus, the all-one vector 
\begin{equation}\label{eq:all-one_vector}
    \vec{1}=(1,1,\ldots,1)^T,
\end{equation}
of dimension $t!$ is an eigenvector for both $G^{(t)}(d)$ and $W^{(t)}(d)$ because the sum of the elements in a column of $G^{(t)}(d)$ or $W^{(t)}(d)$ is the same for all columns. The corresponding eigenvalue of $G^{(t)}(d)$ is
\begin{equation}\label{eq:gram_eigenvalue}
    \sum_{\sigma\in\mathcal{S}_t} G^{(t)}_{\sigma,\tau}(d) = \sum_{\sigma\in\mathcal{S}_t} d^{\#(\sigma^{-1}\tau)} = \sum_{\sigma\in\mathcal{S}_t} d^{\# \sigma} = \sum_{c=1}^{t} |s(t,c)| d^{c},
\end{equation}
where $|s(t,c)|$ is the number of permutations in $\mathcal{S}_t$ whose cycle number equals $c$. By definition, $s(t,c)$ is the so-called Stirling number of the first kind, and $|s(t,c)|$ is the unsigned Stirling number of the first kind, which are the coefficients in the expansion of the falling and rising factorials, respectively. Specifically, $|s(t,c)|$ can be obtained by expanding the polynomial
\begin{equation}\label{eq:unsigned_stirling}
    x(x+1)(x+2)\cdots (x+t-1) = \sum_{c=1}^{t} |s(t,c)| x^{c}.
\end{equation}
This equation can be interpreted as follows.
\begin{itemize}
    \item For $t=1$, the monomial $x$ means that there is only one permutation of cycle number $c=1$ in $\mathcal{S}_1$: $\{(1)\}$.
    \item For $t=2$, on the right-hand side, $x^2+x$ means one permutation of $c=2$ and one permutation of $c=1$: $\{(1)(2),(12)\}$; on the left-hand side, the additional factor $(x+1)$ in $x(x+1)$ means that, to add a new element, there are two different choices: make the new element a $1$-cycle by itself [$x$ in $(x+1)$], or insert it into the existing cycle [$1$ in $(x+1)$].
    \item For an arbitrary $t$, the additional factor $(x+t-1)$ means either making the new element a $1$-cycle by itself [$x$ in $(x+t-1)$], or inserting it into existing cycles at $(t-1)$ different positions [$(t-1)$ in $(x+t-1)$].
\end{itemize}
Therefore, it can be seen that the coefficient $|s(t,c)|$ of $x^c$ indeed counts the number of permutations in $\mathcal{S}_t$ consisting of $c$ cyclic permutations. Combining Eqs.\,\eqref{eq:gram_eigenvalue} and \eqref{eq:unsigned_stirling}, the eigenvalue of $G^{(t)}(d)$ corresponding to the all-one vector $\vec{1}$ is
\begin{equation}\label{eq:sum_G}
    \sum_{\sigma\in\mathcal{S}_t} G^{(t)}_{\sigma,\tau}(d) = \sum_{c=1}^{t} |s(t,c)| d^{c} = d(d+1)(d+2)\cdots (d+t-1).
\end{equation}
As $W^{(t)}(d)$ is the pseudo-inverse of $G^{(t)}(d)$, the eigenvalue of $W^{(t)}(d)$ corresponding to $\vec{1}$ is
\begin{equation}\label{eq:sum_Wg}
    \sum_{\sigma\in\mathcal{S}_t} W^{(t)}_{\sigma,\tau}(d) = \frac{1}{d(d+1)(d+2)\cdots (d+t-1)}.
\end{equation}

\subsection{Weingarten formula for random quantum circuits}

For a random unitary gate acting on multiple qudits with multiple indices, the corresponding Weingarten formula can be obtained by combining the indices into a single one and then applying Eq.\,\eqref{eq:weingarten_Liouville}, i.e., performing the same permutation operation for each index across different replicas simultaneously $\ket{\sigma}\rightarrow \ket{\sigma\sigma\cdots\sigma}$. In tensor network notations, this can be explicitly represented by connecting different indices with a COPY tensor, whose tensor element equals $1$ if and only if all indices take the same value under the chosen basis (otherwise equals $0$), i.e.,
\begin{equation}\label{eq:copy_tensor_def}
    \opr{COPY}_{ijk\ldots} = \delta_{ijk\ldots}~,\quad\quad
\begin{mytikz3}
\draw [line width=0.5]    (190,100) -- (150,120) ;
\draw [line width=0.5]    (110,120) -- (150,120) ;
\draw [line width=0.5]    (190,140) -- (150,120) ;
\draw [line width=0.5]    (192.5,127.5) -- (150,120) ;
\draw [line width=0.5]    (192.5,112.5) -- (150,120) ;
\draw  [fill={rgb, 255:red, 142; green, 142; blue, 142 }  ,fill opacity=1 ][line width=0.5]  (146.5,120) .. controls (146.5,118.07) and (148.07,116.5) .. (150,116.5) .. controls (151.93,116.5) and (153.5,118.07) .. (153.5,120) .. controls (153.5,121.93) and (151.93,123.5) .. (150,123.5) .. controls (148.07,123.5) and (146.5,121.93) .. (146.5,120) -- cycle ;

\draw (94.33,113) node [anchor=north west][inner sep=0.75pt]   [align=left] {$ \sigma $};
\draw (193.73,86.73) node [anchor=north west][inner sep=0.75pt]   [align=left] {$ \sigma $};
\draw (197.73,102.73) node [anchor=north west][inner sep=0.75pt]   [align=left] {$ \sigma $};
\draw (197.73,119.73) node [anchor=north west][inner sep=0.75pt]   [align=left] {$ \sigma $};
\draw (192.73,137.73) node [anchor=north west][inner sep=0.75pt]   [align=left] {$ \sigma $};
\end{mytikz3}
\quad=~1~.
\end{equation}
where $\delta_{ijk\ldots}$ is the Kronecker delta function. Taking a two-qudit gate as an example, the corresponding tensor network diagram for the Weingarten formula, like in Eq.\,\eqref{eq:weingarten_tn}, is
\begin{equation}\label{eq:weingarten_tn_2qudit}
\int \mathrm{d}\mu(U)~
\begin{mytikz3}
\draw [line width=0.5]    (95,175) -- (124.95,175) ;
\draw [line width=0.5]    (92.5,172.5) -- (122.45,172.5) ;
\draw [line width=0.5]    (90,170) -- (119.95,170) ;
\draw [line width=0.5]    (87.5,167.5) -- (117.45,167.5) ;
\draw  [fill={rgb, 255:red, 245; green, 240; blue, 235 }  ,fill opacity=1 ][line width=0.5]  (125,115) -- (165,115) -- (165,195) -- (125,195) -- cycle ;
\draw [line width=0.5]    (165,135) -- (195,135) ;
\draw [line width=0.5]    (95,135) -- (124.95,135) ;
\draw  [fill={rgb, 255:red, 245; green, 240; blue, 235 }  ,fill opacity=1 ][line width=0.5]  (122.5,112.5) -- (162.5,112.5) -- (162.5,192.5) -- (122.5,192.5) -- cycle ;
\draw [line width=0.5]    (162.5,132.5) -- (192.5,132.5) ;
\draw [line width=0.5]    (92.5,132.5) -- (122.45,132.5) ;
\draw  [fill={rgb, 255:red, 245; green, 240; blue, 235 }  ,fill opacity=1 ][line width=0.5]  (120,110) -- (160,110) -- (160,190) -- (120,190) -- cycle ;
\draw [line width=0.5]    (160,130) -- (190,130) ;
\draw [line width=0.5]    (90,130) -- (119.95,130) ;
\draw  [fill={rgb, 255:red, 245; green, 240; blue, 235 }  ,fill opacity=1 ][line width=0.5]  (117.5,107.5) -- (157.5,107.5) -- (157.5,187.5) -- (117.5,187.5) -- cycle ;
\draw [line width=0.5]    (157.5,127.5) -- (187.5,127.5) ;
\draw [line width=0.5]    (87.5,127.5) -- (117.45,127.5) ;
\draw [line width=0.5]    (165,175) -- (195,175) ;
\draw [line width=0.5]    (162.5,172.5) -- (192.5,172.5) ;
\draw [line width=0.5]    (160,170) -- (190,170) ;
\draw [line width=0.5]    (157.5,167.5) -- (187.5,167.5) ;
\draw [line width=0.5]    (82.5,162.5) -- (112.45,162.5) ;
\draw [line width=0.5]    (80,160) -- (109.95,160) ;
\draw [line width=0.5]    (77.5,157.5) -- (107.45,157.5) ;
\draw [line width=0.5]    (75,155) -- (104.95,155) ;
\draw  [fill={rgb, 255:red, 245; green, 240; blue, 235 }  ,fill opacity=1 ][line width=0.5]  (112.5,102.5) -- (152.5,102.5) -- (152.5,182.5) -- (112.5,182.5) -- cycle ;
\draw [line width=0.5]    (152.5,122.5) -- (182.5,122.5) ;
\draw [line width=0.5]    (82.5,122.5) -- (112.45,122.5) ;
\draw  [fill={rgb, 255:red, 245; green, 240; blue, 235 }  ,fill opacity=1 ][line width=0.5]  (110,100) -- (150,100) -- (150,180) -- (110,180) -- cycle ;
\draw [line width=0.5]    (150,120) -- (180,120) ;
\draw [line width=0.5]    (80,120) -- (109.95,120) ;
\draw  [fill={rgb, 255:red, 245; green, 240; blue, 235 }  ,fill opacity=1 ][line width=0.5]  (107.5,97.5) -- (147.5,97.5) -- (147.5,177.5) -- (107.5,177.5) -- cycle ;
\draw [line width=0.5]    (147.5,117.5) -- (177.5,117.5) ;
\draw [line width=0.5]    (77.5,117.5) -- (107.45,117.5) ;
\draw  [fill={rgb, 255:red, 245; green, 240; blue, 235 }  ,fill opacity=1 ][line width=0.5]  (105,95) -- (145,95) -- (145,175) -- (105,175) -- cycle ;
\draw [line width=0.5]    (145,115) -- (175,115) ;
\draw [line width=0.5]    (75,115) -- (104.95,115) ;
\draw [line width=0.5]    (152.5,162.5) -- (182.5,162.5) ;
\draw [line width=0.5]    (150,160) -- (180,160) ;
\draw [line width=0.5]    (147.5,157.5) -- (177.5,157.5) ;
\draw [line width=0.5]    (145,155) -- (175,155) ;

\draw (125,135) node   [align=left] {$ U$};
\end{mytikz3}
\quad=\quad
\begin{mytikz3}

\draw  [fill={rgb, 255:red, 245; green, 240; blue, 235 }  ,fill opacity=1 ][line width=0.5]  (340,130) -- (327.5,142.5) -- (327.5,117.5) -- cycle ;
\draw  [fill={rgb, 255:red, 245; green, 240; blue, 235 }  ,fill opacity=1 ][line width=0.5]  (397.5,110) -- (477.5,110) -- (477.5,190) -- (397.5,190) -- cycle ;
\draw [line width=0.5]    (369.77,150) -- (397.5,150) ;
\draw [line width=0.5]    (505,150) -- (477.5,150) ;
\draw [line width=0.5]    (369.77,150) -- (340,130) ;
\draw [line width=0.5]    (369.77,150) -- (340,170) ;
\draw  [fill={rgb, 255:red, 142; green, 142; blue, 142 }  ,fill opacity=1 ][line width=0.5]  (366.27,150) .. controls (366.27,148.07) and (367.84,146.5) .. (369.77,146.5) .. controls (371.7,146.5) and (373.27,148.07) .. (373.27,150) .. controls (373.27,151.93) and (371.7,153.5) .. (369.77,153.5) .. controls (367.84,153.5) and (366.27,151.93) .. (366.27,150) -- cycle ;
\draw [line width=0.5]    (535,170) -- (505,150) ;
\draw [line width=0.5]    (535,130) -- (505,150) ;
\draw  [fill={rgb, 255:red, 142; green, 142; blue, 142 }  ,fill opacity=1 ][line width=0.5]  (501.5,150) .. controls (501.5,148.07) and (503.07,146.5) .. (505,146.5) .. controls (506.93,146.5) and (508.5,148.07) .. (508.5,150) .. controls (508.5,151.93) and (506.93,153.5) .. (505,153.5) .. controls (503.07,153.5) and (501.5,151.93) .. (501.5,150) -- cycle ;
\draw [line width=0.5]    (307.5,120) -- (327.5,120) ;
\draw [line width=0.5]    (307.5,122.5) -- (327.5,122.5) ;
\draw [line width=0.5]    (307.5,125) -- (327.5,125) ;
\draw [line width=0.5]    (307.5,127.5) -- (327.5,127.5) ;
\draw [line width=0.5]    (307.5,132.5) -- (321.48,132.5) -- (327.5,132.5) ;
\draw [line width=0.5]    (307.5,135) -- (327.5,135) ;
\draw [line width=0.5]    (307.5,137.5) -- (327.5,137.5) ;
\draw [line width=0.5]    (307.5,140) -- (327.5,140) ;
\draw  [fill={rgb, 255:red, 245; green, 240; blue, 235 }  ,fill opacity=1 ][line width=0.5]  (340,170) -- (327.5,182.5) -- (327.5,157.5) -- cycle ;
\draw [line width=0.5]    (307.5,160) -- (327.5,160) ;
\draw [line width=0.5]    (307.5,162.5) -- (327.5,162.5) ;
\draw [line width=0.5]    (307.5,165) -- (327.5,165) ;
\draw [line width=0.5]    (307.5,167.5) -- (327.5,167.5) ;
\draw [line width=0.5]    (307.5,172.5) -- (321.48,172.5) -- (327.5,172.5) ;
\draw [line width=0.5]    (307.5,175) -- (327.5,175) ;
\draw [line width=0.5]    (307.5,177.5) -- (327.5,177.5) ;
\draw [line width=0.5]    (307.5,180) -- (327.5,180) ;
\draw  [fill={rgb, 255:red, 245; green, 240; blue, 235 }  ,fill opacity=1 ][line width=0.5]  (535,130) -- (547.5,117.5) -- (547.5,142.5) -- cycle ;
\draw [line width=0.5]    (547.5,120) -- (567.5,120) ;
\draw [line width=0.5]    (547.5,122.5) -- (567.5,122.5) ;
\draw [line width=0.5]    (547.5,125) -- (567.5,125) ;
\draw [line width=0.5]    (547.5,127.5) -- (567.5,127.5) ;
\draw [line width=0.5]    (547.5,132.5) -- (561.48,132.5) -- (567.5,132.5) ;
\draw [line width=0.5]    (547.5,135) -- (567.5,135) ;
\draw [line width=0.5]    (547.5,137.5) -- (567.5,137.5) ;
\draw [line width=0.5]    (547.5,140) -- (567.5,140) ;
\draw [line width=0.5]    (547.5,160) -- (567.5,160) ;
\draw [line width=0.5]    (547.5,162.5) -- (567.5,162.5) ;
\draw [line width=0.5]    (547.5,165) -- (567.5,165) ;
\draw [line width=0.5]    (547.5,167.5) -- (567.5,167.5) ;
\draw [line width=0.5]    (547.5,172.5) -- (561.48,172.5) -- (567.5,172.5) ;
\draw [line width=0.5]    (547.5,175) -- (567.5,175) ;
\draw [line width=0.5]    (547.5,177.5) -- (567.5,177.5) ;
\draw [line width=0.5]    (547.5,180) -- (567.5,180) ;
\draw  [fill={rgb, 255:red, 245; green, 240; blue, 235 }  ,fill opacity=1 ][line width=0.5]  (535,170) -- (547.5,157.5) -- (547.5,182.5) -- cycle ;

\draw (437.5,150) node  [font=\small] [align=left] {$W(d_{1} d_{2})$};

\end{mytikz3}
\quad\quad,
\end{equation}
where we have stacked the replicas of $U^*$ under those of $U$. $d_1$ and $d_2$ denote the local Hilbert space dimensions for the two qudits. The symbols ``$S$'' and ``$S^\dagger$'' on the triangular tensors are omitted for simplicity. We remark that the values of the Weingarten matrix elements depend on the dimension of the integrated unitary, e.g., the Weingarten matrix in Eq.\,\eqref{eq:weingarten_tn_2qudit} is $W^{(t)}(d_1d_2)$, instead of $W^{(t)}(d)$ in Eq.\,\eqref{eq:weingarten_tn}.

Therefore, for a random unitary circuit $\mathbf{U}$ composed of independently Haar-random unitary gates $\{U_k\}_{k=1}^{n}$, i.e.,
\begin{equation}
    \mathbf{U} = \opr{tTr}\left( \bigotimes_{k} U_k \right),
\end{equation}
where $\opr{tTr}$ represents taking the tensor network contraction according to the circuit graph. The corresponding ensemble averages can always be represented as a tensor network in the same shape as the circuit graph but with each unitary gate replaced by a ``Weingarten gate'' $\opr{W}^{(t)}$ as on the right-hand side of Eq.\,\eqref{eq:weingarten_tn_2qudit}, i.e.,
\begin{equation}\label{eq:weingarten_gate}
    \mathbb{E}_{\mathbb{U}} \left[ \mathbf{U}^{\otimes t} \otimes \mathbf{U}^{* \otimes t} \right] = \opr{tTr}\left( \bigotimes_{k} \opr{W}_k^{(t)} \right).
\end{equation}
Here $\mathbb{U}$ denotes the ensemble of random circuits and $\mathbb{E}$ denotes the ensemble average. $\opr{W}_{k}^{(t)}$ is the $t$-degree Weingarten ``gate'' in the same shape as the corresponding unitary gate $U_k$. As an example, for two adjacent independently random unitary gates, the ensemble averages can be represented as two Weingarten gates with the same contraction relation, i.e.,
\begin{equation}
\begin{mytikz3}
\draw [line width=0.5]    (177.5,112.54) -- (177.5,92.54) ;
\draw  [fill={rgb, 255:red, 245; green, 240; blue, 235 }  ,fill opacity=1 ][line width=0.5]  (107.5,92.54) -- (107.5,52.54) -- (187.5,52.54) -- (187.5,92.54) -- cycle ;
\draw [line width=0.5]    (117.5,52.54) -- (117.5,32.54) ;
\draw [line width=0.5]    (117.5,132.54) -- (117.5,112.54) ;
\draw [line width=0.5]    (177.5,52.54) -- (177.5,32.5) ;
\draw [line width=0.5]    (117.5,192.54) -- (117.5,172.54) ;
\draw  [fill={rgb, 255:red, 245; green, 240; blue, 235 }  ,fill opacity=1 ][line width=0.5]  (47.5,172.54) -- (47.5,132.54) -- (127.5,132.54) -- (127.5,172.54) -- cycle ;
\draw [line width=0.5]    (57.5,192.54) -- (57.5,172.54) ;
\draw [line width=0.5]    (57.5,132.54) -- (57.5,112.54) ;
\draw [line width=0.5]    (117.5,112.54) -- (117.5,92.54) ;

\draw [line width=0.5]    (175,115.04) -- (175,95.04) ;
\draw  [fill={rgb, 255:red, 245; green, 240; blue, 235 }  ,fill opacity=1 ][line width=0.5]  (105,95.04) -- (105,55.04) -- (185,55.04) -- (185,95.04) -- cycle ;
\draw [line width=0.5]    (115,55.04) -- (115,35.04) ;
\draw [line width=0.5]    (115,135.04) -- (115,115.04) ;
\draw [line width=0.5]    (175,55.04) -- (175,35) ;
\draw [line width=0.5]    (115,195.04) -- (115,175.04) ;
\draw  [fill={rgb, 255:red, 245; green, 240; blue, 235 }  ,fill opacity=1 ][line width=0.5]  (45,175.04) -- (45,135.04) -- (125,135.04) -- (125,175.04) -- cycle ;
\draw [line width=0.5]    (55,195.04) -- (55,175.04) ;
\draw [line width=0.5]    (55,135.04) -- (55,115.04) ;
\draw [line width=0.5]    (115,115.04) -- (115,95.04) ;

\draw [line width=0.5]    (172.5,117.54) -- (172.5,97.54) ;
\draw  [fill={rgb, 255:red, 245; green, 240; blue, 235 }  ,fill opacity=1 ][line width=0.5]  (102.5,97.54) -- (102.5,57.54) -- (182.5,57.54) -- (182.5,97.54) -- cycle ;
\draw [line width=0.5]    (112.5,57.54) -- (112.5,37.54) ;
\draw [line width=0.5]    (112.5,137.54) -- (112.5,117.54) ;
\draw [line width=0.5]    (172.5,57.54) -- (172.5,37.5) ;
\draw [line width=0.5]    (112.5,197.54) -- (112.5,177.54) ;
\draw  [fill={rgb, 255:red, 245; green, 240; blue, 235 }  ,fill opacity=1 ][line width=0.5]  (42.5,177.54) -- (42.5,137.54) -- (122.5,137.54) -- (122.5,177.54) -- cycle ;
\draw [line width=0.5]    (52.5,197.54) -- (52.5,177.54) ;
\draw [line width=0.5]    (52.5,137.54) -- (52.5,117.54) ;
\draw [line width=0.5]    (112.5,117.54) -- (112.5,97.54) ;

\draw [line width=0.5]    (170,120) -- (170,100) ;
\draw  [fill={rgb, 255:red, 245; green, 240; blue, 235 }  ,fill opacity=1 ][line width=0.5]  (100,100) -- (100,60) -- (180,60) -- (180,100) -- cycle ;
\draw [line width=0.5]    (110,60) -- (110,40) ;
\draw [line width=0.5]    (110,140) -- (110,120) ;
\draw [line width=0.5]    (170,60) -- (170,39.96) ;
\draw [line width=0.5]    (110,200) -- (110,180) ;
\draw  [fill={rgb, 255:red, 245; green, 240; blue, 235 }  ,fill opacity=1 ][line width=0.5]  (40,180) -- (40,140) -- (120,140) -- (120,180) -- cycle ;
\draw [line width=0.5]    (50,200) -- (50,180) ;
\draw [line width=0.5]    (50,140) -- (50,120) ;
\draw [line width=0.5]    (110,120) -- (110,100) ;

\draw (80,160) node   [align=left] {$ U_{1}$};
\draw (140,80) node   [align=left] {$ U_{2}$};
\end{mytikz3}
\quad\quad \xrightarrow{~~\mathbb{E}~~} \quad\quad
\begin{mytikz2}
\draw  [fill={rgb, 255:red, 245; green, 240; blue, 235 }  ,fill opacity=1 ][line width=0.5]  (160,230) -- (180,230) -- (180,250) -- (160,250) -- cycle ;
\draw [line width=0.5]    (170,210) -- (170,230) ;
\draw [line width=0.5]    (170,250) -- (170,270) ;
\draw [line width=0.5]    (170,210) -- (220,190) ;
\draw [line width=0.5]    (170,210) -- (120,190) ;
\draw  [fill={rgb, 255:red, 245; green, 240; blue, 235 }  ,fill opacity=1 ][line width=0.5]  (220,190) -- (210,180) -- (230,180) -- cycle ;
\draw [line width=0.5]    (120,290) -- (170,270) ;
\draw [line width=0.5]    (220,290) -- (170,270) ;
\draw [line width=0.5]    (220,160) -- (220,180) ;
\draw  [fill={rgb, 255:red, 245; green, 240; blue, 235 }  ,fill opacity=1 ][line width=0.5]  (120,190) -- (110,180) -- (130,180) -- cycle ;
\draw [line width=0.5]    (120,160) -- (120,180) ;
\draw  [fill={rgb, 255:red, 245; green, 240; blue, 235 }  ,fill opacity=1 ][line width=0.5]  (120,290) -- (130,300) -- (110,300) -- cycle ;
\draw  [fill={rgb, 255:red, 245; green, 240; blue, 235 }  ,fill opacity=1 ][line width=0.5]  (220,290) -- (230,300) -- (210,300) -- cycle ;
\draw [line width=0.5]    (220,300) -- (220,320) ;
\draw [line width=0.5]    (120,300) -- (120,320) ;
\draw  [fill={rgb, 255:red, 245; green, 240; blue, 235 }  ,fill opacity=1 ][line width=0.5]  (260,70) -- (280,70) -- (280,90) -- (260,90) -- cycle ;
\draw [line width=0.5]    (270,50) -- (270,70) ;
\draw [line width=0.5]    (270,90) -- (270,110) ;
\draw [line width=0.5]    (270,50) -- (305.82,35.64) -- (320,30) ;
\draw [line width=0.5]    (270,50) -- (220,30) ;
\draw  [fill={rgb, 255:red, 245; green, 240; blue, 235 }  ,fill opacity=1 ][line width=0.5]  (320,30) -- (310,20) -- (330,20) -- cycle ;
\draw [line width=0.5]    (220,130) -- (270,110) ;
\draw [line width=0.5]    (320,130) -- (270,110) ;
\draw [line width=0.5]    (320,0) -- (320,20) ;
\draw  [fill={rgb, 255:red, 245; green, 240; blue, 235 }  ,fill opacity=1 ][line width=0.5]  (220,30) -- (210,20) -- (230,20) -- cycle ;
\draw [line width=0.5]    (220,0) -- (220,20) ;
\draw  [fill={rgb, 255:red, 245; green, 240; blue, 235 }  ,fill opacity=1 ][line width=0.5]  (220,130) -- (230,140) -- (210,140) -- cycle ;
\draw  [fill={rgb, 255:red, 245; green, 240; blue, 235 }  ,fill opacity=1 ][line width=0.5]  (320,130) -- (330,140) -- (310,140) -- cycle ;
\draw [line width=0.5]    (320,140) -- (320,160) ;
\draw [line width=0.5]    (220,140) -- (220,160) ;
\draw  [fill={rgb, 255:red, 142; green, 142; blue, 142 }  ,fill opacity=1 ][line width=0.5]  (166.5,210) .. controls (166.5,208.07) and (168.07,206.5) .. (170,206.5) .. controls (171.93,206.5) and (173.5,208.07) .. (173.5,210) .. controls (173.5,211.93) and (171.93,213.5) .. (170,213.5) .. controls (168.07,213.5) and (166.5,211.93) .. (166.5,210) -- cycle ;
\draw  [fill={rgb, 255:red, 142; green, 142; blue, 142 }  ,fill opacity=1 ][line width=0.5]  (166.5,270) .. controls (166.5,268.07) and (168.07,266.5) .. (170,266.5) .. controls (171.93,266.5) and (173.5,268.07) .. (173.5,270) .. controls (173.5,271.93) and (171.93,273.5) .. (170,273.5) .. controls (168.07,273.5) and (166.5,271.93) .. (166.5,270) -- cycle ;
\draw  [fill={rgb, 255:red, 142; green, 142; blue, 142 }  ,fill opacity=1 ][line width=0.5]  (266.5,110) .. controls (266.5,108.07) and (268.07,106.5) .. (270,106.5) .. controls (271.93,106.5) and (273.5,108.07) .. (273.5,110) .. controls (273.5,111.93) and (271.93,113.5) .. (270,113.5) .. controls (268.07,113.5) and (266.5,111.93) .. (266.5,110) -- cycle ;
\draw  [fill={rgb, 255:red, 142; green, 142; blue, 142 }  ,fill opacity=1 ][line width=0.5]  (266.5,50) .. controls (266.5,48.07) and (268.07,46.5) .. (270,46.5) .. controls (271.93,46.5) and (273.5,48.07) .. (273.5,50) .. controls (273.5,51.93) and (271.93,53.5) .. (270,53.5) .. controls (268.07,53.5) and (266.5,51.93) .. (266.5,50) -- cycle ;
\draw  [dash pattern={on 5.63pt off 4.5pt}][line width=0.5]  (190,10) -- (350,10) -- (350,150) -- (190,150) -- cycle ;
\draw  [dash pattern={on 5.63pt off 4.5pt}][line width=0.5]  (90,170) -- (250,170) -- (250,310) -- (90,310) -- cycle ;
\end{mytikz2}
\quad\quad,
\end{equation}
where we take $t=2$ for the convenience of illustration. The replicas and the conjugates are indicated by the rectangles underneath on the left-hand side. The entire diagram is rotated by $90^\circ$ compared to the diagrams above to align with the conventions in the literature of random quantum circuits. The $2t$-fold indices of the permutation vectors on the right-hand side are omitted and replaced by a single index in the diagram without ambiguity. The Weingarten matrix symbol ``$W$'' on the square tensors is also omitted. The dashed boxes indicate the two Weingarten gates $\opr{W}^{(t)}_1$ and $\opr{W}^{(t)}_2$. The Gram matrix of permutation vectors naturally arises in the contraction of two Weingarten gates. 

Therefore, for each random unitary circuit, the resulting tensor network has a special structure: it comprises some COPY tensor nodes connected by some weight matrices (the Gram matrix of the permutation vectors or the Weingarten matrix). This is exactly the form of tensor network representations of the partition functions of classical statistical mechanics models. The classical ``spins'' take values from the permutation group $\mathcal{S}_t$ and the interaction between spins is determined by the Weingarten matrix or its inverse. However, there is a slight difference between such a tensor network and that of a strict classical statistical mechanics model: the elements of the Weingarten matrix, intended to be used as a weight matrix, may be negative, whereas the elements of the weight matrix in the partition function of a classical statistical model must be positive. This issue can be partially fixed by integrating out certain parts of the tensor network.

Taking the 1D brickwork circuit as an example, the integration result can be represented as
\begin{equation}
\begin{mytikz}

\draw [line width=0.5]    (177.5,112.58) -- (177.5,92.58) ;
\draw  [fill={rgb, 255:red, 245; green, 240; blue, 235 }  ,fill opacity=1 ][line width=0.5]  (107.5,92.58) -- (107.5,52.58) -- (187.5,52.58) -- (187.5,92.58) -- cycle ;
\draw [line width=0.5]    (117.5,52.58) -- (117.5,32.58) ;
\draw [line width=0.5]    (117.5,132.58) -- (117.5,112.58) ;
\draw [line width=0.5]    (177.5,52.58) -- (177.5,32.54) ;
\draw [line width=0.5]    (117.5,192.58) -- (117.5,172.58) ;
\draw  [fill={rgb, 255:red, 245; green, 240; blue, 235 }  ,fill opacity=1 ][line width=0.5]  (47.5,172.58) -- (47.5,132.58) -- (127.5,132.58) -- (127.5,172.58) -- cycle ;
\draw [line width=0.5]    (57.5,192.58) -- (57.5,172.58) ;
\draw [line width=0.5]    (57.5,132.58) -- (57.5,112.58) ;
\draw [line width=0.5]    (117.5,112.58) -- (117.5,92.58) ;
\draw [line width=0.5]    (237.5,192.58) -- (237.5,172.58) ;
\draw  [fill={rgb, 255:red, 245; green, 240; blue, 235 }  ,fill opacity=1 ][line width=0.5]  (167.5,172.58) -- (167.5,132.58) -- (247.5,132.58) -- (247.5,172.58) -- cycle ;
\draw [line width=0.5]    (177.5,132.58) -- (177.5,112.58) ;
\draw [line width=0.5]    (237.5,132.58) -- (237.5,112.54) ;
\draw [line width=0.5]    (177.5,192.58) -- (177.5,172.58) ;
\draw [line width=0.5]    (177.5,272.58) -- (177.5,252.58) ;
\draw  [fill={rgb, 255:red, 245; green, 240; blue, 235 }  ,fill opacity=1 ][line width=0.5]  (107.5,252.58) -- (107.5,212.58) -- (187.5,212.58) -- (187.5,252.58) -- cycle ;
\draw [line width=0.5]    (117.5,212.58) -- (117.5,192.58) ;
\draw [line width=0.5]    (177.5,212.58) -- (177.5,192.54) ;
\draw [line width=0.5]    (117.5,272.58) -- (117.5,252.58) ;
\draw [line width=0.5]    (297.5,112.54) -- (297.5,92.54) ;
\draw  [fill={rgb, 255:red, 245; green, 240; blue, 235 }  ,fill opacity=1 ][line width=0.5]  (227.5,92.54) -- (227.5,52.54) -- (307.5,52.54) -- (307.5,92.54) -- cycle ;
\draw [line width=0.5]    (237.5,52.54) -- (237.5,32.54) ;
\draw [line width=0.5]    (297.5,52.54) -- (297.5,32.5) ;
\draw [line width=0.5]    (237.5,112.54) -- (237.5,92.54) ;
\draw [line width=0.5]    (357.5,192.54) -- (357.5,172.54) ;
\draw  [fill={rgb, 255:red, 245; green, 240; blue, 235 }  ,fill opacity=1 ][line width=0.5]  (287.5,172.54) -- (287.5,132.54) -- (367.5,132.54) -- (367.5,172.54) -- cycle ;
\draw [line width=0.5]    (297.5,132.54) -- (297.5,112.54) ;
\draw [line width=0.5]    (357.5,132.54) -- (357.5,112.5) ;
\draw [line width=0.5]    (297.5,192.54) -- (297.5,172.54) ;
\draw [line width=0.5]    (297.5,272.54) -- (297.5,252.54) ;
\draw  [fill={rgb, 255:red, 245; green, 240; blue, 235 }  ,fill opacity=1 ][line width=0.5]  (227.5,252.54) -- (227.5,212.54) -- (307.5,212.54) -- (307.5,252.54) -- cycle ;
\draw [line width=0.5]    (237.5,212.54) -- (237.5,192.54) ;
\draw [line width=0.5]    (297.5,212.54) -- (297.5,192.5) ;
\draw [line width=0.5]    (237.5,272.54) -- (237.5,252.54) ;

\draw [line width=0.5]    (175,115.08) -- (175,95.08) ;
\draw  [fill={rgb, 255:red, 245; green, 240; blue, 235 }  ,fill opacity=1 ][line width=0.5]  (105,95.08) -- (105,55.08) -- (185,55.08) -- (185,95.08) -- cycle ;
\draw [line width=0.5]    (115,55.08) -- (115,35.08) ;
\draw [line width=0.5]    (115,135.08) -- (115,115.08) ;
\draw [line width=0.5]    (175,55.08) -- (175,35.04) ;
\draw [line width=0.5]    (115,195.08) -- (115,175.08) ;
\draw  [fill={rgb, 255:red, 245; green, 240; blue, 235 }  ,fill opacity=1 ][line width=0.5]  (45,175.08) -- (45,135.08) -- (125,135.08) -- (125,175.08) -- cycle ;
\draw [line width=0.5]    (55,195.08) -- (55,175.08) ;
\draw [line width=0.5]    (55,135.08) -- (55,115.08) ;
\draw [line width=0.5]    (115,115.08) -- (115,95.08) ;
\draw [line width=0.5]    (235,195.08) -- (235,175.08) ;
\draw  [fill={rgb, 255:red, 245; green, 240; blue, 235 }  ,fill opacity=1 ][line width=0.5]  (165,175.08) -- (165,135.08) -- (245,135.08) -- (245,175.08) -- cycle ;
\draw [line width=0.5]    (175,135.08) -- (175,115.08) ;
\draw [line width=0.5]    (235,135.08) -- (235,115.04) ;
\draw [line width=0.5]    (175,195.08) -- (175,175.08) ;
\draw [line width=0.5]    (175,275.08) -- (175,255.08) ;
\draw  [fill={rgb, 255:red, 245; green, 240; blue, 235 }  ,fill opacity=1 ][line width=0.5]  (105,255.08) -- (105,215.08) -- (185,215.08) -- (185,255.08) -- cycle ;
\draw [line width=0.5]    (115,215.08) -- (115,195.08) ;
\draw [line width=0.5]    (175,215.08) -- (175,195.04) ;
\draw [line width=0.5]    (115,275.08) -- (115,255.08) ;
\draw [line width=0.5]    (295,115.04) -- (295,95.04) ;
\draw  [fill={rgb, 255:red, 245; green, 240; blue, 235 }  ,fill opacity=1 ][line width=0.5]  (225,95.04) -- (225,55.04) -- (305,55.04) -- (305,95.04) -- cycle ;
\draw [line width=0.5]    (235,55.04) -- (235,35.04) ;
\draw [line width=0.5]    (295,55.04) -- (295,35) ;
\draw [line width=0.5]    (235,115.04) -- (235,95.04) ;
\draw [line width=0.5]    (355,195.04) -- (355,175.04) ;
\draw  [fill={rgb, 255:red, 245; green, 240; blue, 235 }  ,fill opacity=1 ][line width=0.5]  (285,175.04) -- (285,135.04) -- (365,135.04) -- (365,175.04) -- cycle ;
\draw [line width=0.5]    (295,135.04) -- (295,115.04) ;
\draw [line width=0.5]    (355,135.04) -- (355,115) ;
\draw [line width=0.5]    (295,195.04) -- (295,175.04) ;
\draw [line width=0.5]    (295,275.04) -- (295,255.04) ;
\draw  [fill={rgb, 255:red, 245; green, 240; blue, 235 }  ,fill opacity=1 ][line width=0.5]  (225,255.04) -- (225,215.04) -- (305,215.04) -- (305,255.04) -- cycle ;
\draw [line width=0.5]    (235,215.04) -- (235,195.04) ;
\draw [line width=0.5]    (295,215.04) -- (295,195) ;
\draw [line width=0.5]    (235,275.04) -- (235,255.04) ;

\draw [line width=0.5]    (172.5,117.58) -- (172.5,97.58) ;
\draw  [fill={rgb, 255:red, 245; green, 240; blue, 235 }  ,fill opacity=1 ][line width=0.5]  (102.5,97.58) -- (102.5,57.58) -- (182.5,57.58) -- (182.5,97.58) -- cycle ;
\draw [line width=0.5]    (112.5,57.58) -- (112.5,37.58) ;
\draw [line width=0.5]    (112.5,137.58) -- (112.5,117.58) ;
\draw [line width=0.5]    (172.5,57.58) -- (172.5,37.54) ;
\draw [line width=0.5]    (112.5,197.58) -- (112.5,177.58) ;
\draw  [fill={rgb, 255:red, 245; green, 240; blue, 235 }  ,fill opacity=1 ][line width=0.5]  (42.5,177.58) -- (42.5,137.58) -- (122.5,137.58) -- (122.5,177.58) -- cycle ;
\draw [line width=0.5]    (52.5,197.58) -- (52.5,177.58) ;
\draw [line width=0.5]    (52.5,137.58) -- (52.5,117.58) ;
\draw [line width=0.5]    (112.5,117.58) -- (112.5,97.58) ;
\draw [line width=0.5]    (232.5,197.58) -- (232.5,177.58) ;
\draw  [fill={rgb, 255:red, 245; green, 240; blue, 235 }  ,fill opacity=1 ][line width=0.5]  (162.5,177.58) -- (162.5,137.58) -- (242.5,137.58) -- (242.5,177.58) -- cycle ;
\draw [line width=0.5]    (172.5,137.58) -- (172.5,117.58) ;
\draw [line width=0.5]    (232.5,137.58) -- (232.5,117.54) ;
\draw [line width=0.5]    (172.5,197.58) -- (172.5,177.58) ;
\draw [line width=0.5]    (172.5,277.58) -- (172.5,257.58) ;
\draw  [fill={rgb, 255:red, 245; green, 240; blue, 235 }  ,fill opacity=1 ][line width=0.5]  (102.5,257.58) -- (102.5,217.58) -- (182.5,217.58) -- (182.5,257.58) -- cycle ;
\draw [line width=0.5]    (112.5,217.58) -- (112.5,197.58) ;
\draw [line width=0.5]    (172.5,217.58) -- (172.5,197.54) ;
\draw [line width=0.5]    (112.5,277.58) -- (112.5,257.58) ;
\draw [line width=0.5]    (292.5,117.54) -- (292.5,97.54) ;
\draw  [fill={rgb, 255:red, 245; green, 240; blue, 235 }  ,fill opacity=1 ][line width=0.5]  (222.5,97.54) -- (222.5,57.54) -- (302.5,57.54) -- (302.5,97.54) -- cycle ;
\draw [line width=0.5]    (232.5,57.54) -- (232.5,37.54) ;
\draw [line width=0.5]    (292.5,57.54) -- (292.5,37.5) ;
\draw [line width=0.5]    (232.5,117.54) -- (232.5,97.54) ;
\draw [line width=0.5]    (352.5,197.54) -- (352.5,177.54) ;
\draw  [fill={rgb, 255:red, 245; green, 240; blue, 235 }  ,fill opacity=1 ][line width=0.5]  (282.5,177.54) -- (282.5,137.54) -- (362.5,137.54) -- (362.5,177.54) -- cycle ;
\draw [line width=0.5]    (292.5,137.54) -- (292.5,117.54) ;
\draw [line width=0.5]    (352.5,137.54) -- (352.5,117.5) ;
\draw [line width=0.5]    (292.5,197.54) -- (292.5,177.54) ;
\draw [line width=0.5]    (292.5,277.54) -- (292.5,257.54) ;
\draw  [fill={rgb, 255:red, 245; green, 240; blue, 235 }  ,fill opacity=1 ][line width=0.5]  (222.5,257.54) -- (222.5,217.54) -- (302.5,217.54) -- (302.5,257.54) -- cycle ;
\draw [line width=0.5]    (232.5,217.54) -- (232.5,197.54) ;
\draw [line width=0.5]    (292.5,217.54) -- (292.5,197.5) ;
\draw [line width=0.5]    (232.5,277.54) -- (232.5,257.54) ;

\draw [line width=0.5]    (170,120.04) -- (170,100.04) ;
\draw  [fill={rgb, 255:red, 245; green, 240; blue, 235 }  ,fill opacity=1 ][line width=0.5]  (100,100.04) -- (100,60.04) -- (180,60.04) -- (180,100.04) -- cycle ;
\draw [line width=0.5]    (110,60.04) -- (110,40.04) ;
\draw [line width=0.5]    (110,140.04) -- (110,120.04) ;
\draw [line width=0.5]    (170,60.04) -- (170,40) ;
\draw [line width=0.5]    (110,200.04) -- (110,180.04) ;
\draw  [fill={rgb, 255:red, 245; green, 240; blue, 235 }  ,fill opacity=1 ][line width=0.5]  (40,180.04) -- (40,140.04) -- (120,140.04) -- (120,180.04) -- cycle ;
\draw [line width=0.5]    (50,200.04) -- (50,180.04) ;
\draw [line width=0.5]    (50,140.04) -- (50,120.04) ;
\draw [line width=0.5]    (110,120.04) -- (110,100.04) ;
\draw [line width=0.5]    (230,200.04) -- (230,180.04) ;
\draw  [fill={rgb, 255:red, 245; green, 240; blue, 235 }  ,fill opacity=1 ][line width=0.5]  (160,180.04) -- (160,140.04) -- (240,140.04) -- (240,180.04) -- cycle ;
\draw [line width=0.5]    (170,140.04) -- (170,120.04) ;
\draw [line width=0.5]    (230,140.04) -- (230,120) ;
\draw [line width=0.5]    (170,200.04) -- (170,180.04) ;
\draw [line width=0.5]    (170,280.04) -- (170,260.04) ;
\draw  [fill={rgb, 255:red, 245; green, 240; blue, 235 }  ,fill opacity=1 ][line width=0.5]  (100,260.04) -- (100,220.04) -- (180,220.04) -- (180,260.04) -- cycle ;
\draw [line width=0.5]    (110,220.04) -- (110,200.04) ;
\draw [line width=0.5]    (170,220.04) -- (170,200) ;
\draw [line width=0.5]    (110,280.04) -- (110,260.04) ;
\draw [line width=0.5]    (290,120) -- (290,100) ;
\draw  [fill={rgb, 255:red, 245; green, 240; blue, 235 }  ,fill opacity=1 ][line width=0.5]  (220,100) -- (220,60) -- (300,60) -- (300,100) -- cycle ;
\draw [line width=0.5]    (230,60) -- (230,40) ;
\draw [line width=0.5]    (290,60) -- (290,39.96) ;
\draw [line width=0.5]    (230,120) -- (230,100) ;
\draw [line width=0.5]    (350,200) -- (350,180) ;
\draw  [fill={rgb, 255:red, 245; green, 240; blue, 235 }  ,fill opacity=1 ][line width=0.5]  (280,180) -- (280,140) -- (360,140) -- (360,180) -- cycle ;
\draw [line width=0.5]    (290,140) -- (290,120) ;
\draw [line width=0.5]    (350,140) -- (350,119.96) ;
\draw [line width=0.5]    (290,200) -- (290,180) ;
\draw [line width=0.5]    (290,280) -- (290,260) ;
\draw  [fill={rgb, 255:red, 245; green, 240; blue, 235 }  ,fill opacity=1 ][line width=0.5]  (220,260) -- (220,220) -- (300,220) -- (300,260) -- cycle ;
\draw [line width=0.5]    (230,220) -- (230,200) ;
\draw [line width=0.5]    (290,220) -- (290,199.96) ;
\draw [line width=0.5]    (230,280) -- (230,260) ;

\end{mytikz}
\quad\quad \xrightarrow{~~\mathbb{E}~~} \quad\quad
\begin{mytikz2}
\draw [line width=0.5]    (220,140) -- (270,110) ;
\draw  [fill={rgb, 255:red, 245; green, 240; blue, 235 }  ,fill opacity=1 ][line width=0.5]  (260,70) -- (280,70) -- (280,90) -- (260,90) -- cycle ;
\draw [line width=0.5]    (270,50) -- (270,70) ;
\draw [line width=0.5]    (270,90) -- (270,110) ;
\draw [line width=0.5]    (270,50) -- (303.52,28.32) ;
\draw [line width=0.5]    (270,50) -- (237.12,28.72) ;
\draw  [fill={rgb, 255:red, 245; green, 240; blue, 235 }  ,fill opacity=1 ][line width=0.5]  (286.08,40.29) -- (289.35,26.53) -- (299.83,43.56) -- cycle ;
\draw [line width=0.5]    (320,140) -- (270,110) ;
\draw  [fill={rgb, 255:red, 245; green, 240; blue, 235 }  ,fill opacity=1 ][line width=0.5]  (253.23,39.52) -- (239.33,42.12) -- (250.63,25.62) -- cycle ;
\draw  [fill={rgb, 255:red, 245; green, 240; blue, 235 }  ,fill opacity=1 ][line width=0.5]  (254.43,119.04) -- (251.08,132.78) -- (240.69,115.69) -- cycle ;
\draw  [fill={rgb, 255:red, 142; green, 142; blue, 142 }  ,fill opacity=1 ][line width=0.5]  (266.5,110) .. controls (266.5,108.07) and (268.07,106.5) .. (270,106.5) .. controls (271.93,106.5) and (273.5,108.07) .. (273.5,110) .. controls (273.5,111.93) and (271.93,113.5) .. (270,113.5) .. controls (268.07,113.5) and (266.5,111.93) .. (266.5,110) -- cycle ;
\draw  [fill={rgb, 255:red, 142; green, 142; blue, 142 }  ,fill opacity=1 ][line width=0.5]  (266.5,50) .. controls (266.5,48.07) and (268.07,46.5) .. (270,46.5) .. controls (271.93,46.5) and (273.5,48.07) .. (273.5,50) .. controls (273.5,51.93) and (271.93,53.5) .. (270,53.5) .. controls (268.07,53.5) and (266.5,51.93) .. (266.5,50) -- cycle ;
\draw  [fill={rgb, 255:red, 245; green, 240; blue, 235 }  ,fill opacity=1 ][line width=0.5]  (286.82,119.55) -- (300.87,116.07) -- (290.15,132.96) -- cycle ;
\draw [line width=0.5]    (187.28,219.76) -- (220,200) ;
\draw  [fill={rgb, 255:red, 245; green, 240; blue, 235 }  ,fill opacity=1 ][line width=0.5]  (210,160) -- (230,160) -- (230,180) -- (210,180) -- cycle ;
\draw [line width=0.5]    (220,140) -- (220,160) ;
\draw [line width=0.5]    (220,180) -- (220,200) ;
\draw [line width=0.5]    (220,140) -- (187.28,120) ;
\draw  [fill={rgb, 255:red, 245; green, 240; blue, 235 }  ,fill opacity=1 ][line width=0.5]  (236.08,130.29) -- (239.35,116.53) -- (249.83,133.56) -- cycle ;
\draw [line width=0.5]    (270,230) -- (220,200) ;
\draw  [fill={rgb, 255:red, 245; green, 240; blue, 235 }  ,fill opacity=1 ][line width=0.5]  (203.23,129.52) -- (189.33,132.12) -- (200.63,115.62) -- cycle ;
\draw  [fill={rgb, 255:red, 245; green, 240; blue, 235 }  ,fill opacity=1 ][line width=0.5]  (204.43,209.04) -- (201.08,222.78) -- (190.69,205.69) -- cycle ;
\draw  [fill={rgb, 255:red, 142; green, 142; blue, 142 }  ,fill opacity=1 ][line width=0.5]  (216.5,200) .. controls (216.5,198.07) and (218.07,196.5) .. (220,196.5) .. controls (221.93,196.5) and (223.5,198.07) .. (223.5,200) .. controls (223.5,201.93) and (221.93,203.5) .. (220,203.5) .. controls (218.07,203.5) and (216.5,201.93) .. (216.5,200) -- cycle ;
\draw  [fill={rgb, 255:red, 142; green, 142; blue, 142 }  ,fill opacity=1 ][line width=0.5]  (216.5,140) .. controls (216.5,138.07) and (218.07,136.5) .. (220,136.5) .. controls (221.93,136.5) and (223.5,138.07) .. (223.5,140) .. controls (223.5,141.93) and (221.93,143.5) .. (220,143.5) .. controls (218.07,143.5) and (216.5,141.93) .. (216.5,140) -- cycle ;
\draw  [fill={rgb, 255:red, 245; green, 240; blue, 235 }  ,fill opacity=1 ][line width=0.5]  (235.5,209.56) -- (249.55,206.08) -- (238.83,222.97) -- cycle ;
\draw [line width=0.5]    (270,230) -- (320,200) ;
\draw  [fill={rgb, 255:red, 245; green, 240; blue, 235 }  ,fill opacity=1 ][line width=0.5]  (310,160) -- (330,160) -- (330,180) -- (310,180) -- cycle ;
\draw [line width=0.5]    (320,140) -- (320,160) ;
\draw [line width=0.5]    (320,180) -- (320,200) ;
\draw [line width=0.5]    (320,140) -- (370,110) ;
\draw  [fill={rgb, 255:red, 245; green, 240; blue, 235 }  ,fill opacity=1 ][line width=0.5]  (336.08,130.29) -- (339.35,116.53) -- (349.83,133.56) -- cycle ;
\draw [line width=0.5]    (370,230) -- (320,200) ;
\draw  [fill={rgb, 255:red, 245; green, 240; blue, 235 }  ,fill opacity=1 ][line width=0.5]  (305.25,130.74) -- (291.44,133.83) -- (302.16,116.94) -- cycle ;
\draw  [fill={rgb, 255:red, 245; green, 240; blue, 235 }  ,fill opacity=1 ][line width=0.5]  (304.43,209.04) -- (301.08,222.78) -- (290.69,205.69) -- cycle ;
\draw  [fill={rgb, 255:red, 142; green, 142; blue, 142 }  ,fill opacity=1 ][line width=0.5]  (316.5,200) .. controls (316.5,198.07) and (318.07,196.5) .. (320,196.5) .. controls (321.93,196.5) and (323.5,198.07) .. (323.5,200) .. controls (323.5,201.93) and (321.93,203.5) .. (320,203.5) .. controls (318.07,203.5) and (316.5,201.93) .. (316.5,200) -- cycle ;
\draw  [fill={rgb, 255:red, 142; green, 142; blue, 142 }  ,fill opacity=1 ][line width=0.5]  (316.5,140) .. controls (316.5,138.07) and (318.07,136.5) .. (320,136.5) .. controls (321.93,136.5) and (323.5,138.07) .. (323.5,140) .. controls (323.5,141.93) and (321.93,143.5) .. (320,143.5) .. controls (318.07,143.5) and (316.5,141.93) .. (316.5,140) -- cycle ;
\draw  [fill={rgb, 255:red, 245; green, 240; blue, 235 }  ,fill opacity=1 ][line width=0.5]  (336.82,209.75) -- (350.87,206.27) -- (340.15,223.16) -- cycle ;
\draw  [fill={rgb, 255:red, 245; green, 240; blue, 235 }  ,fill opacity=1 ][line width=0.5]  (360,70) -- (380,70) -- (380,90) -- (360,90) -- cycle ;
\draw [line width=0.5]    (370,50) -- (370,70) ;
\draw [line width=0.5]    (370,90) -- (370,110) ;
\draw [line width=0.5]    (370,50) -- (401.52,30.72) ;
\draw [line width=0.5]    (370,50) -- (338.72,29.52) ;
\draw  [fill={rgb, 255:red, 245; green, 240; blue, 235 }  ,fill opacity=1 ][line width=0.5]  (386.08,40.29) -- (389.35,26.53) -- (399.83,43.56) -- cycle ;
\draw [line width=0.5]    (420,140) -- (370,110) ;
\draw  [fill={rgb, 255:red, 245; green, 240; blue, 235 }  ,fill opacity=1 ][line width=0.5]  (353.23,39.52) -- (339.33,42.12) -- (350.63,25.62) -- cycle ;
\draw  [fill={rgb, 255:red, 245; green, 240; blue, 235 }  ,fill opacity=1 ][line width=0.5]  (354.43,119.04) -- (351.08,132.78) -- (340.69,115.69) -- cycle ;
\draw  [fill={rgb, 255:red, 142; green, 142; blue, 142 }  ,fill opacity=1 ][line width=0.5]  (366.5,110) .. controls (366.5,108.07) and (368.07,106.5) .. (370,106.5) .. controls (371.93,106.5) and (373.5,108.07) .. (373.5,110) .. controls (373.5,111.93) and (371.93,113.5) .. (370,113.5) .. controls (368.07,113.5) and (366.5,111.93) .. (366.5,110) -- cycle ;
\draw  [fill={rgb, 255:red, 142; green, 142; blue, 142 }  ,fill opacity=1 ][line width=0.5]  (366.5,50) .. controls (366.5,48.07) and (368.07,46.5) .. (370,46.5) .. controls (371.93,46.5) and (373.5,48.07) .. (373.5,50) .. controls (373.5,51.93) and (371.93,53.5) .. (370,53.5) .. controls (368.07,53.5) and (366.5,51.93) .. (366.5,50) -- cycle ;
\draw [line width=0.5]    (237.84,309.6) -- (270,290) ;
\draw  [fill={rgb, 255:red, 245; green, 240; blue, 235 }  ,fill opacity=1 ][line width=0.5]  (260,250) -- (280,250) -- (280,270) -- (260,270) -- cycle ;
\draw [line width=0.5]    (270,230) -- (270,250) ;
\draw [line width=0.5]    (270,270) -- (270,290) ;
\draw  [fill={rgb, 255:red, 245; green, 240; blue, 235 }  ,fill opacity=1 ][line width=0.5]  (286.08,220.29) -- (289.35,206.53) -- (299.83,223.56) -- cycle ;
\draw [line width=0.5]    (302.64,310) -- (270,290) ;
\draw  [fill={rgb, 255:red, 245; green, 240; blue, 235 }  ,fill opacity=1 ][line width=0.5]  (253.97,220.78) -- (240.17,223.86) -- (250.89,206.98) -- cycle ;
\draw  [fill={rgb, 255:red, 245; green, 240; blue, 235 }  ,fill opacity=1 ][line width=0.5]  (254.43,299.04) -- (251.08,312.78) -- (240.69,295.69) -- cycle ;
\draw  [fill={rgb, 255:red, 142; green, 142; blue, 142 }  ,fill opacity=1 ][line width=0.5]  (266.5,290) .. controls (266.5,288.07) and (268.07,286.5) .. (270,286.5) .. controls (271.93,286.5) and (273.5,288.07) .. (273.5,290) .. controls (273.5,291.93) and (271.93,293.5) .. (270,293.5) .. controls (268.07,293.5) and (266.5,291.93) .. (266.5,290) -- cycle ;
\draw  [fill={rgb, 255:red, 142; green, 142; blue, 142 }  ,fill opacity=1 ][line width=0.5]  (266.5,230) .. controls (266.5,228.07) and (268.07,226.5) .. (270,226.5) .. controls (271.93,226.5) and (273.5,228.07) .. (273.5,230) .. controls (273.5,231.93) and (271.93,233.5) .. (270,233.5) .. controls (268.07,233.5) and (266.5,231.93) .. (266.5,230) -- cycle ;
\draw  [fill={rgb, 255:red, 245; green, 240; blue, 235 }  ,fill opacity=1 ][line width=0.5]  (286.86,299.67) -- (300.97,296.45) -- (289.95,313.14) -- cycle ;
\draw [line width=0.5]    (370,230) -- (420,200) ;
\draw  [fill={rgb, 255:red, 245; green, 240; blue, 235 }  ,fill opacity=1 ][line width=0.5]  (410,160) -- (430,160) -- (430,180) -- (410,180) -- cycle ;
\draw [line width=0.5]    (420,140) -- (420,160) ;
\draw [line width=0.5]    (420,180) -- (420,200) ;
\draw [line width=0.5]    (420,140) -- (453.92,120.24) ;
\draw  [fill={rgb, 255:red, 245; green, 240; blue, 235 }  ,fill opacity=1 ][line width=0.5]  (436.08,130.29) -- (439.35,116.53) -- (449.83,133.56) -- cycle ;
\draw [line width=0.5]    (453.52,220.24) -- (420,200) ;
\draw  [fill={rgb, 255:red, 245; green, 240; blue, 235 }  ,fill opacity=1 ][line width=0.5]  (404.43,209.04) -- (401.08,222.78) -- (390.69,205.69) -- cycle ;
\draw  [fill={rgb, 255:red, 142; green, 142; blue, 142 }  ,fill opacity=1 ][line width=0.5]  (416.5,200) .. controls (416.5,198.07) and (418.07,196.5) .. (420,196.5) .. controls (421.93,196.5) and (423.5,198.07) .. (423.5,200) .. controls (423.5,201.93) and (421.93,203.5) .. (420,203.5) .. controls (418.07,203.5) and (416.5,201.93) .. (416.5,200) -- cycle ;
\draw  [fill={rgb, 255:red, 142; green, 142; blue, 142 }  ,fill opacity=1 ][line width=0.5]  (416.5,140) .. controls (416.5,138.07) and (418.07,136.5) .. (420,136.5) .. controls (421.93,136.5) and (423.5,138.07) .. (423.5,140) .. controls (423.5,141.93) and (421.93,143.5) .. (420,143.5) .. controls (418.07,143.5) and (416.5,141.93) .. (416.5,140) -- cycle ;
\draw  [fill={rgb, 255:red, 245; green, 240; blue, 235 }  ,fill opacity=1 ][line width=0.5]  (436.86,209.67) -- (450.97,206.45) -- (439.95,223.14) -- cycle ;
\draw  [fill={rgb, 255:red, 245; green, 240; blue, 235 }  ,fill opacity=1 ][line width=0.5]  (385.9,119.56) -- (399.95,116.08) -- (389.23,132.97) -- cycle ;
\draw  [fill={rgb, 255:red, 245; green, 240; blue, 235 }  ,fill opacity=1 ][line width=0.5]  (404.37,130.78) -- (390.57,133.86) -- (401.29,116.98) -- cycle ;
\draw [line width=0.5]    (337.44,309.6) -- (370,290) ;
\draw  [fill={rgb, 255:red, 245; green, 240; blue, 235 }  ,fill opacity=1 ][line width=0.5]  (360,250) -- (380,250) -- (380,270) -- (360,270) -- cycle ;
\draw [line width=0.5]    (370,230) -- (370,250) ;
\draw [line width=0.5]    (370,270) -- (370,290) ;
\draw  [fill={rgb, 255:red, 245; green, 240; blue, 235 }  ,fill opacity=1 ][line width=0.5]  (386.08,220.29) -- (389.35,206.53) -- (399.83,223.56) -- cycle ;
\draw [line width=0.5]    (403.44,310.8) -- (370,290) ;
\draw  [fill={rgb, 255:red, 245; green, 240; blue, 235 }  ,fill opacity=1 ][line width=0.5]  (354.43,299.04) -- (351.08,312.78) -- (340.69,295.69) -- cycle ;
\draw  [fill={rgb, 255:red, 142; green, 142; blue, 142 }  ,fill opacity=1 ][line width=0.5]  (366.5,290) .. controls (366.5,288.07) and (368.07,286.5) .. (370,286.5) .. controls (371.93,286.5) and (373.5,288.07) .. (373.5,290) .. controls (373.5,291.93) and (371.93,293.5) .. (370,293.5) .. controls (368.07,293.5) and (366.5,291.93) .. (366.5,290) -- cycle ;
\draw  [fill={rgb, 255:red, 142; green, 142; blue, 142 }  ,fill opacity=1 ][line width=0.5]  (366.5,230) .. controls (366.5,228.07) and (368.07,226.5) .. (370,226.5) .. controls (371.93,226.5) and (373.5,228.07) .. (373.5,230) .. controls (373.5,231.93) and (371.93,233.5) .. (370,233.5) .. controls (368.07,233.5) and (366.5,231.93) .. (366.5,230) -- cycle ;
\draw  [fill={rgb, 255:red, 245; green, 240; blue, 235 }  ,fill opacity=1 ][line width=0.5]  (386.86,299.67) -- (400.97,296.45) -- (389.95,313.14) -- cycle ;
\draw  [fill={rgb, 255:red, 245; green, 240; blue, 235 }  ,fill opacity=1 ][line width=0.5]  (355.34,221.04) -- (341.54,224.12) -- (352.26,207.24) -- cycle ;
\end{mytikz2}
\quad\quad,
\end{equation}
where the nodes of the COPY tensors form a hexagonal lattice. The degrees of freedom on these nodes interact with each other through the Weingarten matrix $W^{(t)}(d^2)$ on the vertical bonds or the Gram matrix $G^{(t)}(d)$ on the non-vertical bonds. Although some of the elements in the Weingarten matrix are negative, if we integrate out the degrees of freedom on one sublattice of the hexagonal lattice
\begin{equation}
\begin{mytikz2}
\draw [line width=0.5]    (320,140) -- (270,110) ;
\draw  [fill={rgb, 255:red, 245; green, 240; blue, 235 }  ,fill opacity=1 ][line width=0.5]  (286.82,119.55) -- (300.87,116.07) -- (290.15,132.96) -- cycle ;
\draw  [fill={rgb, 255:red, 245; green, 240; blue, 235 }  ,fill opacity=1 ][line width=0.5]  (310,160) -- (330,160) -- (330,180) -- (310,180) -- cycle ;
\draw [line width=0.5]    (320,140) -- (320,160) ;
\draw [line width=0.5]    (320,180) -- (320,200) ;
\draw [line width=0.5]    (320,140) -- (370,110) ;
\draw  [fill={rgb, 255:red, 245; green, 240; blue, 235 }  ,fill opacity=1 ][line width=0.5]  (336.08,130.29) -- (339.35,116.53) -- (349.83,133.56) -- cycle ;
\draw  [fill={rgb, 255:red, 245; green, 240; blue, 235 }  ,fill opacity=1 ][line width=0.5]  (305.25,130.74) -- (291.44,133.83) -- (302.16,116.94) -- cycle ;
\draw  [fill={rgb, 255:red, 142; green, 142; blue, 142 }  ,fill opacity=1 ][line width=0.5]  (316.5,140) .. controls (316.5,138.07) and (318.07,136.5) .. (320,136.5) .. controls (321.93,136.5) and (323.5,138.07) .. (323.5,140) .. controls (323.5,141.93) and (321.93,143.5) .. (320,143.5) .. controls (318.07,143.5) and (316.5,141.93) .. (316.5,140) -- cycle ;
\draw  [fill={rgb, 255:red, 245; green, 240; blue, 235 }  ,fill opacity=1 ][line width=0.5]  (354.43,119.04) -- (351.08,132.78) -- (340.69,115.69) -- cycle ;

\draw (260,100) node   [align=left] {$\sigma $};
\draw (390,97) node   [align=left] {$\sigma' $};
\draw (325,215) node   [align=left] {$\varsigma $};
\end{mytikz2}
\quad\quad=\quad\quad
\begin{mytikz2}
\draw [line width=0.5]    (320,140) -- (270,110) ;
\draw [line width=0.5]    (320,140) -- (320,200) ;
\draw [line width=0.5]    (320,140) -- (370,110) ;
\draw  [fill={rgb, 255:red, 245; green, 240; blue, 235 }  ,fill opacity=1 ][line width=0.5]  (319.84,149.96) -- (310.34,134.06) -- (329.66,134.06) -- cycle ;

\draw (260,100) node   [align=left] {$\sigma $};
\draw (390,97) node   [align=left] {$\sigma'$};
\draw (325,215) node   [align=left] {$\varsigma$};
\end{mytikz2}
\quad,
\end{equation}
the tensor elements of the resulting three-body interaction vertex could be non-negative in certain cases. For example, the three-body ``weight'' matrix for the replica number $t=2$ can be calculated as
\begin{equation}
    \sum_{\tau} \braket{\sigma}{\tau} \braket{\sigma'}{\tau} W^{(2)}_{\tau,\varsigma}(d^2) = \left\{ \begin{array}{ll}
        \delta_{\sigma\varsigma} , & ~\sigma=\sigma', \\
        \displaystyle{\frac{d}{d^2+1}}, & ~\sigma\neq \sigma',
    \end{array}
    \right.
\end{equation}
whose elements are indeed non-negative. Although it contains some zero elements that are not strictly positive, they can be seen as the ``weights'' associated with some infinitely high-energy configurations. For higher orders $t\geq 3$, the resulting tensor elements of the interaction vertex may still be negative~\cite{Zhou2019}. After integrating out one sublattice, the remaining degrees of freedom on the other sublattice form a triangular lattice with three-body interactions existing on half of the triangular plaquettes where the integrated nodes were located. The initial state and the observable to be measured become the bottom and top boundary conditions of this tensor network. Especially, for a single-qudit state such as $\ket{0}$, it holds that
\begin{equation}\label{eq:permutation_vector_zero_state}
    \bra{\sigma}\left(\ket{0} \right)^{\otimes t,t} = (\braket{0}{0})^t = 1,
\end{equation}
for any $t$-degree permutation vector $\ket{\sigma}$. Here, the notation $\ket{\cdot}^{\otimes t,t}$ represents $t$ replicas of a state vector and $t$ replicas of its conjugate.



\section{Theorems and proofs}
\subsection{Matrix product states and sequential preparation circuits}
A matrix product state (MPS) refers to a parametrized pure quantum state where the wave function is given by a product of matrices~\cite{Schollwock2011, Orus2019, Cirac2021}, i.e.,
\begin{equation}\label{eq:mps_def}
    \ket{\Psi_\mathrm{MPS}(A)} = \sum_{n_1 n_2\ldots n_N} \opr{Tr}(A_{1,n_1} A_{2,n_2} \ldots A_{N,n_N}) \ket{n_1 n_2 \ldots n_N},
\end{equation}
where the variable $n_i\in\{0,1,\ldots,d-1\}$ labels the basis states of site $i$ with local dimension $d$. $\ket{n_1 n_2 \ldots n_N}=\ket{n_1} \otimes \ket{n_2} \otimes \ldots \otimes \ket{n_N}$ represents the computational basis of the $N$ sites. $A_{1,n_1},A_{2,n_2},\ldots,A_{N,n_N}$ are $N$ parameter matrices corresponding to the basis state $\ket{n_1 n_2 \ldots n_N}$. The trace operation signifies a periodic boundary condition (PBC), which is reduced to an open boundary condition (OBC) if the first and last matrices are reduced to row and column vectors, respectively. In the following, we focus on MPS with OBC for simplicity, which is frequently used in practical computations. If the basis state variable $n_i$ is regarded as the third index, the set of matrices $A_{i, n_i}$ as a whole becomes a $3$-order tensor $A_{i}$ and MPS can be represented as a chain of tensors $A = \{A_{1}, A_{2},\ldots, A_{N}\}$ whose contraction gives the wave function. For example, the tensor network diagram of an $8$-site MPS is as follows
\begin{equation}
\begin{mytikz3}

\draw  [fill={rgb, 255:red, 245; green, 240; blue, 235 }  ,fill opacity=1 ][line width=0.5]  (50,150) -- (80,150) -- (80,180) -- (50,180) -- cycle ;
\draw [line width=0.5]    (120,165) -- (80,165) ;
\draw [line width=0.5]    (64.5,110) -- (64.5,150) ;
\draw  [fill={rgb, 255:red, 245; green, 240; blue, 235 }  ,fill opacity=1 ][line width=0.5]  (120,150) -- (150,150) -- (150,180) -- (120,180) -- cycle ;
\draw [line width=0.5]    (190,165) -- (150,165) ;
\draw [line width=0.5]    (134.5,110) -- (134.5,150) ;
\draw  [fill={rgb, 255:red, 245; green, 240; blue, 235 }  ,fill opacity=1 ][line width=0.5]  (190,150) -- (220,150) -- (220,180) -- (190,180) -- cycle ;
\draw [line width=0.5]    (260,165) -- (220,165) ;
\draw [line width=0.5]    (204.5,110) -- (204.5,150) ;
\draw  [fill={rgb, 255:red, 245; green, 240; blue, 235 }  ,fill opacity=1 ][line width=0.5]  (260,150) -- (290,150) -- (290,180) -- (260,180) -- cycle ;
\draw [line width=0.5]    (330,165) -- (290,165) ;
\draw [line width=0.5]    (274.5,110) -- (274.5,150) ;
\draw  [fill={rgb, 255:red, 245; green, 240; blue, 235 }  ,fill opacity=1 ][line width=0.5]  (330,150) -- (360,150) -- (360,180) -- (330,180) -- cycle ;
\draw [line width=0.5]    (400,165) -- (360,165) ;
\draw [line width=0.5]    (344.5,110) -- (344.5,150) ;
\draw  [fill={rgb, 255:red, 245; green, 240; blue, 235 }  ,fill opacity=1 ][line width=0.5]  (400,150) -- (430,150) -- (430,180) -- (400,180) -- cycle ;
\draw [line width=0.5]    (470,165) -- (430,165) ;
\draw [line width=0.5]    (414.5,110) -- (414.5,150) ;
\draw  [fill={rgb, 255:red, 245; green, 240; blue, 235 }  ,fill opacity=1 ][line width=0.5]  (470,150) -- (500,150) -- (500,180) -- (470,180) -- cycle ;
\draw [line width=0.5]    (540,165) -- (500,165) ;
\draw [line width=0.5]    (484.5,110) -- (484.5,150) ;
\draw  [fill={rgb, 255:red, 245; green, 240; blue, 235 }  ,fill opacity=1 ][line width=0.5]  (540,150) -- (570,150) -- (570,180) -- (540,180) -- cycle ;
\draw [line width=0.5]    (554.5,110) -- (554.5,150) ;

\draw (65,165) node   [align=left] {$A_{1}$};
\draw (135,165) node   [align=left] {$A_{2}$};
\draw (205,165) node   [align=left] {$A_{3}$};
\draw (275,165) node   [align=left] {$A_{4}$};
\draw (345,165) node   [align=left] {$A_{5}$};
\draw (415,165) node   [align=left] {$A_{6}$};
\draw (485,165) node   [align=left] {$A_{7}$};
\draw (555,165) node   [align=left] {$A_{8}$};

\end{mytikz3}\quad.
\end{equation}
The index $n_i$ of the local tensor $A_i$ is called the physical index (bond), and the other two indices are called the virtual indices (bonds). The dimension of the tensor $A_i$ along a virtual bond is often referred to as the bond dimension of the virtual bond. Typically, the bond dimensions of all the virtual bonds in an MPS are set to be uniform, usually denoted as $D$.

MPS has an inherent 1D chain geometry and has achieved great success in solving 1D quantum many-body systems, thanks to the entanglement area law in 1D gapped ground states~\cite{Hastings2007} with at most logarithmic corrections in 1D gapless ground states. For quantum systems in two or higher spatial dimensions, MPS can be naturally generalized to the so-called projected entangled pair states (PEPS) by generalizing the $3$-order local tensors in MPS to higher-order tensors, i.e.,
\begin{equation}\label{eq:peps_def}
   { \ket{\Psi_\mathrm{PEPS}(A)} = \sum_{\{n_i\} } \opr{tTr}(\{ A_{i,n_i} \}) \ket{ \{n_i\} }},
\end{equation}
where $\opr{tTr}$ denotes the contraction over the indices of the tensors according to a given graph $(V, E)$, which encodes the underlying spatial geometry. If the vertex $i\in V$ has $z_i$ edges $e\in E$ on the graph, $A_{i,n_i}$ is a $(z_i+1)$-order parameter tensor with $1$ physical index and $z_i$ virtual indices that are contracted with the corresponding indices of the adjacent tensors in $\opr{tTr}(\cdot)$. The name ``projected entangled pair'' originates from an alternative definition of $\ket{\Psi_\mathrm{PEPS}(A)}$, i.e.,
\begin{equation}\label{eq:peps_def2}
    \ket{\Psi_\mathrm{PEPS}(A)} = \bigotimes_{i\in V} \hat{A}_i \left(\bigotimes_{e\in E} \ket{I_e} \right),
\end{equation}
where $\ket{I_e}=\sum_{m}\ket{m,m}_{e}$ represents an unnormalized maximally entangled Bell state defined on the edge $e$. $\hat{A}_i$ is a linear map from the virtual Hilbert spaces to the physical Hilbert space defined by
\begin{equation}
    \hat{A}_i = \sum_{n_i, m_{i}} A_{i,n_i,m_i} \ketbra{n_i}{m_i},
\end{equation}
where the indices $m_i=(m_{i,1}, m_{i,2}, \ldots, m_{i,z_i})$ label the basis states in the tensor product of the $z_i$ virtual spaces corresponding to site $i$. Eq.\,\eqref{eq:peps_def2} is equivalent to Eq.\,\eqref{eq:peps_def} after the virtual basis states $\{\ket{m_i}\}$ are integrated out. Thus, $\ket{\Psi_\mathrm{PEPS}(A)}$ is indeed a state that is ``projected'' by $\hat{A}_i$ from the entangled pairs $\ket{I_e}$. The tensor network diagram has different meanings under these two definitions: the former represents the contraction relation among concrete high-order arrays; the latter represents the composition relation among abstract linear maps. Importantly, MPS is efficient on classical computers because its chain structure guarantees the non-increasing overhead of contraction along the chain direction. However, in two or higher spatial dimensions, the dense loop structures of PEPS generally lead to exponential complexity in classical simulations.


Different MPS representations can represent the same quantum state due to the internal ``gauge'' freedom of MPS. Namely, inserting an arbitrary invertible matrix $B$ and its inverse $B^{-1}$ between $A_{i}$ and $A_{i+1}$ does not change the wave function since the inserted matrices are canceled out by $BB^{-1}=I$; on the other hand, the local tensors are indeed changed to $A_{i}\rightarrow A_{i} B$ and $A_{i+1}\rightarrow B^{-1} A_{i+1}$. Using the gauge freedom, the local tensors of MPS can be restricted to isometries without loss of generality. Specifically, the local tensor $A_i$ is left (right) isometric if the matrix obtained by combining the physical index with the left (right) virtual index of $A_i$ is an isometry, i.e.,
\begin{equation}\label{eq:tn_isometric}
\begin{mytikz3}

\draw [line width=0.5]    (140,126) -- (125,126) ;
\draw [line width=0.5]    (140,205) -- (125,205) ;
\draw [line width=0.5]    (200,205.18) -- (170,205) ;
\draw [line width=0.5]    (154.5,142.63) -- (154.5,190) ;
\draw  [fill={rgb, 255:red, 245; green, 240; blue, 235 }  ,fill opacity=1 ][line width=0.5]  (130.33,208.67) -- (158.62,180.38) -- (179.83,201.6) -- (151.55,229.88) -- cycle ;
\draw [line width=0.5]    (200,126) -- (170,126) ;
\draw  [fill={rgb, 255:red, 245; green, 240; blue, 235 }  ,fill opacity=1 ][line width=0.5]  (151,101) -- (179.28,129.28) -- (158.07,150.5) -- (129.79,122.21) -- cycle ;
\draw [line width=0.5]    (125,126) .. controls (105.4,126.02) and (95.65,145.43) .. (95.65,165.43) .. controls (95.65,185.43) and (105.65,204.77) .. (125,205) ;

\draw (155,205) node   [align=left] {$ A_{i}$};
\draw (155,125) node   [align=left] {$ A_{i}^{*}$};

\end{mytikz3}
~~=~~
\begin{mytikz3}

\draw [line width=0.5]    (160,126) -- (125,126) ;
\draw [line width=0.5]    (160,205) -- (125,205) ;
\draw [line width=0.5]    (125,126) .. controls (105.4,126.02) and (95.7,145.43) .. (95.7,165.43) .. controls (95.7,185.43) and (105.65,204.77) .. (125,205) ;

\end{mytikz3}
\quad\quad\text{or}\quad\quad
\begin{mytikz3}

\draw [line width=0.5]    (120.05,104.97) -- (90,104.97) ;
\draw [line width=0.5]    (120.05,184.97) -- (90,185) ;
\draw [line width=0.5]    (165,184.97) -- (150.05,184.8) ;
\draw [line width=0.5]    (134.55,122.6) -- (134.55,169.97) ;
\draw  [fill={rgb, 255:red, 245; green, 240; blue, 235 }  ,fill opacity=1 ][line width=0.5]  (131.6,160.35) -- (159.88,188.64) -- (138.67,209.85) -- (110.38,181.57) -- cycle ;
\draw [line width=0.5]    (165,104.97) -- (150.05,104.97) ;
\draw  [fill={rgb, 255:red, 245; green, 240; blue, 235 }  ,fill opacity=1 ][line width=0.5]  (109.84,108.68) -- (138.12,80.4) -- (159.33,101.61) -- (131.05,129.9) -- cycle ;
\draw [line width=0.5]    (165,104.97) .. controls (185.16,104.81) and (195.16,125.21) .. (195.16,145.21)(195,144.97) .. controls (195,164.97) and (185.16,184.61) .. (165,184.97) ;

\draw (135.13,185.1) node   [align=left] {$ A_{i}$};
\draw (134.59,105.15) node   [align=left] {$ A_{i}^{*}$};

\end{mytikz3}
~~=~~
\begin{mytikz3}

\draw [line width=0.5]    (165,184.97) -- (130,184.97) ;
\draw [line width=0.5]    (165,104.97) -- (130,104.97) ;
\draw [line width=0.5]    (165,104.97) .. controls (185.16,104.81) and (195.16,125.21) .. (195.16,145.21)(195,144.97) .. controls (195,164.97) and (185.16,184.61) .. (165,184.97) ;

\end{mytikz3}
\quad,
\end{equation}
respectively, where we use inclined rectangles instead of the squares used above to highlight the left or right isometric property. An MPS is left (right) canonical if all the local tensors in the MPS are left (right) isometric. More generally, an MPS is mixed canonical if $\{A_1,A_2,\ldots,A_{i-1}\}$ are left isometric and $\{A_{i+1}, A_{i+2}, \ldots, A_{N}\}$ are right isometric, with the local tensor $A_{i}$ called the orthogonality center. For example, the following are the tensor network diagrams of an $8$-site MPS in the left canonical form
\begin{equation}
\begin{mytikz3}

\draw [line width=0.5]    (135,60) -- (135,100) ;
\draw [line width=0.5]    (160,100) -- (135,100) ;
\draw  [fill={rgb, 255:red, 245; green, 240; blue, 235 }  ,fill opacity=1 ][line width=0.5]  (117.32,103.54) -- (138.54,82.32) -- (152.68,96.46) -- (131.46,117.68) -- cycle ;
\draw [line width=0.5]    (185,100) -- (160,100) ;
\draw [line width=0.5]    (185,60) -- (185,100) ;
\draw [line width=0.5]    (210,100) -- (185,100) ;
\draw  [fill={rgb, 255:red, 245; green, 240; blue, 235 }  ,fill opacity=1 ][line width=0.5]  (167.32,103.54) -- (188.54,82.32) -- (202.68,96.46) -- (181.46,117.68) -- cycle ;
\draw [line width=0.5]    (235,100) -- (210,100) ;
\draw [line width=0.5]    (235,60) -- (235,100) ;
\draw [line width=0.5]    (260,100) -- (235,100) ;
\draw  [fill={rgb, 255:red, 245; green, 240; blue, 235 }  ,fill opacity=1 ][line width=0.5]  (217.32,103.54) -- (238.54,82.32) -- (252.68,96.46) -- (231.46,117.68) -- cycle ;
\draw [line width=0.5]    (285,100) -- (260,100) ;
\draw [line width=0.5]    (285,60) -- (285,100) ;
\draw [line width=0.5]    (310,100) -- (285,100) ;
\draw  [fill={rgb, 255:red, 245; green, 240; blue, 235 }  ,fill opacity=1 ][line width=0.5]  (267.32,103.54) -- (288.54,82.32) -- (302.68,96.46) -- (281.46,117.68) -- cycle ;
\draw [line width=0.5]    (335,100) -- (310,100) ;
\draw [line width=0.5]    (335,60) -- (335,100) ;
\draw [line width=0.5]    (360,100) -- (335,100) ;
\draw  [fill={rgb, 255:red, 245; green, 240; blue, 235 }  ,fill opacity=1 ][line width=0.5]  (317.32,103.54) -- (338.54,82.32) -- (352.68,96.46) -- (331.46,117.68) -- cycle ;
\draw [line width=0.5]    (385,100) -- (360,100) ;
\draw [line width=0.5]    (385,60) -- (385,100) ;
\draw [line width=0.5]    (410,100) -- (385,100) ;
\draw  [fill={rgb, 255:red, 245; green, 240; blue, 235 }  ,fill opacity=1 ][line width=0.5]  (367.32,103.54) -- (388.54,82.32) -- (402.68,96.46) -- (381.46,117.68) -- cycle ;
\draw [line width=0.5]    (435,100) -- (410,100) ;
\draw [line width=0.5]    (435,60) -- (435,100) ;
\draw [line width=0.5]    (460,100) -- (435,100) ;
\draw  [fill={rgb, 255:red, 245; green, 240; blue, 235 }  ,fill opacity=1 ][line width=0.5]  (417.32,103.54) -- (438.54,82.32) -- (452.68,96.46) -- (431.46,117.68) -- cycle ;
\draw [line width=0.5]    (485,100) -- (460,100) ;
\draw [line width=0.5]    (485,60) -- (485,100) ;
\draw  [fill={rgb, 255:red, 245; green, 240; blue, 235 }  ,fill opacity=1 ][line width=0.5]  (467.32,103.54) -- (488.54,82.32) -- (502.68,96.46) -- (481.46,117.68) -- cycle ;

\end{mytikz3},
\end{equation}
mixed canonical form (centered at $A_{4}$ as an example)
\begin{equation}\label{eq:mixed_canonical_mps}
\begin{mytikz3}

\draw [line width=0.5]    (135,60) -- (135,100) ;
\draw [line width=0.5]    (160,100) -- (135,100) ;
\draw  [fill={rgb, 255:red, 245; green, 240; blue, 235 }  ,fill opacity=1 ][line width=0.5]  (117.32,103.54) -- (138.54,82.32) -- (152.68,96.46) -- (131.46,117.68) -- cycle ;
\draw [line width=0.5]    (185,100) -- (160,100) ;
\draw [line width=0.5]    (185,60) -- (185,100) ;
\draw [line width=0.5]    (210,100) -- (185,100) ;
\draw  [fill={rgb, 255:red, 245; green, 240; blue, 235 }  ,fill opacity=1 ][line width=0.5]  (167.32,103.54) -- (188.54,82.32) -- (202.68,96.46) -- (181.46,117.68) -- cycle ;
\draw [line width=0.5]    (235,100) -- (210,100) ;
\draw [line width=0.5]    (235,60) -- (235,100) ;
\draw [line width=0.5]    (260,100) -- (235,100) ;
\draw  [fill={rgb, 255:red, 245; green, 240; blue, 235 }  ,fill opacity=1 ][line width=0.5]  (217.32,103.54) -- (238.54,82.32) -- (252.68,96.46) -- (231.46,117.68) -- cycle ;
\draw [line width=0.5]    (285,100) -- (260,100) ;
\draw [line width=0.5]    (285,60) -- (285,100) ;
\draw [line width=0.5]    (310,100) -- (285,100) ;
\draw  [fill={rgb, 255:red, 245; green, 240; blue, 235 }  ,fill opacity=1 ][line width=0.5]  (270,85) -- (300,85) -- (299.31,115) -- (269.31,115) -- cycle ;
\draw [line width=0.5]    (335,100) -- (310,100) ;
\draw [line width=0.5]    (335,60) -- (335,100) ;
\draw [line width=0.5]    (360,100) -- (335,100) ;
\draw  [fill={rgb, 255:red, 245; green, 240; blue, 235 }  ,fill opacity=1 ][line width=0.5]  (331.46,82.32) -- (352.68,103.54) -- (338.54,117.68) -- (317.32,96.46) -- cycle ;
\draw [line width=0.5]    (385,100) -- (360,100) ;
\draw [line width=0.5]    (385,60) -- (385,100) ;
\draw [line width=0.5]    (410,100) -- (385,100) ;
\draw  [fill={rgb, 255:red, 245; green, 240; blue, 235 }  ,fill opacity=1 ][line width=0.5]  (381.46,82.32) -- (402.68,103.54) -- (388.54,117.68) -- (367.32,96.46) -- cycle ;
\draw [line width=0.5]    (435,100) -- (410,100) ;
\draw [line width=0.5]    (435,60) -- (435,100) ;
\draw [line width=0.5]    (460,100) -- (435,100) ;
\draw  [fill={rgb, 255:red, 245; green, 240; blue, 235 }  ,fill opacity=1 ][line width=0.5]  (431.46,82.32) -- (452.68,103.54) -- (438.54,117.68) -- (417.32,96.46) -- cycle ;
\draw [line width=0.5]    (485,100) -- (460,100) ;
\draw [line width=0.5]    (485,60) -- (485,100) ;
\draw  [fill={rgb, 255:red, 245; green, 240; blue, 235 }  ,fill opacity=1 ][line width=0.5]  (481.46,82.32) -- (502.68,103.54) -- (488.54,117.68) -- (467.32,96.46) -- cycle ;

\end{mytikz3},
\end{equation}
and right canonical form
\begin{equation}\label{eq:right_canonical_mps}
\begin{mytikz3}

\draw [line width=0.5]    (135,60) -- (135,100) ;
\draw [line width=0.5]    (160,100) -- (135,100) ;
\draw  [fill={rgb, 255:red, 245; green, 240; blue, 235 }  ,fill opacity=1 ][line width=0.5]  (131.46,82.32) -- (152.68,103.54) -- (138.54,117.68) -- (117.32,96.46) -- cycle ;
\draw [line width=0.5]    (185,100) -- (160,100) ;
\draw [line width=0.5]    (185,60) -- (185,100) ;
\draw [line width=0.5]    (210,100) -- (185,100) ;
\draw  [fill={rgb, 255:red, 245; green, 240; blue, 235 }  ,fill opacity=1 ][line width=0.5]  (181.46,82.32) -- (202.68,103.54) -- (188.54,117.68) -- (167.32,96.46) -- cycle ;
\draw [line width=0.5]    (235,100) -- (210,100) ;
\draw [line width=0.5]    (235,60) -- (235,100) ;
\draw [line width=0.5]    (260,100) -- (235,100) ;
\draw  [fill={rgb, 255:red, 245; green, 240; blue, 235 }  ,fill opacity=1 ][line width=0.5]  (231.46,82.32) -- (252.68,103.54) -- (238.54,117.68) -- (217.32,96.46) -- cycle ;
\draw [line width=0.5]    (285,100) -- (260,100) ;
\draw [line width=0.5]    (285,60) -- (285,100) ;
\draw [line width=0.5]    (310,100) -- (285,100) ;
\draw [line width=0.5]    (335,100) -- (310,100) ;
\draw [line width=0.5]    (335,60) -- (335,100) ;
\draw [line width=0.5]    (360,100) -- (335,100) ;
\draw  [fill={rgb, 255:red, 245; green, 240; blue, 235 }  ,fill opacity=1 ][line width=0.5]  (331.46,82.32) -- (352.68,103.54) -- (338.54,117.68) -- (317.32,96.46) -- cycle ;
\draw [line width=0.5]    (385,100) -- (360,100) ;
\draw [line width=0.5]    (385,60) -- (385,100) ;
\draw [line width=0.5]    (410,100) -- (385,100) ;
\draw  [fill={rgb, 255:red, 245; green, 240; blue, 235 }  ,fill opacity=1 ][line width=0.5]  (381.46,82.32) -- (402.68,103.54) -- (388.54,117.68) -- (367.32,96.46) -- cycle ;
\draw [line width=0.5]    (435,100) -- (410,100) ;
\draw [line width=0.5]    (435,60) -- (435,100) ;
\draw [line width=0.5]    (460,100) -- (435,100) ;
\draw  [fill={rgb, 255:red, 245; green, 240; blue, 235 }  ,fill opacity=1 ][line width=0.5]  (431.46,82.32) -- (452.68,103.54) -- (438.54,117.68) -- (417.32,96.46) -- cycle ;
\draw [line width=0.5]    (485,100) -- (460,100) ;
\draw [line width=0.5]    (485,60) -- (485,100) ;
\draw  [fill={rgb, 255:red, 245; green, 240; blue, 235 }  ,fill opacity=1 ][line width=0.5]  (481.46,82.32) -- (502.68,103.54) -- (488.54,117.68) -- (467.32,96.46) -- cycle ;
\draw  [fill={rgb, 255:red, 245; green, 240; blue, 235 }  ,fill opacity=1 ][line width=0.5]  (281.46,82.32) -- (302.68,103.54) -- (288.54,117.68) -- (267.32,96.46) -- cycle ;

\end{mytikz3},
\end{equation}
where the symbols on the rectangles are omitted for simplicity. The canonical form of MPS offers convenience for computing the expectation values of local observables, as extensive isometries can be canceled out for a mixed canonical MPS with the orthogonality center positioned near the local observable. In practice, the canonical form can be obtained by a sequence of singular value decompositions (SVD) or QR decompositions of adjacent local tensors. The orthogonality center can also be moved in this way, i.e., by successively performing the SVD or QR decompositions for adjacent tensors to make any one of them an isometry, e.g.,
\begin{equation}\label{eq:mps_svd_qr}
\begin{mytikz3}
\draw [line width=0.5]    (284.65,140) -- (284.65,100) ;
\draw [line width=0.5]    (335,140) -- (335,100) ;
\draw  [fill={rgb, 255:red, 245; green, 240; blue, 235 }  ,fill opacity=1 ][line width=0.5]  (335,145) -- (342.5,135) -- (327.5,135) -- cycle ;
\draw [line width=0.5]    (285,100) -- (250,100) ;
\draw [line width=0.5]    (285,60) -- (285,100) ;
\draw [line width=0.5]    (310,100) -- (285,100) ;
\draw  [fill={rgb, 255:red, 245; green, 240; blue, 235 }  ,fill opacity=1 ][line width=0.5]  (270,85) -- (300,85) -- (300,115) -- (270,115) -- cycle ;
\draw [line width=0.5]    (335,100) -- (310,100) ;
\draw [line width=0.5]    (335,60) -- (335,100) ;
\draw [line width=0.5]    (370,100) -- (335,100) ;
\draw  [fill={rgb, 255:red, 245; green, 240; blue, 235 }  ,fill opacity=1 ][line width=0.5]  (331.46,82.32) -- (352.68,103.54) -- (338.54,117.68) -- (317.32,96.46) -- cycle ;
\draw  [fill={rgb, 255:red, 245; green, 240; blue, 235 }  ,fill opacity=1 ][line width=0.5]  (284.65,145) -- (292.15,135) -- (277.15,135) -- cycle ;

\draw (286,160) node  [font=\footnotesize] [align=left] {$i$};
\draw (336,160) node  [font=\footnotesize] [align=left] {$i+1$};

\end{mytikz3}\quad\xrightarrow{\text{contraction}}\quad
\begin{mytikz3}
\draw [line width=0.5]    (285,100) -- (250,100) ;
\draw [line width=0.5]    (285,60) -- (285,100) ;
\draw [line width=0.5]    (310,100) -- (285,100) ;
\draw [line width=0.5]    (335,100) -- (310,100) ;
\draw [line width=0.5]    (335,60) -- (335,100) ;
\draw [line width=0.5]    (370,100) -- (335,100) ;
\draw  [fill={rgb, 255:red, 245; green, 240; blue, 235 }  ,fill opacity=1 ][line width=0.5]  (270,85) -- (350,85) -- (350,115) -- (270,115) -- cycle ;
\draw [line width=0.5]  [dash pattern={on 3pt off 2pt}]  (310,65) -- (310,140) ;

\draw (286,160) node  [font=\footnotesize] [align=left] {$i$};
\draw (336,160) node  [font=\footnotesize] [align=left] {$i+1$};

\end{mytikz3}\quad\xrightarrow{\text{SVD/QR}}\quad
\begin{mytikz3}

\draw [line width=0.5]    (285,140) -- (285,100) ;
\draw [line width=0.5]    (335,140) -- (335,100) ;
\draw  [fill={rgb, 255:red, 245; green, 240; blue, 235 }  ,fill opacity=1 ][line width=0.5]  (335,145) -- (342.5,135) -- (327.5,135) -- cycle ;
\draw [line width=0.5]    (285,100) -- (250,100) ;
\draw [line width=0.5]    (285,60) -- (285,100) ;
\draw [line width=0.5]    (310,100) -- (285,100) ;
\draw [line width=0.5]    (335,100) -- (310,100) ;
\draw [line width=0.5]    (335,60) -- (335,100) ;
\draw [line width=0.5]    (370,100) -- (335,100) ;
\draw  [fill={rgb, 255:red, 245; green, 240; blue, 235 }  ,fill opacity=1 ][line width=0.5]  (302.33,96.46) -- (281.12,117.68) -- (266.98,103.54) -- (288.19,82.32) -- cycle ;
\draw  [fill={rgb, 255:red, 245; green, 240; blue, 235 }  ,fill opacity=1 ][line width=0.5]  (284.65,145) -- (292.15,135) -- (277.15,135) -- cycle ;
\draw  [fill={rgb, 255:red, 245; green, 240; blue, 235 }  ,fill opacity=1 ][line width=0.5]  (320,85) -- (350,85) -- (350,115) -- (320,115) -- cycle ;

\draw (286,160) node  [font=\footnotesize] [align=left] {$i$};
\draw (336,160) node  [font=\footnotesize] [align=left] {$i+1$};

\end{mytikz3}\quad,
\end{equation}
where the orthogonality center is moved from site $i$ to site $i+1$ without changing the product of the two adjacent tensors. The dashed line in Eq.\,\eqref{eq:mps_svd_qr} indicates the way of index partition when performing the decomposition.

Using the canonical forms, the preparation quantum circuits that generate the MPS can be easily constructed by completing the isometries $A_i$ to unitaries $U_i$, e.g.,
\begin{equation}
\begin{mytikz3}

\draw [line width=0.5]    (185,100) -- (135,100) ;
\draw [line width=0.5]    (185,50) -- (185,100) ;
\draw [line width=0.5]    (235,100) -- (185,100) ;
\draw  [fill={rgb, 255:red, 245; green, 240; blue, 235 }  ,fill opacity=1 ][line width=0.5]  (181.46,75.25) -- (209.75,103.54) -- (188.54,124.75) -- (160.25,96.46) -- cycle ;

\draw (185,99) node   [align=left] {$ A_{i}$};
\draw (186,154) node  [font=\footnotesize] [align=left] {$\ $};

\end{mytikz3}
~~=~~
\begin{mytikz3}

\draw [line width=0.5]    (185,100) -- (185,140) ;
\draw [line width=0.5]    (185,100) -- (135,100) ;
\draw [line width=0.5]    (185,50) -- (185,100) ;
\draw [line width=0.5]    (235,100) -- (185,100) ;
\draw  [fill={rgb, 255:red, 245; green, 240; blue, 235 }  ,fill opacity=1 ][line width=0.5]  (181.46,75.25) -- (209.75,103.54) -- (188.54,124.75) -- (160.25,96.46) -- cycle ;
\draw  [fill={rgb, 255:red, 245; green, 240; blue, 235 }  ,fill opacity=1 ][line width=0.5]  (185,160.2) -- (200,140.2) -- (170,140.2) -- cycle ;

\draw (185,100) node   [align=left] {$U_{i}$};
\draw (185,148) node  [font=\footnotesize] [align=left] {$0$};

\end{mytikz3}\quad,
\end{equation}
where the triangle with $0$ represents an input state $\ket{0}$. Thus, the circuit diagrams can be obtained by ``$45^{\circ}$-rotating'' the MPS diagrams depicted above. For example, for an $8$-site MPS with uniform bond dimension $D=d^{2}$, the exact preparation circuit starting from an initial product state $\ket{0}^{\otimes N}$ corresponding to the right canonical form in Eq.\,\eqref{eq:right_canonical_mps} is
\begin{equation}\label{eq:right_canonical_circuit_exact}
\begin{mytikz3}

\draw [line width=0.5]    (230,300) -- (230,30) ;
\draw [line width=0.5]    (190,300) -- (190,30) ;
\draw [line width=0.5]    (150,300) -- (150,30) ;
\draw [line width=0.5]    (110,300) -- (110,30) ;
\draw [line width=0.5]    (390,300) -- (390,30) ;
\draw [line width=0.5]    (350,300) -- (350,30) ;
\draw [line width=0.5]    (310,300) -- (310,30) ;
\draw [line width=0.5]    (270,300) -- (270,30) ;
\draw  [fill={rgb, 255:red, 245; green, 240; blue, 235 }  ,fill opacity=1 ][line width=0.5]  (90,260) -- (170,260) -- (170,280) -- (90,280) -- cycle ;
\draw  [fill={rgb, 255:red, 245; green, 240; blue, 235 }  ,fill opacity=1 ][line width=0.5]  (130,230) -- (250,230) -- (250,250) -- (130,250) -- cycle ;
\draw  [fill={rgb, 255:red, 245; green, 240; blue, 235 }  ,fill opacity=1 ][line width=0.5]  (170,200) -- (290,200) -- (290,220) -- (170,220) -- cycle ;
\draw  [fill={rgb, 255:red, 245; green, 240; blue, 235 }  ,fill opacity=1 ][line width=0.5]  (210,170) -- (330,170) -- (330,190) -- (210,190) -- cycle ;
\draw  [fill={rgb, 255:red, 245; green, 240; blue, 235 }  ,fill opacity=1 ][line width=0.5]  (250,140) -- (370,140) -- (370,160) -- (250,160) -- cycle ;
\draw  [fill={rgb, 255:red, 245; green, 240; blue, 235 }  ,fill opacity=1 ][line width=0.5]  (290,110) -- (410,110) -- (410,130) -- (290,130) -- cycle ;
\draw  [fill={rgb, 255:red, 245; green, 240; blue, 235 }  ,fill opacity=1 ][line width=0.5]  (330,80) -- (410,80) -- (410,100) -- (330,100) -- cycle ;
\draw  [fill={rgb, 255:red, 245; green, 240; blue, 235 }  ,fill opacity=1 ][line width=0.5]  (370,50) -- (410,50) -- (410,70) -- (370,70) -- cycle ;

\end{mytikz3}
\quad.
\end{equation}
The gates are arranged sequentially in a staircase pattern of size $\beta=\log_d D + 1$ in the bulk. The deformation near the boundaries occurs because the Schmidt ranks are automatically cut off by the Hilbert space of the smaller part of the bipartition. If the gates are universal on their supports, the circuit in Eq.\,\eqref{eq:right_canonical_circuit_exact} can be simplified to a ``homogeneous'' canonical form
\begin{equation}\label{eq:right_canonical_circuit_regular}
\begin{mytikz3}

\draw [line width=0.5]    (230,300) -- (230,90.02) ;
\draw [line width=0.5]    (190,300) -- (190,90.02) ;
\draw [line width=0.5]    (150,300) -- (150,90.02) ;
\draw [line width=0.5]    (110,300) -- (110,90.02) ;
\draw [line width=0.5]    (390,300) -- (390,90.02) ;
\draw [line width=0.5]    (350,300) -- (350,90.02) ;
\draw [line width=0.5]    (310,300) -- (310,90.02) ;
\draw [line width=0.5]    (270,300) -- (270,90.02) ;
\draw  [fill={rgb, 255:red, 245; green, 240; blue, 235 }  ,fill opacity=1 ][line width=0.5]  (90,260) -- (210,260) -- (210,280) -- (90,280) -- cycle ;
\draw  [fill={rgb, 255:red, 245; green, 240; blue, 235 }  ,fill opacity=1 ][line width=0.5]  (130,230) -- (250,230) -- (250,250) -- (130,250) -- cycle ;
\draw  [fill={rgb, 255:red, 245; green, 240; blue, 235 }  ,fill opacity=1 ][line width=0.5]  (170,200) -- (290,200) -- (290,220) -- (170,220) -- cycle ;
\draw  [fill={rgb, 255:red, 245; green, 240; blue, 235 }  ,fill opacity=1 ][line width=0.5]  (210,170) -- (330,170) -- (330,190) -- (210,190) -- cycle ;
\draw  [fill={rgb, 255:red, 245; green, 240; blue, 235 }  ,fill opacity=1 ][line width=0.5]  (250,140) -- (370,140) -- (370,160) -- (250,160) -- cycle ;
\draw  [fill={rgb, 255:red, 245; green, 240; blue, 235 }  ,fill opacity=1 ][line width=0.5]  (290,110) -- (410,110) -- (410,130) -- (290,130) -- cycle ;

\end{mytikz3}\quad,
\end{equation}
with the set of generated states unchanged. The smaller gates $U_7,U_8$ in Eq.\,\eqref{eq:right_canonical_circuit_exact} at the right edge are ``absorbed'' into the larger gate $U_6$ whose support already covers the right edge. Similarly, the smaller gate $U_1$ at the left edge can be embedded into a larger gate by adding an extra input qudit. Transforming into the same horizontal arrangement as in Eq.\,\eqref{eq:right_canonical_mps}, Eq.\,\eqref{eq:right_canonical_circuit_regular} becomes
\begin{equation}\label{eq:right_canonical_circuit_horizontal}
\begin{mytikz3}

\draw [line width=0.5]    (135,140) -- (135,100) ;
\draw [line width=0.5]    (138,140) -- (138,100) ;
\draw [line width=0.5]    (132,140) -- (132,100) ;
\draw [line width=0.5]    (185,140) -- (185,100) ;
\draw [line width=0.5]    (235,140) -- (235,100) ;
\draw [line width=0.5]    (285,140) -- (285,100) ;
\draw [line width=0.5]    (335,140) -- (335,100) ;
\draw [line width=0.5]    (385,140) -- (385,100) ;
\draw [line width=0.5]    (160,103) -- (135,103) ;
\draw [line width=0.5]    (185,103) -- (160,103) ;
\draw [line width=0.5]    (210,103) -- (185,103) ;
\draw [line width=0.5]    (235,103) -- (210,103) ;
\draw [line width=0.5]    (260,103) -- (235,103) ;
\draw [line width=0.5]    (285,103) -- (260,103) ;
\draw [line width=0.5]    (310,103) -- (285,103) ;
\draw [line width=0.5]    (335,103) -- (310,103) ;
\draw [line width=0.5]    (360,103) -- (335,103) ;
\draw [line width=0.5]    (385,103) -- (360,103) ;
\draw [line width=0.5]    (410,103) -- (385,103) ;
\draw [line width=0.5]    (435,103) -- (410,103) ;
\draw [line width=0.5]    (460,103) -- (435,103) ;
\draw [line width=0.5]    (485,103) -- (460,103) ;
\draw [line width=0.5]    (135,60) -- (135,100) ;
\draw [line width=0.5]    (160,100) -- (135,100) ;
\draw  [fill={rgb, 255:red, 245; green, 240; blue, 235 }  ,fill opacity=1 ][line width=0.5]  (127.93,78.79) -- (156.21,107.07) -- (142.07,121.21) -- (113.79,92.93) -- cycle ;
\draw [line width=0.5]    (185,100) -- (160,100) ;
\draw [line width=0.5]    (185,60) -- (185,100) ;
\draw [line width=0.5]    (210,100) -- (185,100) ;
\draw  [fill={rgb, 255:red, 245; green, 240; blue, 235 }  ,fill opacity=1 ][line width=0.5]  (177.93,78.79) -- (206.21,107.07) -- (192.07,121.21) -- (163.79,92.93) -- cycle ;
\draw [line width=0.5]    (235,100) -- (210,100) ;
\draw [line width=0.5]    (235,60) -- (235,100) ;
\draw [line width=0.5]    (260,100) -- (235,100) ;
\draw  [fill={rgb, 255:red, 245; green, 240; blue, 235 }  ,fill opacity=1 ][line width=0.5]  (227.93,78.79) -- (256.21,107.07) -- (242.07,121.21) -- (213.79,92.93) -- cycle ;
\draw [line width=0.5]    (285,100) -- (260,100) ;
\draw [line width=0.5]    (285,60) -- (285,100) ;
\draw [line width=0.5]    (310,100) -- (285,100) ;
\draw [line width=0.5]    (335,100) -- (310,100) ;
\draw [line width=0.5]    (335,60) -- (335,100) ;
\draw [line width=0.5]    (360,100) -- (335,100) ;
\draw  [fill={rgb, 255:red, 245; green, 240; blue, 235 }  ,fill opacity=1 ][line width=0.5]  (327.93,78.79) -- (356.21,107.07) -- (342.07,121.21) -- (313.79,92.93) -- cycle ;
\draw [line width=0.5]    (385,100) -- (360,100) ;
\draw [line width=0.5]    (385,60) -- (385,100) ;
\draw [line width=0.5]    (410,100) -- (385,100) ;
\draw  [fill={rgb, 255:red, 245; green, 240; blue, 235 }  ,fill opacity=1 ][line width=0.5]  (377.93,78.79) -- (406.21,107.07) -- (392.07,121.21) -- (363.79,92.93) -- cycle ;
\draw [line width=0.5]    (435,100) -- (410,100) ;
\draw [line width=0.5]    (435,60) -- (435,100) ;
\draw [line width=0.5]    (485,60) -- (485,103) ;
\draw  [fill={rgb, 255:red, 245; green, 240; blue, 235 }  ,fill opacity=1 ][line width=0.5]  (277.93,78.79) -- (306.21,107.07) -- (292.07,121.21) -- (263.79,92.93) -- cycle ;

\end{mytikz3}\quad.
\end{equation}
Namely, all the unitary gates are of shape $(D d)\times (D d)$ and act on the input states $\ket{0}$ of dimension $D d$, where $d$ is the local Hilbert space dimension and $D$ is the MPS bond dimension. The number of (non-trivial) gates in this homogeneous right canonical form becomes
\begin{equation}
    N_g=N-(\beta-1),
\end{equation}
which is different from $N_g=N$ in conventional MPS representations. The circuit corresponding to the left canonical form is similar to Eqs.\,\eqref{eq:right_canonical_circuit_exact} and \eqref{eq:right_canonical_circuit_regular} except it is spatially reversed, i.e., the gates are arranged from right to left.

For the mixed canonical form, the preparation circuit is a combination of the cases of the left and right canonical forms, with the tensor at the orthogonality center generated by the first gate of size $(2\beta-1)$. For example, the exact preparation circuit for Eq.\,\eqref{eq:mixed_canonical_mps} with $D=d^2$ is
\begin{equation}\label{eq:mixed_canonical_circuit_exact}
\begin{mytikz3}
\draw [line width=0.5]    (230,210.02) -- (230,30) ;
\draw [line width=0.5]    (190,210.02) -- (190,30) ;
\draw [line width=0.5]    (150,210.02) -- (150,30) ;
\draw [line width=0.5]    (110,210.02) -- (110,30) ;
\draw [line width=0.5]    (390,210.02) -- (390,30) ;
\draw [line width=0.5]    (350,210.02) -- (350,30) ;
\draw [line width=0.5]    (310,210.02) -- (310,30) ;
\draw [line width=0.5]    (270,210.02) -- (270,30) ;
\draw  [fill={rgb, 255:red, 245; green, 240; blue, 235 }  ,fill opacity=1 ][line width=0.5]  (90,140) -- (210,140) -- (210,160) -- (90,160) -- cycle ;
\draw  [fill={rgb, 255:red, 245; green, 240; blue, 235 }  ,fill opacity=1 ][line width=0.5]  (129.84,170) -- (329.84,170) -- (329.84,190) -- (129.84,190) -- cycle ;
\draw  [fill={rgb, 255:red, 245; green, 240; blue, 235 }  ,fill opacity=1 ][line width=0.5]  (250,140) -- (370,140) -- (370,160) -- (250,160) -- cycle ;
\draw  [fill={rgb, 255:red, 245; green, 240; blue, 235 }  ,fill opacity=1 ][line width=0.5]  (290,110) -- (410,110) -- (410,130) -- (290,130) -- cycle ;
\draw  [fill={rgb, 255:red, 245; green, 240; blue, 235 }  ,fill opacity=1 ][line width=0.5]  (330,80) -- (410,80) -- (410,100) -- (330,100) -- cycle ;
\draw  [fill={rgb, 255:red, 245; green, 240; blue, 235 }  ,fill opacity=1 ][line width=0.5]  (370,50) -- (410,50) -- (410,70) -- (370,70) -- cycle ;
\draw  [fill={rgb, 255:red, 245; green, 240; blue, 235 }  ,fill opacity=1 ][line width=0.5]  (90,110) -- (170,110) -- (170,130) -- (90,130) -- cycle ;
\draw  [fill={rgb, 255:red, 245; green, 240; blue, 235 }  ,fill opacity=1 ][line width=0.5]  (90,80) -- (130,80) -- (130,100) -- (90,100) -- cycle ;

\end{mytikz3}\quad,
\end{equation}
which can be simplified to
\begin{equation}\label{eq:mixed_canonical_circuit_regular}
\begin{mytikz3}
\draw [line width=0.5]    (230,210.02) -- (230,89.98) ;
\draw [line width=0.5]    (190,210.02) -- (190,89.98) ;
\draw [line width=0.5]    (150,210.02) -- (150,89.98) ;
\draw [line width=0.5]    (110,210.02) -- (110,89.98) ;
\draw [line width=0.5]    (390,210.02) -- (390,89.98) ;
\draw [line width=0.5]    (350,210.02) -- (350,89.98) ;
\draw [line width=0.5]    (310,210.02) -- (310,89.98) ;
\draw [line width=0.5]    (270,210.02) -- (270,89.98) ;
\draw  [fill={rgb, 255:red, 245; green, 240; blue, 235 }  ,fill opacity=1 ][line width=0.5]  (90,140) -- (210,140) -- (210,160) -- (90,160) -- cycle ;
\draw  [fill={rgb, 255:red, 245; green, 240; blue, 235 }  ,fill opacity=1 ][line width=0.5]  (130,170) -- (330,170) -- (330,190) -- (130,190) -- cycle ;
\draw  [fill={rgb, 255:red, 245; green, 240; blue, 235 }  ,fill opacity=1 ][line width=0.5]  (250,140) -- (370,140) -- (370,160) -- (250,160) -- cycle ;
\draw  [fill={rgb, 255:red, 245; green, 240; blue, 235 }  ,fill opacity=1 ][line width=0.5]  (290,110) -- (410,110) -- (410,130) -- (290,130) -- cycle ;

\end{mytikz3}\quad,
\end{equation}
provided that the gates are universal on their supports. In the horizontal arrangement as in Eq.\,\eqref{eq:mixed_canonical_mps}, Eq.\,\eqref{eq:mixed_canonical_circuit_regular} becomes
\begin{equation}\label{eq:mixed_canonical_circuit_horizontal}
\begin{mytikz3}

\draw [line width=0.5]    (291,140) -- (291,100) ;
\draw [line width=0.5]    (279,140) -- (279,100) ;
\draw [line width=0.5]    (288,140) -- (288,100) ;
\draw [line width=0.5]    (282,140) -- (282,100) ;
\draw [line width=0.5]    (235,140) -- (235,100) ;
\draw [line width=0.5]    (285,140) -- (285,100) ;
\draw [line width=0.5]    (335,140) -- (335,100) ;
\draw [line width=0.5]    (385,140) -- (385,100) ;
\draw [line width=0.5]    (160,103) -- (135,103) ;
\draw [line width=0.5]    (185,103) -- (160,103) ;
\draw [line width=0.5]    (210,103) -- (185,103) ;
\draw [line width=0.5]    (235,103) -- (210,103) ;
\draw [line width=0.5]    (260,103) -- (235,103) ;
\draw [line width=0.5]    (285,103) -- (260,103) ;
\draw [line width=0.5]    (310,103) -- (285,103) ;
\draw [line width=0.5]    (335,103) -- (310,103) ;
\draw [line width=0.5]    (360,103) -- (335,103) ;
\draw [line width=0.5]    (385,103) -- (360,103) ;
\draw [line width=0.5]    (410,103) -- (385,103) ;
\draw [line width=0.5]    (435,103) -- (410,103) ;
\draw [line width=0.5]    (460,103) -- (435,103) ;
\draw [line width=0.5]    (485,103) -- (460,103) ;
\draw [line width=0.5]    (135,60) -- (135,103) ;
\draw [line width=0.5]    (185,60) -- (185,100) ;
\draw [line width=0.5]    (210,100) -- (185,100) ;
\draw [line width=0.5]    (235,100) -- (210,100) ;
\draw [line width=0.5]    (235,60) -- (235,100) ;
\draw [line width=0.5]    (260,100) -- (235,100) ;
\draw  [fill={rgb, 255:red, 245; green, 240; blue, 235 }  ,fill opacity=1 ][line width=0.5]  (213.79,107.07) -- (242.07,78.79) -- (256.21,92.93) -- (227.93,121.21) -- cycle ;
\draw [line width=0.5]    (285,100) -- (260,100) ;
\draw [line width=0.5]    (285,60) -- (285,100) ;
\draw [line width=0.5]    (310,100) -- (285,100) ;
\draw [line width=0.5]    (335,100) -- (310,100) ;
\draw [line width=0.5]    (335,60) -- (335,100) ;
\draw [line width=0.5]    (360,100) -- (335,100) ;
\draw  [fill={rgb, 255:red, 245; green, 240; blue, 235 }  ,fill opacity=1 ][line width=0.5]  (327.93,78.79) -- (356.21,107.07) -- (342.07,121.21) -- (313.79,92.93) -- cycle ;
\draw [line width=0.5]    (385,100) -- (360,100) ;
\draw [line width=0.5]    (385,60) -- (385,100) ;
\draw [line width=0.5]    (410,100) -- (385,100) ;
\draw  [fill={rgb, 255:red, 245; green, 240; blue, 235 }  ,fill opacity=1 ][line width=0.5]  (377.93,78.79) -- (406.21,107.07) -- (392.07,121.21) -- (363.79,92.93) -- cycle ;
\draw [line width=0.5]    (435,100) -- (410,100) ;
\draw [line width=0.5]    (435,60) -- (435,100) ;
\draw [line width=0.5]    (485,60) -- (485,103) ;
\draw  [fill={rgb, 255:red, 245; green, 240; blue, 235 }  ,fill opacity=1 ][line width=0.5]  (270,85) -- (300,85) -- (300,115) -- (270,115) -- cycle ;

\end{mytikz3}\quad.
\end{equation}
Namely, the unitary gate at the orthogonality center is of shape $(D^2 d)\times (D^2 d)$ and acts on the input states $\ket{0}$ of dimension $D^2 d$. The unitary gates elsewhere are of shape $(D d)\times (D d)$ and act on the input states $\ket{0}$ of dimension $d$. The number of gates in this homogeneous mixed canonical form becomes
\begin{equation}
    N_g=N-2(\beta-1).
\end{equation}
Note that in this form, the orthogonality center can also be placed at the left or right edges, but the preparation circuit will be slightly different from the homogeneous left or right canonical forms as in Eqs.\,\eqref{eq:right_canonical_circuit_regular} and \eqref{eq:right_canonical_circuit_horizontal} by the size of the gates at the edges, i.e., $\beta$ for the latter while $(2\beta-1)$ for the former. 


For PEPS in two or higher dimensions, a canonical form no longer exists in general because restricting all the local tensors to isometries will lead to a loss of generality. This restriction gives rise to a subclass of PEPS known as isometric tensor network states~\cite{Zaletel2020, Wei2022}, which remains considerably powerful to represent many complex ground states of physical interest.

\subsection{MPS manifold and MPS variety}\label{sec:mps_manifold}
In this section, we analyze the mapping structure of MPS and introduce the concepts of the MPS manifold~\cite{Haegeman2014} and MPS variety~\cite{Kutschan2018}, which will be used in the proof of the MPS local minimum theorem. We denote the variational set of MPS of size $N$ with open boundary conditions as
\begin{equation}\label{eq:v_mps}
    \mathcal{V}_\mathrm{MPS}(N,\mathrm{D}) = \left\{\ket{\Psi_\mathrm{MPS}(A)}\mid A\in\mathcal{A}_\mathrm{MPS}(N,\mathrm{D}) \right\},
\end{equation}
where $\ket{\Psi_\mathrm{MPS}(A)}$ is an MPS parametrized by $N$ complex-valued local tensors $A=\{A_1,A_2,\ldots,A_N\}$, as defined in Eq.\,\eqref{eq:mps_def}. $\mathrm{D}=\{D_1,D_2,\ldots,D_{N-1}\}$ is the vector of bond dimensions of the $(N-1)$ bonds. The parameter space of the $N$ complex-valued local tensors is denoted as
\begin{equation}
    \mathcal{A}_\mathrm{MPS}(N,\mathrm{D})=\prod_{i=1}^{N} \mathbb{C}^{D_{i-1} d D_{i}},
\end{equation}
subject to the normalization condition, where we set $D_0=D_N=1$ consistent with the open boundary conditions. $d$ is the local Hilbert space dimension. 


The variational set $\mathcal{V}_\mathrm{MPS}$, as a subset of the entire Hilbert space, is not strictly a complex manifold because it contains conical singularities, which correspond to rank-deficient MPS~\cite{Haegeman2014}. Here, the rank of MPS refers to the Schmidt ranks with respect to the $(N-1)$ bonds of the MPS. Full-rank MPS means that the Schmidt rank with respect to each bond is equal to its maximum value, i.e., the bond dimension. For a rank-deficient MPS, the allowed first-order perturbations within $\mathcal{V}_\mathrm{MPS}$ contain any directions such that the Schmidt ranks do not exceed $D$. However, the linear combinations of these directions in principle lead to Schmidt ranks larger than $D$. Therefore, the tangent space at a rank-deficient MPS point is not a well-defined linear space---it forms a non-trivial tangent cone instead, which implies that the rank-deficient MPS points are singular and the variational set $\mathcal{V}_\mathrm{MPS}$ is indeed not a complex manifold as a whole.

In fact, $\mathcal{V}_\mathrm{MPS}$ is an algebraic variety instead, because it can be defined as the set of quantum states whose Schmidt rank with respect to each bond is equal to or smaller than the bond dimension
\begin{equation}
    \mathcal{V}_\mathrm{MPS}(N,\mathrm{D}) = \left\{\ket{\psi}\in\mathcal{H} \mid \mathrm{rank}(\rho_{1:i}) \leq D_i, ~  i=1,2,\ldots,N-1 \right\},
\end{equation}
where $\rho_{1:i}=\mathrm{Tr}_{(i+1):N}\ketbrasame{\psi}$ is the reduced density matrix on the first $i$ sites. $\mathcal{H}=\mathbb{C}^{d^N}$ is the total Hilbert space. Namely, $\mathcal{V}_\mathrm{MPS}$ corresponds to the set of zero points of all $(D_i+1)\times(D_i+1)$ minors of the reduced density matrices from the bipartition at bond $i$. Therefore, we call $\mathcal{V}_\mathrm{MPS}$ an ``MPS variety'' and the full-rank MPS set, denoted as $\mathcal{V}_\mathrm{MPS}^{\mathrm{f}}$, an ``MPS manifold''. It has been rigorously proved that $\mathcal{V}_\mathrm{MPS}^{\mathrm{f}}$ is indeed a complex manifold using the biholomorphic relation with the quotient manifold of the parameter space~\cite{Haegeman2014}. The key step in understanding the geometry of MPS is the ``gauge'' structure of the MPS representation, which is briefly reviewed below.

The map $\Psi_\mathrm{MPS}$ from the local tensors $A$ to the pure quantum state $\ket{\Psi_\mathrm{MPS}(A)}$ is holomorphic because each component in $\ket{\Psi_\mathrm{MPS}(A)}$ is a polynomial of local tensor entries. It is also evident that the map $\Psi_\mathrm{MPS}$ is not injective because inserting an arbitrary invertible matrix $G$ with its inverse $G^{-1}$ between adjacent local tensors $A_{i},A_{i+1}$ does not change the MPS as $GG^{-1}=I$ but does change the local tensors by $A_{i}\rightarrow A_{i}G$ and $A_{i+1}\rightarrow G^{-1}A_{i+1}$. Thus, the redundancy in the MPS representation $A$ is described by the so-called gauge group of MPS
\begin{equation}
    \mathcal{G}_{\text{MPS}} = \prod_{i=1}^{N-1}\mathrm{GL}(D_i;\mathbb{C}),
\end{equation}
where $\mathrm{GL}(D_i;\mathbb{C})$ is the complex general linear group of dimension $D_i$. The action of the group element $\mathrm{G}=\{G_{1},G_{2},\ldots,G_{N-1}\}$ as a local gauge transformation on the MPS representation is
\begin{equation}
    A_{1}\rightarrow A_{1}G_{1},~\ldots,~~A_i\rightarrow G_{i-1}^{-1} A_{i} G_{i},~\ldots,~~A_{N}\rightarrow G_{N-1}^{-1}A_{N}.
\end{equation}
The physical indices are omitted for simplicity. For MPS in canonical forms, the local tensors are isometric, and hence the gauge group of MPS is reduced to
\begin{equation}
    \mathcal{G}_{\text{MPS}} = \prod_{i=1}^{N-1}\mathcal{U}(D_i),
\end{equation}
where $\mathcal{U}(D_i)$ is the unitary group of dimension $D_i$.

To remove the gauge redundancy in the MPS representation, one can construct a ``gauge orbit space'' by taking the quotient manifold of the full-rank MPS representation manifold with respect to the gauge group. Specifically, the full-rank MPS representation manifold is defined by
\begin{equation}
    \mathcal{A}_{\text{MPS}}^{\mathrm{f}} = \{A\in\mathcal{A}_{\text{MPS}}\mid \mathrm{rank}(A_{1:i})=\mathrm{rank}(A_{(i+1):N})=D_i,~\forall i=1,\ldots,N-1\},
\end{equation}
where $A_{i:j}=\prod_{k=i}^j A_k$ represents the matrix obtained by contracting the local tensors from site $i$ to $j$ and merging the physical indices as the row or column index (for $A_{1:i}$ or $A_{(i+1):N}$, respectively) and merging the virtual indices as the other. Note that choosing bond dimensions with $D_i>d_iD_{i-1}$ or $D_{i-1}>d_iD_i$ is useless, because in such cases the local tensor $A_i$ cannot be full-rank under the corresponding matricization, and every MPS in the parameter space is necessarily rank-deficient. Therefore, without loss of generality, we assume $D_i \leq d_iD_{i-1}$ and $D_{i-1} \leq d_iD_i$ for every site $i$.


Given that the action of the gauge group $\mathcal{G}_\mathrm{MPS}$ is a holomorphic, free and proper on $\mathcal{A}_\mathrm{MPS}^\mathrm{f}$~\cite{Haegeman2014}, the orbit space $\mathcal{A}_\mathrm{MPS}^\mathrm{f}/\mathcal{G}_\mathrm{MPS}$ is a complex manifold and the quotient map $\pi:\mathcal{A}_\mathrm{MPS}^\mathrm{f}\rightarrow \mathcal{A}_\mathrm{MPS}^\mathrm{f}/\mathcal{G}_\mathrm{MPS}$ is holomorphic. Then, since the MPS contraction map $\Psi_\mathrm{MPS}$ is holomorphic and invariant under the action of $\mathcal{G}_\mathrm{MPS}$, the induced map $\Phi_\mathrm{MPS}:\mathcal{A}_\mathrm{MPS}^\mathrm{f}/\mathcal{G}_\mathrm{MPS}\rightarrow \mathcal{V}_\mathrm{MPS}^{\mathrm{f}}$ from the composition
\begin{equation}
    \Psi_\mathrm{MPS}=\Phi_\mathrm{MPS}\circ\pi,
\end{equation}
is also holomorphic. Combined with the fact that $\Phi_\mathrm{MPS}$ is injective as $A$ can be uniquely obtained by performing a series of Schmidt decompositions of the coefficient vector of $\ket{\Psi_\mathrm{MPS}(A)}$ up to the gauge transformation, $\Phi_\mathrm{MPS}$ is a biholomorphism, and the full-rank MPS set $\mathcal{V}_\mathrm{MPS}^{\mathrm{f}}$ is a complex manifold that is biholomorphic to the orbit space $\mathcal{A}_\mathrm{MPS}^\mathrm{f}/\mathcal{G}_\mathrm{MPS}$ with dimension
\begin{equation}
    \mathrm{dim}\mathcal{V}_\mathrm{MPS}^{\mathrm{f}} = \mathrm{dim}\mathcal{A}_\mathrm{MPS}^\mathrm{f} - \mathrm{dim}\mathcal{G}_\mathrm{MPS}.
\end{equation}
As a whole, the full-rank MPS representations form a principal fiber bundle with total manifold $\mathcal{A}_\mathrm{MPS}^\mathrm{f}$, base manifold $\mathcal{V}_\mathrm{MPS}^{\mathrm{f}}$, bundle projection $\Psi_\mathrm{MPS}$, and structure group $\mathcal{G}_\mathrm{MPS}$.

For a rank-deficient MPS representation $A$, i.e., a point in $\mathcal{A}_\mathrm{MPS}$ but not in $\mathcal{A}_\mathrm{MPS}^\mathrm{f}$, the action of $\mathcal{G}_\mathrm{MPS}$ is not free, which means that the stabilizer subgroup of $\mathcal{G}_\mathrm{MPS}$ on $A$ is not trivial. For example, if the Schmidt rank between site $i$ and $i+1$ is $r_i<D_i$, the stabilizer subgroup acting on this bond is
\begin{equation}
    \mathrm{Stab}(A_{1:i}, A_{(i+1):N}) = \left\{G_{i}\in\mathrm{GL}(D_i;\mathbb{C})\mid A_{1:i}G_i=A_{1:i},~ G_i^{-1}A_{(i+1):N}=A_{(i+1):N} \right\}.
\end{equation}
In particular, this stabilizer subgroup equals $\mathrm{GL}(D_i-r_i;\mathbb{C})$ if $\mathrm{rank}(A_{1:i})=\mathrm{rank}(A_{(i+1):N})=r_i$. Moreover, for rank-deficient MPS, the induced map $\Phi$ from the orbit space $\mathcal{A}_\mathrm{MPS}/\mathcal{G}_\mathrm{MPS}$ to $\mathcal{V}_\mathrm{MPS}$ is not even injective because there may exist different MPS representations that cannot be connected by any gauge transformations in $\mathcal{G}_\mathrm{MPS}$. Specifically, given the Schmidt rank $r_i<D_i$, the ranks of the left part $A_{1:i}$ and right part $A_{(i+1):N}$ only need to satisfy the inequalities
\begin{equation}
    r_i\leq \mathrm{rank}(A_{1:i})\leq D_i,\quad r_i\leq \mathrm{rank}(A_{(i+1):N})\leq D_i,
\end{equation}
instead of the equalities $\mathrm{rank}(A_{1:i})=\mathrm{rank}(A_{(i+1):N})=D_i$ in the full-rank case. Namely, there is arbitrariness in specifying the ranks of $A_{1:i}$ and $A_{(i+1):N}$, whereas the gauge transformations in $\mathcal{G}_\mathrm{MPS}$ cannot change the ranks because they are all invertible. Therefore, there exist multiple orbits in $(\mathcal{A}_\mathrm{MPS}-\mathcal{A}_\mathrm{MPS}^\mathrm{f})/\mathcal{G}_\mathrm{MPS}$ that are mapped to the same point in $\mathcal{V}_\mathrm{MPS}$ by the map $\Phi$.

In particular, for MPS in canonical forms, the rank arbitrariness in rank-deficient MPS representations is fixed because isometric matrices are always full-rank, and hence it holds either $\mathrm{rank}(A_{1:i})=D_i$ or $\mathrm{rank}(A_{(i+1):N})=D_i$, depending on whether site $i$ is located at the left or right side of the orthogonality center. Given the Schmidt rank $r_i<D_i$, this must lead to $\mathrm{rank}(A_{(i+1):N})=r_i$ or $\mathrm{rank}(A_{1:i})=r_i$, respectively, and hence the ranks of $A_{1:i}$ and $A_{(i+1):N}$ are indeed fixed. Thus, the stabilizer subgroup becomes trivial. Combined with the fact that the gauge group is compact for MPS in canonical forms, the group action becomes free and proper even for rank-deficient MPS representations.

However, the induced map $\Phi$ from $\mathcal{A}_\mathrm{MPS}/\mathcal{G}_\mathrm{MPS}$ to $\mathcal{V}_\mathrm{MPS}$ is still not injective for rank-deficient canonical MPS representations because there is still ambiguity in the column space of $A_{1:i}$ or the row space of $A_{(i+1):N}$. For instance, if $\mathrm{rank}(A_{1:i})=D_i$ and $\mathrm{rank}(A_{(i+1):N})=r_i$, the column space of $A_{1:i}$ can be an arbitrary $D_i$-dimensional linear subspace in $\mathbb{C}^{d^i}$ that contains the $r_i$-dimensional column space of $A_{1:i}A_{(i+1):N}$. Different choices of these $D_i$-dimensional subspaces cannot be connected by the gauge transformation in $\mathcal{G}_\mathrm{MPS}$ as right multiplication does not change the column space of a matrix. Therefore, there exists a continuum of orbits in $(\mathcal{A}_\mathrm{MPS}-\mathcal{A}_\mathrm{MPS}^\mathrm{f})/\mathcal{G}_\mathrm{MPS}$ that are mapped to a single point in $\mathcal{V}_\mathrm{MPS}$ by the map $\Phi$ even for MPS in canonical forms. When performing the Schmidt decompositions of the coefficient vector mentioned above, this ambiguity corresponds to the extra degrees of freedom in the so-called ``thin SVD'' compared to the ``compact SVD''. This obstruction of non-injectivity further showcases that the points of rank-deficient MPS are indeed singularities in the MPS variety $\mathcal{V}_\mathrm{MPS}$, which should be excluded to obtain the complex manifold $\mathcal{V}_\mathrm{MPS}^{\mathrm{f}}$.

We remark that, although the rank-deficient MPS are singular with respect to the geometry of the full-rank MPS, they themselves form lower-dimensional complex manifolds. In other words, the MPS variety can be regarded as a hierarchy of manifolds, where the ones with lower bond dimensions serve as the singular regions (boundaries) of those with higher bond dimensions. Mathematically, this corresponds to the so-called Whitney stratification of $\mathcal{V}_\mathrm{MPS}$, i.e.,
\begin{equation}
    \mathcal{V}_\mathrm{MPS}(N,\mathrm{D}) = \bigcup_{\mathrm{D}'\leq \mathrm{D}} \mathcal{V}_\mathrm{MPS}^{\mathrm{f}}(N,\mathrm{D}'),
\end{equation}
where $\mathrm{D}'=\{D_1', D_2', \ldots, D_{N-1}'\}$ is the vector of bond dimensions that is equal to or element-wise smaller than $\mathrm{D}=\{D_1, D_2, \ldots, D_{N-1}\}$, under the constraints $D_{i} \leq d_{i}D_{i-1}$ and $D_{i-1} \leq d_{i}D_{i}$. $\mathcal{V}_\mathrm{MPS}^{\mathrm{f}}(N,\mathrm{D})$ is the top-dimensional open dense stratum, and the other strata are complex sub-manifolds of lower dimensions.

\subsection{Random MPS ensemble}
We further clarify the concept of ``MPS ensemble''. An ensemble is defined by a set together with a well-defined probability distribution over it. A unitary ensemble usually refers to a unitary group together with the Haar measure over it. Hence, a quantum circuit ensemble can be defined as the combination of the ensembles of the gates comprising the circuit, where the probability distribution is defined by the push-forward measure induced by the circuit contraction map from the product of the independent Haar measures over respective unitary gates. Here, we note that the push-forward measure $\mu'$ with respect to a map $g:\mathcal{X}\rightarrow\mathcal{Y}$ and a given measure $\mu$ on $\mathcal{X}$ is defined by
\begin{equation}
    \mu'(\mathcal{B}) = \mu\left( g^{-1}(\mathcal{B}) \right),
\end{equation}
where $\mathcal{B}$ is a Borel measurable subset of $\mathcal{Y}$. Namely, the volume of a subset in $\mathcal{Y}$ is given by the volume of its preimage in $\mathcal{X}$. 

Similarly, an MPS ensemble $\mathbb{V}_\mathrm{MPS}$ is defined by an MPS variety $\mathcal{V}_\mathrm{MPS}$ together with the probability distribution induced by the random sequential circuit generating the MPS, as illustrated in Eq.\,\eqref{eq:mixed_canonical_circuit_regular}. A canonical MPS representation ensemble $\mathbb{A}_\mathrm{MPS}$ is defined by a canonical MPS representation space $\mathcal{A}_\mathrm{MPS}$ with the probability distribution induced by the unitaries generating the local tensors, as illustrated in Eq.\,\eqref{eq:mixed_canonical_circuit_horizontal}. In particular, a full-rank MPS ensemble $\mathbb{V}_\mathrm{MPS}^\mathrm{f}$ and a full-rank MPS representation ensemble $\mathbb{A}_\mathrm{MPS}^\mathrm{f}$ are defined in the same way as $\mathbb{V}_\mathrm{MPS}$ and $\mathbb{A}_\mathrm{MPS}$ but with the subset of rank-deficient MPS points removed (which has zero measure).

\subsection{Ensemble equivalence: Proof of Theorem~1}
In this section, we provide a detailed proof of Theorem~\textcolor{darkblue1}{1} in the main text. To this end, we first prove the following Lemma~\ref{lemma:Wg_G_commute} for clarity.
\begin{lemma}\label{lemma:Wg_G_commute}
The $t$-degree Weingarten matrix $W^{(t)}(d)$ commutes with the $t$-degree Gram matrix $G^{(t)}(d')$ of permutation vectors for arbitrary dimension parameters $d$ and $d'$.
\end{lemma}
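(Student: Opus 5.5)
The plan is to embed both matrices into the group algebra $\mathbb{C}[\mathcal{S}_t]$ and show that each is a central element acting by multiplication. Both matrices have the form $M_{\sigma,\tau}=f(\sigma^{-1}\tau)$ for a function $f$ on $\mathcal{S}_t$:
\begin{itemize}
\item $G^{(t)}_{\sigma,\tau}(d')=d'^{\#(\sigma^{-1}\tau)}$, so $f_G(\pi)=d'^{\#\pi}$;
\item $W^{(t)}_{\sigma,\tau}(d)=\opr{Wg}(\sigma^{-1}\tau,d)$, so $f_W(\pi)=\opr{Wg}(\pi,d)$.
\end{itemize}
Any such matrix is the matrix of the linear map on $\mathbb{C}[\mathcal{S}_t]$ given by right multiplication by $\hat f=\sum_\pi f(\pi)\pi$, written in the basis $\{\sigma\}$. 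Indeed, $\sigma\hat f=\sum_\pi f(\pi)\sigma\pi$, and the coefficient of $\tau$ is $f(\sigma^{-1}\tau)$. The product of two such matrices therefore corresponds to multiplication by $\hat f\hat g$ in some order. Concretely,
\begin{equation}
(M_fM_g)_{\sigma,\tau}=\sum_\rho f(\sigma^{-1}\rho)\,g(\rho^{-1}\tau)=(f*g)(\sigma^{-1}\tau),
\end{equation}
where $f*g$ is group convolution.

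So it suffices to show $f_G*f_W=f_W*f_G$. This holds whenever one of the two functions is a class function. Convolution by a class function is central in $\mathbb{C}[\mathcal{S}_t]$: if $f(\rho\pi\rho^{-1})=f(\pi)$, then substituting $\rho\mapsto \pi\rho^{-1}$ gives
\begin{equation}
\sum_\rho f(\rho)g(\rho^{-1}\pi)=\sum_\rho g(\rho)f(\pi\rho^{-1}).
\end{equation}
The function $\pi\mapsto d'^{\#\pi}$ is a class function, because the cycle number is invariant under conjugation (conjugation preserves cycle type). This is already enough, and we never need any property of $\opr{Wg}$ beyond $W_{\sigma,\tau}=\opr{Wg}(\sigma^{-1}\tau,d)$, which the paper records earlier. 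As a consistency check, the Young-diagram formula quoted earlier shows directly that $\opr{Wg}(\cdot,d)$ is itself a class function.

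One subtlety is that for $d<t$ the matrix $W$ is only the pseudo-inverse of $G(d)$. The argument above does not use invertibility, only the $f(\sigma^{-1}\tau)$ structure, so it covers this case too. Alternatively, one can argue that the pseudo-inverse of $G(d)$ is a polynomial in $G(d)$, which lies in the commutative algebra of central elements, and that this algebra also contains $G(d')$.

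The main point to verify carefully is the index convention: that both matrices use $\sigma^{-1}\tau$ in the same order, so that they correspond to convolutions on the same side. The class-function property makes the order irrelevant in the end, because $f(\sigma^{-1}\tau)=f(\tau\sigma^{-1})$ for class functions. Beyond that convention check, no real obstacle is expected.
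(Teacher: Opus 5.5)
Your proposal is correct and follows essentially the same route as the paper: both matrices are written as functions of $\sigma^{-1}\tau$, the matrix product is identified with group convolution (equivalently, multiplication in $\mathbb{C}[\mathcal{S}_t]$), and the cycle-counting function $\pi\mapsto d'^{\#\pi}$ is a class function, hence central, so it commutes with the Weingarten element. The paper treats $d=d'$ separately as the trivial pseudo-inverse case, but your argument already covers that case.
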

\begin{proof}
If $d=d'$, $W^{(t)}(d)$ is just the pseudo-inverse of $G^{(t)}(d)$, and hence they commute trivially. Thus, in the following, we focus on the cases of $d\neq d'$. We first define the ``Gram function'' of a permutation $\sigma$ as
\begin{equation}
    \opr{G}(\sigma, d) = \opr{Tr}(S_{\sigma}) = d^{\#(\sigma)},
\end{equation}
which can be viewed as the character of the group representation of $\mathcal{S}_t$ on the $t$-fold Hilbert space of dimension $d$: $\sigma\rightarrow S_{\sigma}$. Then, the Gram matrix can be rewritten as
\begin{equation}
    G^{(t)}_{\sigma,\tau}(d) = \opr{G}(\sigma^{-1}\tau, d).
\end{equation}
Using the Weingarten function
\begin{equation}
    W^{(t)}_{\sigma,\tau}(d) = \opr{Wg}(\sigma^{-1}\tau, d),
\end{equation}
the product of the two matrices can be expressed as
\begin{equation}
    \left[ W^{(t)}(d) G^{(t)}(d') \right]_{\sigma,\tau} = \sum_{\varsigma} \opr{Wg}(\sigma^{-1}\varsigma, d) \opr{G}(\varsigma^{-1}\tau, d') = \sum_{\varsigma\varsigma'=\sigma^{-1}\tau} \opr{Wg}(\varsigma, d) \opr{G}(\varsigma', d'),
\end{equation}
where the summation runs over $\varsigma,\varsigma'\in\mathcal{S}_t$ under the constraint $\varsigma\varsigma'=\sigma^{-1}\tau$. This is exactly the convolution of the two functions, i.e.,
\begin{equation}\label{eq:WG=WgG}
    \left[ W^{(t)}(d) G^{(t)}(d') \right]_{\sigma,\tau} = \left[ \opr{Wg}^{(t)}(d) * \opr{G}^{(t)}(d') \right](\sigma^{-1}\tau).
\end{equation}
The symbol ``$*$'' denotes the convolution product of two functions over groups. Similarly, we also have
\begin{equation}\label{eq:GW=GWg}
    \left[ G^{(t)}(d') W^{(t)}(d) \right]_{\sigma,\tau} = \left[ \opr{G}^{(t)}(d') * \opr{Wg}^{(t)}(d) \right](\sigma^{-1}\tau).
\end{equation}
Therefore, the matrix commutation of the Weingarten matrix $W^{(t)}(d)$ and the Gram matrix $G^{(t)}(d')$ is equivalent to the convolution commutation of the Weingarten function $\opr{Wg}^{(t)}(d)$ and the Gram function $\opr{G}^{(t)}(d')$.

Then, we consider two elements in the group algebra $\mathbb{C}[\mathcal{S}_t]$ over the field of complex numbers $\mathbb{C}$ whose coefficients are $\opr{Wg}^{(t)}(d)$ and $\opr{G}^{(t)}(d')$, i.e.,
\begin{equation}
    w = \sum_{\sigma} \opr{Wg}(\sigma, d) \sigma, \quad g = \sum_{\sigma} \opr{G}(\sigma, d') \sigma.
\end{equation}
The product of these two elements is
\begin{equation}\label{eq:wg=WgG}
\begin{aligned}
    wg &= \sum_{\sigma,\sigma'} \opr{Wg}(\sigma, d) \opr{G}(\sigma', d') \sigma\sigma' = \sum_{\tau} \left(\sum_{\sigma\sigma'=\tau} \opr{Wg}(\sigma, d) \opr{G}(\sigma', d')\right) \tau \\
    &= \sum_{\tau} \left[\opr{Wg}^{(t)}(d) * \opr{G}^{(t)}(d')\right](\tau) \,\tau.
\end{aligned}
\end{equation}
Similarly, we have
\begin{equation}\label{eq:gw=GWg}
    gw = \sum_{\tau} \left[\opr{G}^{(t)}(d') * \opr{Wg}^{(t)}(d)\right](\tau) \,\tau. 
\end{equation}
That is to say, the convolution commutation of $\opr{Wg}^{(t)}(d)$ and $\opr{G}^{(t)}(d')$ is equivalent to the commutation of the elements $w$ and $g$ under the group algebra, where the multiplication is defined by extending the group multiplication linearly.

The Gram function $\opr{G}^{(t)}(d')$ is a class function over $\mathcal{S}_t$ because the number of cycles $\#(\sigma)$ is invariant under the conjugation operation, i.e.,
\begin{equation}
    \opr{G}(\sigma, d') = \opr{Tr}(S_\sigma) = \opr{Tr}(S_\tau S_\sigma S_\tau^{-1}) = \opr{G}(\tau \sigma \tau^{-1}, d').
\end{equation}
Thus, the element $g$ commutes with an arbitrary group element $\tau\in\mathcal{S}_t$, i.e.,
\begin{equation}
    g\tau = \sum_{\sigma} \opr{G}(\sigma, d') \sigma \tau = \sum_{\varsigma} \opr{G}(\tau\varsigma\tau^{-1}, d') (\tau\varsigma\tau^{-1}) \tau = \sum_{\varsigma} \opr{G}(\varsigma, d') \tau \varsigma = \tau g.
\end{equation}
Hence, $g$ belongs to the center of $\mathbb{C}[\mathcal{S}_t]$, and commutes with any linear combination of group elements, including $w$, i.e.,
\begin{equation}
    gw = wg.
\end{equation}
Therefore, according to Eqs.\,\eqref{eq:wg=WgG} and \eqref{eq:gw=GWg}, we have
\begin{equation}
    \opr{Wg}^{(t)}(d) * \opr{G}^{(t)}(d') = \opr{G}^{(t)}(d') * \opr{Wg}^{(t)}(d),
\end{equation}
and then according to Eqs.\,\eqref{eq:WG=WgG} and \eqref{eq:GW=GWg}, we have
\begin{equation}
    W^{(t)}(d) G^{(t)}(d') = G^{(t)}(d') W^{(t)}(d).
\end{equation}
This completes the proof.
\end{proof}

With Lemma~\ref{lemma:Wg_G_commute} and the preliminaries on the Weingarten calculus and MPS introduced above, we are ready to prove Theorem~\textcolor{darkblue1}{1} in the main text, which is reiterated below for convenience. We will focus on MPS in the ``homogeneous'' mixed canonical form as illustrated in Eq.\eqref{eq:mixed_canonical_circuit_regular}. We use $\mathbb{V}_\mathrm{MPS}^{[i]}$ to denote the random MPS ensemble with the orthogonality center at site $i$.

\renewcommand{\theproposition}{1}
\begin{theorem}
The random MPS ensembles with different orthogonality centers are identical, i.e.,
\begin{equation}
    \mathbb{V}_\mathrm{MPS}^{[i]} = \mathbb{V}_\mathrm{MPS}^{[j]},
\end{equation}
for any two sites $i$ and $j$.
\end{theorem}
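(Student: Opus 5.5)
The plan is to show that the two ensembles have the same moments of every order, and then conclude by a density argument. Both $\mathbb{V}_\mathrm{MPS}^{[i]}$ and $\mathbb{V}_\mathrm{MPS}^{[j]}$ are push-forwards of products of Haar measures onto the unit sphere of $\mathcal{H}=\mathbb{C}^{d^N}$, which is compact. By Stone--Weierstrass, polynomials in the real and imaginary parts of the amplitudes are dense in the continuous functions on the sphere. So it suffices to prove
\begin{equation}
\mathbb{E}_{[i]}\bigl[\ket{\Psi}^{\otimes t}\otimes\ket{\Psi^*}^{\otimes s}\bigr]=\mathbb{E}_{[j]}\bigl[\ket{\Psi}^{\otimes t}\otimes\ket{\Psi^*}^{\otimes s}\bigr]
\end{equation}
for all $t,s$. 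The unbalanced moments $t\neq s$ vanish in both ensembles: the Haar measure of the gate at the orthogonality center is invariant under $U\to e^{i\phi}U$, and this multiplies $\ket{\Psi}$ by a global phase. Hence only the balanced moments $t=s$ need to be compared. By induction on $|i-j|$, it is enough to compare neighbouring centers $i$ and $i+1$.

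For the $t$-th moment, I would apply Eq.\,\eqref{eq:weingarten_gate} to the homogeneous mixed canonical circuit of Eq.\,\eqref{eq:mixed_canonical_circuit_horizontal}. Each Haar gate is replaced by its Weingarten gate $\ket{\sigma}W^{(t)}\bra{\tau}$, and each leg carries a copy tensor. Contractions between neighbouring gates along a virtual bond of dimension $D$ produce Gram matrices $G^{(t)}(D)$. Contraction of a $\bra{\tau}$ with an input state $\ket{0}$ gives $1$ by Eq.\,\eqref{eq:permutation_vector_zero_state}. The open physical legs keep the vectors $\ket{\sigma_k}$ of local dimension $d$. The result is a one-dimensional chain of matrices acting on $\mathbb{C}[\mathcal{S}_t]$:
\begin{itemize}
\item off-center gates contribute $W^{(t)}(Dd)$;
\item the center gate contributes $W^{(t)}(D^2d)$;
\item the bonds contribute $G^{(t)}(D)$;
\item each site carries a copy tensor attaching its physical permutation vector.
\end{itemize}
Input legs whose $\bra{\tau}$ is summed against $\ket{0}$ collapse through the all-ones vector $\vec{1}$. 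Their effect is the scalar eigenvalue factors of Eqs.\,\eqref{eq:sum_G} and \eqref{eq:sum_Wg}, since $\vec{1}$ is a common eigenvector of every $W^{(t)}$ and $G^{(t)}$.

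The core step is local. Write out the segment of the chain around sites $i,i+1$ in both configurations, carefully tracking which leg of each Weingarten gate is attached to which copy tensor. After the collapse of the input legs, the two segments should differ only by the order of a Weingarten matrix and an adjacent Gram matrix on the bond between $i$ and $i+1$. Lemma~\ref{lemma:Wg_G_commute} says that $W^{(t)}(d)$ and $G^{(t)}(d')$ commute for arbitrary dimensions $d,d'$. Using it, I would slide the Weingarten matrix across the bond, converting the $[i]$ network into the $[i+1]$ network. Every matrix in sight is a function of $\sigma^{-1}\tau$ built from class-function data, so these manipulations are all equivalent to commutation in the group algebra. This gives equality of the two $t$-th moments up to the overall scalar prefactor.

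The remaining check is that the scalar prefactors, which depend only on $t$, $d$ and $D$, coincide for the two centers. I expect this to follow from counting: each configuration has one gate of size $D^2d$ and the same number of gates of size $Dd$ and input legs, and Eq.\,\eqref{eq:sum_Wg} evaluates the collapsed factors. Alternatively, it follows from normalization: the zeroth-order contraction with all $\ket{\sigma_k}$ paired against $\bra{\mathrm{id}}$ gives $\mathbb{E}\braket{\Psi}{\Psi}^t=1$ in both cases.

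I expect the main obstacle to be this bookkeeping rather than any conceptual step. The issues are deciding exactly which legs of the $D^2d$ center gate feed the left and right bonds, and handling the boundary gates near the chain edges, whose sizes differ. I would first verify the exchange identity explicitly for $t=2$ using the closed forms of $W^{(2)}$ and $G^{(2)}$, and then write the general-$t$ version as a matrix identity on $\mathbb{C}[\mathcal{S}_t]$.
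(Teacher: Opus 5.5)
Your proposal is correct and follows essentially the same route as the paper. Both arguments match moments, contract the Weingarten gates of the homogeneous mixed canonical circuit into a chain with $W^{(t)}(Dd)$ and $G^{(t)}(D)$ on the bonds (the center gate collapsing to the scalar $1/[D^2d(D^2d+1)\cdots(D^2d+t-1)]$), exchange $G^{(t)}(D)$ and $W^{(t)}(Dd)$ on the bond between sites $i$ and $i+1$ via Lemma~\ref{lemma:Wg_G_commute}, and induct; your global-phase argument for the unbalanced moments, the Stone--Weierstrass step, and the normalization check on the prefactor are extra rigor the paper leaves implicit, since it only compares the balanced moments $\mathbb{E}[(\ketbrasame{\Psi})^{\otimes t}]$.
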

\addtocounter{proposition}{-1}
\renewcommand{\theproposition}{S\arabic{proposition}}

\begin{proof}
Two ensembles are identical if and only if their sample spaces and the corresponding probability distributions are equal to each other. The sample spaces of $\mathbb{V}_\mathrm{MPS}^{[i]}$ and $\mathbb{V}_\mathrm{MPS}^{[j]}$ are the same, i.e., both of them equal the MPS variational set $\mathcal{V}_\mathrm{MPS}(N,\mathrm{D})$ defined in Eq.\,\eqref{eq:v_mps} of size $N$ and uniform bond dimension $D$. Since random MPS can be viewed as generated from a series of random unitaries sampled from \textit{compact} groups, the distributions of $\mathbb{V}_\mathrm{MPS}^{[i]}$ and $\mathbb{V}_\mathrm{MPS}^{[j]}$ are identical if and only if they have equal moments for arbitrarily high orders. Therefore, in the following, our goal is to prove that the moments of $\mathbb{V}_\mathrm{MPS}^{[i]}$ and $\mathbb{V}_\mathrm{MPS}^{[j]}$ are equal for an arbitrary order $t$. 

By definition, the MPS in $\mathbb{V}_\mathrm{MPS}^{[i]}$ are generated from a random sequential circuit acting on an input product state, i.e.,
\begin{equation}
    \mathbb{V}_\mathrm{MPS}^{[i]} = \left\{\ket{\Psi_\mathrm{MPS}} \mid \ket{\Psi_\mathrm{MPS}} = \mathbf{U} \ket{\boldsymbol{0}}, \mathbf{U}\in \mathbb{U}^{[i]} \right\},
\end{equation}
where $\ket{\boldsymbol{0}}=\ket{0}^{\otimes N}$ is the all-zero product state. $\mathbb{U}^{[i]}$ is the random sequential circuit ensemble centered at site $i$ where the independent Haar-random unitary gates are arranged sequentially as in Eqs.\,\eqref{eq:mixed_canonical_circuit_regular} and \eqref{eq:mixed_canonical_circuit_horizontal}. The $t$-degree moment of $\mathbb{V}_\mathrm{MPS}^{[i]}$ is
\begin{equation}
    \mathbb{E}_{\mathbb{V}_\mathrm{MPS}^{[i]}}\left[ \left( \ketbrasame{\Psi_\mathrm{MPS}} \right)^{\otimes t} \right] = \mathbb{E}_{\mathbb{U}^{[i]}} \left[ \mathbf{U}^{\otimes t}  \left(\ketbrasame{\boldsymbol{0}}\right)^{\otimes t} \mathbf{U}^{ \dagger \otimes t} \right],
\end{equation}
which is the expectation of the $t$-fold copies of the density matrix. After vectorization, the $t$-degree moment of $\mathbb{V}_\mathrm{MPS}^{[i]}$ becomes
\begin{equation}
    \mathbb{E}_{\mathbb{U}^{[i]}}\left[ \mathbf{U}^{\otimes t} \otimes \mathbf{U}^{ *\otimes t} \ket{\boldsymbol{0}}^{\otimes 2t} \right].
\end{equation}
The tensor network diagram of the MPS replicas $\mathbf{U}^{\otimes t} \otimes \mathbf{U}^{ *\otimes t} \ket{\boldsymbol{0}}^{\otimes 2t} $ is
\begin{equation}\label{eq:mps_replica}
\begin{mytikz3}

\draw [line width=0.5]    (152.5,132.5) -- (152.5,92.5) ;
\draw  [fill={rgb, 255:red, 245; green, 240; blue, 235 }  ,fill opacity=1 ][line width=0.5]  (152.5,137.5) -- (160,127.5) -- (145,127.5) -- cycle ;
\draw [line width=0.5]    (152.5,92.5) -- (127.5,92.5) ;
\draw [line width=0.5]    (152.5,52.5) -- (152.5,92.5) ;
\draw [line width=0.5]    (177.5,92.5) -- (152.5,92.5) ;
\draw  [fill={rgb, 255:red, 245; green, 240; blue, 235 }  ,fill opacity=1 ][line width=0.5]  (134.82,96.04) -- (156.04,74.82) -- (170.18,88.96) -- (148.96,110.18) -- cycle ;
\draw [line width=0.5]    (127.5,92.5) -- (102.5,92.5) ;
\draw [line width=0.5]    (102.5,52.5) -- (102.5,92.5) ;

\draw [line width=0.5]    (150,135) -- (150,95) ;
\draw  [fill={rgb, 255:red, 245; green, 240; blue, 235 }  ,fill opacity=1 ][line width=0.5]  (150,140) -- (157.5,130) -- (142.5,130) -- cycle ;
\draw [line width=0.5]    (150,95) -- (125,95) ;
\draw [line width=0.5]    (150,55) -- (150,95) ;
\draw [line width=0.5]    (175,95) -- (150,95) ;
\draw  [fill={rgb, 255:red, 245; green, 240; blue, 235 }  ,fill opacity=1 ][line width=0.5]  (132.32,98.54) -- (153.54,77.32) -- (167.68,91.46) -- (146.46,112.68) -- cycle ;
\draw [line width=0.5]    (125,95) -- (100,95) ;
\draw [line width=0.5]    (100,55) -- (100,95) ;

\draw [line width=0.5]    (147.5,137.5) -- (147.5,97.5) ;
\draw  [fill={rgb, 255:red, 245; green, 240; blue, 235 }  ,fill opacity=1 ][line width=0.5]  (147.5,142.5) -- (155,132.5) -- (140,132.5) -- cycle ;
\draw [line width=0.5]    (147.5,97.5) -- (122.5,97.5) ;
\draw [line width=0.5]    (147.5,57.5) -- (147.5,97.5) ;
\draw [line width=0.5]    (172.5,97.5) -- (147.5,97.5) ;
\draw  [fill={rgb, 255:red, 245; green, 240; blue, 235 }  ,fill opacity=1 ][line width=0.5]  (129.82,101.04) -- (151.04,79.82) -- (165.18,93.96) -- (143.96,115.18) -- cycle ;
\draw [line width=0.5]    (122.5,97.5) -- (97.5,97.5) ;
\draw [line width=0.5]    (97.5,57.5) -- (97.5,97.5) ;

\draw [line width=0.5]    (432.5,132.5) -- (432.5,92.5) ;
\draw  [fill={rgb, 255:red, 245; green, 240; blue, 235 }  ,fill opacity=1 ][line width=0.5]  (432.5,137.5) -- (440,127.5) -- (425,127.5) -- cycle ;
\draw [line width=0.5]    (432.5,92.5) -- (407.5,92.5) ;
\draw [line width=0.5]    (432.5,52.5) -- (432.5,92.5) ;
\draw [line width=0.5]    (457.5,92.5) -- (432.5,92.5) ;
\draw  [fill={rgb, 255:red, 245; green, 240; blue, 235 }  ,fill opacity=1 ][line width=0.5]  (428.96,74.82) -- (450.18,96.04) -- (436.04,110.18) -- (414.82,88.96) -- cycle ;
\draw [line width=0.5]    (482.5,92.5) -- (457.5,92.5) ;
\draw [line width=0.5]    (482.5,52.5) -- (482.5,92.5) ;

\draw [line width=0.5]    (430,135) -- (430,95) ;
\draw  [fill={rgb, 255:red, 245; green, 240; blue, 235 }  ,fill opacity=1 ][line width=0.5]  (430,140) -- (437.5,130) -- (422.5,130) -- cycle ;
\draw [line width=0.5]    (430,95) -- (405,95) ;
\draw [line width=0.5]    (430,55) -- (430,95) ;
\draw [line width=0.5]    (455,95) -- (430,95) ;
\draw  [fill={rgb, 255:red, 245; green, 240; blue, 235 }  ,fill opacity=1 ][line width=0.5]  (426.46,77.32) -- (447.68,98.54) -- (433.54,112.68) -- (412.32,91.46) -- cycle ;
\draw [line width=0.5]    (480,95) -- (455,95) ;
\draw [line width=0.5]    (480,55) -- (480,95) ;

\draw [line width=0.5]    (427.5,137.5) -- (427.5,97.5) ;
\draw  [fill={rgb, 255:red, 245; green, 240; blue, 235 }  ,fill opacity=1 ][line width=0.5]  (427.5,142.5) -- (435,132.5) -- (420,132.5) -- cycle ;
\draw [line width=0.5]    (427.5,97.5) -- (402.5,97.5) ;
\draw [line width=0.5]    (427.5,57.5) -- (427.5,97.5) ;
\draw [line width=0.5]    (452.5,97.5) -- (427.5,97.5) ;
\draw  [fill={rgb, 255:red, 245; green, 240; blue, 235 }  ,fill opacity=1 ][line width=0.5]  (423.96,79.82) -- (445.18,101.04) -- (431.04,115.18) -- (409.82,93.96) -- cycle ;
\draw [line width=0.5]    (477.5,97.5) -- (452.5,97.5) ;
\draw [line width=0.5]    (477.5,57.5) -- (477.5,97.5) ;

\draw [line width=0.5]    (292.15,132.5) -- (292.15,92.5) ;
\draw [line width=0.5]    (242.5,132.5) -- (242.5,92.5) ;
\draw  [fill={rgb, 255:red, 245; green, 240; blue, 235 }  ,fill opacity=1 ][line width=0.5]  (242.5,137.5) -- (250,127.5) -- (235,127.5) -- cycle ;
\draw [line width=0.5]    (342.5,132.5) -- (342.5,92.5) ;
\draw  [fill={rgb, 255:red, 245; green, 240; blue, 235 }  ,fill opacity=1 ][line width=0.5]  (342.5,137.5) -- (350,127.5) -- (335,127.5) -- cycle ;
\draw [line width=0.5]    (242.5,92.5) -- (217.5,92.5) ;
\draw [line width=0.5]    (242.5,52.5) -- (242.5,92.5) ;
\draw [line width=0.5]    (267.5,92.5) -- (242.5,92.5) ;
\draw  [fill={rgb, 255:red, 245; green, 240; blue, 235 }  ,fill opacity=1 ][line width=0.5]  (224.82,96.04) -- (246.04,74.82) -- (260.18,88.96) -- (238.96,110.18) -- cycle ;
\draw [line width=0.5]    (292.5,92.5) -- (267.5,92.5) ;
\draw [line width=0.5]    (292.5,52.5) -- (292.5,92.5) ;
\draw [line width=0.5]    (317.5,92.5) -- (292.5,92.5) ;
\draw  [fill={rgb, 255:red, 245; green, 240; blue, 235 }  ,fill opacity=1 ][line width=0.5]  (277.5,77.5) -- (307.5,77.5) -- (306.81,107.5) -- (276.81,107.5) -- cycle ;
\draw [line width=0.5]    (342.5,92.5) -- (317.5,92.5) ;
\draw [line width=0.5]    (342.5,52.5) -- (342.5,92.5) ;
\draw [line width=0.5]    (367.5,92.5) -- (342.5,92.5) ;
\draw  [fill={rgb, 255:red, 245; green, 240; blue, 235 }  ,fill opacity=1 ][line width=0.5]  (338.96,74.82) -- (360.18,96.04) -- (346.04,110.18) -- (324.82,88.96) -- cycle ;
\draw  [fill={rgb, 255:red, 245; green, 240; blue, 235 }  ,fill opacity=1 ][line width=0.5]  (292.15,137.5) -- (299.65,127.5) -- (284.65,127.5) -- cycle ;

\draw [line width=0.5]    (289.65,135) -- (289.65,95) ;
\draw [line width=0.5]    (240,135) -- (240,95) ;
\draw  [fill={rgb, 255:red, 245; green, 240; blue, 235 }  ,fill opacity=1 ][line width=0.5]  (240,140) -- (247.5,130) -- (232.5,130) -- cycle ;
\draw [line width=0.5]    (340,135) -- (340,95) ;
\draw  [fill={rgb, 255:red, 245; green, 240; blue, 235 }  ,fill opacity=1 ][line width=0.5]  (340,140) -- (347.5,130) -- (332.5,130) -- cycle ;
\draw [line width=0.5]    (240,95) -- (215,95) ;
\draw [line width=0.5]    (240,55) -- (240,95) ;
\draw [line width=0.5]    (265,95) -- (240,95) ;
\draw  [fill={rgb, 255:red, 245; green, 240; blue, 235 }  ,fill opacity=1 ][line width=0.5]  (222.32,98.54) -- (243.54,77.32) -- (257.68,91.46) -- (236.46,112.68) -- cycle ;
\draw [line width=0.5]    (290,95) -- (265,95) ;
\draw [line width=0.5]    (290,55) -- (290,95) ;
\draw [line width=0.5]    (315,95) -- (290,95) ;
\draw  [fill={rgb, 255:red, 245; green, 240; blue, 235 }  ,fill opacity=1 ][line width=0.5]  (275,80) -- (305,80) -- (304.31,110) -- (274.31,110) -- cycle ;
\draw [line width=0.5]    (340,95) -- (315,95) ;
\draw [line width=0.5]    (340,55) -- (340,95) ;
\draw [line width=0.5]    (365,95) -- (340,95) ;
\draw  [fill={rgb, 255:red, 245; green, 240; blue, 235 }  ,fill opacity=1 ][line width=0.5]  (336.46,77.32) -- (357.68,98.54) -- (343.54,112.68) -- (322.32,91.46) -- cycle ;
\draw  [fill={rgb, 255:red, 245; green, 240; blue, 235 }  ,fill opacity=1 ][line width=0.5]  (289.65,140) -- (297.15,130) -- (282.15,130) -- cycle ;

\draw [line width=0.5]    (287.15,137.5) -- (287.15,97.5) ;
\draw [line width=0.5]    (237.5,137.5) -- (237.5,97.5) ;
\draw  [fill={rgb, 255:red, 245; green, 240; blue, 235 }  ,fill opacity=1 ][line width=0.5]  (237.5,142.5) -- (245,132.5) -- (230,132.5) -- cycle ;
\draw [line width=0.5]    (337.5,137.5) -- (337.5,97.5) ;
\draw  [fill={rgb, 255:red, 245; green, 240; blue, 235 }  ,fill opacity=1 ][line width=0.5]  (337.5,142.5) -- (345,132.5) -- (330,132.5) -- cycle ;
\draw [line width=0.5]    (237.5,97.5) -- (212.5,97.5) ;
\draw [line width=0.5]    (237.5,57.5) -- (237.5,97.5) ;
\draw [line width=0.5]    (262.5,97.5) -- (237.5,97.5) ;
\draw  [fill={rgb, 255:red, 245; green, 240; blue, 235 }  ,fill opacity=1 ][line width=0.5]  (219.82,101.04) -- (241.04,79.82) -- (255.18,93.96) -- (233.96,115.18) -- cycle ;
\draw [line width=0.5]    (287.5,97.5) -- (262.5,97.5) ;
\draw [line width=0.5]    (287.5,57.5) -- (287.5,97.5) ;
\draw [line width=0.5]    (312.5,97.5) -- (287.5,97.5) ;
\draw  [fill={rgb, 255:red, 245; green, 240; blue, 235 }  ,fill opacity=1 ][line width=0.5]  (272.5,82.5) -- (302.5,82.5) -- (301.81,112.5) -- (271.81,112.5) -- cycle ;
\draw [line width=0.5]    (337.5,97.5) -- (312.5,97.5) ;
\draw [line width=0.5]    (337.5,57.5) -- (337.5,97.5) ;
\draw [line width=0.5]    (362.5,97.5) -- (337.5,97.5) ;
\draw  [fill={rgb, 255:red, 245; green, 240; blue, 235 }  ,fill opacity=1 ][line width=0.5]  (333.96,79.82) -- (355.18,101.04) -- (341.04,115.18) -- (319.82,93.96) -- cycle ;
\draw  [fill={rgb, 255:red, 245; green, 240; blue, 235 }  ,fill opacity=1 ][line width=0.5]  (287.15,142.5) -- (294.65,132.5) -- (279.65,132.5) -- cycle ;

\draw [line width=0.5]    (284.65,140) -- (284.65,100) ;
\draw [line width=0.5]    (235,140) -- (235,100) ;
\draw  [fill={rgb, 255:red, 245; green, 240; blue, 235 }  ,fill opacity=1 ][line width=0.5]  (235,145) -- (242.5,135) -- (227.5,135) -- cycle ;
\draw [line width=0.5]    (335,140) -- (335,100) ;
\draw  [fill={rgb, 255:red, 245; green, 240; blue, 235 }  ,fill opacity=1 ][line width=0.5]  (335,145) -- (342.5,135) -- (327.5,135) -- cycle ;
\draw [line width=0.5]    (235,100) -- (210,100) ;
\draw [line width=0.5]    (235,60) -- (235,100) ;
\draw [line width=0.5]    (260,100) -- (235,100) ;
\draw  [fill={rgb, 255:red, 245; green, 240; blue, 235 }  ,fill opacity=1 ][line width=0.5]  (217.32,103.54) -- (238.54,82.32) -- (252.68,96.46) -- (231.46,117.68) -- cycle ;
\draw [line width=0.5]    (285,100) -- (260,100) ;
\draw [line width=0.5]    (285,60) -- (285,100) ;
\draw [line width=0.5]    (310,100) -- (285,100) ;
\draw  [fill={rgb, 255:red, 245; green, 240; blue, 235 }  ,fill opacity=1 ][line width=0.5]  (270,85) -- (300,85) -- (299.31,115) -- (269.31,115) -- cycle ;
\draw [line width=0.5]    (335,100) -- (310,100) ;
\draw [line width=0.5]    (335,60) -- (335,100) ;
\draw [line width=0.5]    (360,100) -- (335,100) ;
\draw  [fill={rgb, 255:red, 245; green, 240; blue, 235 }  ,fill opacity=1 ][line width=0.5]  (331.46,82.32) -- (352.68,103.54) -- (338.54,117.68) -- (317.32,96.46) -- cycle ;
\draw  [fill={rgb, 255:red, 245; green, 240; blue, 235 }  ,fill opacity=1 ][line width=0.5]  (284.65,145) -- (292.15,135) -- (277.15,135) -- cycle ;

\draw [line width=0.5]    (425,140) -- (425,100) ;
\draw  [fill={rgb, 255:red, 245; green, 240; blue, 235 }  ,fill opacity=1 ][line width=0.5]  (425,145) -- (432.5,135) -- (417.5,135) -- cycle ;
\draw [line width=0.5]    (425,100) -- (400,100) ;
\draw [line width=0.5]    (425,60) -- (425,100) ;
\draw [line width=0.5]    (450,100) -- (425,100) ;
\draw  [fill={rgb, 255:red, 245; green, 240; blue, 235 }  ,fill opacity=1 ][line width=0.5]  (421.46,82.32) -- (442.68,103.54) -- (428.54,117.68) -- (407.32,96.46) -- cycle ;
\draw [line width=0.5]    (475,100) -- (450,100) ;
\draw [line width=0.5]    (475,60) -- (475,100) ;

\draw [line width=0.5]    (145,140) -- (145,100) ;
\draw  [fill={rgb, 255:red, 245; green, 240; blue, 235 }  ,fill opacity=1 ][line width=0.5]  (145,145) -- (152.5,135) -- (137.5,135) -- cycle ;
\draw [line width=0.5]    (145,100) -- (120,100) ;
\draw [line width=0.5]    (145,60) -- (145,100) ;
\draw [line width=0.5]    (170,100) -- (145,100) ;
\draw  [fill={rgb, 255:red, 245; green, 240; blue, 235 }  ,fill opacity=1 ][line width=0.5]  (127.32,103.54) -- (148.54,82.32) -- (162.68,96.46) -- (141.46,117.68) -- cycle ;
\draw [line width=0.5]    (120,100) -- (95,100) ;
\draw [line width=0.5]    (95,60) -- (95,100) ;

\draw (192.5,95) node   [align=left] {$ ...$};
\draw (382.5,95) node   [align=left] {$ ...$};
\draw (286,160) node  [font=\footnotesize] [align=left] {$i$};
\draw (336,160) node  [font=\footnotesize] [align=left] {$i+1$};
\draw (426,160) node  [font=\footnotesize] [align=left] {$N_g$};
\draw (236,160) node  [font=\footnotesize] [align=left] {$i-1$};
\draw (146,160) node  [font=\footnotesize] [align=left] {$1$};

\end{mytikz3}\quad.
\end{equation}
Here we take $t=2$ as an example for illustration, in a manner similar to the diagram in Eq.\,\eqref{eq:weingarten_tn_2qudit}. We use $\{1, \ldots, i-1, i, i+1, \ldots, N_g\}$ to mark the positions of the $N_g$ unitary gates of the MPS. According to the Weingarten formula in Eq.\,\eqref{eq:weingarten_tn_2qudit}, the expectation of the MPS replicas $\mathbb{E}_{\mathbb{U}^{[i]}} [\mathbf{U}^{\otimes t} \otimes \mathbf{U}^{ *\otimes t} \ket{\boldsymbol{0}}^{\otimes 2t}] $ is equal to
\begin{equation}\label{eq:integrated_mps_replica}
\begin{mytikz4}

\draw  [color={rgb, 255:red, 150; green, 150; blue, 150 }  ,draw opacity=1 ][dash pattern={on 3.75pt off 6pt}] (55,85) -- (235,85) -- (235,195) -- (55,195) -- cycle ;
\draw  [color={rgb, 255:red, 150; green, 150; blue, 150 }  ,draw opacity=1 ][dash pattern={on 3.75pt off 6pt}] (295,85) -- (475,85) -- (475,195) -- (295,194) -- cycle ;
\draw  [color={rgb, 255:red, 150; green, 150; blue, 150 }  ,draw opacity=1 ][dash pattern={on 3.75pt off 6pt}] (585,85) -- (765,85) -- (765,195) -- (585,195) -- cycle ;
\draw  [color={rgb, 255:red, 150; green, 150; blue, 150 }  ,draw opacity=1 ][dash pattern={on 3.75pt off 6pt}] (475,85) -- (585,85) -- (585,225) -- (475,225) -- cycle ;
\draw  [color={rgb, 255:red, 150; green, 150; blue, 150 }  ,draw opacity=1 ][dash pattern={on 3.75pt off 6pt}] (825,85) -- (1005,85) -- (1005,195) -- (825,195) -- cycle ;
\draw [line width=0.5]    (530,140) -- (585,140) ;
\draw [line width=0.5]    (475,140) -- (530,140) ;
\draw [line width=0.5]    (585,140) -- (675,140) ;
\draw [line width=0.5]    (745,140) -- (675,140) ;
\draw [line width=0.5]    (745,140) -- (780,140) ;
\draw [line width=0.5]    (710,140.02) -- (710,60) ;
\draw  [fill={rgb, 255:red, 142; green, 142; blue, 142 }  ,fill opacity=1 ][line width=0.5]  (706.5,140.02) .. controls (706.5,138.08) and (708.07,136.52) .. (710,136.52) .. controls (711.93,136.52) and (713.5,138.08) .. (713.5,140.02) .. controls (713.5,141.95) and (711.93,143.52) .. (710,143.52) .. controls (708.07,143.52) and (706.5,141.95) .. (706.5,140.02) -- cycle ;
\draw [line width=0.5]    (640,220.02) -- (640,140) ;
\draw  [fill={rgb, 255:red, 142; green, 142; blue, 142 }  ,fill opacity=1 ][line width=0.5]  (636.5,140) .. controls (636.5,138.07) and (638.07,136.5) .. (640,136.5) .. controls (641.93,136.5) and (643.5,138.07) .. (643.5,140) .. controls (643.5,141.93) and (641.93,143.5) .. (640,143.5) .. controls (638.07,143.5) and (636.5,141.93) .. (636.5,140) -- cycle ;
\draw  [fill={rgb, 255:red, 245; green, 240; blue, 235 }  ,fill opacity=1 ][line width=0.5]  (640,225.02) -- (647.5,215.02) -- (632.5,215.02) -- cycle ;
\draw [line width=0.5]    (385,140) -- (475,140) ;
\draw [line width=0.5]    (385,140) -- (330,140.02) ;
\draw [line width=0.5]    (279.3,140.02) -- (330,140.02) ;
\draw [line width=0.5]    (350,140.02) -- (350,60) ;
\draw  [fill={rgb, 255:red, 142; green, 142; blue, 142 }  ,fill opacity=1 ][line width=0.5]  (346.5,140.02) .. controls (346.5,138.08) and (348.07,136.52) .. (350,136.52) .. controls (351.93,136.52) and (353.5,138.08) .. (353.5,140.02) .. controls (353.5,141.95) and (351.93,143.52) .. (350,143.52) .. controls (348.07,143.52) and (346.5,141.95) .. (346.5,140.02) -- cycle ;
\draw [line width=0.5]    (420,220.02) -- (420,140) ;
\draw  [fill={rgb, 255:red, 142; green, 142; blue, 142 }  ,fill opacity=1 ][line width=0.5]  (416.5,140) .. controls (416.5,138.07) and (418.07,136.5) .. (420,136.5) .. controls (421.93,136.5) and (423.5,138.07) .. (423.5,140) .. controls (423.5,141.93) and (421.93,143.5) .. (420,143.5) .. controls (418.07,143.5) and (416.5,141.93) .. (416.5,140) -- cycle ;
\draw  [fill={rgb, 255:red, 245; green, 240; blue, 235 }  ,fill opacity=1 ][line width=0.5]  (420,225.02) -- (427.5,215.02) -- (412.5,215.02) -- cycle ;
\draw [line width=0.5]    (530,250) -- (530,140) ;
\draw  [fill={rgb, 255:red, 245; green, 240; blue, 235 }  ,fill opacity=1 ][line width=0.5]  (530,255) -- (537.5,245) -- (522.5,245) -- cycle ;
\draw [line width=0.5]    (530,140) -- (530,59.98) ;
\draw  [fill={rgb, 255:red, 142; green, 142; blue, 142 }  ,fill opacity=1 ][line width=0.5]  (526.5,140) .. controls (526.5,138.07) and (528.07,136.5) .. (530,136.5) .. controls (531.93,136.5) and (533.5,138.07) .. (533.5,140) .. controls (533.5,141.93) and (531.93,143.5) .. (530,143.5) .. controls (528.07,143.5) and (526.5,141.93) .. (526.5,140) -- cycle ;
\draw  [fill={rgb, 255:red, 245; green, 240; blue, 235 }  ,fill opacity=1 ][line width=0.5]  (515,160) -- (545,160) -- (545,190) -- (515,190) -- cycle ;
\draw  [fill={rgb, 255:red, 245; green, 240; blue, 235 }  ,fill opacity=1 ][line width=0.5]  (295,155.02) -- (295,125.02) -- (310,140.02) -- cycle ;
\draw  [fill={rgb, 255:red, 245; green, 240; blue, 235 }  ,fill opacity=1 ][line width=0.5]  (475,155) -- (475,125) -- (490,140) -- cycle ;
\draw  [fill={rgb, 255:red, 245; green, 240; blue, 235 }  ,fill opacity=1 ][line width=0.5]  (475,125) -- (475,155) -- (460,140) -- cycle ;
\draw  [fill={rgb, 255:red, 245; green, 240; blue, 235 }  ,fill opacity=1 ][line width=0.5]  (585,155) -- (585,125) -- (600,140) -- cycle ;
\draw  [fill={rgb, 255:red, 245; green, 240; blue, 235 }  ,fill opacity=1 ][line width=0.5]  (585,125) -- (585,155) -- (570,140) -- cycle ;
\draw  [fill={rgb, 255:red, 245; green, 240; blue, 235 }  ,fill opacity=1 ][line width=0.5]  (765,125) -- (765,155) -- (750,140) -- cycle ;
\draw  [fill={rgb, 255:red, 245; green, 240; blue, 235 }  ,fill opacity=1 ][line width=0.5]  (660,125) -- (690,125) -- (690,155) -- (660,155) -- cycle ;
\draw  [fill={rgb, 255:red, 245; green, 240; blue, 235 }  ,fill opacity=1 ][line width=0.5]  (370,125) -- (400,125) -- (400,155) -- (370,155) -- cycle ;
\draw  [fill={rgb, 255:red, 245; green, 240; blue, 235 }  ,fill opacity=1 ][line width=0.5]  (335,85) -- (365,85) -- (350,100) -- cycle ;
\draw  [fill={rgb, 255:red, 245; green, 240; blue, 235 }  ,fill opacity=1 ][line width=0.5]  (695,85.01) -- (725,85.01) -- (710,100.01) -- cycle ;
\draw  [fill={rgb, 255:red, 245; green, 240; blue, 235 }  ,fill opacity=1 ][line width=0.5]  (545,225) -- (515,225) -- (530,210) -- cycle ;
\draw  [fill={rgb, 255:red, 245; green, 240; blue, 235 }  ,fill opacity=1 ][line width=0.5]  (435,195.01) -- (405,195.01) -- (420,180.01) -- cycle ;
\draw  [fill={rgb, 255:red, 245; green, 240; blue, 235 }  ,fill opacity=1 ][line width=0.5]  (515,85) -- (545,85) -- (530,100) -- cycle ;
\draw  [fill={rgb, 255:red, 245; green, 240; blue, 235 }  ,fill opacity=1 ][line width=0.5]  (655,195.01) -- (625,195.01) -- (640,180.01) -- cycle ;
\draw [line width=0.5]    (145,140) -- (250.8,140) ;
\draw [line width=0.5]    (145,140) -- (90,140.02) ;
\draw [line width=0.5]    (20,140.02) -- (90,140.02) ;
\draw [line width=0.5]    (110,140.02) -- (110,60) ;
\draw  [fill={rgb, 255:red, 142; green, 142; blue, 142 }  ,fill opacity=1 ][line width=0.5]  (106.5,140.02) .. controls (106.5,138.08) and (108.07,136.52) .. (110,136.52) .. controls (111.93,136.52) and (113.5,138.08) .. (113.5,140.02) .. controls (113.5,141.95) and (111.93,143.52) .. (110,143.52) .. controls (108.07,143.52) and (106.5,141.95) .. (106.5,140.02) -- cycle ;
\draw [line width=0.5]    (180,220.02) -- (180,140) ;
\draw  [fill={rgb, 255:red, 142; green, 142; blue, 142 }  ,fill opacity=1 ][line width=0.5]  (176.5,140) .. controls (176.5,138.07) and (178.07,136.5) .. (180,136.5) .. controls (181.93,136.5) and (183.5,138.07) .. (183.5,140) .. controls (183.5,141.93) and (181.93,143.5) .. (180,143.5) .. controls (178.07,143.5) and (176.5,141.93) .. (176.5,140) -- cycle ;
\draw  [fill={rgb, 255:red, 245; green, 240; blue, 235 }  ,fill opacity=1 ][line width=0.5]  (180,225.02) -- (187.5,215.02) -- (172.5,215.02) -- cycle ;
\draw  [fill={rgb, 255:red, 245; green, 240; blue, 235 }  ,fill opacity=1 ][line width=0.5]  (55,155.02) -- (55,125.02) -- (70,140.02) -- cycle ;
\draw  [fill={rgb, 255:red, 245; green, 240; blue, 235 }  ,fill opacity=1 ][line width=0.5]  (235,125) -- (235,155) -- (220,140) -- cycle ;
\draw  [fill={rgb, 255:red, 245; green, 240; blue, 235 }  ,fill opacity=1 ][line width=0.5]  (130,125) -- (160,125) -- (160,155) -- (130,155) -- cycle ;
\draw  [fill={rgb, 255:red, 245; green, 240; blue, 235 }  ,fill opacity=1 ][line width=0.5]  (94.99,85.02) -- (124.99,85) -- (110,100.01) -- cycle ;
\draw  [fill={rgb, 255:red, 245; green, 240; blue, 235 }  ,fill opacity=1 ][line width=0.5]  (195,195.01) -- (165,195.01) -- (180,180.01) -- cycle ;
\draw [line width=0.5]    (20,140.02) -- (20,60) ;
\draw [line width=0.5]    (810,140) -- (915,140) ;
\draw [line width=0.5]    (985,140) -- (915,140) ;
\draw [line width=0.5]    (985,140) -- (1040,140) ;
\draw [line width=0.5]    (950,140.02) -- (950,60) ;
\draw  [fill={rgb, 255:red, 142; green, 142; blue, 142 }  ,fill opacity=1 ][line width=0.5]  (946.5,140.02) .. controls (946.5,138.08) and (948.07,136.52) .. (950,136.52) .. controls (951.93,136.52) and (953.5,138.08) .. (953.5,140.02) .. controls (953.5,141.95) and (951.93,143.52) .. (950,143.52) .. controls (948.07,143.52) and (946.5,141.95) .. (946.5,140.02) -- cycle ;
\draw [line width=0.5]    (880,220.02) -- (880,140) ;
\draw  [fill={rgb, 255:red, 142; green, 142; blue, 142 }  ,fill opacity=1 ][line width=0.5]  (876.5,140) .. controls (876.5,138.07) and (878.07,136.5) .. (880,136.5) .. controls (881.93,136.5) and (883.5,138.07) .. (883.5,140) .. controls (883.5,141.93) and (881.93,143.5) .. (880,143.5) .. controls (878.07,143.5) and (876.5,141.93) .. (876.5,140) -- cycle ;
\draw  [fill={rgb, 255:red, 245; green, 240; blue, 235 }  ,fill opacity=1 ][line width=0.5]  (880,225.02) -- (887.5,215.02) -- (872.5,215.02) -- cycle ;
\draw  [fill={rgb, 255:red, 245; green, 240; blue, 235 }  ,fill opacity=1 ][line width=0.5]  (825,155) -- (825,125) -- (840,140) -- cycle ;
\draw  [fill={rgb, 255:red, 245; green, 240; blue, 235 }  ,fill opacity=1 ][line width=0.5]  (1005,125) -- (1005,155) -- (990,140) -- cycle ;
\draw  [fill={rgb, 255:red, 245; green, 240; blue, 235 }  ,fill opacity=1 ][line width=0.5]  (900,125) -- (930,125) -- (930,155) -- (900,155) -- cycle ;
\draw  [fill={rgb, 255:red, 245; green, 240; blue, 235 }  ,fill opacity=1 ][line width=0.5]  (935,85) -- (965,85) -- (950,100) -- cycle ;
\draw  [fill={rgb, 255:red, 245; green, 240; blue, 235 }  ,fill opacity=1 ][line width=0.5]  (895,195.01) -- (865,195.01) -- (880,180.01) -- cycle ;
\draw [line width=0.5]    (1040,140) -- (1040,59.98) ;

\draw (385,113) node  [font=\footnotesize] [align=left] {$ W( Dd)$};
\draw (675,113) node  [font=\footnotesize] [align=left] {$ W( Dd)$};
\draw (583,175) node  [font=\footnotesize] [align=left] {$ W(D^{2} d)$};
\draw (585,113) node  [font=\footnotesize] [align=left] {$ G( D)$};
\draw (475,113) node  [font=\footnotesize] [align=left] {$ G( D)$};
\draw (795,140) node  [font=\normalsize] [align=left] {$...$};
\draw (265,140) node  [font=\normalsize] [align=left] {$...$};
\draw (145,113) node  [font=\footnotesize] [align=left] {$ W( Dd)$};
\draw (915,113) node  [font=\footnotesize] [align=left] {$ W( Dd)$};

\end{mytikz4}~~,
\end{equation}
where the symbols of the tensors are written beside the nodes instead of inside them for convenience. The squares with diagonal lines represent the $t$-degree Gram matrices $G^{(t)}(D)$, and the squares without diagonal lines represent the $t$-degree Weingarten matrices $W^{(t)}(Dd)$. The replica index $t$ is omitted in Eq.\,\eqref{eq:integrated_mps_replica}. The small triangles represent the replicas of the input zero states $\ket{0}$ of corresponding dimensions. The large triangles represent the permutation tensors $S(D)$ or $S(d)$ defined in Eq.\,\eqref{eq:wg_S_def}. The grey dots represent the COPY tensors defined in Eq.\,\eqref{eq:copy_tensor_def}. The dashed boxes indicate the ``Weingarten gates'' defined in Eq.\,\eqref{eq:weingarten_gate}, as the integration results of individual unitary operators $\{U_1, \ldots, U_{i-1}, U_{i}, U_{i+1}, \ldots, U_{N_g}\}$. According to the identities in Eqs.\,\eqref{eq:sum_G}, \eqref{eq:sum_Wg}, and \eqref{eq:permutation_vector_zero_state}, the diagram in Eq.\eqref{eq:integrated_mps_replica} is simplified to
\begin{equation}\label{eq:integrated_mps_replica_simplified}
\begin{mytikz3}

\draw [line width=0.5]    (430,140) -- (470,140) ;
\draw [line width=0.5]    (390,140) -- (430,140) ;
\draw [line width=0.5]    (470,140) -- (520,140) ;
\draw [line width=0.5]    (560,140) -- (520,140) ;
\draw [line width=0.5]    (560,140) -- (600,140) ;
\draw [line width=0.5]    (560,140) -- (560,80) ;
\draw  [fill={rgb, 255:red, 142; green, 142; blue, 142 }  ,fill opacity=1 ][line width=0.5]  (556.5,140) .. controls (556.5,138.07) and (558.07,136.5) .. (560,136.5) .. controls (561.93,136.5) and (563.5,138.07) .. (563.5,140) .. controls (563.5,141.93) and (561.93,143.5) .. (560,143.5) .. controls (558.07,143.5) and (556.5,141.93) .. (556.5,140) -- cycle ;
\draw [line width=0.5]    (340,140) -- (390,140) ;
\draw [line width=0.5]    (340,140) -- (300,140) ;
\draw [line width=0.5]    (260,140) -- (300,140) ;
\draw [line width=0.5]    (300,140) -- (300,80) ;
\draw  [fill={rgb, 255:red, 142; green, 142; blue, 142 }  ,fill opacity=1 ][line width=0.5]  (296.51,140.02) .. controls (296.51,138.08) and (298.07,136.52) .. (300.01,136.52) .. controls (301.94,136.52) and (303.51,138.08) .. (303.51,140.02) .. controls (303.51,141.95) and (301.94,143.52) .. (300.01,143.52) .. controls (298.07,143.52) and (296.51,141.95) .. (296.51,140.02) -- cycle ;
\draw [line width=0.5]    (430,139.79) -- (430,80) ;
\draw  [fill={rgb, 255:red, 142; green, 142; blue, 142 }  ,fill opacity=1 ][line width=0.5]  (426.5,140) .. controls (426.5,138.07) and (428.07,136.5) .. (430,136.5) .. controls (431.93,136.5) and (433.5,138.07) .. (433.5,140) .. controls (433.5,141.93) and (431.93,143.5) .. (430,143.5) .. controls (428.07,143.5) and (426.5,141.93) .. (426.5,140) -- cycle ;
\draw  [fill={rgb, 255:red, 245; green, 240; blue, 235 }  ,fill opacity=1 ][line width=0.5]  (260,155.02) -- (260,125.02) -- (275,140.02) -- cycle ;
\draw  [fill={rgb, 255:red, 245; green, 240; blue, 235 }  ,fill opacity=1 ][line width=0.5]  (390,155) -- (390,125) -- (405,140) -- cycle ;
\draw  [fill={rgb, 255:red, 245; green, 240; blue, 235 }  ,fill opacity=1 ][line width=0.5]  (390,125) -- (390,155) -- (375,140) -- cycle ;
\draw  [fill={rgb, 255:red, 245; green, 240; blue, 235 }  ,fill opacity=1 ][line width=0.5]  (470,155) -- (470,125) -- (485,140) -- cycle ;
\draw  [fill={rgb, 255:red, 245; green, 240; blue, 235 }  ,fill opacity=1 ][line width=0.5]  (470,125) -- (470,155) -- (455,140) -- cycle ;
\draw  [fill={rgb, 255:red, 245; green, 240; blue, 235 }  ,fill opacity=1 ][line width=0.5]  (600,125) -- (600,155) -- (585,140) -- cycle ;
\draw  [fill={rgb, 255:red, 245; green, 240; blue, 235 }  ,fill opacity=1 ][line width=0.5]  (505,125) -- (535,125) -- (535,155) -- (505,155) -- cycle ;
\draw  [fill={rgb, 255:red, 245; green, 240; blue, 235 }  ,fill opacity=1 ][line width=0.5]  (325,125) -- (355,125) -- (355,155) -- (325,155) -- cycle ;
\draw [line width=0.5]    (600,140) -- (620,140) ;
\draw [line width=0.5]    (240,140) -- (260,140) ;
\draw  [fill={rgb, 255:red, 245; green, 240; blue, 235 }  ,fill opacity=1 ][line width=0.5]  (545,100) -- (574.98,100) -- (559.99,114.99) -- cycle ;
\draw [line width=0.5]    (660,140) -- (710,140) ;
\draw [line width=0.5]    (750,140) -- (710,140) ;
\draw [line width=0.5]    (750,140) -- (790,140) ;
\draw [line width=0.5]    (750,140) -- (750,80) ;
\draw  [fill={rgb, 255:red, 142; green, 142; blue, 142 }  ,fill opacity=1 ][line width=0.5]  (746.5,140) .. controls (746.5,138.07) and (748.07,136.5) .. (750,136.5) .. controls (751.93,136.5) and (753.5,138.07) .. (753.5,140) .. controls (753.5,141.93) and (751.93,143.5) .. (750,143.5) .. controls (748.07,143.5) and (746.5,141.93) .. (746.5,140) -- cycle ;
\draw  [fill={rgb, 255:red, 245; green, 240; blue, 235 }  ,fill opacity=1 ][line width=0.5]  (660,155) -- (660,125) -- (675,140) -- cycle ;
\draw  [fill={rgb, 255:red, 245; green, 240; blue, 235 }  ,fill opacity=1 ][line width=0.5]  (790,125) -- (790,155) -- (775,140) -- cycle ;
\draw  [fill={rgb, 255:red, 245; green, 240; blue, 235 }  ,fill opacity=1 ][line width=0.5]  (695,125) -- (725,125) -- (725,155) -- (695,155) -- cycle ;
\draw [line width=0.5]    (790,140) -- (820,140) ;
\draw [line width=0.5]    (640,140) -- (660,140) ;
\draw [line width=0.5]    (150,140) -- (200,140) ;
\draw [line width=0.5]    (150,140) -- (110,140) ;
\draw [line width=0.5]    (70,140) -- (110,140) ;
\draw [line width=0.5]    (110,140) -- (110,80) ;
\draw  [fill={rgb, 255:red, 142; green, 142; blue, 142 }  ,fill opacity=1 ][line width=0.5]  (106.51,140.02) .. controls (106.51,138.08) and (108.07,136.52) .. (110.01,136.52) .. controls (111.94,136.52) and (113.51,138.08) .. (113.51,140.02) .. controls (113.51,141.95) and (111.94,143.52) .. (110.01,143.52) .. controls (108.07,143.52) and (106.51,141.95) .. (106.51,140.02) -- cycle ;
\draw  [fill={rgb, 255:red, 245; green, 240; blue, 235 }  ,fill opacity=1 ][line width=0.5]  (70,155.02) -- (70,125.02) -- (85,140.02) -- cycle ;
\draw  [fill={rgb, 255:red, 245; green, 240; blue, 235 }  ,fill opacity=1 ][line width=0.5]  (200,125) -- (200,155) -- (185,140) -- cycle ;
\draw  [fill={rgb, 255:red, 245; green, 240; blue, 235 }  ,fill opacity=1 ][line width=0.5]  (135,125) -- (165,125) -- (165,155) -- (135,155) -- cycle ;
\draw [line width=0.5]    (40,140) -- (70,140) ;
\draw [line width=0.5]    (200,140) -- (220,140) ;
\draw [line width=0.5]    (40,140) -- (40,80) ;
\draw [line width=0.5]    (820,140) -- (820,80) ;
\draw  [fill={rgb, 255:red, 245; green, 240; blue, 235 }  ,fill opacity=1 ][line width=0.5]  (735,100) -- (764.98,100) -- (749.99,114.99) -- cycle ;
\draw  [fill={rgb, 255:red, 245; green, 240; blue, 235 }  ,fill opacity=1 ][line width=0.5]  (415,100) -- (444.98,100) -- (429.99,114.99) -- cycle ;
\draw  [fill={rgb, 255:red, 245; green, 240; blue, 235 }  ,fill opacity=1 ][line width=0.5]  (285.01,100) -- (314.99,100) -- (300,114.99) -- cycle ;
\draw  [fill={rgb, 255:red, 245; green, 240; blue, 235 }  ,fill opacity=1 ][line width=0.5]  (95,100) -- (124.98,100) -- (109.99,114.99) -- cycle ;

\draw (630,140) node  [font=\normalsize] [align=left] {$...$};
\draw (230,140) node  [font=\normalsize] [align=left] {$...$};
\draw (430,165) node  [font=\footnotesize] [align=left] {$i$};
\draw (560,165) node  [font=\footnotesize] [align=left] {$i+1$};
\draw (300,165) node  [font=\footnotesize] [align=left] {$i-1$};
\draw (110,165) node  [font=\footnotesize] [align=left] {$1$};
\draw (750,165) node  [font=\footnotesize] [align=left] {$N_g$};

\end{mytikz3}\quad,
\end{equation}
times a factor of
\begin{equation}\label{eq:factor_D2d}
    \frac{1}{D^2 d(D^2 d+1)\cdots (D^2d+t-1)} ~~,
\end{equation}
where we have used the identities
\begin{equation}\label{eq:S0=1_C1=C_W1=1}
\begin{mytikz3}

\draw [line width=0.5]    (180,210) -- (180,110) ;
\draw  [fill={rgb, 255:red, 245; green, 240; blue, 235 }  ,fill opacity=1 ][line width=0.5]  (180,222.5) -- (198.84,197.5) -- (161.16,197.5) -- cycle ;
\draw  [fill={rgb, 255:red, 245; green, 240; blue, 235 }  ,fill opacity=1 ][line width=0.5]  (215,165) -- (145,165) -- (180,130) -- cycle ;

\draw (171,142) node [anchor=north west][inner sep=0.75pt]   [align=left] {$ S$};
\draw (173,199) node [anchor=north west][inner sep=0.75pt]   [align=left] {$ 0$};

\end{mytikz3}
~~=~~
\begin{mytikz3}

\draw [line width=0.5]    (180,210) -- (180,140.03) ;
\draw  [fill={rgb, 255:red, 245; green, 240; blue, 235 }  ,fill opacity=1 ][line width=0.5]  (180,224) -- (201.17,196) -- (158.83,196) -- cycle ;
\draw (173,196) node [anchor=north west][inner sep=0.75pt]  [font=\normalsize] [align=left] {$ \vec{1}$};

\end{mytikz3}
~~,~~\quad
\begin{mytikz3}

\draw [line width=0.5]    (220,150) -- (180,150) ;
\draw [line width=0.5]    (180,210) -- (180,150) ;
\draw  [fill={rgb, 255:red, 245; green, 240; blue, 235 }  ,fill opacity=1 ][line width=0.5]  (180,224) -- (201.17,196) -- (158.83,196) -- cycle ;
\draw [line width=0.5]    (180,150) -- (140,150) ;
\draw [line width=0.5]    (180,150) -- (180,110) ;
\draw  [fill={rgb, 255:red, 142; green, 142; blue, 142 }  ,fill opacity=1 ][line width=0.5]  (176.5,150) .. controls (176.5,148.07) and (178.07,146.5) .. (180,146.5) .. controls (181.93,146.5) and (183.5,148.07) .. (183.5,150) .. controls (183.5,151.93) and (181.93,153.5) .. (180,153.5) .. controls (178.07,153.5) and (176.5,151.93) .. (176.5,150) -- cycle ;

\draw (173,196) node [anchor=north west][inner sep=0.75pt]  [font=\normalsize] [align=left] {$ \vec{1}$};

\end{mytikz3}
~~=~~
\begin{mytikz3}

\draw [line width=0.5]    (220,150) -- (180,150) ;
\draw [line width=0.5]    (180,150) -- (140,150) ;
\draw [line width=0.5]    (180,150) -- (180,110) ;
\draw  [fill={rgb, 255:red, 142; green, 142; blue, 142 }  ,fill opacity=1 ][line width=0.5]  (176.5,150) .. controls (176.5,148.07) and (178.07,146.5) .. (180,146.5) .. controls (181.93,146.5) and (183.5,148.07) .. (183.5,150) .. controls (183.5,151.93) and (181.93,153.5) .. (180,153.5) .. controls (178.07,153.5) and (176.5,151.93) .. (176.5,150) -- cycle ;

\end{mytikz3}
~~,~~\quad
\begin{mytikz3}

\draw [line width=0.5]    (175,200) -- (175,125) ;
\draw [line width=0.5]    (175,125) -- (175,60) ;
\draw  [fill={rgb, 255:red, 245; green, 240; blue, 235 }  ,fill opacity=1 ][line width=0.5]  (145,95) -- (205,95) -- (205,155) -- (145,155) -- cycle ;
\draw  [fill={rgb, 255:red, 245; green, 240; blue, 235 }  ,fill opacity=1 ][line width=0.5]  (175,214) -- (196.17,186) -- (153.83,186) -- cycle ;

\draw (175,125) node  [font=\normalsize] [align=left] {$ W(d)$};
\draw (175,197) node  [font=\normalsize] [align=left] {$ \vec{1}$};

\end{mytikz3}
~~=~~\frac{1}{d(d+1)\cdots (d+t-1)}\times
\begin{mytikz3}

\draw [line width=0.5]    (175,200) -- (175,125) ;
\draw  [fill={rgb, 255:red, 245; green, 240; blue, 235 }  ,fill opacity=1 ][line width=0.5]  (175,214) -- (196.17,186) -- (153.83,186) -- cycle ;

\draw (175,197) node  [font=\normalsize] [align=left] {$ \vec{1}$};

\end{mytikz3}~~.
\end{equation}
The replica index $t$ is again omitted for the tensor notations in Eq.\,\eqref{eq:S0=1_C1=C_W1=1}, i.e., $S$ refers to the permutation tensor corresponding to the $t$-degree symmetric group $\mathcal{S}_t$, the symbol $0$ refers to the zero state $\ket{0}^{\otimes 2t}$, $\vec{1}$ refers to the $t!$-dimensional all-one vector defined in Eq.\,\eqref{eq:all-one_vector}.

According to Lemma~\ref{lemma:Wg_G_commute}, we know the Weingarten matrix $W^{(t)}(Dd)$ commutes with the Gram matrix $G^{(t)}(D)$, i.e.,
\begin{equation}
\begin{mytikz3}

\draw [line width=0.5]    (430,140) -- (470,140) ;
\draw [line width=0.5]    (470,140) -- (520,140) ;
\draw [line width=0.5]    (560,140) -- (520,140) ;
\draw  [fill={rgb, 255:red, 245; green, 240; blue, 235 }  ,fill opacity=1 ][line width=0.5]  (470,155) -- (470,125) -- (485,140) -- cycle ;
\draw  [fill={rgb, 255:red, 245; green, 240; blue, 235 }  ,fill opacity=1 ][line width=0.5]  (470,125) -- (470,155) -- (455,140) -- cycle ;
\draw  [fill={rgb, 255:red, 245; green, 240; blue, 235 }  ,fill opacity=1 ][line width=0.5]  (505,125) -- (535,125) -- (535,155) -- (505,155) -- cycle ;

\draw (520,110) node  [font=\footnotesize] [align=left] {$W(Dd)$};
\draw (470,110) node  [font=\footnotesize] [align=left] {$G(D)$};
\draw (470,170) node  [font=\footnotesize] [align=left] {$ $};

\end{mytikz3}
\quad=\quad
\begin{mytikz3}

\draw [line width=0.5]    (430,140) -- (470,140) ;
\draw [line width=0.5]    (470,140) -- (520,140) ;
\draw [line width=0.5]    (560,140) -- (520,140) ;
\draw  [fill={rgb, 255:red, 245; green, 240; blue, 235 }  ,fill opacity=1 ][line width=0.5]  (520,155) -- (520,125) -- (535,140) -- cycle ;
\draw  [fill={rgb, 255:red, 245; green, 240; blue, 235 }  ,fill opacity=1 ][line width=0.5]  (520,125) -- (520,155) -- (505,140) -- cycle ;
\draw  [fill={rgb, 255:red, 245; green, 240; blue, 235 }  ,fill opacity=1 ][line width=0.5]  (455,125) -- (485,125) -- (485,155) -- (455,155) -- cycle ;

\draw (470,110) node  [font=\footnotesize] [align=left] {$W(Dd)$};
\draw (520,110) node  [font=\footnotesize] [align=left] {$G(D)$};
\draw (470,170) node  [font=\footnotesize] [align=left] {$ $};

\end{mytikz3}\quad.
\end{equation}
Thus, the integration result of the MPS replicas in Eq.\,\eqref{eq:integrated_mps_replica_simplified} is equal to
\begin{equation}\label{eq:integrated_mps_replica_exchanged}
\begin{mytikz3}
\draw [line width=0.5]    (430,140) -- (470,140) ;
\draw [line width=0.5]    (390,140) -- (430,140) ;
\draw [line width=0.5]    (470,140) -- (520,140) ;
\draw [line width=0.5]    (560,140) -- (520,140) ;
\draw [line width=0.5]    (560,140) -- (600,140) ;
\draw [line width=0.5]    (560,140) -- (560,80) ;
\draw  [fill={rgb, 255:red, 142; green, 142; blue, 142 }  ,fill opacity=1 ][line width=0.5]  (556.5,140) .. controls (556.5,138.07) and (558.07,136.5) .. (560,136.5) .. controls (561.93,136.5) and (563.5,138.07) .. (563.5,140) .. controls (563.5,141.93) and (561.93,143.5) .. (560,143.5) .. controls (558.07,143.5) and (556.5,141.93) .. (556.5,140) -- cycle ;
\draw [line width=0.5]    (340,140) -- (390,140) ;
\draw [line width=0.5]    (340,140) -- (300,140) ;
\draw [line width=0.5]    (260,140) -- (300,140) ;
\draw [line width=0.5]    (300,140) -- (300,80) ;
\draw  [fill={rgb, 255:red, 142; green, 142; blue, 142 }  ,fill opacity=1 ][line width=0.5]  (296.51,140.02) .. controls (296.51,138.08) and (298.07,136.52) .. (300.01,136.52) .. controls (301.94,136.52) and (303.51,138.08) .. (303.51,140.02) .. controls (303.51,141.95) and (301.94,143.52) .. (300.01,143.52) .. controls (298.07,143.52) and (296.51,141.95) .. (296.51,140.02) -- cycle ;
\draw [line width=0.5]    (430,139.79) -- (430,80) ;
\draw  [fill={rgb, 255:red, 142; green, 142; blue, 142 }  ,fill opacity=1 ][line width=0.5]  (426.5,140) .. controls (426.5,138.07) and (428.07,136.5) .. (430,136.5) .. controls (431.93,136.5) and (433.5,138.07) .. (433.5,140) .. controls (433.5,141.93) and (431.93,143.5) .. (430,143.5) .. controls (428.07,143.5) and (426.5,141.93) .. (426.5,140) -- cycle ;
\draw  [fill={rgb, 255:red, 245; green, 240; blue, 235 }  ,fill opacity=1 ][line width=0.5]  (260,155.02) -- (260,125.02) -- (275,140.02) -- cycle ;
\draw  [fill={rgb, 255:red, 245; green, 240; blue, 235 }  ,fill opacity=1 ][line width=0.5]  (390,155) -- (390,125) -- (405,140) -- cycle ;
\draw  [fill={rgb, 255:red, 245; green, 240; blue, 235 }  ,fill opacity=1 ][line width=0.5]  (390,125) -- (390,155) -- (375,140) -- cycle ;
\draw  [fill={rgb, 255:red, 245; green, 240; blue, 235 }  ,fill opacity=1 ][line width=0.5]  (520,155) -- (520,125) -- (535,140) -- cycle ;
\draw  [fill={rgb, 255:red, 245; green, 240; blue, 235 }  ,fill opacity=1 ][line width=0.5]  (520,125) -- (520,155) -- (505,140) -- cycle ;
\draw  [fill={rgb, 255:red, 245; green, 240; blue, 235 }  ,fill opacity=1 ][line width=0.5]  (600,125) -- (600,155) -- (585,140) -- cycle ;
\draw  [fill={rgb, 255:red, 245; green, 240; blue, 235 }  ,fill opacity=1 ][line width=0.5]  (455,125) -- (485,125) -- (485,155) -- (455,155) -- cycle ;
\draw  [fill={rgb, 255:red, 245; green, 240; blue, 235 }  ,fill opacity=1 ][line width=0.5]  (325,125) -- (355,125) -- (355,155) -- (325,155) -- cycle ;
\draw [line width=0.5]    (600,140) -- (620,140) ;
\draw [line width=0.5]    (240,140) -- (260,140) ;
\draw  [fill={rgb, 255:red, 245; green, 240; blue, 235 }  ,fill opacity=1 ][line width=0.5]  (545,100) -- (574.98,100) -- (559.99,114.99) -- cycle ;
\draw [line width=0.5]    (660,140) -- (710,140) ;
\draw [line width=0.5]    (750,140) -- (710,140) ;
\draw [line width=0.5]    (750,140) -- (790,140) ;
\draw [line width=0.5]    (750,140) -- (750,80) ;
\draw  [fill={rgb, 255:red, 142; green, 142; blue, 142 }  ,fill opacity=1 ][line width=0.5]  (746.5,140) .. controls (746.5,138.07) and (748.07,136.5) .. (750,136.5) .. controls (751.93,136.5) and (753.5,138.07) .. (753.5,140) .. controls (753.5,141.93) and (751.93,143.5) .. (750,143.5) .. controls (748.07,143.5) and (746.5,141.93) .. (746.5,140) -- cycle ;
\draw  [fill={rgb, 255:red, 245; green, 240; blue, 235 }  ,fill opacity=1 ][line width=0.5]  (660,155) -- (660,125) -- (675,140) -- cycle ;
\draw  [fill={rgb, 255:red, 245; green, 240; blue, 235 }  ,fill opacity=1 ][line width=0.5]  (790,125) -- (790,155) -- (775,140) -- cycle ;
\draw  [fill={rgb, 255:red, 245; green, 240; blue, 235 }  ,fill opacity=1 ][line width=0.5]  (695,125) -- (725,125) -- (725,155) -- (695,155) -- cycle ;
\draw [line width=0.5]    (790,140) -- (820,140) ;
\draw [line width=0.5]    (640,140) -- (660,140) ;
\draw [line width=0.5]    (150,140) -- (200,140) ;
\draw [line width=0.5]    (150,140) -- (110,140) ;
\draw [line width=0.5]    (70,140) -- (110,140) ;
\draw [line width=0.5]    (110,140) -- (110,80) ;
\draw  [fill={rgb, 255:red, 142; green, 142; blue, 142 }  ,fill opacity=1 ][line width=0.5]  (106.51,140.02) .. controls (106.51,138.08) and (108.07,136.52) .. (110.01,136.52) .. controls (111.94,136.52) and (113.51,138.08) .. (113.51,140.02) .. controls (113.51,141.95) and (111.94,143.52) .. (110.01,143.52) .. controls (108.07,143.52) and (106.51,141.95) .. (106.51,140.02) -- cycle ;
\draw  [fill={rgb, 255:red, 245; green, 240; blue, 235 }  ,fill opacity=1 ][line width=0.5]  (70,155.02) -- (70,125.02) -- (85,140.02) -- cycle ;
\draw  [fill={rgb, 255:red, 245; green, 240; blue, 235 }  ,fill opacity=1 ][line width=0.5]  (200,125) -- (200,155) -- (185,140) -- cycle ;
\draw  [fill={rgb, 255:red, 245; green, 240; blue, 235 }  ,fill opacity=1 ][line width=0.5]  (135,125) -- (165,125) -- (165,155) -- (135,155) -- cycle ;
\draw [line width=0.5]    (40,140) -- (70,140) ;
\draw [line width=0.5]    (200,140) -- (220,140) ;
\draw [line width=0.5]    (40,140) -- (40,80) ;
\draw [line width=0.5]    (820,140) -- (820,80) ;
\draw  [fill={rgb, 255:red, 245; green, 240; blue, 235 }  ,fill opacity=1 ][line width=0.5]  (735,100) -- (764.98,100) -- (749.99,114.99) -- cycle ;
\draw  [fill={rgb, 255:red, 245; green, 240; blue, 235 }  ,fill opacity=1 ][line width=0.5]  (415,100) -- (444.98,100) -- (429.99,114.99) -- cycle ;
\draw  [fill={rgb, 255:red, 245; green, 240; blue, 235 }  ,fill opacity=1 ][line width=0.5]  (285.01,100) -- (314.99,100) -- (300,114.99) -- cycle ;
\draw  [fill={rgb, 255:red, 245; green, 240; blue, 235 }  ,fill opacity=1 ][line width=0.5]  (95,100) -- (124.98,100) -- (109.99,114.99) -- cycle ;

\draw (630,140) node  [font=\normalsize] [align=left] {$...$};
\draw (230,140) node  [font=\normalsize] [align=left] {$...$};
\draw (430,165) node  [font=\footnotesize] [align=left] {$i$};
\draw (560,165) node  [font=\footnotesize] [align=left] {$i+1$};
\draw (300,165) node  [font=\footnotesize] [align=left] {$i-1$};
\draw (110,165) node  [font=\footnotesize] [align=left] {$1$};
\draw (750,165) node  [font=\footnotesize] [align=left] {$N_g$};

\end{mytikz3}\quad,
\end{equation}
where $G(D)$ and $W(Dd)$ on the bond between site $i$ and $i+1$ are exchanged. On the other hand, Eq.\,\eqref{eq:integrated_mps_replica_exchanged}, times the factor in Eq.\,\eqref{eq:factor_D2d}, is exactly the integration result of the MPS replicas with the orthogonality center at site $i+1$. Since this holds for any moment order $t$, we have
\begin{equation}
    \mathbb{V}_\mathrm{MPS}^{[i]} = \mathbb{V}_\mathrm{MPS}^{[i+1]}.
\end{equation}
Similarly, be successively exchanging the two matrices $G(D)$ and $W(Dd)$ on the bonds between adjacent sites, we have
\begin{equation}
    \mathbb{V}_\mathrm{MPS}^{[i]} = \mathbb{V}_\mathrm{MPS}^{[j]},
\end{equation}
for two arbitrary sites $i$ and $j$. The proof is completed.
\end{proof}

We remark that this theorem holds due to the sequential structure of MPS, i.e., MPS has no loop in the temporal-spatial plane, like in the brickwork circuit. Thus, this theorem also holds for other states with movable orthogonality centers, such as tree tensor network states.






\subsection{Local minimum distribution}
In this section, we provide a detailed proof of Theorem~\textcolor{darkblue1}{2} in the main text. We first clarify the definition of local minimum: for a real-valued function $f:\mathcal{X}\rightarrow\mathbb{R}$ on a topological space $\mathcal{X}$, a point $x\in \mathcal{X}$ is a local minimum of the function $f$ if and only if there is an open neighborhood $\mathcal{N}(x)\subseteq \mathcal{X}$ such that
\begin{equation}
    f(x) \leq f(x'),
\end{equation}
holds for any $x'\in\mathcal{N}(x)$. If $f$ is differentiable up to the second order, this is equivalent to the common local minimum condition: zero gradient and positive semi-definite Hessian matrix
\begin{equation}
    \nabla f(x) = 0,\quad \bm{\mathrm{H}}_f(x)\geq 0.
\end{equation}
Here, the strict positive definiteness is not required in order to include situations where the function $f$ contains redundant parameters implicitly, so that there are directions along which the function is constant and hence the Hessian matrix always has zero eigenvalues. We denote the set of local minima of a function $f$ as $\mathcal{LM}_f\subseteq \mathcal{X}$. The shape of the local minimum set $\mathcal{LM}_f$ is discussed as follows.
\begin{itemize}
    \item In the simplest case, if $f$ has no degenerate local minima (e.g., the Morse functions) so that the Hessian matrix at each local minimum has no zero eigenvalue, then $\mathcal{LM}_f$ is just a discrete set of isolated points.
    \item More generally, if $f$ has some ``symmetries'' so that there are always certain directions along which $y=f(x)$ is constant (e.g., the Morse–Bott functions) and accordingly the Hessian matrix at each local minimum has a fixed number of zero eigenvalues, then $\mathcal{LM}_f$ is a continuum of local minima (like a ``valley'' in the function landscape), which is a sub-manifold of $\mathcal{X}$ with a well-defined dimension $\mathrm{dim}\mathcal{LM}_f$, or, equivalently, a union of several connected sub-manifolds of $\mathcal{X}$ of the same dimension. The non-degenerate case discussed above can be regarded as a special case where $\mathcal{LM}_f$ is zero-dimensional.
    \item In the most general case, $\mathcal{LM}_f$ may be mixed-dimensional, i.e., $\mathcal{LM}_f$ is a union of several sub-manifolds of $\mathcal{X}$ with different dimensions, e.g., some of the local minima are isolated points while the others form continuous lines or areas.
\end{itemize}

Then, we clarify the definition of local minimum distribution. By saying ``local minimum distribution'', we mean the distribution of the function values of local minima. That is to say, if we have a certain measure $\mu_{\mathbb{LM}_f}(\cdot)$ over the local minimum set $\mathcal{LM}_f$ and hence obtain an ensemble $\mathbb{LM}_f$, the distribution of the local minima over the function value $y$ is defined by the following cumulative distribution function
\begin{equation}
    \mathcal{P}_f(y) = \mu_{\mathbb{LM}_f}\left(\{x\in \mathcal{LM}_f \mid f(x)\leq y\}\right).
\end{equation}
Here, we use cumulative distributions rather than density distributions since the function values of local minima form a discrete set. The corresponding density distribution can be obtained formally in Dirac delta functions by taking the derivative of $\mathcal{P}_f(y)$. In other words, $\mathcal{P}_f(y)$ can be seen as the push-forward measure of the measure $\mathrm{vol}(\cdot)$ over $\mathcal{LM}_f$ under the map $f$. Then, the question is how to define a proper measure over the local minimum set $\mathcal{LM}_f$.

If $\mathcal{LM}_f$ is a discrete set of isolated points, a simple choice is to take the counting measure, i.e., counting the number of isolated local minimum points. If $\mathcal{LM}_f$ forms a continuum, a natural generalization is to take certain geometric measures, such as the Hausdorff measure. However, in order to reflect the probability of converging to each local minimum in practice, we will use a more relevant measure by incorporating the optimization dynamics.

Given a deterministic optimization algorithm (optimizer), such as the standard gradient descent algorithm, the convergence from initial parameter points to local minima naturally defines an ``optimization map''
\begin{equation}
    \mathcal{O}_f:\mathcal{X}\rightarrow\mathcal{LM}_f.
\end{equation}
Here, we take the limit of infinitesimally small step size (learning rate) to ensure that the parameter point will converge to a local minimum stably rather than oscillate endlessly without convergence. Taking the standard gradient descent algorithm as an example, the optimization map can be defined by the correspondence between the initial and end point of the negative gradient flow, i.e., the following differential equation
\begin{equation}
    \frac{\mathrm{d}}{\mathrm{d}t} \chi(t;x_0) = - \nabla f(\chi(t;x_0)),\quad \chi(0;x_0)=x_0,
\end{equation}
where $x_0$ is the initial point and 
\begin{equation}\label{eq:x_infty}
    x_\infty=\lim_{t\rightarrow\infty}\chi(t;x_0),
\end{equation}
is the end point. Namely, the long-time limit of the solution $\chi(t=\infty;\cdot)$ exactly defines the optimization map $\mathcal{O}_f$. Note that $\mathcal{X}$ should be closed to ensure that the limit points like $x_\infty$ are still in $\mathcal{X}$. In the language of dynamical systems, each local minimum can be understood as an ``attractor'' and its preimage under the map $\mathcal{O}_f$ is the corresponding basin of attraction. Here, we assume that the end points of the optimization algorithm are local minima for almost every initialization. Saddle points and local maxima are also stationary points, but they are unstable under perturbations and are expected to have zero basin-based measure. Alternatively, one can include all stationary points into a set, replacing the local minimum set defined above, which would not affect the validity of our subsequent conclusion.

Thus, given an ensemble $\mathbb{X}$ associated with the parameter space $\mathcal{X}$ induced from random initialization, the push-forward measure of the random initialization measure $\mu_\mathbb{X}$ under the map $\mathcal{O}_f$
\begin{equation}
    \mu_{\mathbb{LM}_f}(\mathcal{B}) = \mu_\mathbb{X}\left( \mathcal{O}_f^{-1}(\mathcal{B}) \right),
    \label{eq:vol_def}
\end{equation}
naturally gives a probability distribution over the local minimum set $\mathcal{LM}_f$, where $\mathcal{B}\subseteq \mathcal{LM}_f$ is a Borel measurable subset. This optimizer-dependent measure directly reflects the probability of converging to each minimum when the optimization is conducted in practice, and hence, the corresponding local minimum distribution over the function value
\begin{equation}
    \mathcal{P}_f(y) = \mu_{\mathbb{LM}_f}\left(\{x\in \mathcal{LM}_f \mid f(x)\leq y\}\right).
    \label{eq:Pf_def}
\end{equation}
faithfully predicts the performance of optimization. That is to say, $\mathcal{P}_f(y)$ directly describes the distribution of the objective function values in the final converged results of the optimization algorithm. If $\mathcal{P}_f(y)$ concentrates near the global minimum value, there is a high probability that the optimization algorithm can find the optimal solution successfully. Conversely, if $\mathcal{P}_f(y)$ concentrates away from the global minimum value, it is almost impossible for the optimization algorithm to obtain the optimal solution. A schematic of this basin-based measure is depicted in Fig.\,\ref{fig:localmin_distribution_def}.

\begin{figure}
    \centering
    \includegraphics[width=0.45\linewidth]{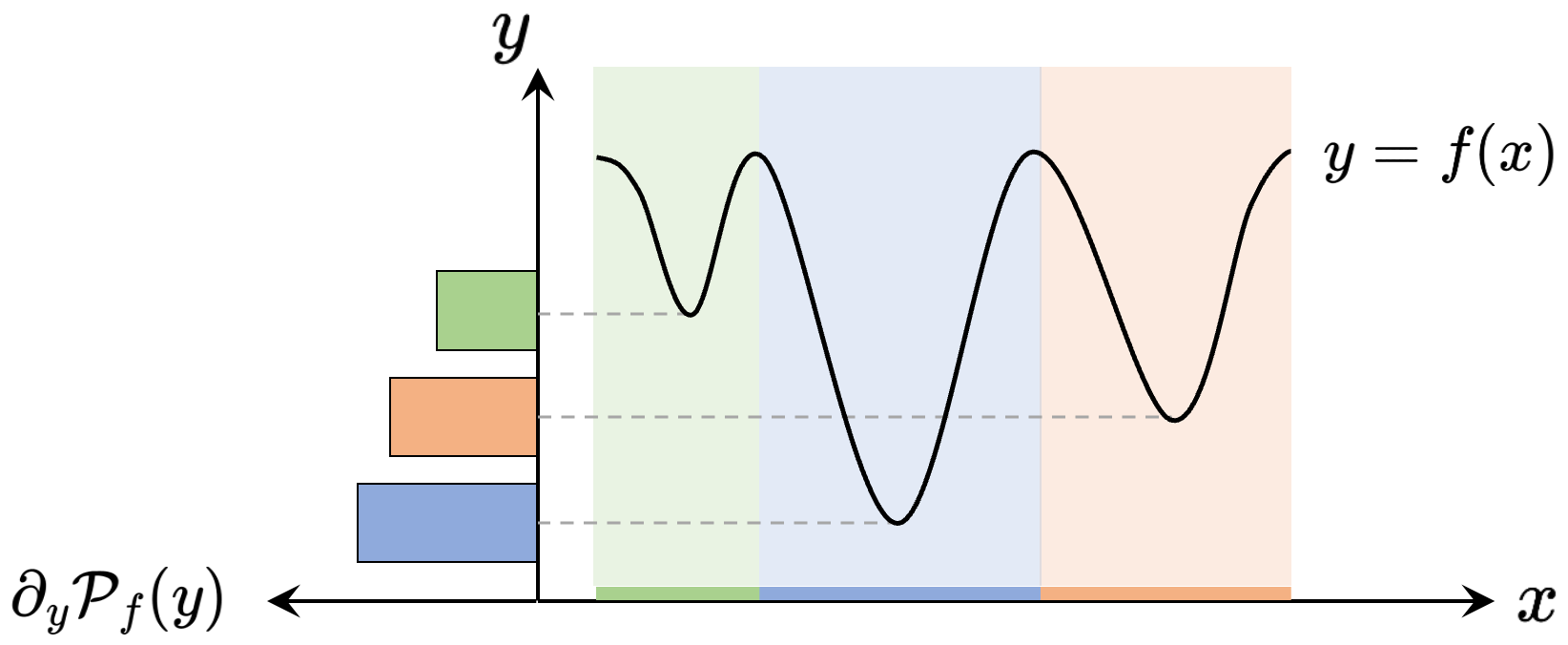}
    \caption{Schematic of the definition of the local minimum distribution. The colored areas represent the respective basins of attraction corresponding to each local minimum. The histogram on the left hand side represent the local minimum distribution, which we formally denote as $\partial_y\mathcal{P}_f(y)$ since $\mathcal{P}_f(y)$ is the cumulative distribution function.}
    \label{fig:localmin_distribution_def}
\end{figure}

\subsection{Invariance of local minimum distribution: Proof of Theorem 2}
With the general definition of local minimum distribution in place, we now focus on its specific application to MPS. The objective function is the energy function
\begin{equation}
    \mathcal{E}:\mathcal{A}_\mathrm{MPS}\rightarrow \mathbb{R},\quad \mathcal{E}(A)=\braoprket{\Psi_\mathrm{MPS}(A)}{H}{\Psi_\mathrm{MPS}(A)},
\end{equation}
where $H$ is the Hamiltonian of the target system. The optimization is conducted within the MPS variety $\mathcal{V}_\mathrm{MPS}(N,\mathrm{D})$. For MPS, there is a natural choice of optimization algorithm---the (imaginary-time) time-dependent variational principle (TDVP) algorithm~\cite{Haegeman2011}, which projects the exact imaginary-time evolution onto the tangent space of MPS. The famous density-matrix renormalization group (DMRG) algorithm can be regarded as a particular limit of the TDVP algorithm~\cite{Haegeman2016}, by taking the time step size to be infinite after performing the Trotter decomposition of the MPS tangent space projector.

Specifically, the evolution flow of the TDVP algorithm in the MPS manifold $\mathcal{V}_\mathrm{MPS}^{\mathrm{f}}$ is defined by the projected imaginary-time Schr\"{o}dinger equation (TDVP equation)
\begin{equation}\label{eq:tdvp_equation}
    \frac{\mathrm{d}}{\mathrm{d}t} \ket{\psi(t)} = -P_\mathcal{T} H \ket{\psi(t)},
\end{equation}
where $P_\mathcal{T}$ is the projector to the tangent space $\mathcal{T}_{\ket{\psi}}\mathcal{V}_\mathrm{MPS}^{\mathrm{f}}$ of the MPS manifold $\mathcal{V}_\mathrm{MPS}^{\mathrm{f}}$ at the base point $\ket{\psi(t)}$. After fixing the gauge properly, the tangent space projector can be represented as
\begin{equation}
    P_\mathcal{T} = \sum_{i=1}^N P_{1:(i-1)} \otimes I_i \otimes P_{(i+1):N} - \sum_{i=1}^{N-1} P_{1:i} \otimes P_{(i+1):N},
\end{equation}
where $I_i$ represents the identity operator at site $i$. $P_{1:i}$ ($P_{(i+1):N}$) is the projector to the subspace spanned by the eigenstates with non-zero eigenvalues of the reduced density matrix of the subsystem on the left (right) of bond $i$ from $\ket{\psi(t)}$. By construction, the resulting evolution will never leave the
MPS variety.

For rank-deficient points in the MPS variety $\mathcal{V}_\mathrm{MPS}$, there is, in general, no well-defined tangent space but rather a non-trivial tangent cone $\mathcal{C}_{\ket{\psi}} \mathcal{V}_\mathrm{MPS}$, since the combination of the allowed rank-increasing directions can lead to Schmidt ranks larger than the bond dimension, and hence leave the tangent cone. Namely, the tangent cone at a rank-deficient point may not be closed with respect to the linear combinations of its elements and hence not a linear space. However, for each trajectory starting from the full-rank stratum and approaching a rank-deficient point at $t_0$, i.e., $\ket{\psi(t)}$ with $t\in [0,t_0)$, the tangent spaces along the trajectory have a well-defined limit, given by the convergence of the tangent space projector
\begin{equation}
    \lim_{t\rightarrow t_0^{-}} P_\mathcal{T}[\psi(t)] = P^*.
\end{equation}
Using this limiting tangent space (the part within the tangent cone), the evolution flow can be extended naturally to rank-deficient points in a trajectory-dependent manner. Mathematically, this corresponds to a differential inclusion
\begin{equation}\label{eq:tdvp_inclusion}
    \frac{\mathrm{d}}{\mathrm{d}t} \ket{\psi(t)} \in -\partial \mathcal{F}(\ket{\psi(t)}),\quad 
\end{equation}
as a generalization of the differential equation defined in smooth manifolds, where $\partial \mathcal{F}$ represents the sub-differential of the energy function 
\begin{equation}
    \mathcal{F}(\ket{\psi})=\braoprket{\psi}{H}{\psi},
\end{equation}
restricted to the MPS variety $\mathcal{V}_\mathrm{MPS}$. For the initial conditions within the full-rank stratum $\mathcal{V}_\mathrm{MPS}^{\mathrm{f}}$, this differential inclusion admits a unique globally defined solution, which selects a consistent limit direction when it passes through (or terminates at) a rank-deficient point. Thus, the TDVP evolution flow exists as a well-defined dynamical system in the whole MPS variety $\mathcal{V}_\mathrm{MPS}$, provided we consider its solutions that start in the dense smooth regions and take limits appropriately. For the initial conditions exactly located at rank-deficient strata, we can define the evolution flow by the TDVP equation defined in the corresponding lower-dimensional manifolds, or simply disregard such cases since they form a set of measure zero. Therefore, the long-time limit of the TDVP evolution flow provides a well-defined optimization map
\begin{equation}
    \mathcal{O}_\mathcal{F}: \mathcal{V}_\mathrm{MPS}^\mathrm{f} \rightarrow \mathcal{LM}_\mathcal{F}^*,
\end{equation}
as described in Eq.\,\eqref{eq:x_infty}. The ``limit local minimum'' set $\mathcal{LM}_\mathcal{F}^*$ is a superset of $\mathcal{LM}_\mathcal{F}$, which additionally contains the ``limit local minima'', i.e., the possible rank-deficient points that are local minima within a certain limit tangent space in the tangent cone. Note that $\mathcal{LM}_\mathcal{F}^*$ can indeed contain rank-deficient points even though the initial points are within the full-rank stratum $\mathcal{V}_\mathrm{MPS}^{\mathrm{f}}$. Importantly, since the TDVP evolution flow is defined intrinsically in $\mathcal{V}_\mathrm{MPS}$, the resulting optimization map $\mathcal{O}_\mathcal{F}$ is independent of the choice of parametrization.

In a more general context, the TDVP algorithm is also known as the quantum natural gradient descent (QNGD) algorithm in variational quantum circuits~\cite{Stokes2020}, Riemannian gradient descent (RGD) in manifold optimization, and the stochastic reconfiguration (SR) in variational Monte Carlo (VMC) algorithms. For a generic parametrized pure quantum state $\ket{\Psi(\bm{\theta})}$ with tunable parameters $\bm{\theta}=\{\theta_\mu\}_{\mu=1}^M$, the evolution flow in the parameter space is given by
\begin{equation}
    \frac{\mathrm{d}}{\mathrm{d}t}\bm{\theta}(t) = - \mathrm{grad}\, \mathcal{E} (\bm{\theta}(t)),
\end{equation}
where $\mathrm{grad}\, \mathcal{E} $ is the representation of the Riemannian gradient of the objective function $\mathcal{F}$ in the parameter space, also known as the quantum natural gradient. It is related to the plain Euclidean gradient $\nabla \mathcal{E}$ by
\begin{equation}\label{eq:Igrad=nabla}
    \mathcal{I}(\bm{\theta}) \mathrm{grad}\, \mathcal{E} (\bm{\theta}) = \nabla \mathcal{E}(\bm{\theta}),
\end{equation}
where $\mathcal{I}(\bm{\theta})$ is the induced metric in the parameter space from the Fubini-Study metric in the Hilbert space by the parametrized quantum state $\ket{\Psi(\bm{\theta})}$, also known as the quantum Fisher information (QFI) matrix. The elements of $\mathcal{I}(\bm{\theta})$ are given by
\begin{equation}
    \mathcal{I}_{\mu\nu}(\bm{\theta}) = \mathrm{Re}\left[ \braket{\partial_\mu\Psi(\bm{\theta})}{\partial_\nu\Psi(\bm{\theta})} - \braket{\partial_\mu\Psi(\bm{\theta})}{\Psi(\bm{\theta})} \braket{\Psi(\bm{\theta})}{\partial_\nu\Psi(\bm{\theta})} \right],
\end{equation}
where $\partial_\mu=\frac{\partial}{\partial \theta_\mu}$ represents taking the partial derivative with respect to $\theta_\mu$. If $\mathcal{I}(\bm{\theta})$ is invertible, given the Euclidean gradient $\nabla\mathcal{E}(\bm{\theta})$, the linear equation in Eq.\,\eqref{eq:Igrad=nabla} has a unique solution
\begin{equation}
    \mathrm{grad}\, \mathcal{E} (\bm{\theta}) = \mathcal{I}^{-1}(\bm{\theta}) \nabla \mathcal{E}(\bm{\theta}).
\end{equation}
If $\mathcal{I}(\bm{\theta})$ is not invertible, such as when there are redundant parameters in $\ket{\Psi(\bm{\theta})}$, it is common to choose a minimum norm solution by taking the Moore–Penrose pseudo-inverse $\mathcal{I}^{+}(\bm{\theta})$ of the metric matrix
\begin{equation}
    \mathrm{grad}\, \mathcal{E} (\bm{\theta}) = \mathcal{I}^{+}(\bm{\theta}) \nabla \mathcal{E}(\bm{\theta}).
\end{equation}
In practice, considering the numerical error, the pseudo-inverse is usually realized approximately by the Tikhonov regularization
\begin{equation}
    \mathcal{I}^{+}(\bm{\theta}) \longrightarrow \mathcal{I}^{\epsilon+}(\bm{\theta}) = \left( \mathcal{I}(\bm{\theta}) + \epsilon I \right)^{-1},
\end{equation}
where $I$ is the identity matrix of the same shape as the metric and $\epsilon$ is a small real number called the Tikhonov regularization parameter. In the limit of infinitesimal regularization parameter, the regularized inverse $\mathcal{I}^{\epsilon+}(\bm{\theta})$ converges to the exact pseudo-inverse $\mathcal{I}^{+}(\bm{\theta})$. Thus, this natural gradient flow exists as a well-defined dynamical system in the whole parameter space, which is mapped to the TDVP evolution flow defined in Eqs.\,\eqref{eq:tdvp_equation} and \eqref{eq:tdvp_inclusion} by the parametrization map $\Psi$ when the variational ansatz is chosen as an MPS. The long-time limit of the natural gradient flow provides a well-defined optimization map
\begin{equation}
    \mathcal{O}_\mathcal{E}: \mathcal{A}_\mathrm{MPS} \rightarrow \mathcal{LM}_\mathcal{E}.
\end{equation}
In the limit of infinitesimal regularization parameter, the natural gradient optimization map $\mathcal{O}_\mathcal{E}$ is equivalent to the TDVP optimization map in the sense of
\begin{equation}\label{eq:PsiOE=OFPsi}
    \Psi \circ \mathcal{O}_\mathcal{E} = \mathcal{O}_\mathcal{F} \circ \Psi,
\end{equation}
for non-singular initial points, because the natural gradient flow is just the representation of the imaginary-time TDVP evolution flow in the parameter space.

Before proving Theorem~\textcolor{darkblue1}{2}, we first prove the following Lemma~\ref{lemma:map_local_min} for convenience.
\begin{lemma}\label{lemma:map_local_min}
Suppose $\mathcal{X}$ and $\mathcal{Y}$ are two topological spaces. $f:\mathcal{Y}\rightarrow\mathbb{R}$ is a real-valued function and $g:\mathcal{X}\rightarrow\mathcal{Y}$ is an open map. If $x\in\mathcal{X}$ is a local minimum of the function $f\circ g$, then $y=g(x)$ is a local minimum of the function $f$.
\end{lemma}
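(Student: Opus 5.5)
The plan is to argue directly from the definitions, using only that $g$ is open. Let $x\in\mathcal{X}$ be a local minimum of $f\circ g$. By the definition of local minimum recalled above, there is an open neighborhood $\mathcal{N}(x)\subseteq\mathcal{X}$ such that $f(g(x))\leq f(g(x'))$ for every $x'\in\mathcal{N}(x)$.

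The candidate neighborhood of $y=g(x)$ in $\mathcal{Y}$ will be the image $g(\mathcal{N}(x))$. Since $g$ is an open map, $g(\mathcal{N}(x))$ is open in $\mathcal{Y}$. It contains $y$ because $x\in\mathcal{N}(x)$. Hence it is an open neighborhood of $y$.

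Next, take any $y'\in g(\mathcal{N}(x))$. By construction there is a preimage $x'\in\mathcal{N}(x)$ with $g(x')=y'$. The defining inequality then gives
\begin{equation}
f(y)=f(g(x))\leq f(g(x'))=f(y').
\end{equation}
Since $y'$ was arbitrary, $f(y)\leq f(y')$ holds on the whole open neighborhood $g(\mathcal{N}(x))$. This is exactly the definition of $y$ being a local minimum of $f$.

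There is essentially no obstacle here. The only point requiring care is that openness of $g$, rather than continuity, is what transports the neighborhood forward. No surjectivity or injectivity of $g$ is needed, because every point of the image neighborhood has at least one preimage inside $\mathcal{N}(x)$ by construction.

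In the later application, openness will be supplied by $\Psi_\mathrm{MPS}$ restricted to the full-rank stratum. There it is the projection of a principal bundle, that is, a submersion, and hence an open map onto $\mathcal{V}_\mathrm{MPS}^{\mathrm{f}}$. Checking this is deferred to the proof of Theorem~\textcolor{darkblue1}{2}.
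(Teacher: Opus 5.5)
Your proof is correct and is the same as the paper's: you push the open neighborhood $\mathcal{N}(x)$ forward to the open set $g(\mathcal{N}(x))\ni y$, then pick a preimage $x'\in\mathcal{N}(x)$ of each $y'$ to transfer the inequality. Your remark that the operative hypothesis is openness rather than continuity also matches the paper's later use of the lemma. There, openness comes from $\Psi_\mathrm{MPS}$ being a submersion on the full-rank stratum in the proof of Theorem~2.
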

\begin{proof}
As $x\in\mathcal{X}$ is a local minimum of $f\circ g$, there exists an open neighborhood $\mathcal{N}(x)$ such that
\begin{equation}
    f(g(x)) \leq f(g(x')),
\end{equation}
holds for any $x'\in\mathcal{N}(x)$. Since $g$ is an open map, $g(\mathcal{N}(x))$ is an open neighborhood of $y=g(x)$ within $\mathcal{Y}$. Thus, for any $y'\in g(\mathcal{N}(x))$, there exists $x'\in\mathcal{N}(x)$ such that $y'=g(x')$, and hence
\begin{equation}
    f(y')=f(g(x')) \geq f(g(x)) = f(y),
\end{equation}
which means that $y=g(x)$ is indeed a local minimum of the function $f$. The proof is completed.
\end{proof}

According to Lemma~\ref{lemma:map_local_min}, a biholomorphism (which is open in both directions) preserves the local minima of a given real-valued function. More generally, a biholomorphism actually preserves all the critical points: $\nabla f(g(x))=0$ if and only if 
\begin{equation}
    \nabla(f\circ g)(x)= \nabla f(g(x)) \nabla g(x)=0,
\end{equation}
since the Jacobian matrix $\nabla g(x)$ is invertible. The index of a critical point $y=g(x)$ of $f$ (the number of negative eigenvalues of the Hessian matrix) is the same as that of $x$ of $f\circ g$ since
\begin{equation}
   \bm{\mathrm{H}}_{f\circ g}(x) = \left(\nabla g(x)\right)^\dagger \bm{\mathrm{H}}_f(g(x)) \nabla g(x),
\end{equation}
and the congruent transformation preserves the index by Sylvester's law of inertia.

With these clarifications, we are now prepared to prove Theorem~\textcolor{darkblue1}{2} in the main text, which is rephrased below for convenience. We use $\mathcal{P}_\mathcal{E}(E;\mathbb{A}_\mathrm{MPS}^{[i]})$ to denote the cumulative distribution of local minima of the energy function $\mathcal{E}$ at energy level $E$ for the MPS representation ensemble $\mathbb{A}_\mathrm{MPS}^{[i]}$ with orthogonality center at site $i$.

\renewcommand{\theproposition}{2}
\begin{theorem}
The local minimum distributions of the energy landscapes with respect to the MPS representation ensembles with different orthogonality centers are identical, i.e.,
\begin{equation}
    \mathcal{P}_\mathcal{E}(E;\mathbb{A}_\mathrm{MPS}^{[i]}) = \mathcal{P}_\mathcal{E}(E;\mathbb{A}_\mathrm{MPS}^{[j]}),
\end{equation}
for any two sites $i,j$ and any given Hamiltonian.
\end{theorem}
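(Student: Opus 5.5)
The plan is to push everything down to the MPS variety $\mathcal{V}_\mathrm{MPS}$, where the orthogonality center no longer plays any role, and then invoke Theorem~1 for the measure.

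\emph{Step 1: commuting optimization maps.} Write $\Psi^{[i]}:\mathcal{A}_\mathrm{MPS}^{[i]}\rightarrow\mathcal{V}_\mathrm{MPS}$ for the contraction map restricted to canonical representations centered at site $i$. Its full-rank part $\mathcal{A}_\mathrm{MPS}^{[i],\mathrm{f}}$ is a principal bundle over $\mathcal{V}_\mathrm{MPS}^\mathrm{f}$ with compact structure group $\prod_k\mathcal{U}(D_k)$. Hence $\Psi^{[i]}$ is a submersion there, and in particular an open map. Since $\mathcal{E}=\mathcal{F}\circ\Psi^{[i]}$, Lemma~S2 shows that every local minimum of $\mathcal{E}$ on $\mathcal{A}_\mathrm{MPS}^{[i],\mathrm{f}}$ maps to a local minimum of $\mathcal{F}$ with the same energy. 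Conversely, the preimage of a local minimum of $\mathcal{F}$ is an entire gauge orbit of local minima of $\mathcal{E}$, by continuity of $\Psi^{[i]}$. The dynamical content is Eq.~\eqref{eq:PsiOE=OFPsi}: $\Psi^{[i]}\circ\mathcal{O}_\mathcal{E}^{[i]}=\mathcal{O}_\mathcal{F}\circ\Psi^{[i]}$ on non-singular initial points. This holds because the pseudo-inverse natural gradient flow is the horizontal lift of the intrinsic TDVP flow, and that flow does not depend on $i$.

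\emph{Step 2: pull back the energy sublevel set.} Set $L_E=\{x\in\mathcal{LM}_\mathcal{E}\mid\mathcal{E}(x)\le E\}$ and $M_E=\{\psi\in\mathcal{LM}^*_\mathcal{F}\mid\mathcal{F}(\psi)\le E\}$. Using Step~1 together with $\mathcal{E}=\mathcal{F}\circ\Psi^{[i]}$, the goal is the identity
\[
\mathcal{O}_\mathcal{E}^{[i]\,-1}(L_E)=(\Psi^{[i]})^{-1}\bigl(\mathcal{O}_\mathcal{F}^{-1}(M_E)\bigr)
\]
up to null sets. It then follows that
\[
\mathcal{P}_\mathcal{E}(E;\mathbb{A}^{[i]}_\mathrm{MPS})=\mu_{\mathbb{A}^{[i]}}\bigl((\Psi^{[i]})^{-1}(B_E)\bigr)=\mu_{\mathbb{V}^{[i]}}(B_E),
\]
where $B_E=\mathcal{O}_\mathcal{F}^{-1}(M_E)\subseteq\mathcal{V}_\mathrm{MPS}$ is independent of $i$. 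The last equality is just the definition of $\mathbb{V}^{[i]}_\mathrm{MPS}$ as the push-forward of $\mathbb{A}^{[i]}_\mathrm{MPS}$ under $\Psi^{[i]}$.

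\emph{Step 3: apply Theorem~1.} Since $\mathbb{V}^{[i]}_\mathrm{MPS}=\mathbb{V}^{[j]}_\mathrm{MPS}$, we get $\mu_{\mathbb{V}^{[i]}}(B_E)=\mu_{\mathbb{V}^{[j]}}(B_E)$, which is the claim. Measurability of $B_E$ should follow from continuity of the flow on the open dense stratum.

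\emph{Main obstacle: the rank-deficient strata.} The difficulty is twofold. First, $\Psi^{[i]}$ is not open at rank-deficient points, and there the orbit space fails to be injective, as discussed in Sec.~\ref{sec:mps_manifold}. Second, trajectories starting in $\mathcal{V}^\mathrm{f}_\mathrm{MPS}$ may converge to such points. My plan is to avoid ever applying Lemma~S2 at these points. Instead I would define everything through the basin of attraction using the limit flow $\mathcal{O}_\mathcal{F}$ on $\mathcal{LM}^*_\mathcal{F}$, and use only three facts:
\begin{itemize}
\item initial points lie in the full-rank stratum almost surely, since the rank-deficient set is a proper subvariety of Haar measure zero;
\item the parameter-space flow projects onto the TDVP flow for all finite times;
\item the energy is continuous, so the limiting energies agree: $\mathcal{E}(\mathcal{O}^{[i]}_\mathcal{E}(x))=\mathcal{F}(\mathcal{O}_\mathcal{F}(\Psi^{[i]}(x)))$.
\end{itemize}
With these, the cumulative distribution depends only on the limit energies, and the identification of limit points as local minima in $\mathcal{A}$ versus $\mathcal{V}$ is no longer needed. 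If the trajectory-dependent limit in the differential inclusion \eqref{eq:tdvp_inclusion} causes trouble, I would fall back on the Tikhonov-regularized flow and take $\epsilon\to0$ at the end.
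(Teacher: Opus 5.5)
Your proposal is correct and follows essentially the same route as the paper. Both arguments write $\mathcal{E}=\mathcal{F}\circ\Psi_\mathrm{MPS}$, use the intertwining relation \eqref{eq:PsiOE=OFPsi} to express the basin-based measure as $\mu_{\mathbb{V}^{[i]}_\mathrm{MPS}}$ of a basin defined intrinsically in $\mathcal{V}_\mathrm{MPS}$ (with rank-deficient limits absorbed into $\mathcal{LM}^*_\mathcal{F}$), and then invoke Theorem~1. Your explicit saturation identity $\mathcal{O}_\mathcal{E}^{[i]\,-1}(L_E)=(\Psi^{[i]})^{-1}(\mathcal{O}_\mathcal{F}^{-1}(M_E))$, stated with sublevel sets, is if anything a slightly more careful rendering of the paper's Eqs.~\eqref{eq:mu2}--\eqref{eq:mu3}, which implicitly require the basin to be a union of full $\Psi_\mathrm{MPS}$-fibres.
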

\addtocounter{proposition}{-1}
\renewcommand{\theproposition}{S\arabic{proposition}}

\begin{proof}
We denote the MPS with orthogonality center at site $i$ as $\ket{\Psi_\mathrm{MPS}(A^{[i]})}$ and its parameter space as $\mathcal{A}_\mathrm{MPS}^{[i]}$. $D$ is the bond dimension and $d$ is the local Hilbert space dimension. $N_g$ is the number of local tensors in the MPS. Using the tensor network diagram, $\ket{\Psi_\mathrm{MPS}(A^{[i]})}$ is represented by
\begin{equation}
\begin{mytikz3}
\draw [line width=0.5]    (235,100) -- (210,100) ;
\draw [line width=0.5]    (235,60) -- (235,100) ;
\draw [line width=0.5]    (260,100) -- (235,100) ;
\draw  [fill={rgb, 255:red, 245; green, 240; blue, 235 }  ,fill opacity=1 ][line width=0.5]  (217.32,103.54) -- (238.54,82.32) -- (252.68,96.46) -- (231.46,117.68) -- cycle ;
\draw [line width=0.5]    (285,100) -- (260,100) ;
\draw [line width=0.5]    (285,60) -- (285,100) ;
\draw [line width=0.5]    (310,100) -- (285,100) ;
\draw  [fill={rgb, 255:red, 245; green, 240; blue, 235 }  ,fill opacity=1 ][line width=0.5]  (270,85) -- (300,85) -- (299.31,115) -- (269.31,115) -- cycle ;
\draw [line width=0.5]    (335,100) -- (310,100) ;
\draw [line width=0.5]    (335,60) -- (335,100) ;
\draw [line width=0.5]    (360,100) -- (335,100) ;
\draw  [fill={rgb, 255:red, 245; green, 240; blue, 235 }  ,fill opacity=1 ][line width=0.5]  (331.46,82.32) -- (352.68,103.54) -- (338.54,117.68) -- (317.32,96.46) -- cycle ;
\draw [line width=0.5]    (425,100) -- (400,100) ;
\draw [line width=0.5]    (425,60) -- (425,100) ;
\draw [line width=0.5]    (450,100) -- (425,100) ;
\draw  [fill={rgb, 255:red, 245; green, 240; blue, 235 }  ,fill opacity=1 ][line width=0.5]  (421.46,82.32) -- (442.68,103.54) -- (428.54,117.68) -- (407.32,96.46) -- cycle ;
\draw [line width=0.5]    (475,100) -- (450,100) ;
\draw [line width=0.5]    (475,60) -- (475,100) ;
\draw [line width=0.5]    (145,100) -- (120,100) ;
\draw [line width=0.5]    (145,60) -- (145,100) ;
\draw [line width=0.5]    (170,100) -- (145,100) ;
\draw  [fill={rgb, 255:red, 245; green, 240; blue, 235 }  ,fill opacity=1 ][line width=0.5]  (127.32,103.54) -- (148.54,82.32) -- (162.68,96.46) -- (141.46,117.68) -- cycle ;
\draw [line width=0.5]    (120,100) -- (95,100) ;
\draw [line width=0.5]    (95,60) -- (95,100) ;

\draw (190,99) node   [align=left] {$ ...$};
\draw (380,99) node   [align=left] {$ ...$};
\draw (286,132) node  [font=\footnotesize] [align=left] {$i$};
\draw (336,132) node  [font=\footnotesize] [align=left] {$i+1$};
\draw (426,132) node  [font=\footnotesize] [align=left] {$N_g$};
\draw (236,132) node  [font=\footnotesize] [align=left] {$i-1$};
\draw (146,132) node  [font=\footnotesize] [align=left] {$1$};
\end{mytikz3}\quad.
\end{equation}
By the definition of homogeneous canonical forms discussed above, we have
\begin{equation}
    \mathcal{A}_\mathrm{MPS}^{[i]} = \bigotimes_{k=1}^{i-1}\mathcal{U}_\mathrm{L}(D,d;D) \otimes \mathcal{U}_0(D,d,D) \otimes \bigotimes_{k=i+1}^{N_g} \mathcal{U}_\mathrm{R}(D;d,D),
\end{equation}
where $\mathcal{U}_\mathrm{L}(D,d;D)$ is the space of $3$-order tensors of shape $(D,d,D)$ under the left isometric constraints as in Eq.\,\eqref{eq:tn_isometric}, i.e., for any tensor $A\in \mathcal{U}_\mathrm{L}(D,d;D)$, it satisfies
\begin{equation}
    \sum_{mn} A^*_{mnp} A_{mnq}=\delta_{pq}.
\end{equation}
Similarly, $\mathcal{U}_\mathrm{R}(D,d;D)$ is the space of $3$-order tensors of shape $(D,d,D)$ under the right isometric constraints as in Eq.\,\eqref{eq:tn_isometric}, i.e., for any tensor $A\in \mathcal{U}_\mathrm{R}(D,d;D)$, it satisfies
\begin{equation}
    \sum_{pq} A_{mpq} A^*_{npq} =\delta_{mn}.
\end{equation}
Specifically, $\mathcal{U}_0(D,d,D)$ is the space of $3$-order tensors of shape $(D,d,D)$ under the normalization condition
\begin{equation}
    \sum_{mnp} A^*_{mnp} A_{mnp}=1.
\end{equation}
The dimension of the variational parameter space is
\begin{equation}
    \mathrm{dim}\mathcal{A}_\mathrm{MPS}^{[i]} = (N_g-1)(2dD^2-D^2) + (2dD^2-1) = \left[N_g(2d-1)+1\right]D^2-1.
\end{equation}
As discussed in Sec.\,\ref{sec:mps_manifold}, the codomain of the MPS map $\Psi_\mathrm{MPS}$, as a subset of the total Hilbert space $\mathcal{H}=\mathbb{C}^{d^N}$, is an algebraic variety $\mathcal{V}_\mathrm{MPS}$, whose singularities correspond to rank-deficient MPS. By restricting the domain to the full-rank MPS representations, denoted as $\mathcal{A}_\mathrm{MPS}^{\mathrm{f}[i]}$, the range of the MPS map $\Psi_\mathrm{MPS}$ becomes a complex manifold $\mathcal{V}_\mathrm{MPS}^{\mathrm{f}}$.

The MPS map $\Psi_\mathrm{MPS}$ (i.e., the contraction of local tensors) is not injective because inserting an arbitrary unitary matrix with its inverse between adjacent local tensors does not change the MPS. The redundancy in the MPS representation $A$ is described by the gauge group
\begin{equation}
    \mathcal{G}_{\text{MPS}} = \prod_{i=1}^{N_g-1}\mathcal{U}(D),
\end{equation}
where $\mathcal{U}(D)$ is the unitary group of dimension $D$. The action of the group element $\{G_{1},G_{2},\ldots,G_{N_g-1}\}$ on the MPS representation $A^{[i]}=\{A_1,A_2,\ldots,A_{N_g}\}$ is
\begin{equation}
    A_{1}\rightarrow A_{1}G_{1},~\ldots,~~A_k\rightarrow G_{k-1}^{-1} A_{k} G_{k},~\ldots,~~A_{N_g}\rightarrow G_{N_g-1}^{-1}A_{N_g}.
\end{equation}
The physical indices are omitted for simplicity. By taking the quotient space of the variational parameter space $\mathcal{A}_\mathrm{MPS}^{[i]}$ with respect to the gauge group $\mathcal{G}_\mathrm{MPS}$, we obtain a gauge orbit space $\mathcal{A}_\mathrm{MPS}^{[i]}/\mathcal{G}_\mathrm{MPS}$ with the gauge redundancy in the MPS representation removed.

Given that the action of $\mathcal{G}_\mathrm{MPS}$ on $\mathcal{A}_\mathrm{MPS}^{\mathrm{f}[i]}$ is a holomorphic, free and proper~\cite{Haegeman2014}, the orbit space $\mathcal{A}_\mathrm{MPS}^{\mathrm{f}[i]}/\mathcal{G}_\mathrm{MPS}$ is a complex manifold and the quotient map $\pi^{[i]}:\mathcal{A}_\mathrm{MPS}^{\mathrm{f}[i]}\rightarrow \mathcal{A}_\mathrm{MPS}^{\mathrm{f}[i]}/\mathcal{G}_\mathrm{MPS}$ is holomorphic. Then, since $\Psi_\mathrm{MPS}$ is holomorphic, the induced map $\Phi_\mathrm{MPS}^{[i]}:\mathcal{A}_\mathrm{MPS}^{\mathrm{f}[i]}/\mathcal{G}_\mathrm{MPS}\rightarrow \mathcal{V}_\mathrm{MPS}^{\mathrm{f}}$ from the composition $\Psi_\mathrm{MPS}=\Phi_\mathrm{MPS}^{[i]}\circ\pi^{[i]}$ is also holomorphic. Combined with the fact that $\Phi_\mathrm{MPS}^{[i]}$ is injective as $A^{[i]}$ can be uniquely obtained by performing a series of Schmidt decompositions of the coefficient vector of $\ket{\Psi_\mathrm{MPS}(A^{[i]})}$ up to the gauge transformation, the induced map $\Phi_\mathrm{MPS}^{[i]}$ is a biholomorphism, and the full-rank MPS set $\mathcal{V}_\mathrm{MPS}^{\mathrm{f}}$ is a complex manifold that is \textit{biholomorphic} to the orbit space $\mathcal{A}_\mathrm{MPS}^{\mathrm{f}[i]}/\mathcal{G}_\mathrm{MPS}$ with dimension
\begin{equation}
    \mathrm{dim}\mathcal{V}_\mathrm{MPS}^{\mathrm{f}} = \mathrm{dim}\mathcal{A}_\mathrm{MPS}^{\mathrm{f}[i]} - \mathrm{dim}\mathcal{G}_\mathrm{MPS} = 2\left[N_g(d-1)+1\right]D^2-1.
\end{equation}
Thus, the full-rank MPS representations form a principal fiber bundle with total manifold $\mathcal{A}_\mathrm{MPS}^{\mathrm{f}[i]}$, base manifold $\mathcal{V}_\mathrm{MPS}^{\mathrm{f}}$, bundle projection $\Psi_\mathrm{MPS}$, and structure group $\mathcal{G}_\mathrm{MPS}$.

The energy cost function of MPS representations
\begin{equation}
    \mathcal{E}:\mathcal{A}_\mathrm{MPS}^{[i]} \rightarrow \mathbb{R},\quad \mathcal{E}(A^{[i]}) = \braoprket{ \Psi_\mathrm{MPS}(A^{[i]}) }{H}{ \Psi_\mathrm{MPS}(A^{[i]}) },
\end{equation}
is the composition of the MPS map $\Psi_\mathrm{MPS}$ and the energy function of physical states
\begin{equation}
    \mathcal{F}:\mathcal{V}_\mathrm{MPS} \rightarrow \mathbb{R},\quad \mathcal{F}(\ket{\psi})=\braoprket{ \psi }{H}{ \psi },
\end{equation}
where $\psi$ is a normalized state vector restricted in $\mathcal{V}_\mathrm{MPS}$, i.e., 
\begin{equation}
    \mathcal{E}=\mathcal{F}\circ \Psi_\mathrm{MPS}.
\end{equation}
According to Lemma~\ref{lemma:map_local_min}, because $\Psi_\mathrm{MPS}$ is a submersion in $\mathcal{A}_\mathrm{MPS}^{\mathrm{f}[i]}$ and hence an open map in $\mathcal{A}_\mathrm{MPS}^{\mathrm{f}[i]}$, the image of a local minimum of $\mathcal{E}$ in $\mathcal{A}_\mathrm{MPS}^{\mathrm{f}[i]}$ under $\Psi_\mathrm{MPS}$ is also a local minimum of $\mathcal{F}$ in $\mathcal{V}_\mathrm{MPS}^{\mathrm{f}}$, with the same energy. On the other hand, because the induced map $\Phi_\mathrm{MPS}^{[i]}$ is a biholomorphism between the gauge orbit space $\mathcal{A}_\mathrm{MPS}^{\mathrm{f}[i]}/\mathcal{G}_\mathrm{MPS}$ and the MPS manifold $\mathcal{V}_\mathrm{MPS}^{\mathrm{f}}$, the preimage of a local minimum of $\mathcal{F}$, as a gauge orbit in $\mathcal{A}_\mathrm{MPS}^{\mathrm{f}[i]}$, is a continuum of local minima of $\mathcal{E}$, with the same energy. That is to say, there exists a local minimum correspondence between the energy function of MPS representations and the energy function of MPS in the full-rank stratum, which holds regardless of where the orthogonality center is located (cf. Fig.\,\textcolor{darkblue1}{2} in the main text). For rank-deficient points, this correspondence only holds in a single direction, i.e., from $\mathcal{F}$ to $\mathcal{E}$, because a neighborhood of a point in $\mathcal{A}_\mathrm{MPS}^{[i]}-\mathcal{A}_\mathrm{MPS}^{\mathrm{f}[i]}$ is only mapped to a strict subset of any neighborhood of its image in $\mathcal{V}_\mathrm{MPS}-\mathcal{V}_\mathrm{MPS}^{\mathrm{f}}$ by the map $\Psi_\mathrm{MPS}$. However, as stated below, this does not affect the equivalence of the local minimum distributions.



We denote the local minimum set of the energy function $\mathcal{E}$ on the MPS representation manifold $\mathcal{A}_\mathrm{MPS}^{[i]}$ as $\mathcal{LM}_\mathcal{E}(\mathcal{A}_\mathrm{MPS}^{[i]})$, the local minimum set of the energy function $\mathcal{F}$ on the MPS variety $\mathcal{V}_\mathrm{MPS}$ as $\mathcal{LM}_\mathcal{F}$, and the ``limit local minimum'' set of $\mathcal{F}$ on $\mathcal{V}_\mathrm{MPS}$ as $\mathcal{LM}_\mathcal{F}^*$. According to the discussion above, we have
\begin{equation}
    \mathcal{LM}_\mathcal{F} \subseteq \Psi_\mathrm{MPS} \left( \mathcal{LM}_\mathcal{E}(\mathcal{A}_\mathrm{MPS}^{[i]}) \right) = \mathcal{LM}_\mathcal{F}^*.
\end{equation}
The optimization map of the energy function $\mathcal{E}$ on the MPS representation manifold $\mathcal{A}_\mathrm{MPS}^{[i]}$ from the natural gradient flow is denoted as
\begin{equation}
    \mathcal{O}_\mathcal{E}(\,\cdot\,;\mathcal{A}_\mathrm{MPS}^{[i]}): \mathcal{A}_\mathrm{MPS}^{[i]} \rightarrow \mathcal{LM}_\mathcal{E}(\mathcal{A}_\mathrm{MPS}^{[i]}).
\end{equation}
The local minimum subset of $\mathcal{E}$ in $\mathcal{A}_\mathrm{MPS}^{[i]}$ at energy level $E$ is defined by
\begin{equation}
    \mathcal{B}_E = \left\{ A^{[i]} \in \mathcal{LM}_\mathcal{E}(\mathcal{A}_\mathrm{MPS}^{[i]}) \mid \mathcal{E}(A^{[i]})= E \right\}.
\end{equation}
The basin-based measure of the local minimum subset $\mathcal{B}_E$ of $\mathcal{E}$ can be represented as
\begin{equation}\label{eq:mu1}
    \mu_{\mathbb{LM}_\mathcal{E}}(\mathcal{B}_E;\mathbb{A}_\mathrm{MPS}^{[i]}) = \mu_{\mathbb{A}_\mathrm{MPS}^{[i]}} \left( \mathcal{O}_\mathcal{E}^{-1}(\mathcal{B}_E;\mathcal{A}_\mathrm{MPS}^{[i]}) \right),
\end{equation}
where $\mu_{\mathbb{A}_\mathrm{MPS}^{[i]}}$ is the measure of the MPS representation ensemble $\mathbb{A}_\mathrm{MPS}^{[i]}$. Because the MPS ensemble $\mathbb{V}_\mathrm{MPS}^{[i]}$ is equipped with the push-forward measure from the MPS representation ensemble $\mathbb{A}_\mathrm{MPS}^{[i]}$ under the map $\Psi_\mathrm{MPS}$, the volume of a measurable set in $\mathbb{V}_\mathrm{MPS}^{[i]}$ equals that of its preimage in $\mathbb{A}_\mathrm{MPS}^{[i]}$, and hence we have
\begin{equation}\label{eq:mu2}
    \mu_{\mathbb{A}_\mathrm{MPS}^{[i]}} \left( \mathcal{O}_\mathcal{E}^{-1}(\mathcal{B}_E;\mathcal{A}_\mathrm{MPS}^{[i]}) \right) = \mu_{\mathbb{V}_\mathrm{MPS}^{[i]}} \left( \Psi_\mathrm{MPS} \circ \mathcal{O}_\mathcal{E}^{-1}(\mathcal{B}_E;\mathcal{A}_\mathrm{MPS}^{[i]}) \right),
\end{equation}
According to the relation in Eq.\,\eqref{eq:PsiOE=OFPsi}, we have
\begin{equation}\label{eq:mu3}
    \mu_{\mathbb{V}_\mathrm{MPS}^{[i]}} \left( \Psi_\mathrm{MPS} \circ \mathcal{O}_\mathcal{E}^{-1}(\mathcal{B}_E;\mathcal{A}_\mathrm{MPS}^{[i]}) \right) = \mu_{\mathbb{V}_\mathrm{MPS}^{[i]}} \left( \mathcal{O}_\mathcal{F}^{-1} \circ \Psi_\mathrm{MPS} (\mathcal{B}_E) \right),
\end{equation}
where $\mathcal{O}_\mathcal{F}$ is the optimization map of the energy function $\mathcal{F}$ on the MPS variety $\mathcal{V}_\mathrm{MPS}$ from the TDVP evolution flow. Combining Eqs.\,\eqref{eq:mu1}, \eqref{eq:mu2}, and \eqref{eq:mu3}, the basin-based measure of the local minimum subset $\mathcal{B}_E$ equals
\begin{equation}\label{eq:mu4}
    \mu_{\mathbb{LM}_\mathcal{E}}(\mathcal{B}_E;\mathbb{A}_\mathrm{MPS}^{[i]}) = \mu_{\mathbb{V}_\mathrm{MPS}^{[i]}} \left( \mathcal{O}_\mathcal{F}^{-1} \circ \Psi_\mathrm{MPS} (\mathcal{B}_E) \right),
\end{equation}
where the optimization map $\mathcal{O}_\mathcal{F}$ and the state subset
\begin{equation}
     \Psi_\mathrm{MPS} (\mathcal{B}_E) =  \left\{ \psi \in \mathcal{LM}_\mathcal{F}^* \mid \mathcal{F}(\ket{\psi})= E \right\},
\end{equation}
are intrinsically defined on $\mathcal{V}_\mathrm{MPS}$, independent of MPS parametrization. Importantly, according to Theorem~\textcolor{darkblue1}{1}, the MPS ensembles with different orthogonality centers are identical
\begin{equation}\label{eq:PFi=PFj}
    \mathbb{V}_\mathrm{MPS}^{[i]} = \mathbb{V}_\mathrm{MPS}^{[j]}.
\end{equation}
Thus, the right hand side of Eq.\,\eqref{eq:mu4} as a whole is independent of the position of the orthogonality center, and hence so does the left hand side, i.e., the measure of the local minimum subset of $\mathcal{E}$ in $\mathcal{A}_\mathrm{MPS}^{[i]}$ at energy level $E$. Therefore, the local minimum distributions of $\mathcal{E}$ are identical for MPS representation ensembles with different orthogonality centers, i.e.,
\begin{equation}
    \mathcal{P}_\mathcal{E}(E;\mathbb{A}_\mathrm{MPS}^{[i]}) = \mathcal{P}_\mathcal{E}(E;\mathbb{A}_\mathrm{MPS}^{[j]}).
\end{equation}
The proof is completed.
\end{proof}

For MPS in the form of sequential circuits where the parameters are Pauli angles, i.e., the tensor elements are parametrized by Pauli angles, Theorem~\textcolor{darkblue1}{2} also holds true, with the corresponding proof almost unchanged except that the gauge group is enlarged by the redundancy in the multiplication between the gates and initial product states. This is because the Pauli-angle parametrization is a submersion up to a finite number of pole-like coordinate singularities, so that the local minimum correspondence still in general exists between the parameter spaces with different orthogonality centers. These coordinate singularities can be reasonably omitted when considering the local minimum distribution because they are not intrinsic to the MPS variety like those rank-deficient points. Their positions in the MPS variety depend on the specific parametrization. If a coordinate singularity happens to be a local minimum whose image in the MPS variety is not a local minimum, it can be just removed by a slightly different Pauli-angle parametrization of the corresponding unitary, e.g., multiplying a constant unitary close to the identity.

We remark that the theorem does not imply that the energy landscapes of MPS with different orthogonality centers are the same. Instead, it just implies that the statistical properties of the energy levels of the critical points in the landscapes are the same. We also remark that though the measure we used here requires manifold optimization, our numerical experiments show that our overall conclusion is still correct for various other optimization algorithms, such as the standard gradient descent without considering quantum geometry.

\subsection{Good local minima in MPS energy landscapes}
Based on Theorem~\textcolor{darkblue1}{2}, we can derive some useful conclusions about the trainability of MPS on different Hamiltonians. We denote the Hamiltonian as
\begin{equation}
    H = \sum_j H_j,
\end{equation}
where $H_j$ represents a Hermitian basis operator such as a Pauli string times a real coefficient. In the following, we consider several cases in sequence from easy to difficult.

In the simplest case, suppose that the Hamiltonian $H$ only contains one term $H_1$ which is a single-site operator, such as a Pauli-$Z$ operator. What is the local minimum distribution of the corresponding energy landscape? According to Theorem~\textcolor{darkblue1}{2}, the local minimum distribution is invariant under moves of the orthogonality center, so one can freely move the orthogonality center to investigate the local minimum distribution. A helpful choice is to move the orthogonality center exactly to the support site of the operator, denoted as site $i$, e.g., from the circuit
\begin{equation}
\begin{mytikz4}
\draw [line width=0.5]    (350,220.01) -- (350,30) ;
\draw [line width=0.5]    (310,220) -- (310,30) ;
\draw [line width=0.5]    (150,220.01) -- (150,30) ;
\draw [line width=0.5]    (230,220.03) -- (230,30) ;
\draw [line width=0.5]    (190,220.03) -- (190,30.03) ;
\draw [line width=0.5]    (390,220.03) -- (390,30.03) ;
\draw [line width=0.5]    (270,220.03) -- (270,30) ;
\draw  [fill={rgb, 255:red, 245; green, 240; blue, 235 }  ,fill opacity=1 ][line width=0.5]  (130,160.03) -- (210,160.03) -- (210,180.03) -- (130,180.03) -- cycle ;
\draw  [fill={rgb, 255:red, 255; green, 234; blue, 195 }  ,fill opacity=1 ][line width=0.5]  (170.3,190.03) -- (290.3,190.03) -- (290.3,210.03) -- (170.3,210.03) -- cycle ;
\draw  [fill={rgb, 255:red, 255; green, 234; blue, 195 }  ,fill opacity=1 ][line width=0.5]  (250,160.03) -- (330,160.03) -- (330,180.03) -- (250,180.03) -- cycle ;
\draw  [fill={rgb, 255:red, 255; green, 234; blue, 195 }  ,fill opacity=1 ][line width=0.5]  (330,100.03) -- (410,100.03) -- (410,120.03) -- (330,120.03) -- cycle ;
\draw [line width=0.5]    (110,220) -- (110,30) ;
\draw  [fill={rgb, 255:red, 245; green, 240; blue, 235 }  ,fill opacity=1 ][line width=0.5]  (90,130.03) -- (170,130.03) -- (170,150.03) -- (90,150.03) -- cycle ;
\draw  [fill={rgb, 255:red, 255; green, 234; blue, 195 }  ,fill opacity=1 ][line width=0.5]  (290,130.03) -- (370,130.03) -- (370,150.03) -- (290,150.03) -- cycle ;
\draw [line width=0.5]    (430,220.03) -- (430,30.03) ;
\draw  [fill={rgb, 255:red, 255; green, 234; blue, 195 }  ,fill opacity=1 ][line width=0.5]  (370,70.03) -- (450,70.03) -- (450,90.03) -- (370,90.03) -- cycle ;
\draw [line width=0.5]    (470,220.03) -- (470,30.03) ;
\draw  [fill={rgb, 255:red, 245; green, 240; blue, 235 }  ,fill opacity=1 ][line width=0.5]  (410,40.03) -- (490,40.03) -- (490,60.03) -- (410,60.03) -- cycle ;
\draw  [fill={rgb, 255:red, 245; green, 166; blue, 35 }  ,fill opacity=1 ][line width=0.5]  (382,14) -- (398,14) -- (398,30) -- (382,30) -- cycle ;

\end{mytikz4}\quad,
\end{equation}
to the circuit
\begin{equation}
\begin{mytikz4}
\draw [line width=0.5]    (230.02,249.96) -- (230.02,30) ;
\draw [line width=0.5]    (270.02,249.94) -- (270.02,30) ;
\draw [line width=0.5]    (430.04,249.96) -- (430.04,30) ;
\draw [line width=0.5]    (350.03,249.98) -- (350.03,30) ;
\draw [line width=0.5]    (390.03,249.98) -- (390.03,30.03) ;
\draw [line width=0.5]    (190.01,249.98) -- (190.01,30.03) ;
\draw [line width=0.5]    (310.02,249.98) -- (310.02,30) ;
\draw  [fill={rgb, 255:red, 245; green, 240; blue, 235 }  ,fill opacity=1 ][line width=0.5]  (370.04,190) -- (290.03,190) -- (290.03,210) -- (370.04,210) -- cycle ;
\draw  [fill={rgb, 255:red, 255; green, 234; blue, 195 }  ,fill opacity=1 ][line width=0.5]  (450,220) -- (329.99,220) -- (329.99,240) -- (450,240) -- cycle ;
\draw  [fill={rgb, 255:red, 245; green, 240; blue, 235 }  ,fill opacity=1 ][line width=0.5]  (330.03,160.03) -- (250.02,160.03) -- (250.02,180.03) -- (330.03,180.03) -- cycle ;
\draw  [fill={rgb, 255:red, 245; green, 240; blue, 235 }  ,fill opacity=1 ][line width=0.5]  (250.02,100.03) -- (170.01,100.03) -- (170.01,120.03) -- (250.02,120.03) -- cycle ;
\draw [line width=0.5]    (470.04,249.94) -- (470.04,30) ;
\draw  [fill={rgb, 255:red, 245; green, 240; blue, 235 }  ,fill opacity=1 ][line width=0.5]  (490,190.03) -- (409.99,190.03) -- (409.99,210.03) -- (490,210.03) -- cycle ;
\draw  [fill={rgb, 255:red, 245; green, 240; blue, 235 }  ,fill opacity=1 ][line width=0.5]  (290.02,130.03) -- (210.01,130.03) -- (210.01,150.03) -- (290.02,150.03) -- cycle ;
\draw [line width=0.5]    (150.01,249.99) -- (150.01,30.04) ;
\draw  [fill={rgb, 255:red, 245; green, 240; blue, 235 }  ,fill opacity=1 ][line width=0.5]  (210.01,70.03) -- (130,70.03) -- (130,90.03) -- (210.01,90.03) -- cycle ;
\draw [line width=0.5]    (110,249.99) -- (110,30.04) ;
\draw  [fill={rgb, 255:red, 245; green, 240; blue, 235 }  ,fill opacity=1 ][line width=0.5]  (170.01,40.03) -- (90,40.03) -- (90,60.03) -- (170.01,60.03) -- cycle ;
\draw  [fill={rgb, 255:red, 245; green, 166; blue, 35 }  ,fill opacity=1 ][line width=0.5]  (382,14) -- (398,14) -- (398,30) -- (382,30) -- cycle ;
\end{mytikz4}\quad,
\end{equation}
where the orange square marks the position of the observable $H_1$, and the light yellow rectangles indicates the blocks within the backward causal cone of $H_1$ Thus, all other unitaries outside the backward causal cone of $H_1$ in the sequential circuit are canceled out with their conjugates except for the one at the orthogonality center, and hence the problem is simplified to the optimization of a linear function over the Bloch ball of the qudit at site $i$, with all other parameters being redundant. This is essentially a convex optimization problem where all the local minima are global minima. Therefore, we know that even if we do not explicitly move the orthogonality center, the local minimum distribution keeps the same, i.e., all the local minima concentrate exactly at the ground state energy. Roughly speaking, this can be understood by the following picture: the original optimization problem is approximately simplified to an explicit convex problem by reparametrization (moving the orthogonality center). We point out that from the perspective of Hamiltonian itself, this problem is extremely simple because the single local term $H_1$ is easy to diagonalize; however, from the perspective of circuit optimization, this problem is still non-trivial, because, as a counterexample, brickwork circuits of moderate depth are expected to face challenges in efficiently finding this simple ground state due to the extensive local minima in their landscapes.

More generally, suppose the support of the single term is larger than one site but still local, e.g., $s$ contiguous sites. If we still move the orthogonality center into the support, only the unitaries acting on the support survive, and hence the reduced state on the small support is locally overparametrized in the sense that the number of independent parameters are more than the Hilbert space dimension of the subsystem within the support~\cite{Anschuetz2022}, given the MPS bond dimension is moderately large. For a rough estimation, the number of independent parameters within the causal cone is
\begin{equation}
    (2dD^2-1) + (s-1)(2dD^2-D^2) - (s-1)D^2 - 2D^2 = 2s(d-1)D^2 - 1.
\end{equation}
The Hilbert space dimension of the subsystem is $d^s$, with the number of real degrees of freedom $(2d^s-1)$. Thus, if the bond dimension is larger than the critical value
\begin{equation}\label{eq:D_c}
    D_c = \sqrt{\frac{d^s}{s(d-1)}},
\end{equation}
the reduced state is overparametrized and the local minimum distribution is expected to concentrate near the global minimum value. It is also worth noticing that for a global term with $s\in\mathcal{O}(N)$, according to Eq.\,\eqref{eq:D_c}, the bond dimension should scale exponentially with the system size to realize overparametrization and good trainability, which is consistent with previous results~\cite{Cerezo2021, Larocca2023}.

In the generic case, the Hamiltonian consists of a linear or polynomial number of local terms. The total energy function is a sum of the elementary energy functions of these local terms
\begin{equation}
    \mathcal{E}(A) = \sum_j \mathcal{E}_j(A) = \sum_j \braoprket{\Psi_\mathrm{MPS}(A)}{H_j}{\Psi_\mathrm{MPS}(A)}.
\end{equation}
The question is what the local minimum distribution of the total energy function is, given these elementary energy functions have local minimum distributions that concentrate to their own global minimum values. To inherit the benign landscape properties of the summed functions, we are supposed to impose additional conditions on the relation among the elementary energy functions. A sufficient condition is the ``common local minimum condition'': an arbitrary local minimum $A^*$ of $\mathcal{E}$ is also a local minimum of each $\mathcal{E}_j$. Then, if the local minima of $\mathcal{E}$ are close to its global minimum value within a small error $\epsilon_j$, the energy of each local minimum $A^*$ of $\mathcal{E}$ satisfies
\begin{equation}\label{eq:energy_epsilon}
    \mathcal{E}(A^*) = \sum_j \mathcal{E}_j(A^*) \leq \sum_j \left(\min_{A\in\mathcal{A}_\mathrm{MPS}} \mathcal{E}_j(A) + \epsilon_j\right) \leq \min_{A\in\mathcal{A}_\mathrm{MPS}} \mathcal{E}(A) + \sum_j\epsilon_j.
\end{equation}
That is to say, the local minima of $\mathcal{E}$ also concentrate near its global minimum value up to the additive error $\sum_j\epsilon_j$. Below, we analyze the requisite constraints imposed by the common local minimum condition on the Hamiltonian and ansatz.
\begin{itemize}
    \item Firstly, if $\epsilon_j=0$ for any $j$, the Hamiltonian must be frustration-free, because Eq.\,\eqref{eq:energy_epsilon} directly implies that the ground state of the full Hamiltonian is also the ground states of each individual term. 
    \item Secondly, if $\epsilon_j=0$ for any $j$, the ansatz must be expressive enough to represent the ground state, because Eq.\,\eqref{eq:energy_epsilon} implies the global minimum is actually the true ground state.
    \item Thirdly, if we require an arbitrary critical point of $\mathcal{E}$ is also a critical point of each $\mathcal{E}_j$, a sufficient condition is the following ``compatible gradient condition'': there exist a positive constant $c$ such that
    \begin{equation}\label{eq:compatible_gradient}
        \left\| \mathrm{grad}\,\mathcal{E}(A) \right\|_\mathcal{I}^2 \geq c \sum_j\left\| \mathrm{grad}\,\mathcal{E}_j(A) \right\|_\mathcal{I}^2,
    \end{equation}
    holds for any point in the parameter space, where the norm is taken with respect to the quantum Fisher information metric $\mathcal{I}$. Under the compatible gradient condition, the total gradient is zero only if all the gradients of the subterms are zero, i.e.,
    \begin{equation}
        \mathrm{grad}\,\mathcal{E}(A) = 0 \quad \Rightarrow \quad \mathrm{grad}\,\mathcal{E}_j(A) = 0~~\forall j.
    \end{equation}
    By substituting $\mathcal{E}(A)=\sum_j\mathcal{E}_j(A)$, Eq.\,\eqref{eq:compatible_gradient} can be rewritten as
    \begin{equation}\label{eq:compatible_gradient_cross}
        \sum_{j \neq k}\left\langle \mathrm{grad}\,\mathcal{E}_j(A), \mathrm{grad}\,\mathcal{E}_k(A) \right\rangle_\mathcal{I} \geq (c-1)\sum_j \left\| \mathrm{grad}\,\mathcal{E}_j(A) \right\|_\mathcal{I}^2,
    \end{equation}
    where the inner product is defined by
    \begin{equation}
        \left\langle \mathrm{grad}\,\mathcal{E}_j(A), \mathrm{grad}\,\mathcal{E}_k(A) \right\rangle = \left( \mathrm{grad}\,\mathcal{E}_j(A) \right)^\dagger \mathcal{I}(A) \mathrm{grad}\,\mathcal{E}_k(A).
    \end{equation}
    Eq.\,\eqref{eq:compatible_gradient_cross} means that the gradients of different subterms should not conflict with each other too strongly to be canceled out. Transforming to the Hilbert space, the inner product becomes
    \begin{equation}
        \left\langle \mathrm{grad}\,\mathcal{E}_j(A), \mathrm{grad}\,\mathcal{E}_k(A) \right\rangle_\mathcal{I} = \mathrm{Re}\langle (H_j-\avg{H_j}) P_\mathcal{T} (H_k-\avg{H_k}) \rangle,
    \end{equation}
    where $\langle\cdot\rangle$ represents the expectation with respect to $\ket{\Psi(A)}$ and $P_\mathcal{T}$ is the corresponding tangent space projector. Namely, Eq.\,\eqref{eq:compatible_gradient_cross} can be understood as there being no or only weak ``frustration'' among the imaginary-time evolution directions governed by these subterms within the tangent space. We examine Eq.\,\eqref{eq:compatible_gradient} statistically by numerical simulations on several physical Hamiltonians in Sec.\,\ref{sec:numerical_compatible_gradient}.
\end{itemize}

Therefore, we conclude that, given the individual component energy functions are free from poor local minima, ensuring the same for the total energy function imposes certain requirements on both the Hamiltonian and the ansatz. For instance, the Hamiltonian must exhibit relatively weak frustration, and the ansatz must possess sufficient expressibility to represent the ground states. Furthermore, we note that since finding the ground state of a 1D local Hamiltonian is $\mathsf{QMA}$-complete~\cite{Aharonov2009}, it is natural to expect that MPS are not universally free from the poor local minimum problem across arbitrary Hamiltonians. However, we point out that the trainability of MPS is, to a certain extent, even better than the above theoretical predictions, as demonstrated by our numerical experiments with random Hamiltonians.




\section{Additional numerical results and details}

In this section, we elaborate on the details of the numerical experiments in the main text and provide some additional numerical results to support our conclusions further. Throughout this paper, the numerical experiments are based on the open-source library TensorCircuit-NG~\cite{Zhang2022_z, Zhang2026} for circuit simulation with variational optimization.

\subsection{Training performance of different circuit architectures}

In this section, we introduce the technical details of the numerical experiments on the practical training performance and local minimum distributions of different circuit architectures.

\begin{figure}
    \centering
    \includegraphics[width=0.75\linewidth]{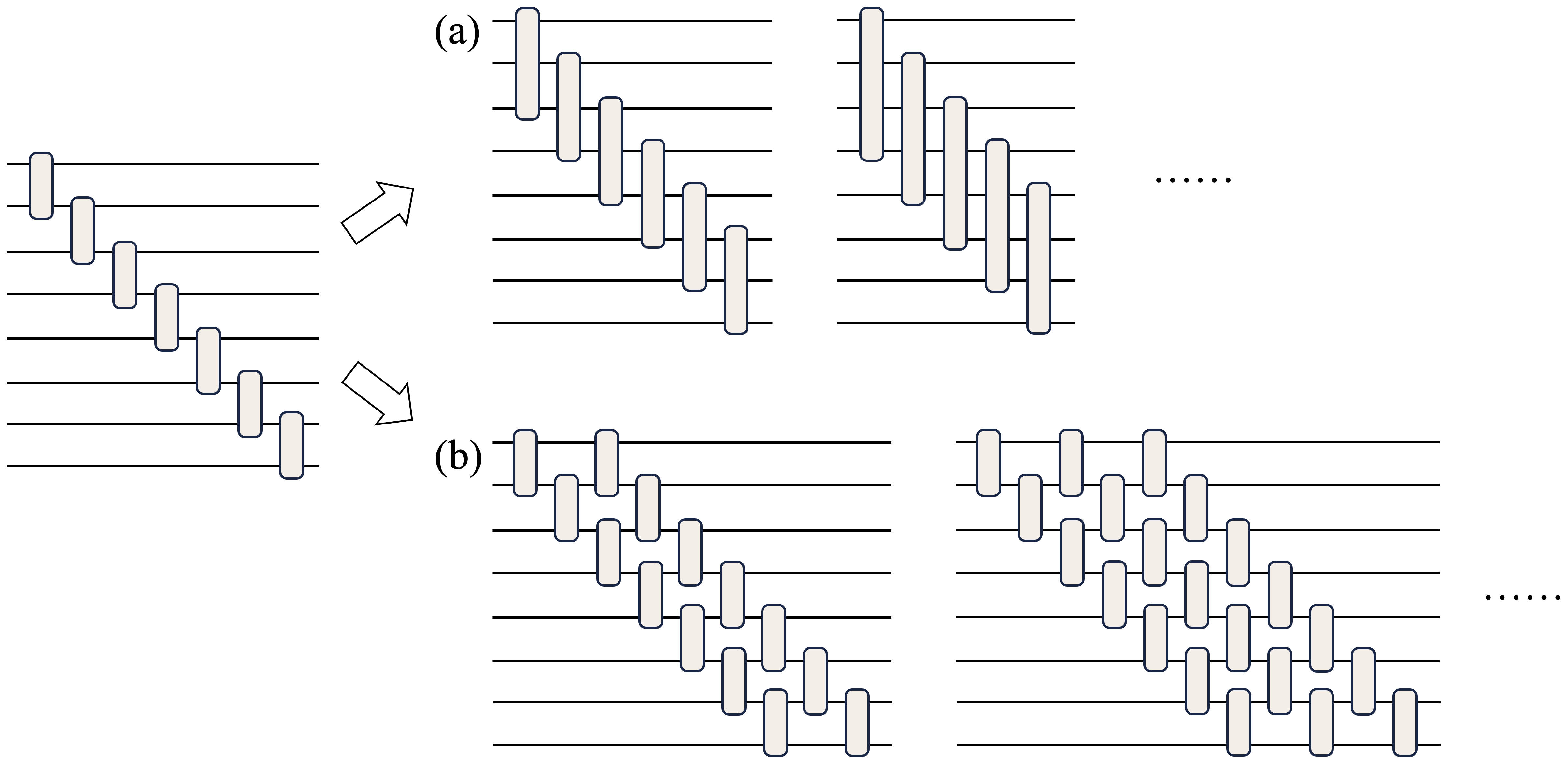}
    \caption{Sequential circuits and sloping brickwork circuits. To increase the expressibility of a minimal 1-dimensional sequential circuit of block size $\beta=2$ and layer depth $L=1$, one can either (a) increase the block size $\beta$ to obtain sequential circuits with larger bond dimension or (b) increase the layer depth $L$ to obtain the so-called sloping brickwork circuits, where the local structure is similar to that of the standard brickwork circuits but the blocks are arranged in a sloping manner globally. Note that each block in the diagram is universal on its support, i.e., it can represent any unitary matrix in the group $\mathcal{SU}(d^\beta)$, where $d$ is the local Hilbert space dimension and $\beta$ is the block size.}
    \label{fig:block_size_vs_depth}
\end{figure}

As mentioned in the main text, we mainly compare three circuit architectures: sequential circuits, brickwork circuits, and sloping brickwork circuits. In particular, the difference between sequential circuits and sloping brickwork circuits is depicted in Fig.\,\ref{fig:block_size_vs_depth}. We refer to a sequential circuit as a circuit where the blocks are arranged in a single-layer staircase pattern, distinguished from the more general class of finite local-depth circuits~\cite{Zhang2024}. Each block in these circuits is a universal unitary on its support. In our experiments, we build a universal unitary by use of the Cartan decomposition and quantum Shannon decomposition~\cite{Drury2008}.

To investigate the typical trainability of the circuits, we design a protocol to generate random Hamiltonians, termed random backward-evolved Hamiltonians, which is defined by
\begin{equation}
    H_{\text{rand}} = \mathbf{V}^\dagger H_{Z} \mathbf{V},
\end{equation}
where $H_{Z}=-\sum_j Z_j$ and $Z_j$ is the Pauli $Z$ operator at site $j$. $\mathbf{V}$ is a random circuit whose architecture is obtained by reversing the ansatz circuit $\mathbf{U}$ in the time direction, i.e., $\mathbf{V}=\mathbf{U}^\dagger$. The parameters in $V$ are independent of those in the ansatz circuit $\mathbf{U}$. This protocol ensures that the ground state of $H_{\text{rand}}$, i.e., $\mathbf{V}^\dagger \ket{\boldsymbol{0}}$, can always be captured by the ansatz. Namely, the expressibility of the ansatz is always sufficient to represent the ground state, and hence the performance (the accuracy to the ground state energy) of optimization is solely determined by the trainability. It is worth mentioning that if we directly use the fidelity between the variational state and a random state in the ansatz space as the cost function, we will encounter the barren plateau phenomenon as the fidelity is a so-called global cost function. The Hamiltonian defined above can be viewed as the local cost function version of the fidelity function~\cite{Cerezo2021}.

In our numerical experiments, the parameters are initialized by uniformly sampling each Pauli angle from $[0, 2\pi)$, and the optimizer is chosen as the Adam optimizer. We point out that although this initialization is not a strict Haar distribution over respective local unitary blocks, it is approximately equal, especially when the block size $\beta$ (the support size of the block) increases. We also point out that although the Adam optimizer used here is not a manifold-based optimizer like the Riemannian gradient descent, the resulting local minimum distribution is not supposed to deviate significantly from that defined in our theorems. This is because, to a large extent, different optimizers mainly affect the convergence speed and have a relatively small influence on the final converged result in the long-time limit, especially considering the fact that we are now focusing on the statistics of the convergence result starting from random initialization, rather than each individual case. We also remark that the $50$ samples at each system size share the same Hamiltonian, i.e., the random unitary $\mathbf{V}$ is sampled at first and then fixed for each system size.

The relative error plotted in Fig.\,\textcolor{darkblue1}{3} of the main text is defined by
\begin{equation}
    \varepsilon=\frac{E-E_0}{E_0},
\end{equation}
where $E$ is the energy of the variational state and $E_0$ is the exact ground state energy. The numerical experiments shown in Fig.\,\textcolor{darkblue1}{3} are conducted with learning rate $\eta=0.01$. Fig.\,\ref{fig:sequential_vs_sloping_lr0.001} shows the results with a smaller learning rate $\eta=0.001$, which is consistent with the data shown in the main text. Fig.\,\ref{fig:sequential_vs_sloping_beta4} further shows the results of sequential circuits with a larger block size $\beta=4$ and sloping brickwork circuits with a larger layer depth $L=16$, where again the optimization of the sequential circuits reaches the global minimum, while that of the sloping brickwork circuits is trapped by poor local minima as the system size increases.

It is worth noting that in the numerical experiments, the performance of sequential circuits is better than we theoretically expected in the sense that the random backward-evolved Hamiltonians may contain non-local terms, though with relatively small weights. This further indicates that even if the Hamiltonian contains certain non-local interactions that decay with distance, the energy landscapes of sequential circuits may remain largely free from poor local minima.




\begin{figure}
    \centering
    \includegraphics[width=0.8\linewidth]{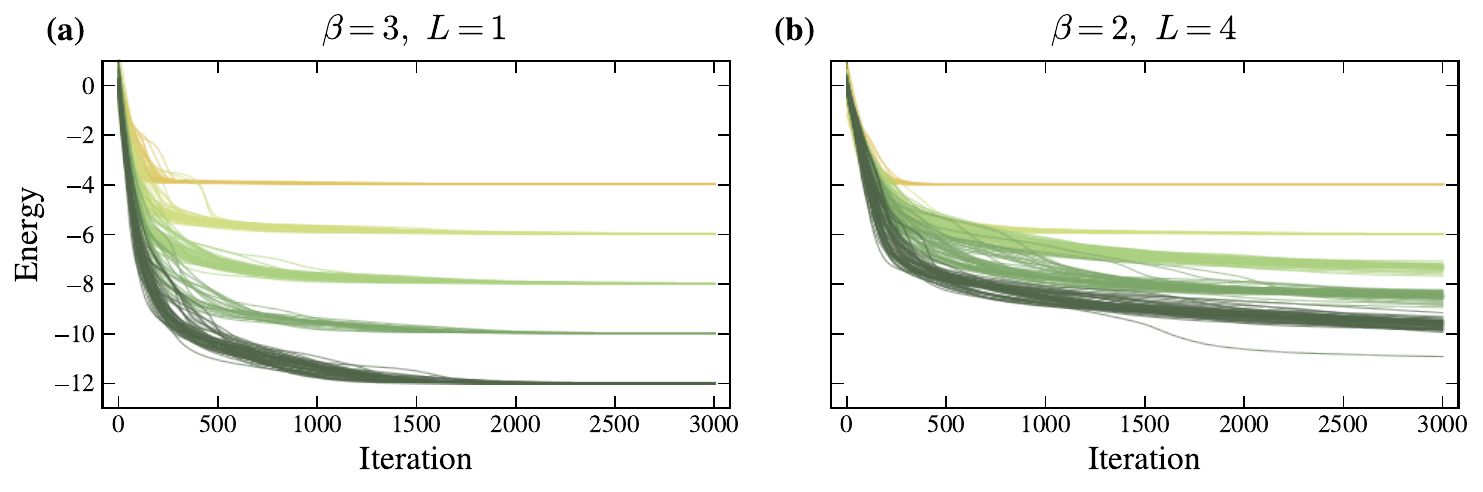}
    \caption{The numerical training curves of (a) sequential circuits of $\beta = 3, L = 1$, (b) sloping brickwork circuits of $\beta = 2, L = 4$. The block size $\beta$ and the number of layers $L$ are selected so that the number of parameters in different circuits are comparable. The loss function is the energy expectation of random backward-evolved Hamiltonians. The darkness of the color marks the system size from $4$ to $12$ qubits. The number of samples is $50$ for each system size. The learning rate is set to $0.001$.}
    \label{fig:sequential_vs_sloping_lr0.001}
\end{figure}

\begin{figure}
    \centering
    \includegraphics[width=0.8\linewidth]{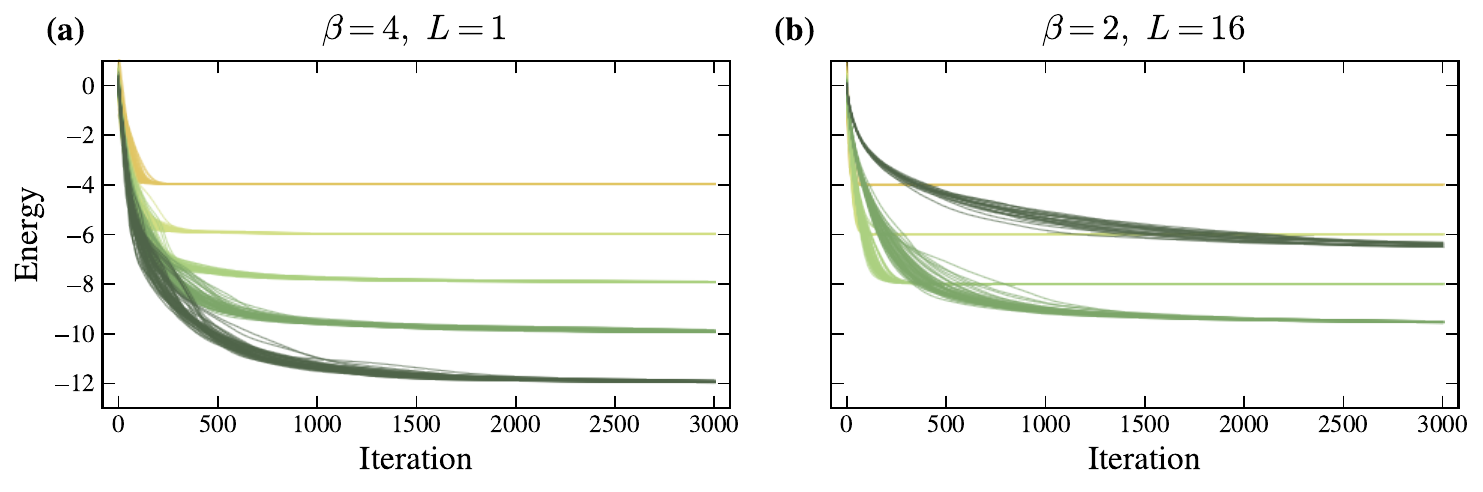}
    \caption{The numerical training curves of (a) sequential circuits of $\beta = 4, L = 1$, (b) sloping brickwork circuits of $\beta = 2, L = 16$. The block size $\beta$ and the number of layers $L$ are selected so that the number of parameters in different circuits are comparable. The loss function is the energy expectation of random backward-evolved Hamiltonians. The darkness of the color marks the system size from $4$ to $12$ qubits. The number of samples is $50$ for each system size. The learning rate is set to $0.001$.}
    \label{fig:sequential_vs_sloping_beta4}
\end{figure}

\subsection{Numerical test of the compatible gradient condition}\label{sec:numerical_compatible_gradient}

In addition to the direct optimization performance, we numerically examine the compatible gradient condition discussed above. For a Hamiltonian decomposition $H=\sum_j H_j$, we denote
\begin{equation}
    \mathcal{E}_j(A)=\braoprket{\Psi_\mathrm{MPS}(A)}{H_j}{\Psi_\mathrm{MPS}(A)},\quad
    g_j(A)=\mathrm{grad}\,\mathcal{E}_j(A),
\end{equation}
where the gradient is the natural gradient with respect to the quantum Fisher information metric $\mathcal{I}$. We quantify the aggregate cancellation among different local gradients by the ratio
\begin{equation}\label{eq:gradient_compatibility_ratio}
    R(A)=
    \frac{\left\|\sum_j g_j(A)\right\|_{\mathcal{I}}^2}
    {\sum_j\left\|g_j(A)\right\|_{\mathcal{I}}^2}.
\end{equation}
More explicitly, for two tangent vectors $u$ and $v$ in the tangent space at parameter point $A$, we use the convention
\begin{equation}
    \left\langle u,v \right\rangle_{\mathcal{I}} = u^\dagger\mathcal{I}(A)v,\quad
    \left\|u\right\|_{\mathcal{I}}^2=\left\langle u,u \right\rangle_{\mathcal{I}},
\end{equation}
and the natural gradient $g_j(A)$ is obtained from $g_j(A)=\mathcal{I}^{+}(A)\nabla\mathcal{E}_j(A)$. Thus, the numerator of Eq.\,\eqref{eq:gradient_compatibility_ratio} is
\begin{equation}
    \left\|\sum_j g_j(A)\right\|_{\mathcal{I}}^2=\left\langle \sum_j g_j(A),\sum_k g_k(A)\right\rangle_{\mathcal{I}}=\sum_j\left\|g_j(A)\right\|_{\mathcal{I}}^2+2\sum_{j<k}\left\langle g_j(A),g_k(A)\right\rangle_{\mathcal{I}}.
\end{equation}
Equivalently,
\begin{equation}
    R(A)=1+
    \frac{2\sum_{j<k}\left\langle g_j(A),g_k(A)\right\rangle_{\mathcal{I}}}
    {\sum_j\left\|g_j(A)\right\|_{\mathcal{I}}^2}.
\end{equation}
Here $R(A)\approx 1$ means that the cross terms among different gradients cancel only weakly in aggregate, while $R(A)\approx 0$ would indicate a nearly complete cancellation of the total gradient. Thus, a pointwise lower bound $R(A)\geq \alpha>0$ is the numerical form of the compatible gradient condition.

We test this ratio for three open-chain spin Hamiltonian settings. The first two belong to the XXZ family
\begin{equation}
    H_\mathrm{XXZ}=\sum_{i=1}^{N-1}\left(X_iX_{i+1}+Y_iY_{i+1}+\Delta Z_iZ_{i+1}\right),
\end{equation}
where each nearest-neighbor bond is treated as one term. We take $\Delta=0.5$ in the first row of Fig.\,\ref{fig:gradient_compatibility_ratio} and $\Delta=1$ in the second row, the latter corresponding to the isotropic Heisenberg Hamiltonian $H_\mathrm{Heis}=H_\mathrm{XXZ}|_{\Delta=1}$. The third setting is the nearest-neighbor ferromagnetic and next-nearest-neighbor antiferromagnetic Heisenberg Hamiltonian
\begin{equation}
    H_{J_1\text{-}J_2}=J_1\sum_{i=1}^{N-1}\left(X_iX_{i+1}+Y_iY_{i+1}+Z_iZ_{i+1}\right)+J_2\sum_{i=1}^{N-2}\left(X_iX_{i+2}+Y_iY_{i+2}+Z_iZ_{i+2}\right),
\end{equation}
with $J_1=-1$ and $J_2=0.5$, where each nearest-neighbor and next-nearest-neighbor bond is treated as one term.

The numerical results are shown in Fig.\,\ref{fig:gradient_compatibility_ratio}. In this test, the $\beta=2$ column uses the full parametrization of two-qubit unitary gate, while the $\beta=3$ and $\beta=4$ columns use Haar-random blocks of size $\beta$, i.e., independent Haar-random $\mathcal{U}(8)$ and $\mathcal{U}(16)$ gates, respectively. For all sampled parameter points, the ratio $R(A)$ is positive and stays clearly away from zero, which means that the sampled sequential circuits satisfy the compatible gradient condition pointwise. Moreover, as the system size increases, the distributions become more concentrated around $R=1$, indicating that aggregate gradient cancellation becomes less significant for larger systems in these examples.

\begin{figure}
    \centering
    \includegraphics[width=0.99\linewidth]{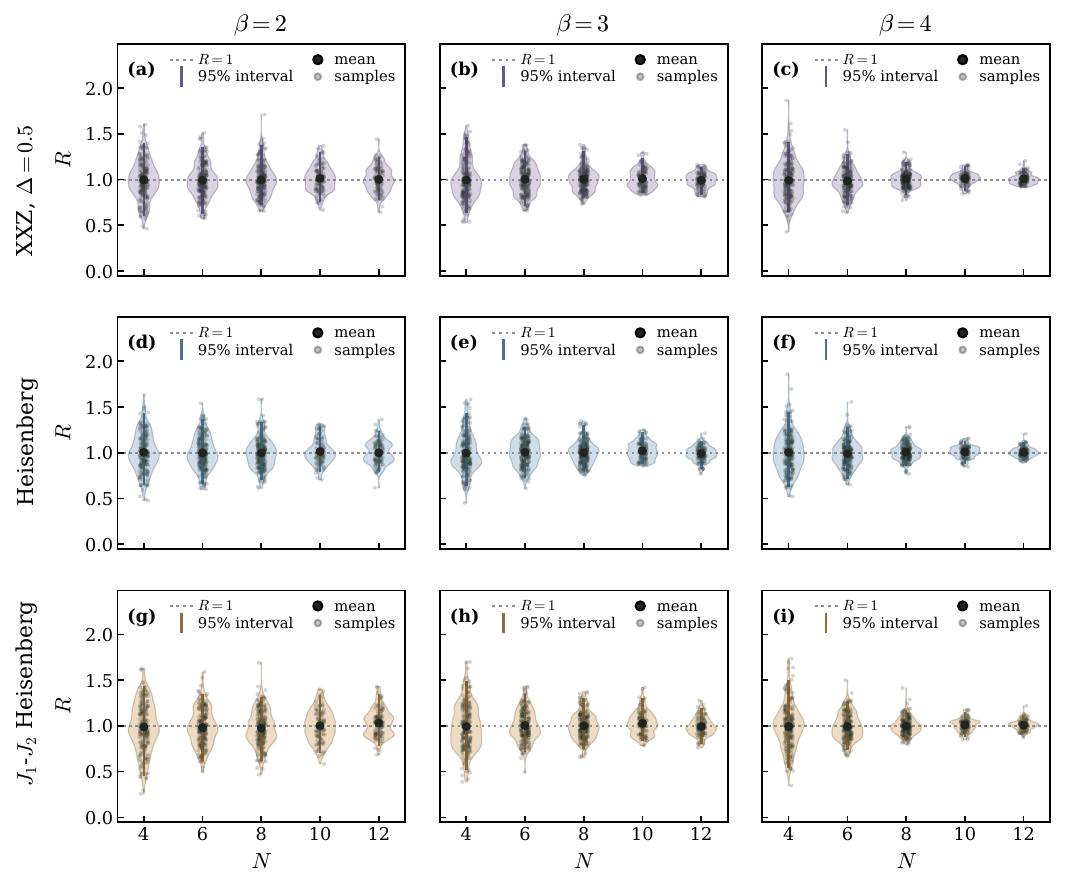}
    \caption{Numerical test of the compatible gradient condition for sequential circuits. The plotted quantity is the ratio $R(A)$ defined in Eq.\,\eqref{eq:gradient_compatibility_ratio}. The three columns correspond to the block sizes $\beta=2,3,4$, respectively; the $\beta=3$ and $\beta=4$ columns use Haar-random $\mathcal{U}(8)$ and $\mathcal{U}(16)$ blocks. The three rows correspond to the XXZ Hamiltonian $H_\mathrm{XXZ}$ with $\Delta=0.5$, the same XXZ Hamiltonian at the Heisenberg point $\Delta=1$, and the $J_1$-$J_2$ Heisenberg Hamiltonian $H_{J_1\text{-}J_2}$, respectively, all with open boundary conditions. The system sizes are $N=4,6,8,10,12$. For each pair of $(\beta,N)$, we use $200$ random samples for $N=4,6,8$ and $100$ random samples for $N=10,12$. The shaded violins show the empirical distribution, gray points are individual samples, black points denote the sample means, and vertical bars indicate the central $95\%$ interval. In all sampled cases, $R(A)$ is bounded away from zero and becomes increasingly concentrated around $R=1$ as the system size grows.}
    \label{fig:gradient_compatibility_ratio}
\end{figure}

Finally, we examine a simple example where the value of $R(A)$ depends sensitively on how the Hamiltonian is decomposed into subterms. Consider
\begin{equation}
    H_{XZ}=\sum_{i=1}^{N}(X_i+Z_i).
\end{equation}
There are two natural decompositions: one may regard $X_i$ and $Z_i$ as separate subterms, or regard $X_i+Z_i$ on the same site as a single subterm. For the split-term decomposition, the exact product ground state of $H_{XZ}$ provides a special cancellation point: the total gradient vanishes, while the individual gradients from $X_i$ and $Z_i$ need not vanish separately and can cancel each other within the tangent space. This gives a decomposition-dependent point with $R(A)=0$. However, as shown in Fig.\,\ref{fig:xz_gradient_compatibility_ratio}, this special point is not observed in typical random samples; for generic sampled parameters, $R(A)$ remains clearly positive and is concentrated near $R=1$, similarly to the Hamiltonians studied above. Moreover, if $X_i+Z_i$ is treated as a single site term, this artificial cancellation disappears: each subterm gradient vanishes separately at the exact ground state.

\begin{figure}
    \centering
    \includegraphics[width=0.99\linewidth]{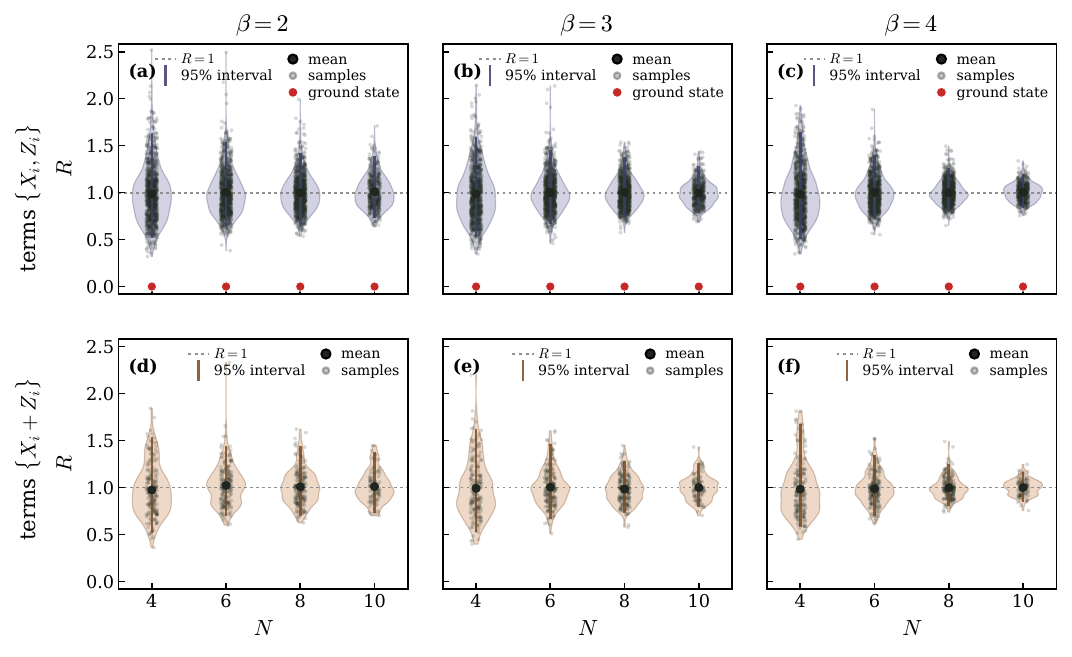}
    \caption{Dependence of the ratio $R(A)$ on the term decomposition for the Hamiltonian $H_{XZ}=\sum_i(X_i+Z_i)$ using sequential circuits. The first row uses the split-term decomposition $\{X_i,Z_i\}$, while the second row uses the site-term decomposition $\{X_i+Z_i\}$. The three columns correspond to $\beta=2,3,4$, respectively; the $\beta=3$ and $\beta=4$ columns use Haar-random $\mathcal{U}(8)$ and $\mathcal{U}(16)$ blocks. For the split-term decomposition, we use $1000$ random samples for $N=4,6,8$ and $500$ random samples for $N=10$; for the site-term decomposition, we use $200$ random samples for $N=4,6,8$ and $100$ random samples for $N=10$. The shaded violins show empirical distributions, gray points are individual samples, black points are sample means, and vertical bars indicate central $95\%$ intervals. The red markers in the first row are not sampled points; they mark the exact ground-state configurations where the split subterm gradients cancel and give $R=0$. Away from these special points, the sampled values remain bounded away from zero and concentrate around $R=1$.}
    \label{fig:xz_gradient_compatibility_ratio}
\end{figure}

\end{document}